\documentclass[11pt]{article}

\usepackage[top=50mm, bottom=50mm, left=50mm, right=50mm]{geometry}

\usepackage{lineno}
\usepackage{amssymb}
\usepackage{amsmath}
\usepackage{amsthm}
\usepackage{epsfig}
\usepackage{graphicx}
\usepackage{graphics}
\usepackage{float}
\usepackage{subfigure}
\usepackage{multirow}
\usepackage{color}
\usepackage{lineno}
\usepackage{fullpage}
\usepackage[normalem]{ulem} 
\usepackage{makeidx}
\usepackage{xspace}
\usepackage{wrapfig}

\usepackage{booktabs}
\usepackage{tabularx}
\usepackage{amsfonts}
\usepackage{hyphenat}
\usepackage{hyperref}

\usepackage{times}
\usepackage{psfrag}
\usepackage{ifpdf}
\usepackage{array}

\usepackage{setspace}
\usepackage{latexsym}
\usepackage{enumerate}
\usepackage{mathtools}

\usepackage{dblfloatfix}

\usepackage{jin}

\makeindex

\widowpenalty10000
\clubpenalty10000

\begin{document}

\title{Boundary Defense against Cyber Threat for Power System Operation}

\date{}
\author{
{Ming Jin\footnote{Department of Industrial Engineering and Operation Research, University of California Berkeley, CA 94720, USA}} \and {Javad Lavaei\footnote{Department of Industrial Engineering and Operation Research, University of California Berkeley, CA 94720, USA}} \and {Somayeh Sojoudi\footnote{Department of Electrical Engineering and Computer Sciences, University of California Berkeley, CA 94720, USA}} \and {Ross Baldick \footnote {Department of Electrical and Computer Engineering, University of Texas at Austin, TX 78712, USA}}}



\maketitle
\setcounter{page}{0}

\begin{abstract}
The operation of power grids is becoming increasingly data-centric. While the abundance of data could improve the efficiency of the system, it poses major reliability challenges. In particular, state estimation aims to learn the behavior of the network from data but an undetected attack on this problem could lead to a large-scale blackout. Nevertheless, understanding vulnerability of state estimation against cyber attacks has been hindered by the lack of tools studying the topological and data-analytic aspects of the network. Algorithmic robustness is of critical need to extract reliable information from abundant but untrusted grid data. We propose a robust state estimation framework that leverages network sparsity and data abundance. For a large-scale power grid, we quantify, analyze, and visualize the regions of the network prone to cyber attacks. We also propose an optimization-based graphical boundary defense mechanism to identify the border of the geographical area whose data has been manipulated. The proposed method does not allow a local attack to have a global effect on the data analysis of the entire network, which enhances the situational awareness of the grid especially in the face of adversity. The developed mathematical framework reveals key geometric and algebraic factors that can affect algorithmic robustness and is used to study the vulnerability of the U.S. power grid in this paper.  

\end{abstract}

\newpage

\begin{figure}[b!]
\centering
\includegraphics[width=\linewidth]{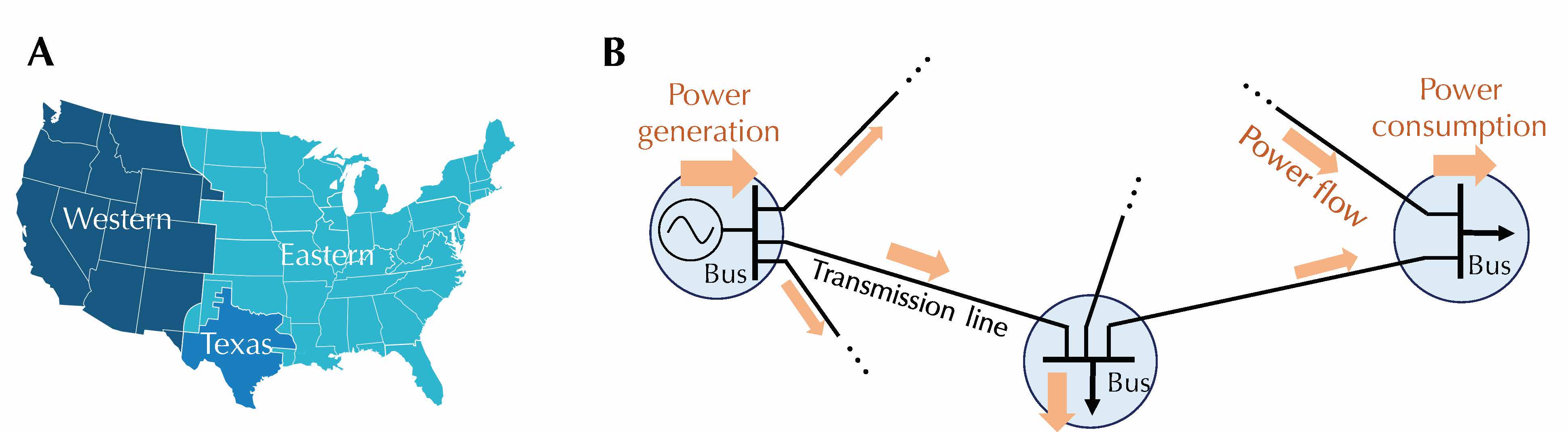}
\caption{\textbf{The U.S. power transmission network.} \textbf{(A)} Map of the Eastern, Western and Texas interconnections. \textbf{(B)} Schematic diagram of a portion of the network. Each blue circle indicates a node (e.g., generator bus or load bus). Nodes are connected by transmission lines. Power is generated, transported and consumed in different locations (the amount of power is shown as the width of the orange arrow).   }
\label{fig:intro}
\end{figure}
While real-world data abound for many complex systems, they are often noisy and corrupted. Acquiring reliable information from abundant but untrusted data is key to enhancing cybersecurity for mission-critical systems, such as transportation and power grid. Since many of these systems are inherently network structured, data analytics cannot be satisfactorily understood without incorporating their underlying graph topologies. Consider the power system state estimation (SE) as an example, which constantly monitors the status of the grid by filtering and fusing a large volume of data every few minutes. It plays a critical role in the economic and reliable operation of the grid because major operational problems such as security-constrained optimal power flow, contingency analysis, and transient stability analysis rely on its output. The current industry practice is based on a set of heuristic iterative algorithms proposed in the 70s, which are known empirically to work properly under normal situations. However, those algorithms become brittle under adverse conditions, such as natural hazards, equipment faults, and even cyber attacks. The significance of functioning SE to operators was illustrated by the 2003 large-scale blackout, in which the failure of SE contributed to the inability of providing real-time diagnostic support.\cite{muir2004final} Despite substantial advances in algorithm design, namely using semidefinite programming, holomorphic embedding load flow methods, and homotopy continuation methods, a major obstacle still remains: the lack of a  framework for the design of a robust and scalable algorithm together with a realistic evaluation of its vulnerability.\cite{abur2004power,huang2012state,molzahn2019survey} Developing such a framework is challenging for three reasons: {(a)} the model of a power system is highly nonlinear and nonconvex due to physical laws, {(b)} computational resources required by the existing algorithms grow rapidly in the size of the system, and {(c)} the number of scenarios for adverse conditions is too large to be enumerated (it is higher than the number of atoms in the observable universe for systems with as low as 500 possible attack points). These challenges have limited the scope of previous studies to simple approximate models or conservative methods that ignore the topology-dependent characterization of vulnerabilities.\cite{huang2012state,molzahn2019survey} Similar hurdles exist in studying vulnerability of data analytics for other large-scale complex graphs, including ecological and social systems,\cite{vespignani2010complex,ganin2016operational} due to the lack of statistical tools for untrusted data with underlying nonlinear and structured (rather than random) graphical models.

Here we focus on the U.S. grid, which is the largest machine on earth with more than 200,000 miles of transmission lines (Figure \ref{fig:intro}). It consists of three large and nearly independent synchronous systems (Eastern, Western, and Texas) that together span the lower 48 United States, most of Canada, and some parts of Mexico. Due to confidentiality requirements on critical infrastructure information, we report our findings on modified grids, which match the size, complexity, and characteristics of actual grids.\cite{birchfield2017grid} Basic properties of the data are listed in Table \ref{tab:summary_lp_socp}. Central to the vulnerability analysis is that we provide formal statistical guarantees that rely on the physical and cyber infrastructure, which can be realistically evaluated on any large-scale system to depict high-granularity characteristics through graph topology, as shown in Figure \ref{fig:inco_US}.

\begin{figure}[th]
\centering
\includegraphics[width=\linewidth]{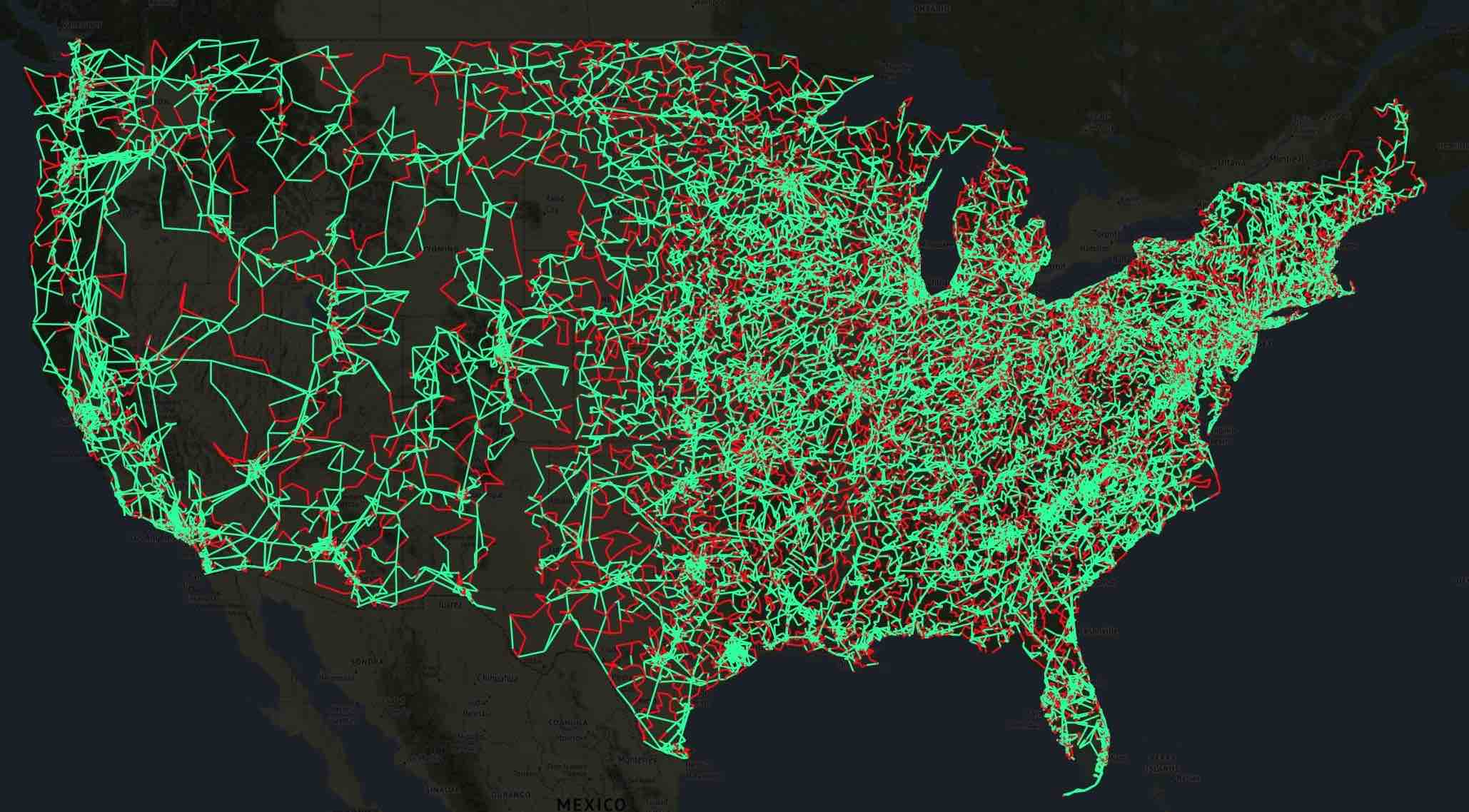}
\caption{\textbf{Vulnerability map of the modified U.S. power grid.} A line is robust (shown in green) if it stops error propagation from one end to the other during state estimation; otherwise, it is vulnerable (shown in red). Because the vulnerability map varies according to the set of available measurements, we demonstrate the map for a single profile with abundant data, which includes voltage magnitude as well as real and reactive power injections per bus and real in addition to reactive power flows per branch. }
\label{fig:inco_US}
\end{figure}

\section*{Data abundance meets algorithmic robustness}

Existing SE software solves nonlinear least squares (NLS) for the set of complex voltage phasors based on power flow measurements. NLS is a nonconvex problem, so even in the absence of measurement errors, local search algorithms such as Newton's method can become ``stuck'' at local minima, which are spurious and do not correspond to a useful estimate of the state. When this occurs, the conventional wisdom is to conclude that the estimations are unduly influenced by bad data, which are subsequently identified, down-weighted and even discarded to rectify the outputs.\cite{merrill1971bad,kotiuga1982bad,monticelli2000electric,huang2012state} Nevertheless, this is misleading and even harmful, especially during unusual or emergency situations when accurate estimates are needed, because erroneously rejecting useful information can further reduce the reliability. Even though advanced convex relaxation techniques, such as semidefinite programming, can partially address this issue,\cite{molzahn2019survey} the primary disadvantage is their heavy computational and memory requirements. 

Compared to the {classic state estimation} where one needs to obtain useful information from limited data, there is a paradigm shift from a \emph{limited-but-trustworthy} data regime to an \emph{abundant-but-untrusted} data regime due to the significant growth in instrumentation and communication in the electric grid,. Hence, a natural question arises: {Can  the additional information from abundant data sources be leveraged to enhance the robustness of the algorithm?} 

In this section, we provide a strong positive answer to this challenging question. We propose a new representation for the common types of measurements, such as the real and reactive power flows and voltage magnitudes, by fully exploiting the sparsity structure of power networks. This representation framework comprises physical quantities such as voltage magnitudes squared and phasor products over the lines. A key advantage is that one can express all power flow measurements as a \emph{linear} combination of these basic parameters. From a computational complexity perspective, this enables depicting the boundary between easy and difficult instances of SE with respect to the number and locations of sensor measurements. Particularly, it is well-known that the SE problem is usually unidentifiable (i.e., there are multiple solutions that are consistent with the measurements) in the traditional power flow setting, where each bus has only 2 sensor measurements. Yet, our analysis shows that the problem becomes solvable as soon as the SE problem is modestly over-determined  (i.e., there exists a method to uniquely recover the true state of the system). In contrast, this guarantee is almost vacuous using existing theoretical tools.\cite{chi2018nonconvex}

While the new parameter representation is effective under clean data, it turns out that it can be used to deal with corrupted and untrusted data as well. To this end, we propose a two-step pipeline. The first step is to solve a convex optimization,
where the objective deals with both dense noise due to measurement error and sparse noise due to bad data. Because the variables correspond to physical quantities, they can be mathematically constrained within a set of second-order cones (SOCs) to improve robustness, though the unconstrained versions based on linear programming (LP) or quadratic programming (QP) are also viable options. Based on the estimations from Step 1, the next step directly reconstructs the voltage phasors from the set of linear bases using elementary algebra. As a general remark, the rationale of the design of the optimization algorithm in Step 1 can be also explained by an interesting connection to the robust statistics literature. It can be shown that the optimization is equivalent to minimizing the Huber loss, which is well-known to be robust to outliers (supplementary material). Previous methods have incorporated Huber loss, but they are either in the setting of the DC approximation or a nonconvex formulation.\cite{mili1996robust,baldick1997implementing} There is also a lack of theoretical understanding of the performance in the literature. Furthermore, the incorporation of conic constraints tightens the relaxation, but the analysis becomes more involved.

\begin{figure}[h]
\centering
\includegraphics[width=.9\linewidth]{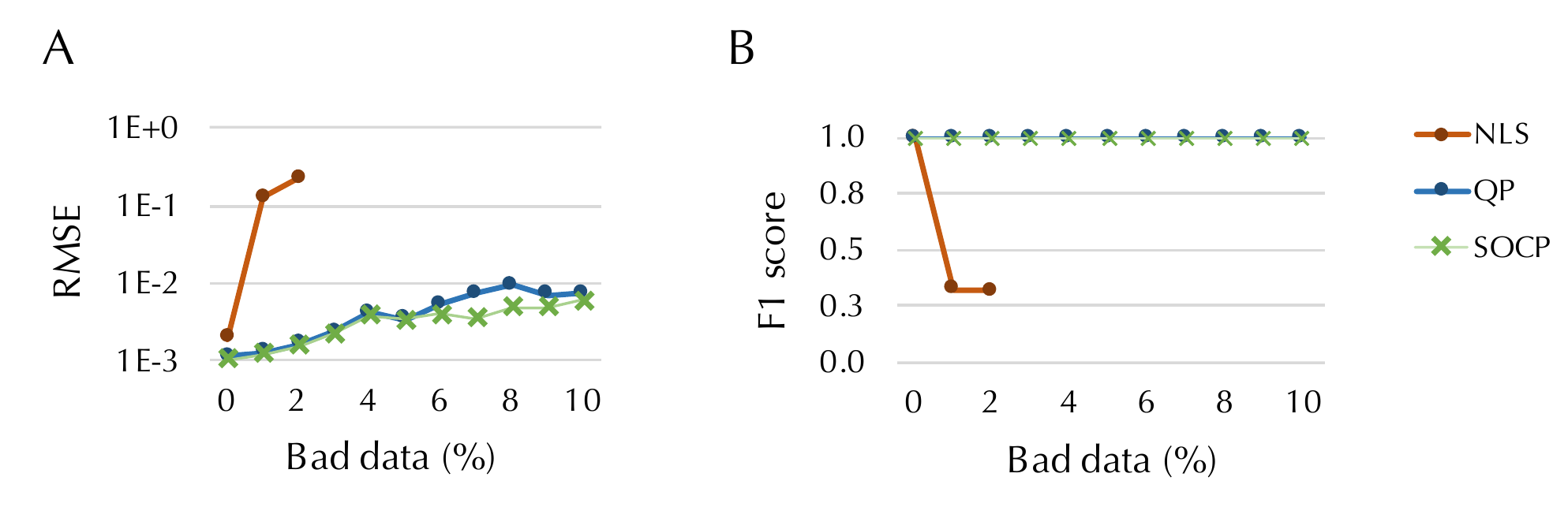}
\caption{\textbf{Evaluation of algorithmic robustness for different levels of bad data.} The bad data are generated by the ``scattered attack'' strategy, where a subset of lines are chosen whose branch measurements are all corrupted. For state estimation, we consider Newton's method to solve nonlinear least squares (NLS) as the baseline, and compare it with the proposed methods based on quadratic programming (QP) and second-order cone programming (SOCP). For each percentage of bad points within dataset, we show \textbf{(A)} the root mean squared error (RMSE) and \textbf{(B)} the F1 score of bad data detection, averaged over 20 independent simulations. For RMSE, a desirable value is any number less than 0.01. The F1 score is the harmonic average of precision and recall. Because NLS deteriorates significantly with the addition of bad data, we only show the simulation results up to 2\% of bad data, which corresponds to about 380 number of bad data. We tested on the synthetic Texas Interconnection with full sensor measurement set.}
\label{fig:BDD}
\end{figure}

Next, we provide a theoretical guarantee for global recovery in the case of sparse bad data. Consider the following ``corrupted sensing model'' where the nonlinearity is hidden within a linear model using our method to be explained later:
\begin{equation}
    \ybf=\Abf\xbf+\wbf+\bbf.
    \label{equ:cor_meas}
\end{equation}
Here, $\ybf$ is the set of $m$ sensor measurements for the vector $\xbf$ consisting of latent variables, $\Abf$ is the sensing matrix, $\wbf$ is the dense random noise due to measurement error, and $\bbf$ is sparse bad data. Let $\Jcal\subset\{1,...,m\}$ denote a set of measurements that are biased by sparse dense noise $\bbf$ (i.e., $\bbf_k\neq 0$ if and only if $k\in\Jcal$) and $\Jcal^c$ be its complement set; let $\Abf_\Jcal$ be the submatrix with rows indexed by $\Jcal$ in $\Abf$. Denote the pseudoinverse of $\Abf_{\Jcal^c}$ as $\Abf^{+}_{\Jcal^c}=(\Abf_{\Jcal^c}^\top \Abf_{\Jcal^c})^{-1}\Abf_{\Jcal^c}^\top$. Then, under some mild ``observability condition,'' the proposed two-step pipeline (the unconstrained version) can simultaneously recover the true state and detect the bad data if the following condition is satisfied:
\begin{equation}
\rho_{\text{GRC}}(\Jcal)=\|\Abf_{\Jcal^c}^{+\top}\Abf_\Jcal^\top\|_{\infty}<1,
\tag{GRC}
\label{equ:GRC}
\end{equation}
where $\|\cdot\|_\infty$ denotes the matrix infinity norm (i.e., the maximum absolute column sum of the matrix). Intuitively, $\rho_{\text{GRC}}(\Jcal)$ measures the alignment of the corrupted data and the clean data. The condition states that if the bad data are not aligned with the benign data, then it is possible to detect them.

We compare the proposed technique to the conventional approach based on Newton's method with bad data detection (BDD) in some empirical evaluations. For Newton's method, measurements with residual larger than a threshold are removed and SE is re-solved. As shown in Figure \ref{fig:BDD}, our method significantly improves on Newton's method in terms of estimation accuracy and bad data detection rates. Contrary to the prevailing wisdom, BDD is not effective because it depends on the quality of the initial estimation from Newton's method, which can be badly influenced by the bad data. In contrast, the key idea of the proposed method is to incorporate a BDD term in the optimization so that the best configuration of the state estimation and bad data vector be detected simultaneously. Since we solve either LP/QP or second-order cone programming (SOCP), the runtime is manageable for real-time applications. We also see that the SOCP version is more robust than the unconstrained LP/QP version (see Figure 3S in the supplementary material for close comparison) .

\section*{From global recovery to boundary defense}

The above results demonstrate that the proposed two-step pipeline can deal with random sparse bad data. More importantly, we are concerned about the scenario where the bad data are engineered, or a whole subregion's data are compromised. This corresponds to adverse conditions such as cyberattacks or natural disasters. In this situation, Newton's method is particularly vulnerable, because by simply solving the nonlinear least squares, the influence of bad data will propagate throughout the system, as shown in Figure \ref{fig:attack_sim_texas}. What can we say about the robustness in this case?

\begin{figure}[b!]
\centering
\includegraphics[width=\linewidth]{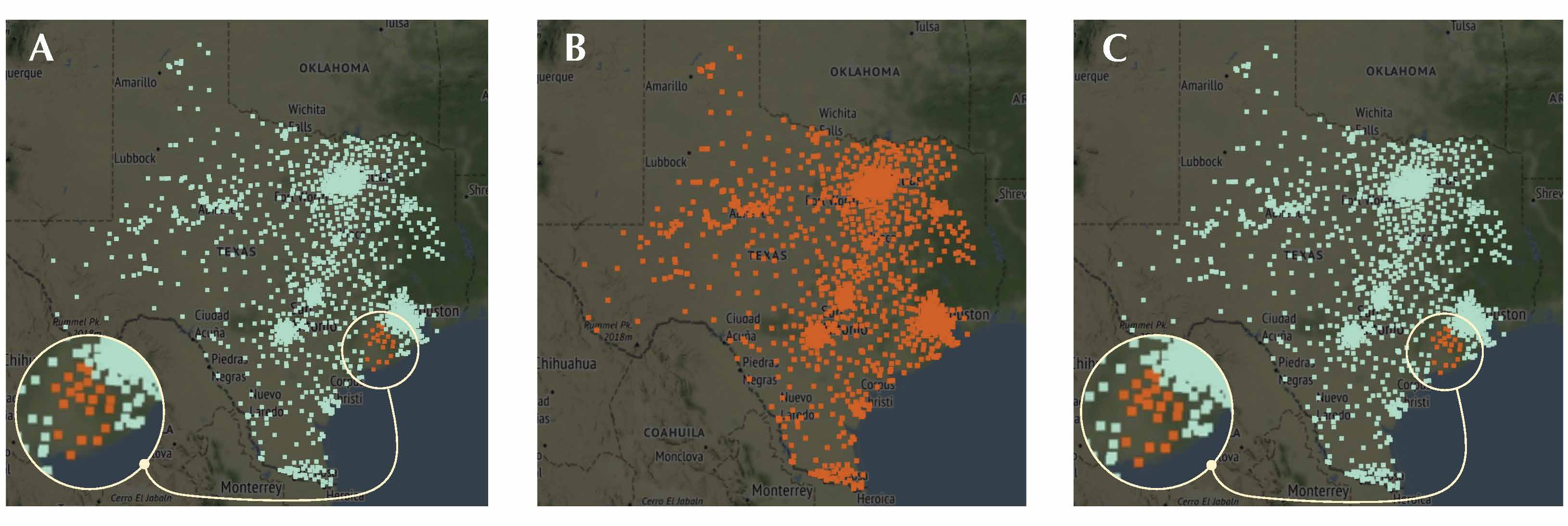}
\caption{\textbf{Evaluation of the boundary defense mechanism.} \textbf{(A)} The grid is under ``zonal attack,'' where the measurements within a zone are corrupted (shown in red). State estimation based on \textbf{(B)} Newton's method for nonlinear least squares, and \textbf{(C)} the proposed method with SOC constraints, where in both cases, buses with an estimation error greater than 0.002 are marked red. The errors propagate throughout the grid in \textbf{(B)}, but are contained within the zonal boundary in \textbf{(C)}.}
\label{fig:attack_sim_texas}
\end{figure}

It turns out that to defend against cyberattacks where the data for a geographical area are attacked, we need to devise a new defense mechanism. Because the basic ``observability'' condition is not satisfied, it is unrealistic to recover the state within the region. A well-defined problem is how to identify the boundary of the attacked region to be able to \emph{limit the spread from and impact of small disruptions to local regions}. 

To this end, we propose a new notion of defense on networks, called ``boundary defense mechanism.'' For a given attack scenario, there is a natural partition of the network into the attacked, inner and outer boundaries, and safe regions (Figure \ref{fig:bound_def}(A)). If boundary defense is successful, then no matter how erroneous the state estimation is within the attacked region, the estimates at the boundary and in the safe region are unaffected. This is a fairly general framework, because it incorporates a wide range of adversarial scenarios that are localized, including line outage, substation down, natural disasters, and cyberattacks. However, a key challenge arises: due to the large number of possible scenarios, it is clearly unrealistic to evaluate the efficacy of boundary defense separately for each case. How can one provide a systematic assessment of the robustness that can be applied to a variety of scenarios?

\begin{figure}[t]
\centering
\includegraphics[width=\linewidth]{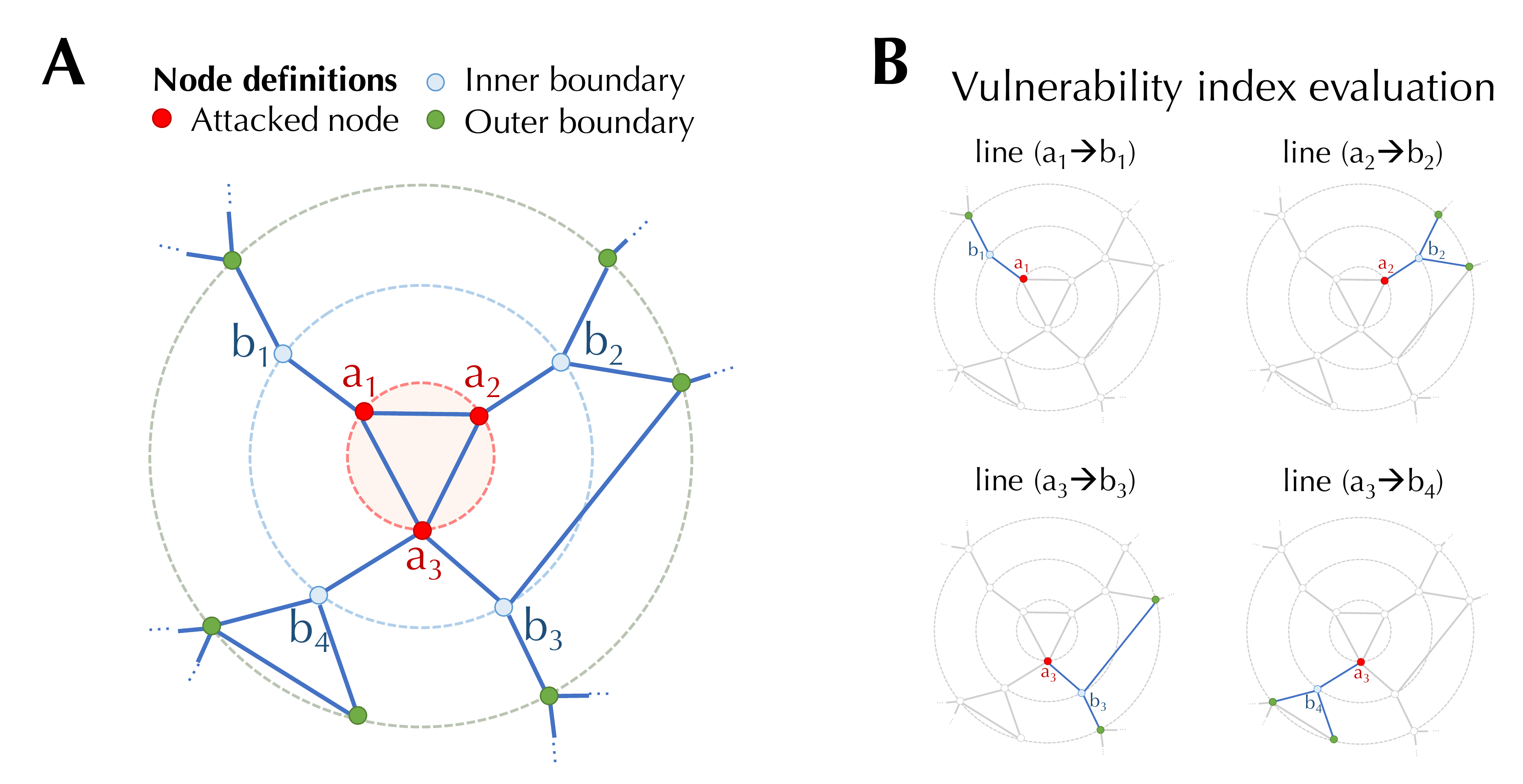}
\caption{\textbf{Illustration of the boundary defense mechanism.} \textbf{(A)} Schematic diagram showing the attacked nodes as well as inner and outer boundary nodes. \textbf{(B)} Vulnerability index evaluation. Only nodes and lines considered in the evaluation are highlighted for each line evaluation, with each line direction considered from the attacked node to the inner boundary node.}
\label{fig:bound_def}
\end{figure}

Our key idea is that instead of treating the power grid as a collection of buses and lines, we analyze each line individually. Specifically, we associate a ``vulnerability index'' (VI) to each line in one of the 2 directions. Note that VI is algorithmically dependent. In the case of unconstrained LP or QP, this metric for line $i\rightarrow j$ is given by the following minimax optimization:
\begin{equation}
    \aijlp=\underset{\|\bfxi\|_\infty\leq 1
	}{\text{max~~~}}\quad\underset{\hbf\in\Hcallp(\bfxi)}{\text{min~~~}} \quad\|\hbf\|_\infty
	\tag{VI}
    \label{equ:vi-lp}
\end{equation}
where $\Hcallp(\bfxi)=\Big\{\hbf\mid \Abf_{\Mbdijg,\Xbdij}^\top{\hbf}+\Abf_{\Mbdijb,\Xbdij}^\top{\bfxi}=\bfzero\Big\}$ is the set of admissible $\hbf$ for a given vector $\bfxi$ in the unit hypercube, the index sets $\Mbdijg$ and $\Mbdijb$ correspond to the defending and defective measurements on the boundary, and $\Xbdij$ denotes the set of variables associated with the boundary (supplementary material). We use the subscript notation $\Abf_{\Mbdijg,\Xbdij}$ to indicate the submatrix of $\Abf$ whose rows are indexed by $\Mbdijg$ and columns are indexed by $\Xbdij$. Figure \ref{fig:bound_def}(B) illustrates the nodes and lines relevant to the evaluation of four lines for a given attack scenario. The case of SOCP is defined similarly:
\begin{equation}
    \aijsocp(\xbf)=\underset{\|\bfxi\|_\infty\leq 1
	}{\text{max~~~}}\quad\underset{\hbf\in\Hcalsocp(\bfxi,\xbf)}{\text{min~~~}} \quad\|\hbf\|_\infty
	\tag{VI-SOC}
    \label{equ:vi-socp}
\end{equation}
where $\Hcalsocp(\bfxi,\xbf)$ is the set of admissible $\hbf$ defined in the supplementary material. Firstly, it can be seen that \eqref{equ:vi-socp} depends on the true state $\xbf$. However, this is not an issue because it can be shown that for every $\xbf$ that corresponds to a complex voltage state of the system, we have \[\aijsocp(\xbf)\leq \aijlp\]
In other words, the incorporation of second-order cone constraints \emph{always} improves robustness.

Our main result is stated in the following theorem (formal statement can be found in the supplementary):
\begin{theorem}[Boundary defense mechanism]
Consider a partition of the network into the attacked, boundary, and safe regions, where the bad data are contained within the attacked region. If the vulnerability index (LP/QP or SOCP) in the direction that points outwards from the attacked region is less than 1 for all lines on the boundary, then the solution obtained from the two-step pipeline has the following properties: \textbf{(i)} all the detected data are bad data, so there are no false positives in Step 1; and \textbf{(ii)} after removing the subgraph of the attacked region from the main graph, direct recovery in Step 2 recovers the underlying state of the system for the un-attacked region.
\end{theorem}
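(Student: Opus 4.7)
The plan is to analyze Step~1 through convex duality and reduce the theorem to the existence of a suitable dual certificate supported on the attacked region; once the correct primal of Step~1 is identified, property (ii) follows because Step~2 recovers complex voltage phasors from voltage magnitudes squared and line phasor products independently on each connected component, so deleting the attacked subgraph leaves an un-attacked subgraph on which these basic parameters have already been recovered by Step~1, and the algebraic reconstruction proceeds there exactly as in the clean case.

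I would first write Step~1 in standard form (treating the LP/QP version first) and take as candidate primal the pair $(\hat{\xbf},\hat{\bbf})$ where $\hat{\bbf}$ is supported exactly on the measurements inside the attacked region and $\hat{\xbf}$ agrees with the true state on every variable outside it. Using strong duality together with the Huber-loss equivalence mentioned earlier, optimality and uniqueness of this primal reduce to the existence of a dual vector $\hbf$ with $\hbf=\operatorname{sign}(\hat{\bbf})$ on the bad measurements, $\|\hbf\|_\infty<1$ on the clean ones, and satisfying the linear constraint on the variables. Setting $\hbf=\bfzero$ on every measurement that does not touch the boundary decouples the system, and the only remaining nontrivial condition is the boundary equation
\begin{equation*}
\Abf_{\Mbdijg,\Xbdij}^\top\hbf+\Abf_{\Mbdijb,\Xbdij}^\top\bfxi=\bfzero,
\end{equation*}
where $\bfxi$ encodes the adversarial sign pattern of $\hat{\bbf}$ on the bad boundary measurements.

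This is exactly the feasibility set $\Hcallp(\bfxi)$, and by construction $\aijlp$ is the worst-case value, over sign patterns $\|\bfxi\|_\infty\le 1$, of the minimum $\ell_\infty$ norm of an admissible $\hbf$. Hence the hypothesis $\aijlp<1$ for every boundary line in the outward direction is precisely the uniform certificate one needs: for any adversarial sign pattern there exists $\hbf$ with $\|\hbf\|_\infty<1$ on the clean side, which is both dual-feasible and satisfies strict inequality off the attacked support, yielding strict complementarity and therefore property (i), namely that no clean measurement is falsely flagged. For the SOCP version, the dual admits additional conic multipliers that enlarge the admissible set of $\hbf$; the inequality $\aijsocp(\xbf)\le \aijlp$ noted earlier then makes the boundary certificate only easier to construct, so the same argument closes the SOCP case.

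The main obstacle I expect is the gluing step: $\aijlp$ is defined locally per line, so one must combine per-line certificates into a single globally feasible $\hbf$ without conflicts at boundary variables or measurements that are shared between adjacent boundary edges. The structure of $\Mbdijg$, $\Mbdijb$, and $\Xbdij$ depicted in Figure~\ref{fig:bound_def}(B) suggests a block-wise incidence pattern on the boundary, so the sub-certificates should combine cleanly provided the partition is chosen consistently; verifying this decomposition rigorously is the one place the proof is not purely local. A secondary technical point is the ``observability'' assumption on the un-attacked subgraph, which is needed both for uniqueness of the Step~1 primal and for the algebraic reconstruction in Step~2, and which I would expect to appear explicitly among the hypotheses of the formal supplementary statement.
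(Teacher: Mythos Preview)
Your proposal is essentially correct and follows the same dual-certificate strategy the paper uses: construct a candidate primal equal to the truth outside the attacked region, then certify optimality via a dual vector that is a sign pattern on the attacked/inner-boundary measurements and strictly inside the $\ell_\infty$ ball on the clean ones, with $\hbf_{\Msf}=\bfzero$ so that the only nontrivial stationarity equations live on the boundary.

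Two points where the paper is more specific than your sketch. First, the gluing you flag as the main obstacle is exactly where the paper does the real work (Lemmas~\ref{lemma:local_sufficient_cond} and~\ref{lemma:local_sufficient_cond_socp}): it runs an induction over the lines in $\Latbi$, adding one line at a time and distinguishing whether the new line shares outer-boundary nodes with previously added ones. The overlap case is handled by \emph{averaging} the competing local certificates on shared voltage-magnitude measurements, which is valid only because those measurements are pre-scaled by the node degree (Definition~\ref{def:norm_measure}); without that normalization the glued $\hbf$ could exceed $1-\gamma$ in $\ell_\infty$. So the decomposition is not automatic from the block structure alone---the normalization convention is what makes it work.

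Second, your treatment of the SOCP case is slightly too quick. The inequality $\aijsocp(\xbf)\le\aijlp$ does enlarge the admissible dual set, but the paper's SOCP proof (Theorem~\ref{thm:l1_est_cyber_socp}) needs a separate gluing lemma for the conic multipliers \emph{and} an additional ``nonbinding SOC constraints on the boundary'' hypothesis, namely that the attacked-region solution $\hat{\xbf}_{\Iat}$ lies in $\tKat(\xbf_{\natural\Ibd})$. Without this, the SOC constraints that straddle $\Latbi$ can couple the attacked and boundary variables in the dual, and the certificate does not split cleanly. Your observability remark is spot on: the paper imposes it as full column rank of $\Abf_{\Msf\cup\Mbd,\Xsf\cup\Xbd}$ and of $\Qbf_{\Mbd,\Xbd}$.
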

From an attacker's point of view, by attacking data in a local region, the adversary hopes that error would propagate throughout the system due to miscalculation. Indeed, this normally occurs for Newton's method. In contrast, the above theorem guarantees that it will never happen using the proposed algorithm as long as the boundary defense condition is satisfied. This explains the intriguing phenomenon observed in the beginning---for the case of topological errors (line or substation outage) or cyberattacks, the boundary defense mechanism is ``triggered'' to contain the error within the local neighborhood.

\section*{Geographic mapping of vulnerabilities}
\begin{figure}[b!]
\centering
\includegraphics[width=\linewidth]{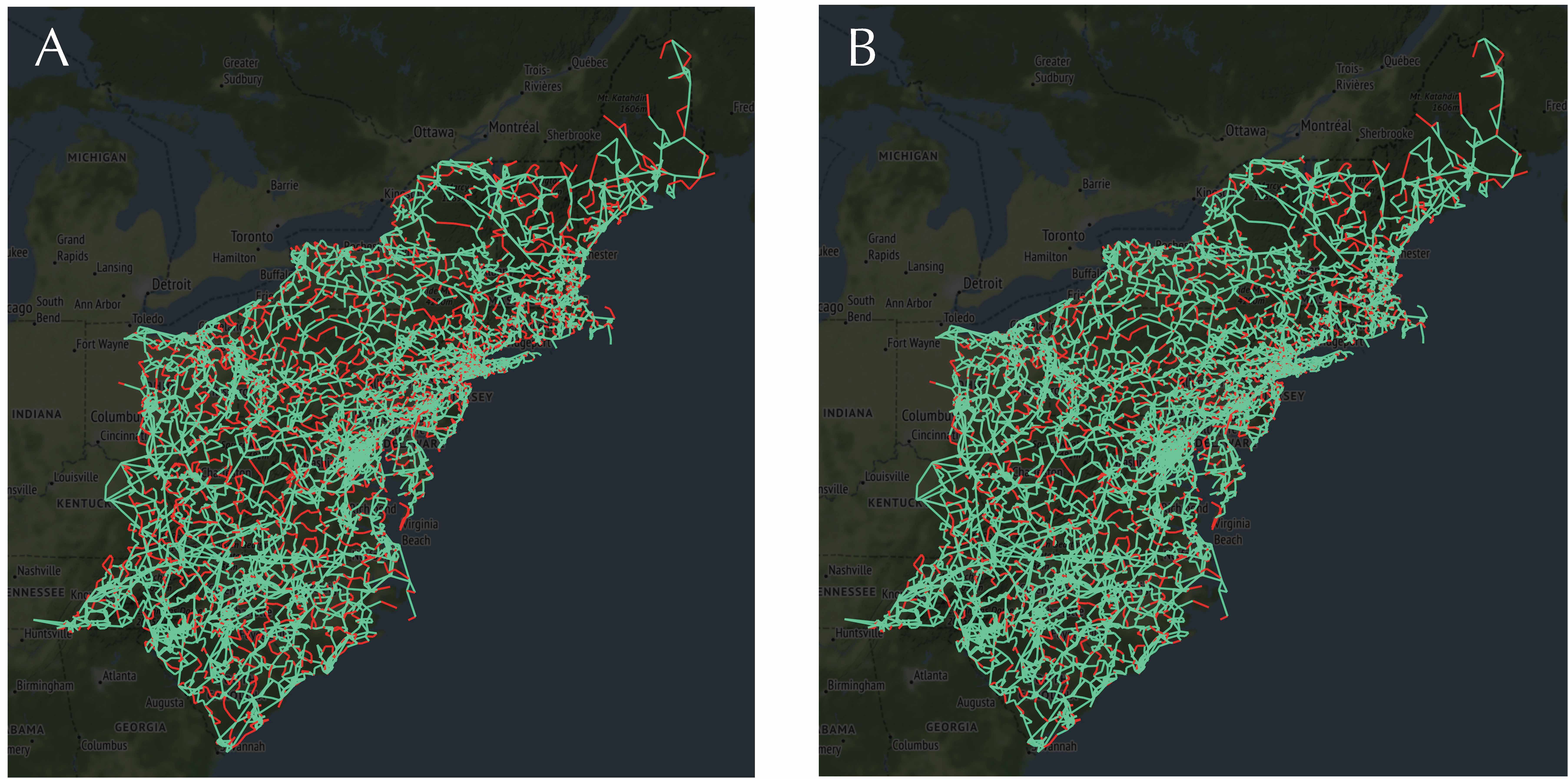}
\caption{\textbf{Comparison of vulnerability maps under different optimization strategies.} Vulnerability maps when using the proposed \textbf{(A)} LP/QP and \textbf{(B)} SOCP are shown, where a robust line is marked green and a vulnerable line is colored red. }
\label{fig:inco_US_lp_socp}
\end{figure}

\begin{figure}[t]
\centering
\includegraphics[width=\linewidth]{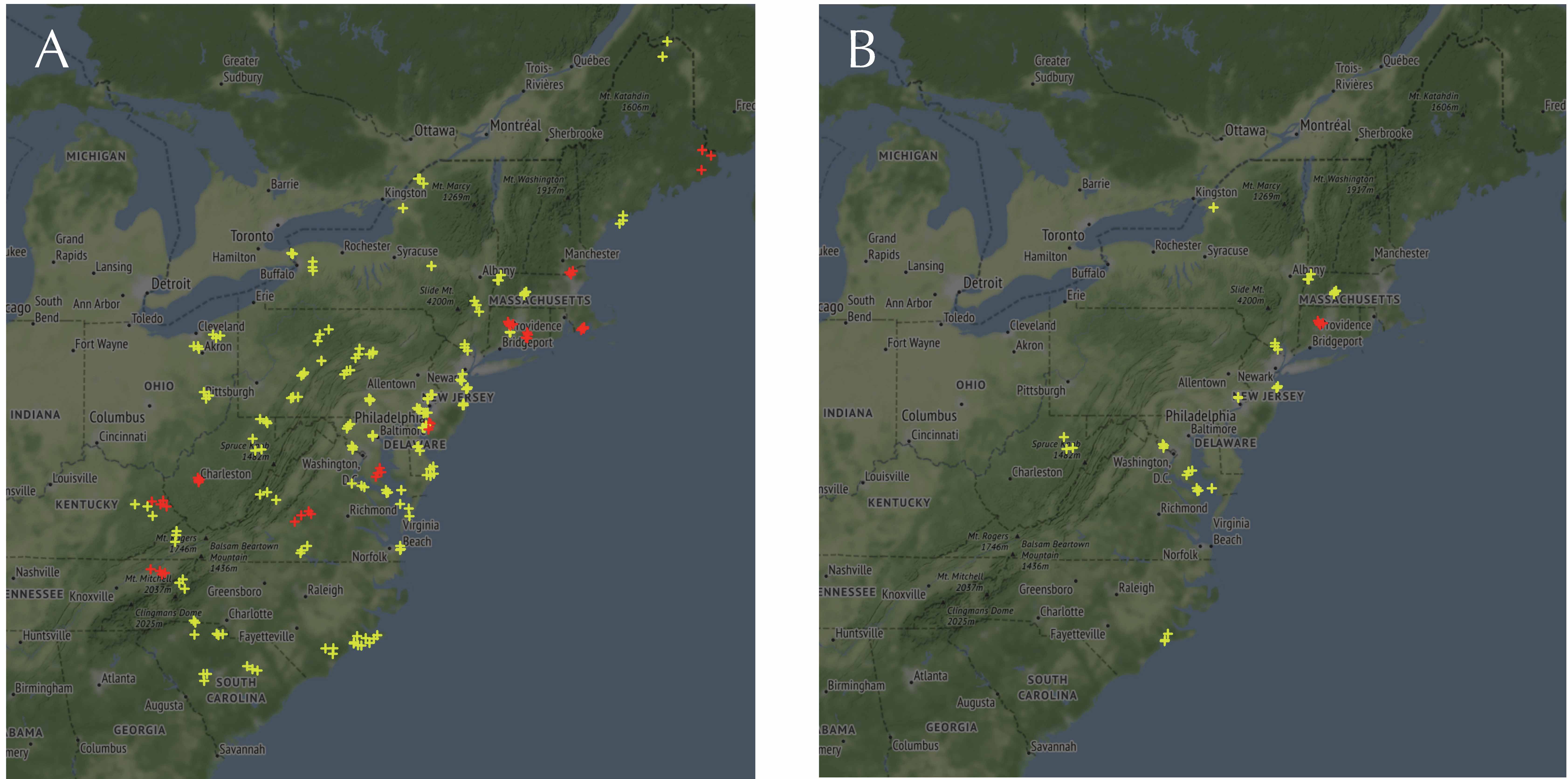}
\caption{\textbf{Comparison of bus critical index maps under different optimization strategies.} Since the bus critical indices are no larger than 3 within the map, we only show the locations with values 2 (yellow) and 3 (red) for the proposed \textbf{(A)} LP/QP and \textbf{(B)} SOCP state estimation strategies. }
\label{fig:ci_US_lp_socp}
\end{figure}

Based on the mathematical tools developed in the previous section, we assess the robustness of the synthetic U.S. grid. First, we visualize the vulnerability index on the map for both \eqref{equ:vi-lp} and \eqref{equ:vi-socp} in Figure \ref{fig:inco_US_lp_socp}. Due to its dependence on the underlying state, \eqref{equ:vi-socp} is shown for a profile described by the dataset, which represents a snapshot of the operating status. A line is considered ``robust'' if the VIs in both directions are less than 1; otherwise, it is ``vulnerable (V-line).'' The plot shows a geographic distribution of robust/vulnerable lines for the east coast of the U.S. grid. It can be seen that the density of vulnerable lines is relatively high for populated areas like Boston and New York, where we also observe a high density of robust lines. On average, 59\% lines are robust across the states, which is further split into each of the independent synchronous regions, as shown in Table \ref{tab:summary_lp_socp}. In addition, it can be validated in the map that \eqref{equ:vi-socp} always improves \eqref{equ:vi-lp}, which implies that the incorporation of SOCP constraints can help rectify state estimation and BDD.

The vulnerability map can be used in various ways. For instance, it can be used to investigate whether topological errors for a line or a substation can be contained locally. This corresponds to the case when there is a model mismatch for a transmission line or substation, such that the associated measurements are largely biased. While this is a challenging problem, it could be addressed using the vulnerability map systematically. Specifically, if the erroneous line/substation is surrounded by robust lines, then it is guaranteed that the error will be contained locally via the boundary defense mechanism. Otherwise, there is a possibility that error will ``escape'' through a vulnerable line to affect the outside region, which is referred to as a ``critical line (C-line)'' or a ``critical bus (C-bus).'' In particular, for topological errors such as line mis-specification, it can be regarded as a pair of gross injection errors at the two ends of the line; hence, we can identify it as long as the line is not a C-line. A summary of statistics is shown in Table \ref{tab:summary_lp_socp}.

\begin{table}[t]
\centering
\begin{tabular}{@{}lcccccccccc@{}}
 & \multicolumn{2}{c}{Basic properties} & \multicolumn{4}{c}{Properties of LP/QP} & \multicolumn{4}{c}{Properties of SOCP} \\ \cmidrule(l){2-11} 
 & Buses & Lines  & V-lines & C-lines & C-bus & Bus CI  & V-lines & C-lines & C-bus & Bus CI  \\ \midrule
Texas & 2,000 & 3,206  & .3762 & .4251 & .4775 & .20  & .2979 & .3674 & .4225 & .06  \\ \midrule
Western & 10,000 & 12,706  & .4715 & .5231 & .5313 & .15  & .3979 & .4636 & .4860 & .06  \\ \midrule
Eastern & 70,000 & 88,207 & .4932 & .5415 & .5327 & .14  & .4104 & .4780 & .4810 & .05 
\end{tabular}%
\caption{\textbf{Summary statistics of network properties and vulnerability characteristics.} We show the percentage of V-lines and C-lines among all network lines, and the percentage of C-bus among all network buses for LP/QP and SOCP. We also show the average bus critical index, which measures the influence of a single-bus attack on the rest of the network.}
\label{tab:summary_lp_socp}
\end{table}

Furthermore, we can extend the case study by defining a criticality index (CI) for each substation. CI gauges how many nodes in its neighborhoods will be affected if this substation is down. The higher the value, the more crucial the situation if the substation is compromised. This is analogous to the cascading failures for generators, but the difference is clear---our focus is on the robustness of data analytics rather than physical dynamics. For each substation, CI can be calculated as the size of the connected component rooted at the node, where an edge between two nodes is present if and only if the physical line that connects them is vulnerable. We visualize the distribution of CI on the map as shown in Figure \ref{fig:ci_US_lp_socp}. It can be seen that they are concentrated in populated areas.

\section*{Relating vulnerability to network and optimization properties}
\begin{figure}[h]
\centering
\includegraphics[width=.9\linewidth]{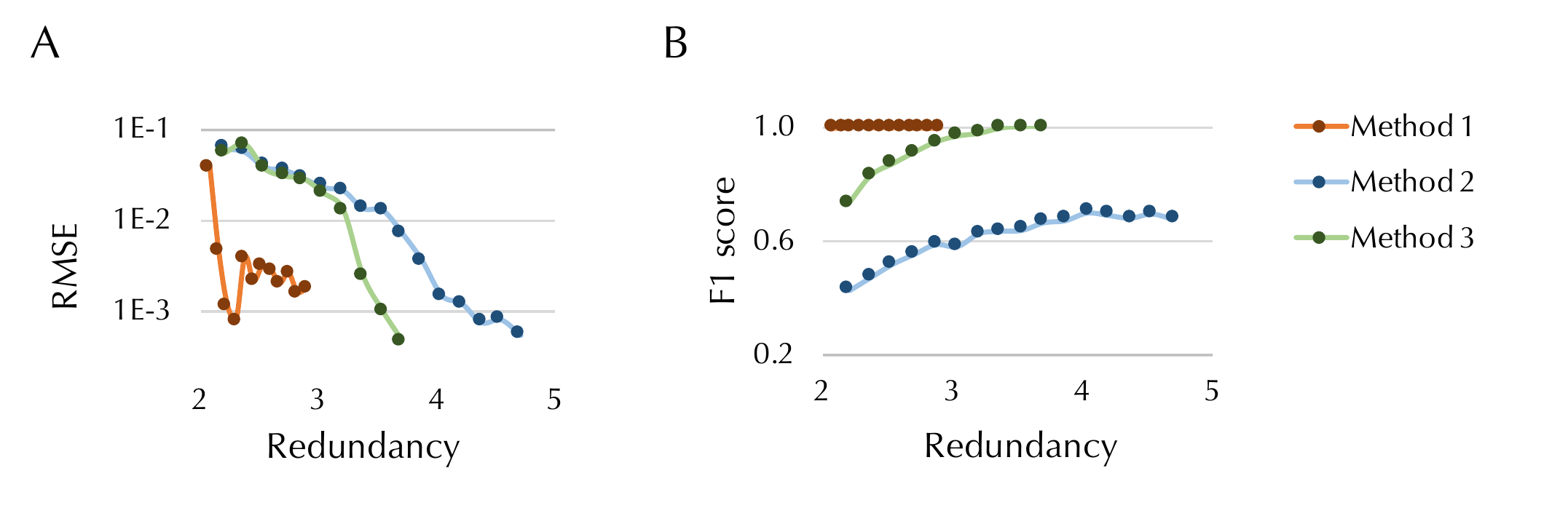}
\caption{\textbf{Comparison of different measurement profiles and redundancy.} We consider three different methods for sensor augmentation, as detailed in the main text. The redundancy value is calculated as the number of sensors divided by $2\times n_b (\text{number of buses})-1$, which is the degree of freedom in the traditional power flow problem. Each point for \textbf{(A)} RMSE and \textbf{(B)} F1 score is obtained by averaging over 20 independent simulations. }
\label{fig:acc_f1_redundancy_socp}
\end{figure}
To investigate factors that affect line vulnerability, we shift our focus to the underlying network and optimization properties. So far, our study has been conducted with respect to a specific measurement profile, which corresponds to the set of full nodal and branch measurements. An important question is: \emph{How do the number and locations of measurement sensors affect line vulnerability?} In particular, does decreasing the number of sensors make the network significantly more vulnerable, and what type of sensor measurements can bolster boundary defense? 

For this purpose, we examine three methods for ``measurement augmentation.'' The first method (Method 1) starts from a spanning tree of the network and incrementally adds a set of lines to the tree to obtain a subgraph that will be used for taking measurements. In this method, each bus is equipped with only voltage magnitude measurements, and each line has 3 out of 4 branch flow measurements. The second method (Method 2) starts with the full network, where each node has voltage magnitude measurements and each line has one real and one reactive power measurements, and it grows the set of sensors by randomly adding branch measurements. The third method (Method 3) differs from Method 2 only in that it grows the set of sensors by randomly adding branch measurements as well as nodal power injections. To evaluate these three methods, we devise a ``scattered attack'' strategy, where we randomly select 25 lines of the 2000-bus Texas Interconnection and corrupt all of its branch measurements, which amounts to 100 bad data. We then employ our proposed method to first detect bad data, and then rerun SE on the sanitized measurement set. The observation is that, in general, both the root mean squared error (RMSE) and the F1 score for bad data detection are enhanced as more sensors are added to the network, as shown in Figure \ref{fig:acc_f1_redundancy_socp}. Specifically, an F1 score close to 1 indicates that the algorithm detects all bad data (high recall rate) and does not falsely blame the good data (high precision rate).

\begin{figure}[t]
\centering
\includegraphics[width=.9\linewidth]{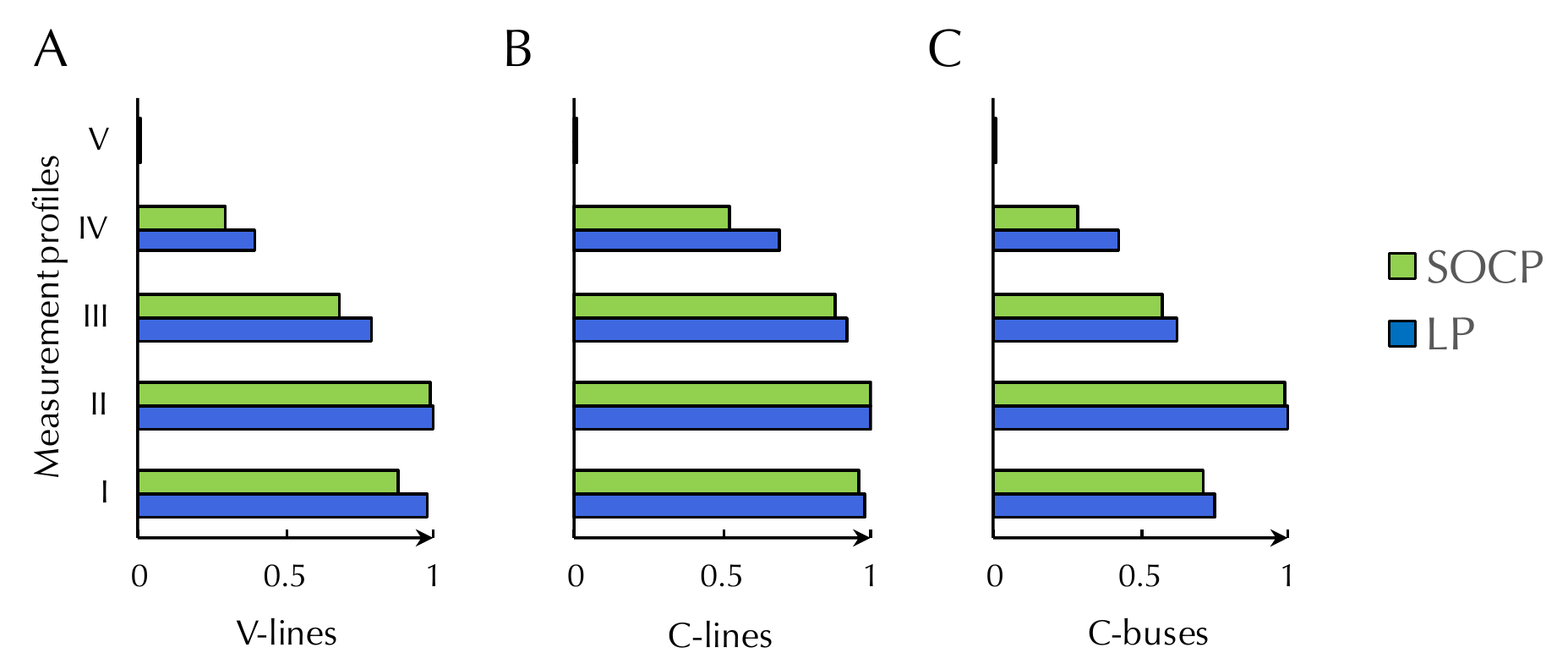}
\caption{\textbf{Characterization of vulnerability based on measurement profiles.} The five measurement profiles are full nodal measurements and 2/3/4 branch flows per line (I/III/IV); real and reactive power injections per bus and 3 branch flows per line (II); and voltage magnitude per bus and 3 branch flows per line (V). For each state estimation method (LP/QP or SOCP), we show the percentage of \textbf{(A)} V-lines, \textbf{(B)} C-lines and \textbf{(C)} C-buses within the Texas Interconnection.}
\label{fig:meas_profile_mi}
\end{figure}
There is also a major discrepancy among different methods for the same level of measurement redundancy. For instance, Method 1 significantly outperforms the other two methods at a low redundancy rate, whereas Method 2 steadily outmatches Method 3 with more sensors.  To explain this phenomenon, we need to examine the types of available measurements. Thus, we select five typical measurement profiles as snapshots of Figure \ref{fig:acc_f1_redundancy_socp} and calculate the percentage of V-lines, C-lines, and the average CI in each case (Figure \ref{fig:meas_profile_mi}). It turns out that the inclusion of voltage magnitude or branch flow measurements can enhance the robustness, whereas the addition of nodal power injections is a major factor that weakens the defense. For example, with only voltage magnitude and branch flow measurements, the network is almost ``everywhere defendable,'' namely the locations of scattered attack can be accurately detected with high probability. On the contrary, with the inclusion of nodal injections, even with a high rate of branch flow measurements, the network is still vulnerable. Intuitively, this is because nodal power injections are highly coupled measurements, which depend on state variables at all lines connected to the node. When one or few of the branches are under attack, this can lead to miscalculations at all incident lines. In contrast, voltage magnitude and branch flows are more localized measurements, whose corruptions have less effects on adjacent buses/lines.

In addition to the measurement set, network vulnerability also depends on topological properties. In particular, our findings show that the connectivity degree for each node is positively correlated with line vulnerability (Figure \ref{fig:degree_mi}(A)). For a boundary defense node, it is increasingly likely to defend against attacks as the degree grows. However, this trend is less obvious when the node is under attack. The reason is that high-degree nodes have more unattacked measurements to leverage in order to rectify the corrupted lines. On the other hand, it is more likely for a line to be critical if it is connected to a high-degree bus, as is shown in Figure \ref{fig:degree_mi}(B). This is because by the definition of critical line, as long as one of the remaining lines incident to that bus is vulnerable, then the error will propagate out through the vulnerable line. Similarly, a high-degree node is more likely to be a critical bus. In addition to the degree of connections, which is a local property, we have observed an interesting relation to the tree decomposition of the network, which provides a generalization of the discussed method. However, due to the technicality, we leave it to the supplementary materials.

\begin{figure}[t]
\centering
\includegraphics[width=\linewidth]{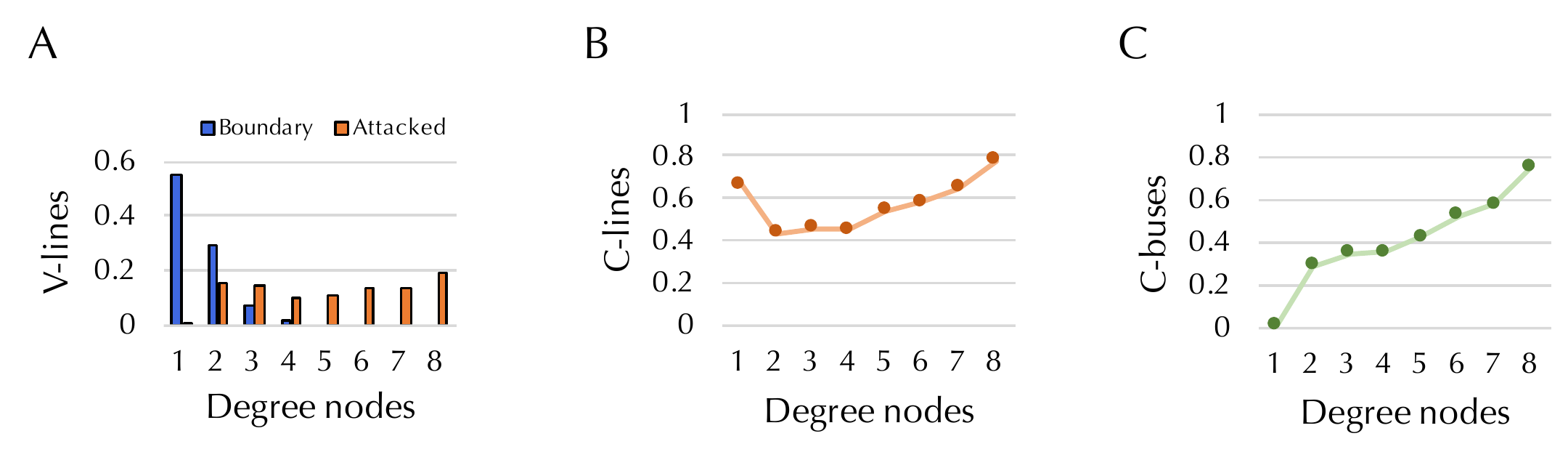}
\caption{\textbf{Characterization of vulnerability through nodal degrees.} \textbf{(A)} Percentage of V-lines when the nodes are at the boundary or in the attacked region. In this case, we distinguish the two directions of a line.  Percentage of \textbf{(B)} C-lines and \textbf{(C)} C-buses averaged over nodes with the same degree. Since the distribution of nodal degrees is light tailed, we group nodes with degree 8 or higher to the same bin.}
\label{fig:degree_mi}
\end{figure}

As for the optimization property, our theoretical analysis indicates that the incorporation of SOCs always improves line robustness (Proposition 2S). This can be visually verified in Figure \ref{fig:inco_US_lp_socp}. This can be also observed in Figure \ref{fig:meas_profile_mi} for different measurement profiles. 

\section*{Conclusion}

Our vulnerability analysis of power system state estimation is distinguished from previous works by its scalability but also by (i) robust two-step convex formulation of the nonconvex nonlinear problem; (ii) strong formal guarantees of boundary defense against cyber attacks; and (iii) localized vulnerability assessment that accounts for network and optimization properties. This study provides a set of notions and tools---the definition of vulnerability index, the boundary defense mechanism, and the analysis of topological and optimization relations to vulnerability---that are applicable to graph-structured data.

Our analysis is based on the assumption that the amount of data is not too low---an assumption far from being restrictive, as we show that with the right set of measurements, one can identify the true state of the system with only one more sensor per bus on average compared to the classical setting of power flow that is known to have multiple spurious local minima. More importantly, the emerging scenario of ``abundant but untrusted data'' considered in this study is more practically realistic and algorithmically challenging than the traditional scenario of ``redundant and reliable data.'' We proposed a robust two-step algorithm to simultaneously perform bad data detection and state estimation. We showed how the number and locations of sensors affect the robustness of state estimation to bad data. A well-chosen set of measurements is able to significantly improve bad data robustness and estimation accuracy without increasing the sensing budget.

We also proposed a boundary defense mechanism to defend against cyber attacks. When a subregion of the network is under attack, it becomes unrealistic to reliably recover the state within the region. By attacking locally, the adversary hopes that due to miscalculation, the error will propagate throughout the grid. However, under some mild conditions, our result shows that this will never occur using the proposed mathematical technique---we can detect the boundary of the attack region and remove the compromised data. Furthermore, this formal condition can be quantified and visualized on the map, leading to a system-wide vulnerability map to facilitate security assessment. 

Based on the proposed mathematical framework, our analysis revealed several key factors that can affect the robustness of the network. A highly connected node is able to defend against attacks if it happens to lie on the boundary, but it is also more prone to attacks with higher collateral damage. 
For a given topological structure, the inclusion of nodal power injection data can weaken the defense; by contrast, the inclusion of voltage magnitude or branch power flow measurements can enhance the robustness against bad data, which gives rise to a higher bad data detection accuracy. From an algorithmic perspective, the incorporation of second-order cone constraints is theoretically shown to be beneficial for network robustness, which is also validated through extensive experiments. Our analysis offers a scientific foundation for vulnerability-based resource allocation, which in the case of a power grid would be based on prioritizing upgrades of sensing infrastructure for critical locations.

\section*{Method summary}

The power grid is modeled as a network of buses connected by transmission lines, where each bus is associated with a complex voltage phasor as the state. Given the topology and measurement profile, some linear basis variables can be constructed for each bus and branch adaptively---if there are no branch measurements and nodal power injections on the connected buses, then the corresponding branch variables can be ignored. This ensures sparsity of the basis. From the measurements, we first estimate the linear basis using a quadratic programming or second-order cone programming. Bad data detection is performed by thresholding the estimated bad data vector. Then, we rerun the estimation on the sanitized dataset, whose results are fed into the second step in the pipeline to produce a state estimation.

We considered two types of attacks. The first attack is ``scattered attack'' (Figures \ref{fig:BDD} and \ref{fig:acc_f1_redundancy_socp}), where a random subset of lines are chosen whose measurements are all corrupted. In this case, the bad data are scattered throughout the network, and the goal is to correctly recover the overall system state. The second attack is ``zonal attack'' (Figure \ref{fig:attack_sim_texas}), where all measurements within a zone---usually governed by a single utility---are corrupted. In this case, the goal is to identify the boundary of the attack and correctly recover the state outside the attacked zone. For stealthy attack, there is a problem of symmetry, namely, without additional information, it is impossible to decide which zone is under attack, since the only inconsistency is observed at the boundary. To avoid this case, we arbitrarily break the symmetry by introducing some sensors within the attacked zone that are more secure than others, such that their values cannot be modified. We can also perform posterior inference based on our prior knowledge of which zones are more likely to be secured than others.

The vulnerability analysis is based on the partition of measurements and variables into attacked and boundary categories (Figure \ref{fig:bound_def}). The vulnerability index is defined by a min-max problem, which is NP-hard in general. For small-scale problems, we developed an efficient enumeration strategy that scales exponentially by the number of bad measurements. For large-scale instance, we proposed two reformulations of the problem, namely linear complimentarity problem and mixed-integer programming, which can be employed to solve the problem efficiently. The critical index for buses (Figure \ref{fig:ci_US_lp_socp}) is obtained by counting the size of the subgraph rooted at the substation and linked by a directional edge that is vulnerable. A critical line is identified when any one of the adjacent lines pointing outwards is vulnerable.

The formal result of boundary defense mechanism is established through a series of propositions and lemmas. The key steps include (1) a ``glueable property,'' which shows that local property of vulnerability implies global property (Lemma 2S and 5S), (2) a result that establishes that boundary defense can stop error from propagation (Lemma 1S and 3S), and (3) a statistical analysis of the first step algorithm based on concentration bounds and a primal-dual witness argument. Further details on the linear representation, two-step pipeline algorithm, theoretical analysis, and experimental setup are given in the supplementary materials.

\section*{Author contributions statement}

M.J. and J.L. developed the idea. M.J. developed the theoretical formalism. M.J., J.L., R.B. and S.S. designed the experiments and interpreted the results. M.J. performed the experiments, analyzed the data and prepared the manuscript. S.S., J.L. and R.B. revised the manuscript.

\bibliographystyle{abbrv}
\bibliography{myref}

\newpage

\appendix

\section*{Supplementary Material}

This supplementary material includes formal theory and additional experimental details for the paper ``Boundary Defense against Cyber Threat for Power System Operation.'' The manuscript is organized as follows. We first discuss the preliminaries in Section \ref{sec:prelim-sec}, including notations, power system modeling, the proposed linear basis of representation, and the measurement model considered in the study. We introduce the two-step pipeline of state estimation in Section \ref{sec:estimator}, where we discuss the algorithms with and without the second-order cone constraints and their connection to robust statistics. Section \ref{sec:boundary} introduces the boundary defense mechanism, including the main results for boundary defense (Lemmas \ref{lemma:bound_defense} and \ref{lem:bound_defend_socp}), implications of local property for global property (Lemmas \ref{lemma:local_sufficient_cond} and \ref{lemma:local_sufficient_cond_socp}), and performance guarantees for estimation accuracy and bad data detection (Theorems \ref{thm:l1_est_cyber}, \ref{thm:lasso_est_cyber}, \ref{thm:l1_est_cyber_socp} and \ref{thm:lasso_est_cyber_socp}). The proofs of the main theorems are delegated to Section \ref{sec:proofs}. Experimental details and additional figures are shown in Section \ref{sec:add_exp}.

\section{Preliminaries}
\label{sec:prelim-sec}

\subsection{Notations}
\label{sec:prelim}
Vectors are shown by bold letters, and matrices are shown by bold and capital letters. Let $x_i$ denote the $i$-th element of  vector $\xbf$. We use $\Rbb$ and $\Cbb$ as the sets of real and complex numbers, and $\mathbb{S}^n$ and $\mathbb{H}^n$ to represent the spaces of $n\times n$ real symmetric matrices and $n\times n$ complex Hermitian matrices, respectively. A set of indices $\{1,2,...,m\}$ is denoted by $[m]$. The cardinality $|\mathcal{J}|$ of a set $\Jcal$ is the number of elements in a set. The support $\supp({\xbf})$ of a vector $\xbf$ is the set of indices of the nonzero entries of $\xbf$. For a set $\mathcal{J}\subset [m]$, we use $\mathcal{J}^c=[m]\setminus \mathcal{J}$ to denote its complement. The symbols $(\cdot)^\top$ and $(\cdot)^*$ represent the transpose and conjugate transpose operators. We use $\Re(\cdot)$, $\Im(\cdot)$ and $\tr{\cdot}$ to denote the real part, imaginary part and trace of a scalar/matrix. The imaginary unit is denoted as $\irm$. The notations $\angle x$ and $|x|$  indicate the angle and magnitude of a complex scalar. For a convex function $g(\xbf)$, we use $\nabla g(\xbf)$ and $\partial g(\xbf)$ to denote its gradient and subgradient at $\xbf$, respectively. We use $\lambda_{\text{min}}(\Abf)$ to denote the smallest eigenvalue of $\Abf$, and $\Abf\succeq 0$ to indicate that $\Abf$ is a positive semidefinite matrix. Let $\Ibf^{(n)}$ denote the identity matrix of dimension $n$, but sometimes for simplicity, we omit the superscript whenever the dimension is clear from the context. The notations $\|\xbf\|_0$, $\|\xbf\|_1$, $\|\xbf\|_2$ and $\|\xbf\|_{\infty}$ show the cardinality, 1-norm, 2-form and $\infty$-norm of $\xbf$. We use $\|\cdot\|_\infty$ to denote the matrix infinity norm (i.e., the maximum absolute column sum of the matrix). Note that the notations $p$ and $q$ are used for active power and reactive power, respectively.

\subsection{Power system modeling}
\label{sec:modeling}
We model the electric grid as a graph $\mathcal{G}\coloneqq\{\mathcal{N},\mathcal{L}\}$, where $\mathcal{N}\coloneqq[n_b]$ and $\mathcal{L}\coloneqq[n_l]$ represent its sets of buses and branches. Each branch $\ell\in\mathcal{L}$ that connects bus $f$ and bus $t$ is characterized  by the branch admittance $y_{\ell}=g_{\ell}+\irm b_\ell$ and the shunt admittance  $y^{\text{sh}}_{\ell}=g^{\text{sh}}_{\ell}+\irm b^{\text{sh}}_\ell$, where $g_{\ell}$ (resp., $g^{\text{sh}}_{\ell}$) and $b_\ell$ (resp., $b^{\text{sh}}_{\ell}$) denote the (shunt) conductance and susceptance, respectively. Typically, $g^{\text{sh}}_{\ell}\ll b^{\text{sh}}_{\ell}$, so it is set to zero in the subsequent description. In addition, to avoid duplicate definition, each line $\ell=(i,j)$ is defined with a direction from bus $i$ (i.e., \textit{from} end, given by $f(\ell)=i$) to bus $j$ (i.e., \textit{to} end, given by $t(\ell)=j$). We also use $\{i,j\}_\ell$ or simply $\{i,j\}$ to denote a line $\ell$ that connects nodes $i$ and $j$.

The power system state is described by the complex voltage at each bus $\vbf=\begin{bmatrix}
v_1,...,v_{n_b}
\end{bmatrix}^\top  \in\mathbb{C}^{n_b}$, where $v_k\in\mathbb{C}$ is the complex voltage at bus $k\in\mathcal{N}$ with magnitude $|v_k|$ and phase $\theta_k\coloneqq\angle v_k$.  Given the complex voltages, by Ohm's law, the complex current injected into line $\{k,j\}_\ell$ at bus $k$ is given by:
\begin{equation*}
    i_{kj} = y_\ell(v_k-v_j)+\frac{\irm}{2}b^{\text{sh}}_\ell v_k.
\end{equation*}
By defining $\theta_{kj}\coloneqq \theta_k-\theta_j$, one can write the power flow from bus $k$ to bus $j$ as
\begin{align*}
    p^{(\ell)}_{kj}&=|v_k|^2g_\ell-|v_k||v_j|(g_\ell\cos\theta_{kj}-b_\ell \sin\theta_{kj}),\\
    q^{(\ell)}_{kj}&=-|v_k|^2(b_\ell+\tfrac{1}{2}b^{\text{sh}}_\ell)+|v_k||v_j|(b_\ell\cos\theta_{kj}-g_\ell \sin\theta_{kj}),
\end{align*}
and active (reactive) power injections at bust $f$ as
\begin{equation}
    p_k=\sum_{\{k,j\}_\ell}p^{(\ell)}_{kj},\qquad q_k=\sum_{\{k,j\}_\ell}q^{(\ell)}_{kj}.
    \label{equ:nodal_inject_equ}
\end{equation}
 The above formulas are based on polar coordinates of complex voltages, where measurements are nonlinear functions of voltage magnitudes and phases. Another popular representation is based on rectangular coordinates of complex numbers, where measurements are expressed as quadratic functions of the real and imaginary parts of voltages (see \cite[Chap. 1]{bienstock2015electrical} for more details). We use ``PV bus'' and ``PQ bus'' to denote buses with real power injection and voltage magnitudes, and buses with real and reactive power injection measurements, respectively.

\subsection{Linear basis of representation}

We introduce a new basis of representation, where  measurements can be expressed as \emph{linear combinations} of the quantities derived form bus voltages. Specifically, for a given system $\mathcal{G}$, we introduce two groups of variables:
\begin{enumerate}
    \item voltage  magnitude square, $x_k^{\text{mg}}\coloneqq|v_k|^2$, for each bus $k\in\mathcal{N}$, and
    \item real and imaginary parts of complex products, denoted as $x^{\text{re}}_\ell\coloneqq\Re({v_iv_j^*})$ and $x^{\text{im}}_\ell\coloneqq\Im({v_iv_j^*})$, respectively, for each line $\ell=(i,j)$. Note that there is only one set of variables $\{x^{\text{re}}_\ell,x^{\text{im}}_\ell\}$ for each line.
\end{enumerate}

Using this representation, we can  derive  various types of power and voltage measurements as follows:
 \begin{itemize} 
\item 
\textit{Voltage magnitude square.} The voltage square magnitude {square} at  bus $k\in\mathcal{N}$ is simply $x_k^{\text{mg}}$ by definition;
\item
\textit{Branch power flows.} For each line $\ell=(i,j)$, the real and reactive power flows from bus $i$ to bus $j$ and in the reverse direction are given by:
\begin{align*}
p_{ij}^{(\ell)} &=g_{{\ell}}x_i^{\text{mg}}-g_{\ell} x^{\text{re}}_\ell-b_{\ell} x^{\text{im}}_\ell\\
q_{ij}^{(\ell)} &=-(b_{{\ell}}+\tfrac{1}{2}b^{\text{sh}}_\ell)x_i^{\text{mg}}+b_{\ell} x^{\text{re}}_\ell-g_{\ell} x^{\text{im}}_\ell\\
p_{ji}^{(\ell)} &=g_{{\ell}}x_j^{\text{mg}}-g_{\ell} x^{\text{re}}_\ell+b_{\ell} x^{\text{im}}_\ell\\
q_{ji}^{(\ell)} &=-(b_{{\ell}}+\tfrac{1}{2}b^{\text{sh}}_\ell)x_j^{\text{mg}}+b_{\ell} x^{\text{re}}_\ell+g_{\ell} x^{\text{im}}_\ell
\end{align*}
\item
\textit{Nodal power injection.} The power injection at bus node $k$ consists of real and reactive powers, i.e. $p_k+\irm q_k$, where:
\begin{align*}
p_k &= \sum_{k\in\ell}g_{{\ell}}x_k^{\text{mg}}-\sum_{k\in\ell}g_{{\ell}} x^{\text{re}}_\ell-(\sum_{f(\ell)=k}b_{\ell}-\sum_{t(\ell)=k}b_{\ell}) x^{\text{im}}_\ell\\
q_k &= -(\sum_{k\in\ell}b_{{\ell}}+\tfrac{1}{2}b^{\text{sh}}_{{\ell}})x_k^{\text{mg}}+\sum_{k\in\ell}b_{{\ell}} x^{\text{re}}_\ell-(\sum_{f(\ell)=k}g_{\ell}-\sum_{t(\ell)=k}g_{\ell}) x^{\text{im}}_\ell,
\end{align*}
where $\sum_{k\in\ell}$ is the sum over all lines $\ell\in\mathcal{L}$ that are connected to $k$, $\sum_{f(\ell)=k}$ is the sum over all lines $\ell$ where $f(\ell)=k$, and similarly, $\sum_{t(\ell)=k}$ is the sum over all lines $\ell$ where $t(\ell)=k$. Equivalently, we can use \eqref{equ:nodal_inject_equ} to combine the branch power flows defined above.
 \end{itemize} 
Thus, each customary  measurement in power systems that belongs to one of the above \emph{measurement types} can be represented by a linear function\footnote{It is straightforward to include linear PMU measurements in our analysis as well using the relation $\tan\theta_{ij}={x^{\text{im}}_{\ell}}/{x^{\text{re}}_{\ell}}$ for each line $\ell=(i,j)$. Thus, as long as we have two adjacent PMU measurements, we can use the phase difference to construct a linear measurement equation ${x^{\text{im}}_{\ell}}-\tan\theta_{ij}{x^{\text{re}}_{\ell}}=0$.}:
\begin{equation}
m_i(\xbf)=\abf_i^\top \xbf_\natural,
\label{equ:measurement}
\end{equation}
where $\abf_i\in\Rbb^{n_x}$ is the vector for the $i$-th noiseless measurement and $\xbf_\natural=(\{x^{\text{mg}}_{k}\}_{k\in\mathcal{N}},\{x^{\text{im}}_{\ell},x^{\text{im}}_{\ell}\}_{\ell\in\mathcal{L}})$ is the regression vector. By collecting all the sensor measurements in a vector $\mbf\in\Rbb^{n_m}$, we have
\begin{equation}
    \mbf=\Abf\xbf_\natural,
    \label{equ:sensing_equ}
\end{equation}
where $\Abf\in\Rbb^{n_m\times n_x}$ is the sensing matrix with rows $\abf_i^\top$ for $i\in [n_m]$. 



\subsection{Measurement model}

To perform SE, the supervisory control and data acquisition (SCADA) system collects measurements about power flows and complex voltages at key locations instrumented with sensors. This process is subject to both ubiquitous sensor noise and randomly occurring sensor faults. We consider the measurement model as follows:
\begin{equation}
\ybf=\Abf\xbf_\natural+\mathbf{w}_\natural+\bbf_\natural,
\label{equ:mv}
\end{equation}
where $\Abf\in\Rbb^{n_m\times n_x}$ and $\xbf_\natural\in\Rbb^{n_x}$ are the sensing matrix and the true regression vector in \eqref{equ:sensing_equ}, $\mathbf{w}_\natural\in\Rbb^{n_m}$ denotes random noise, and  $\mathbf{b}_\natural\in\Rbb^{m}$ is the bad data error that accounts for sensor failures or adversarial noise \cite{jin2019acfdia}. Note that $\xbf$ serves as an intermediate parameter and the end goal is to find $\vbf$.

Because the sensor data are of different types and their corresponding measurements could be of different scales, we introduce the following condition.
\begin{definition}[Measurement normalization convention]
\label{def:norm_measure}
Each row of $\Abf$ corresponding to a voltage magnitude measurement is normalized by the degree of connection of the node $k$, $\|\abf_i\|_2^2= \mathrm{deg}(k)$, and 1 otherwise $\|\abf_i\|_2^2= 1$, where $\abf_i$ is the $i$-th row of $\Abf$. The only exception is when the line vulnerability (c.f., Def. \ref{def:line_vul_metric}) is calculated, when all the measurements are normalized by 1.
\end{definition}

This condition is straightforward to implement in practice, since the sensing matrix $\Abf$ is fixed for a given set of measurements. This is also known as preconditioning, which assists with the statistical performance of regression.

\section{Two-step pipeline of state estimation}
\label{sec:estimator}
This section describes the proposed two-step state estimation method. For the first step, we develop algorithms in two categories, which differ by whether or not the second-order cone constraints are incorporated. Within each category, we also propose two slight variations, which differ by whether the term of squared loss is included. For the second step, we propose two approaches based on quadratic programming.

\subsection{Step 1: Estimation of $\xbf_\natural$}
\label{sec:stage1}

In the first step, the goal is to estimate $\xbf_\natural$ from a set of noisy and corrupted measurements $\ybf$. We consider two cases separately. In the first case, the dense noise is negligible, i.e., $\wbf_\natural=\bfzero$, and we only need to consider the sparse measurement corruption $\bbf$. 

\subsection*{Case 1: Sparse corruption but no dense noise (i.e., $\wbf=\bfzero$)}

In this case, the measurements are given by $\ybf=\Abf\xbf_\natural+\bbf_\natural$. To estimate $\xbf_\natural$, we solve the following program:
\begin{equation}
    \min_{\bbf\in\Rbb^{n_m},\xbf\in\Rbb^{n_x}}\|\bbf\|_1,\quad \st\quad \Abf\xbf+\bbf=\ybf.
    \tag{S$^{(1)}$: $\ell_1$}
    \label{equ:primal_l1}
\end{equation}
Briefly, under some mild conditions on observability and robusteness to be specified in Section \ref{sec:boundary}, we can faithfully recover $\bbf_\natural$ from the above program. As a consequence, $\xbf_\natural$ can be obtained by performing regression using the remaining good data.

For this case, we can also incorporate second-order cone (SOC) constraints:
\begin{equation}
    \min_{\bbf\in\Rbb^{n_m},\xbf\in\Rbb^{n_x}}\|\bbf\|_1,\quad \st\quad \Abf\xbf+\bbf=\ybf, \quad\xbf\in\Kcal,
    \tag{S$^{(1)}$: $\ell_1$-$\Kcal$}
    \label{equ:primal_socp}
\end{equation}
where 
\begin{equation}
    \Kcal=\left\lbrace\xbf\in\Rbb^{n_x}\Big\vert\begin{bmatrix}x^{\text{mg}}_i&x^{\text{re}}_\ell+jx^{\text{im}}_\ell\\
x^{\text{re}}_\ell-jx^{\text{im}}_\ell&x^{\text{mg}}_j\end{bmatrix}\succeq 0,\quad\forall\ell\coloneqq(i,j)\in\Lcal\right\rbrace.
\label{equ:Kdef}
\end{equation}
Let $\sigma(x)$ denote the index of the variable $x$ (e.g., $x^{\text{mg}}_i,x^{\text{re}}_\ell,x^{\text{im}}_\ell$) in the vector $\xbf$. For instance, $\sigma(x^{\text{mg}}_i)$ denotes the index of $x^{\text{mg}}_i$ in $\xbf$. The SOC constraint can be equivalently written as:
\begin{equation}
    \cbf_\ell^\top\xbf\geq\left\|\Dbf_\ell\xbf\right\|_2 \quad\Leftrightarrow\quad \begin{bmatrix}\Dbf_\ell\\\cbf_\ell^\top\end{bmatrix}\xbf\in\Ccal_5,
    \label{equ:c_l_D_l}
\end{equation}
where $\cbf_\ell\in\Rbb^{n_x}$ has its $\sigma(x^{\text{mg}}_i)$ and $\sigma(x^{\text{mg}}_j)$ entries to be $\tfrac{1}{\sqrt{2}}$ and 0 elsewhere, and $\Dbf_\ell\in\Rbb^{4\times n_x}$ has its $(1,\sigma(x^{\text{mg}}_i))$ and $(2,\sigma(x^{\text{mg}}_j))$ entries to be $\tfrac{1}{\sqrt{2}}$ and its $(3,\sigma(x^{\text{re}}_\ell))$ and $(4,\sigma(x^{\text{im}}_\ell))$ entries to be 1, and 0 elsewhere, and $\Ccal_5$ denotes the second-order cone of dimension 5.

The problem \eqref{equ:primal_socp} can be reformulated as:
\begin{equation}
    \min_{\bbf\in\Rbb^{n_m},\xbf\in\Rbb^{n_x}}\|\bbf\|_1,\quad \st\quad \Abf\xbf+\bbf=\ybf, \begin{bmatrix}\Dbf_\ell\\\cbf_\ell^\top\end{bmatrix}\xbf\in\Ccal_5,\forall \ell\in\Lcal
    \label{equ:primal_socp2}
\end{equation}
using standard SOCP notations. The Lagrangian is given by:
\begin{align*}
    L\left(\xbf,\bbf,\{\nu_\ell,\bmu_\ell\}_{\ell\in\Lcal},\hbf\right)&=\|\bbf\|_1+\hbf^\top\left(\ybf-\Abf\xbf-\bbf\right)-\sum_{\ell\in\Lcal}\left(\nu_\ell\cbf_\ell^\top\xbf+\bmu_\ell\Dbf_\ell\xbf\right)
\end{align*}
The Karush-Kuhn-Tucker (KKT) conditions are given by:
\begin{align}
&\text{(primal feasibility) }\;\qquad\qquad \Abf\xbf+\bbf=\ybf,\quad\cbf_\ell^\top\xbf\geq\left\|\Dbf_\ell\xbf\right\|_2,\quad\forall\ell\in\Lcal\\
    &\text{(dual feasibility) } \;\quad\qquad\qquad\nu_\ell\geq\|\bmu_\ell\|_2,\quad\forall\ell\in\Lcal\\
    &\text{(stationarity) } \;\;\;\quad\qquad\qquad\quad-\sum_{\ell\in\Lcal}(\nu_\ell\cbf_\ell+\Dbf_\ell^\top\bmu_\ell)=\Abf^\top\hbf,\quad\hbf\in\partial\|\bbf\|_1\\
    &\text{(complementary slackness) } \quad\nu_\ell\cbf_\ell^\top\xbf+\bmu_\ell^\top\Dbf_\ell\xbf=\bfzero,\quad\forall\ell\in\Lcal.
\end{align}
Therefore, the dual program of \eqref{equ:primal_socp} is given by:
\begin{subequations}
    \label{equ:dual_socp_l1}
	\begin{align}
    & \underset{\hbf\in\Rbb^{n_m},\{\nu_\ell,\bmu_\ell\}_{\ell\in\Lcal}
	}{\text{max~~~}}\hspace{1.5cm}\hbf^\top\ybf
	&&\\
	& \qquad\text{subject to~~~}
	&&\hspace{-4.5cm}-\sum_{\ell\in\Lcal}(\nu_\ell\cbf_\ell+\Dbf_\ell^\top\bmu_\ell)=\Abf^\top\hbf\\
	&&&\hspace{-4.5cm} \|\hbf\|_\infty\leq 1\\
	&&&\hspace{-4.5cm} \nu_\ell\geq\|\bmu_\ell\|_2,\quad\forall \ell\in\Lcal
	\end{align}
\end{subequations}

\subsection*{Case 2: Sparse corruption and dense noise}

In this case, the dense noise cannot be ignored, and the measurements are given by \eqref{equ:measurement}. We perform the estimation by solving the following mixed-objective optimization:
\begin{equation}
    \min_{\bbf\in\Rbb^{n_m},\xbf\in\Rbb^{n_x}}\tfrac{1}{2n_m}\|\ybf-\Abf\xbf-\bbf\|_2^2+\lambda\|\bbf\|_1,
    \tag{S$^{(1)}$: $\ell_2\ell_1$}
    \label{equ:primal_noise}
\end{equation}
where $\lambda>0$ is the regularization coefficient. Due to the existence of dense noise, it is no longer possible to exactly recover the true $\xbf_\natural$; however, if the magnitude of each dense noise is small, then we can still have strong statistical bounds on the estimation error.

We can also incorporate second-order cone constraints:
\begin{equation}
    \min_{\bbf\in\Rbb^{n_m},\xbf\in\Rbb^{n_x}}\tfrac{1}{2n_m}\|\ybf-\Abf\xbf-\bbf\|_2^2+\lambda\|\bbf\|_1,\quad \st\quad \xbf\in\Kcal,
    \tag{S$^{(1)}$: $\ell_2\ell_1$-$\Kcal$}
    \label{equ:primal_lasso_socp}
\end{equation}
where $\Kcal$ is defined in \eqref{equ:Kdef}. The Lagrangian of \eqref{equ:primal_lasso_socp} is given by:
\begin{align*}
    L\left(\xbf,\bbf,\{\bmu_\ell\}_{\ell\in\Lcal},\{\nu_\ell\}_{\ell\in\Lcal},\hbf\right)&=\tfrac{1}{2n_m}\|\ybf-\Abf\xbf-\bbf\|_2^2+\lambda\|\bbf\|_1-\sum_{\ell\in\Lcal}\left(\nu_\ell\cbf_\ell^\top\xbf+\bmu_\ell\Dbf_\ell\xbf\right)
\end{align*}
The KKT conditions are given by:
\begin{align}
&\text{(primal feasibility) }\;\qquad\qquad \cbf_\ell^\top\xbf\geq\left\|\Dbf_\ell\xbf\right\|_2,\quad\forall\ell\in\Lcal\\
    &\text{(dual feasibility) } \;\quad\qquad\qquad\nu_\ell\geq\|\bmu_\ell\|_2,\quad\forall\ell\in\Lcal\\
    &\text{(stationarity) } \;\;\;\quad\qquad\qquad\quad\frac{1}{n_m}\Abf^\top(\ybf-\Abf\xbf-\bbf)+\sum_{\ell\in\Lcal}(\nu_\ell\cbf_\ell+\Dbf_\ell^\top\bmu_\ell)=\bfzero\\
    &\qquad\qquad\;\;\;\;\;\;\;\;\quad\qquad\qquad\quad\frac{1}{n_m}(\ybf-\Abf\xbf-\bbf)=\lambda\hbf,\quad\hbf\in\partial\|\bbf\|_1\\
    &\text{(complementary slackness) } \quad\nu_\ell\cbf_\ell^\top\xbf+\bmu_\ell^\top\Dbf_\ell\xbf=\bfzero,\quad\forall\ell\in\Lcal.
\end{align}

The KKT conditions are important for the analysis in Section \ref{sec:boundary}.

\subsection{Connection with robust statistics for bad data detection}

The so-called bad data rejection and state estimation form an important part of power
systems supervisory control and data acquisition.  There are traditional statistical
approaches to bad data rejection that involve iteratively eliminating the measurements
with the largest residual that are obtained from a least squares estimation (see \cite[Section 9.6]{wood2013power}). Such a smooth quadratic  objective can, however, mask bad data by ``spreading" the error  around the system.  An alternative approach developed in \cite{bagchi1994comparison} is to use an $\ell_1$ objective, which can identify 
multiple bad data directly.  However, the resulting estimate does not average out the effect of dense, independent measurement errors. 

The so-called Huber loss that is
quadratic for small measurement residuals but constant or linear for large
measurement residuals has been explored in \cite{mili1996robust,baldick1997implementing,zhao2018statistical}.  The quadratic-linear loss function is convex, continuous and differentiable at the transition between the quadratic and linear part, and is given by \cite{huber2011robust}:
\begin{equation}
    f_{\mathrm{Huber}}(r;\psi)=\begin{cases}
    \frac{1}{2}r^2&|r|\leq\psi\\
    \psi(|r|-\frac{1}{2}\psi)&|r|>\psi\\
    \end{cases},
\end{equation}
where $\psi$ is the hyper-parameter controlling the transition point between the $\ell_2$ and $\ell_1$ loss functions.

There is an interesting connection between \eqref{equ:primal_noise} and the Huber loss. To see this, we can view the optimization over $\bbf$ and $\xbf$ in \eqref{equ:primal_noise} as an inner optimization with $\bbf$ for a given $\xbf$, and an outer optimization with $\xbf$. The inner optimization is composed of a series of smaller optimization problems
\begin{equation}
    \min_{b_i}\tfrac{1}{2n_m}(y_i-\abf_i^\top\xbf-b_i)^2+\psi|b_i|,
\end{equation}
for $i\in[n_m]$, which has the optimal solution 
\begin{equation}
    b_i^*=\sign(y_i-\abf_i^\top\xbf)\max\left(0,\left|y_i-\abf_i^\top\xbf\right|-\psi\right),
\end{equation}
where $\sign(y)$ is the sign of $y$. Now, by defining $r_i\coloneqq y_i-\abf_i^\top\xbf$, we substitute the solution into the outer optimization to obtain
\begin{equation}
    \tfrac{1}{n_m}\sum_{i\in[n_m]}\tfrac{1}{2}(r_i-\sign(r_i)\max\left(0,|r_i|-\psi\right))^2+\psi|\max\left(0,|r_i|-\psi\right)|.
\end{equation}
Hence, it can be seen that the above expression is equal to the Huber loss:
\begin{equation}
    \frac{1}{n_m}\sum_{i\in [n_m]} f_{\mathrm{Huber}}(y_i-\abf_i^\top\xbf;\psi).
\end{equation}
Despite the wide usage of Huber loss in power system estimation, the existing studies in the literature are mostly empirical. The approach proposed here allows for strong mathematical results that go well beyond the  promising empirical results.

\subsection{Step 2: Recovery of $\vbf$}

The goal of the second step is to recover the underlying system voltage $\vbf$ from the  estimation $\hat\xbf$ obtained in Step 1. First, we transform $\hat\xbf$ into estimations of voltage magnitudes and phase differences:
\begin{itemize}
    \item The voltage magnitude at each bus $k\in\mathcal{N}$ can be obtained by $|\hat{v}_k|=\sqrt{\hat{x}^{\text{mg}}_k}$;
    \item The phase difference along each line ${\ell}=(i,j)$ is given by $\hat\theta_{ij}=\arctan{\hat{x}^{\text{im}}_{\ell}}/{\hat{x}^{\text{re}}_{\ell}}$.
\end{itemize}
To obtain the estimations of phases at each bus, we propose two methds. The first method is to solve the least-squares problem
\begin{equation}
    \hat{\bftheta}=\arg\min\limits_{\bftheta\in\Rbb^{n_b}}
    \sum_{\ell=(i,j)}(\theta_i-\theta_j-\hat\theta_{ij})^2,
    \tag{S$^{(2)}$: $\ell_2$}
    \label{equ:s2-phase}
\end{equation}
which has a closed-form solution: let $\bftheta_\Delta$ be a collection of $\hat\theta_{ij}$, and $\Lbf\in\Rbb^{n_l\times n_b}$ be a sparse matrix with $L(\ell,i)\coloneqq1$ and $L(\ell,j)\coloneqq-1$ for each line $\ell=(i,j)$ and zero elsewhere. Then, the solution for \eqref{equ:s2-phase} is given by:
\begin{equation}\label{equ:closed-form}
    \hat{\bftheta}=(\Lbf^\top\Lbf)^{-1}\Lbf^\top\bftheta_\Delta.
\end{equation}

The second approach is to solve a mixed-objective problem, similar to the first step:
\begin{equation}
    \hat{\bftheta}=\arg\min\limits_{\bftheta\in\Rbb^{n_b}}
    \tfrac{1}{n_l}\sum_{\ell=(i,j)}(\theta_i-\theta_j-\hat\theta_{ij})^2+\lambda_2\sum_{\ell=(i,j)}|\theta_i-\theta_j-\hat\theta_{ij}|.
    \tag{S$^{(2)}$: $\ell_2\ell_1$}
    \label{equ:s2-phase-lasso}
\end{equation}
In this case, there is no longer a closed-form solution available, but the advantage is that it is robust to large errors in the phase difference estimation, in case the first step method does not fully detect the bad data in the measurements.

Finally, we can reconstruct $\hat\vbf$ via the formula:
\begin{equation}
    \hat{v}_k=|\hat{v}_k|e^{\irm\hat\theta_k},\qquad k\in\mathcal{N}.
\end{equation}
If the regression vector from Step 1 is exact, i.e., $\hat\xbf=\xbf_\natural$, then we can use \eqref{equ:s2-phase} to accurately recover the system state $\hat\vbf=\vbf$. Even if the $\hat\xbf$ is not exact, the second stage estimator \eqref{equ:s2-phase-lasso} has nice properties to control the estimation error, and therefore any potential error in $\hat\theta_{ij}$ does not propagate along the branches.

\section{Boundary defense mechanism}
\label{sec:boundary}
In this section, we give a detailed discussion of the new notion of defense on networks, called ``boundary defense mechanism.'' For a given attack scenario, we define a natural partition of the network into the attacked, inner and outer boundaries, and safe regions. We describe a fairly general framework, which incorporates a wide range of adversarial scenarios that are localized, including line outage, substation down, and zonal attacks.  For the rest of the analysis, we denote $\xbf_{\sharp}$ and $\bbf_\sharp$ as the ground truth for state $\xbf$ and bad data $\bbf$, as defined in \eqref{equ:mv}.

\begin{definition}[Attacked, boundary, and safe regions]
\label{def:attack_bound_safe}
Let $\NBat$ be the set of nodes under attack and the ``{attacked region}'' $\Bat\coloneqq\{\NBat,\LBat\}$ be the induced subgraph. Let the ``{inner boundary}'' be the set of nodes adjacent to the attacked region $\NBbi\coloneqq\left\lbrace i\in\Ncal\setminus\NBat\;\middle|\; \exists j\in \NBat,\;\;\mathrm{ s.t. }\;\;\{i,j\}\in\Lcal\right\rbrace$ and the induced graph be denoted as $\Bbi$, and the ``{outer boundary}'' be the set of nodes adjacent to the inner boundary region $\NBbo\coloneqq\left\lbrace i\in\Ncal\setminus(\Ncal_\Bcal\cup\NBbi)\;\middle|\; \exists j\in \NBbi,\;\;\mathrm{ s.t. }\;\;\{i,j\}\in\Lcal\right\rbrace$ and the induced graph be denoted as $\Bbo$. Let $\NBbd\coloneqq \NBbi\cup\NBbo$ be  nodes in the ``{boundary region}'' and $\Bbd\coloneqq\{\NBbd,\LBbd\}$ be the induced subgraph. We also denote the set of lines that bridge nodes between $\Bat$ and $\Bbi$ as $\Latbi$, and the set of lines that brige nodes between $\Bbi$ and $\Bbo$ as $\Lbibo$. Lastly, let $\Ncal_\Bsf\coloneqq\Ncal\setminus(\NBat\cup\NBbd)$ be the rest of the nodes and the ``{safe region}'' $\Bsf\coloneqq\{\NBsf,\LBsf\}$ be the induced subgraph.
\end{definition}

When there is an attack on a local region, a subset of the local measurements are compromised. We use $\Bcal=\Bat\cup\Bbi$ to delineate the smallest subgraph to cover this region. For the simplicity of the analysis, we assume that there are no lines connecting two inner boundary nodes in $\Bbi$, and that no two nodes in $\Bat$ are connected to the same node in $\Bbi$ (one can always enlarge the region $\Bcal$ to satisfy these conditions).  Furthermore, we make the assumption that no measurements on the nodes (e.g., voltage magnitudes and nodal injections) or on the lines (e.g., power branch flows) within the boundary region $\Bbd$ are attacked. The partition set notations in Def. \ref{def:attack_bound_safe} are illustrated in Fig. \ref{fig:region_illustrate}. With the set partition notions ready, we introduce a partition of the measurements and variables.

\begin{figure}[h]
  \centering
  \begin{center}
    \includegraphics[width=.6\columnwidth,trim=0mm 0mm 0mm 0mm,clip]{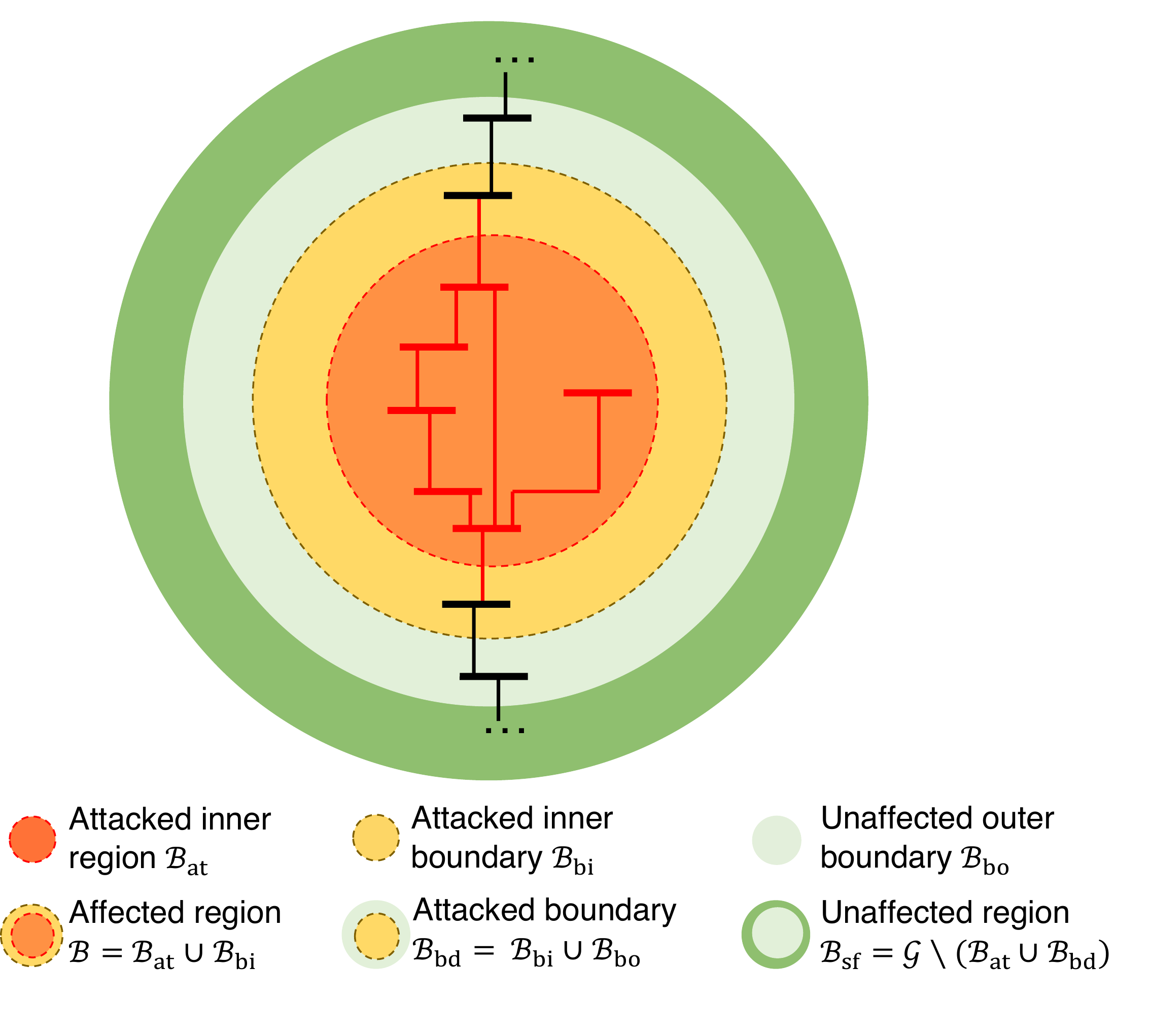}
  \end{center}
  \caption{The illustrations of the partition set concepts introduced for the case of zonal attacks. Lines or buses whose measurements are under attack are shown in red.}
  \label{fig:region_illustrate}
\end{figure}

\begin{definition}[Attacked, boundary and safe variables and measurements]
The set of ``attacked variables'' $\Xat$ includes variables on nodes in $\Bat$ and lines in $\LBat\cup\Latbi$. The set of ``boundary variables'' $\Xbd$ includes variables on nodes in $\Bbd$ and lines in $\LBbd$. The set of ``safe variables'' $\Xsf$ includes all other variables. The set of ``attacked measurements'' $\Mat$ includes measurements on nodes in $\Bat$ and lines in $\LBat$. The set of ``inner boundary measurements'' $\Mbi$ includes nodal power injections in $\Bbi$ and line measurements in $\Latbi$, and the set of ``outer boundary measurements'' $\Mbo$ includes voltage magnitude and line measurements in $\Bbd$. Together, they form the ``boundary measurements'' $\Mbd\coloneqq\Mbi\cup\Mbo$. The rest of the measurements $\Msf$ are ``safe measurements.''
\label{def:bound_var_mea_all}
\end{definition}

By definition, the sets $\Msf$, $\Mbo$, $\Mbi$, $\Mat$ form a partition of $[n_m]$, and the sets ${\Xsf}$, ${\Xbd}$, and $\Xat$ form a partition of $[n_x]$. Thus, we can rearrange and partition the matrix $\Abf$ as follows:
\begin{equation}
    \Abf=\begin{bmatrix}
\Abf_{\Msf,\Xsf}&\Abf_{\Msf,\Xbd}&\Abf_{\Msf,\Xat}\\
\Abf_{\Mbo,\Xsf}&\Abf_{\Mbo,\Xbd}&\Abf_{\Mbo,\Xat}\\
\Abf_{\Mbi,\Xsf}&\Abf_{\Mbi,\Xbd}&\Abf_{\Mbi,\Xat}\\
\Abf_{\Mat,\Xsf}&\Abf_{\Mat,\Xbd}&\Abf_{\Mat,\Xat}
\end{bmatrix}=\begin{bmatrix}
\Abf_{\Msf,\Xsf}&\Abf_{\Msf,\Xbd}&\bfzero\\
\bfzero&\Abf_{\Mbo,\Xbd}&\bfzero\\
\bfzero&\Abf_{\Mbi,\Xbd}&\Abf_{\Mbi,\Xat}\\
\bfzero&\bfzero&\Abf_{\Mat,\Xat}
\end{bmatrix}.
\label{equ:A_struct}
\end{equation}
There is no loss of generality in arranging $\Abf$ as above, which is simply for the purpose of presentation. Let $\Ibf_{\Mat}^{(n_m)}$, $\Ibf_{\Mbi}^{(n_m)}$, $\Ibf_{\Mbo}^{(n_m)}$, and $\Ibf_{\Msf}^{(n_m)}$ be matrices that consist of the ${\Mat}$, ${\Mbi}$, ${\Mbo}$, and ${\Msf}$ rows from the identity matrix of size $n_m$, respectively, and $\Ibf_{\Xat}^{(n_x)}$, $\Ibf_{\Xbd}^{(n_x)}$ and $\Ibf_{\Xsf}^{(n_x)}$ be the matrices that consist of the ${\Xat}$, ${\Xbd}$ and ${\Xsf}$ rows from the identity matrix of size $n_x$. Then, we can obtain each subblock that accounts for a set of measurements (e.g. $\Msf$) and variables (e.g. $\Xsf$) using the equation $\Abf_{\Msf,\Xsf}=\Ibf_{\Msf}^{(n_m)}\Abf\Ibf_{\Xsf}^{(n_x)\top}$ without having to specify a particular order sequence of measurements $\ybf$ or variables $\xbf$,

We introduce the following properties to characterize the sensing matrix $\Abf$. 
\begin{definition}[Lower eigenvalue]
Let $\Qbf_{\Mbd,\Xbd}\coloneqq\begin{bmatrix}\Abf_{\Mbd,\Xbd}&\Ibf_{\Mbi}^{(|\Mbd|)\top}\end{bmatrix}$, where $\Ibf_{\Mbi}^{(|\Mbd|)}$ consists of $\Mbi$ rows of the size--$|\Mbd|$ identity matrix. Then, the lower eigenvalue $\Cmin$ is the lower bound:
    \begin{equation}
 \min\left\lbrace\lambda_{\mathrm{min}}\left(\Qbf_{\Mbd,\Xbd}^\top \Qbf_{\Mbd,\Xbd}\right),\lambda_{\mathrm{min}}\left(\Abf_{\Mbo,\Xbd}^\top \Abf_{\Mbo,\Xbd}\right),\lambda_{\mathrm{min}}\left(\Abf_{\Msf,\Xsf}^\top \Abf_{\Msf,\Xsf}\right)\right\rbrace\geq\Cmin. \label{equ:lower_eig_Q}
    \end{equation}
\end{definition}
The value $C_{\text{min}}$ characterizes the influence of bad data on the identifiability of $\xbf_\natural$ outside the attacked region. If $C_{\text{min}}$ is strictly positive and one can accurately detect the support of bad data on the boundary, then it is possible to obtain a satisfactory estimation of $\xbf_\natural$ outside the attacked region. 

The next property turns out to be critical for bad data support recovery.

\begin{definition}[Global mutual incoherence]
Let $\Jcal$ denote the support of bad data, and let the pseudoinverse of $\Abf_{\Jcal^c}$ be $\Abf^{+}_{\Jcal^c}=(\Abf_{\Jcal^c}^\top \Abf_{\Jcal^c})^{-1}\Abf_{\Jcal^c}^\top$. Then, the mutual incoherence parameter $\rho(k_b)$ is given by:
\begin{equation}
\rho(\Jcal)=\|\Abf_{\Jcal^c}^{\top+}\Abf_\Jcal^\top\|_{\infty}.
\end{equation}
\label{def:mutual_incoherence}
\end{definition}
The name ``mutual incoherence'' originates from the compressed sensing literature \cite{fuchs2005recovery,tropp2006just,zhao2006model,wainwright2009sharp}. The proposed mutual incoherence definition is not the same as any of the existing mutual incoherence conditions. Intuitively, it measures the alignment of the sensing directions of the corrupted measurements (i.e., $\Abf_\Jcal$) with those of the clean data (i.e., $\Abf_{\Jcal^c}$). If these directions are misaligned (a.k.a., incoherent), then the value $\rho(\Jcal)$ is low, and it is likely to uncover the support of bad data. In general, the less bad data exist, the more likely that $\rho(\Jcal)$ will be small. However, the main drawback of this metric is that it depends on each instance of the bad data support $\Jcal$, and therefore it cannot be used as a robustness metric in a general sense. Moreover, it turns out that this metric is more conservative than the vulnerability index to be discussed next (see Proposition \ref{prop:relate_mi}).


\subsection{Vulnerability index and boundary defense for linear/quadratic programming}
Our goal is to find the attacked region by detecting a sufficiently large number of measurements within $\Mat$ while avoiding making false positive detection for measurements belonging to the unaffected region. In other words, if $\hat{\Jcal}\coloneqq\supp(\hat{\bbf})$ denotes the support of the estimated bad data, then it is desirable to have $\hat{\Jcal}\subseteq\Mat\cup\Mbi$ (here, we relax the condition that $\hat{\Jcal}\subseteq\Jcal$ and allow both false positives and false negatives within the attacked region). The following lemma establishes a key result for the estimation without SOCs.

\begin{lemma}[Boundary defense stops error propagation]
Suppose tthat here is no dense measurement noise (i.e., $\wbf=\bfzero$), and the bad data are confined within $\Mat$, i.e., $\supp(\bbf_{\natural})\subseteq \Mat$. Also, suppose that $\Abf_{\Msf\cup\Mbd,\Xsf\cup\Xbd}$ has full column rank. If for an arbitrary $\bbf_{\star\Mbd}$ with support limited to the inner boundary, i.e., $\supp(\bbf_{\star\Mbd})\subseteq\Mbi$, the solution $\hat{\xbf}_{\Ibd}\in\Xbd$ to the program
\begin{equation}
    \min_{{\xbf}_{\Ibd}}\|\zbf_{\Mbd}-\Abf_{\Mbd,\Xbd}\xbf_{\Ibd}\|_1       \label{equ:oracle_B_bd}
\end{equation}
is unique and satisfies the properties $\hat{\xbf}_{\Ibd}={\xbf}_{\natural \Ibd}$, where $\zbf_{\Mbd}=\Abf_{\Mbd,\Xbd}\xbf_{\natural \Ibd}+\bbf_{\star\Mbd}$, then the solution $\hat{\xbf}$ to \eqref{equ:primal_l1} satisfies the properties $\hat{\xbf}_{\Ibd}={\xbf}_{\natural \Ibd}$ and $\hat{\xbf}_{\Isf}={\xbf}_{\natural \Isf}$.
\label{lemma:bound_defense}
\end{lemma}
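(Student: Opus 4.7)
The plan is to reformulate \eqref{equ:primal_l1} by eliminating $\bbf$ as $\min_{\xbf}\|\ybf-\Abf\xbf\|_1$, and then compare any purported optimum $\hat\xbf$ against a carefully constructed competitor that agrees with $\xbf_\natural$ on the boundary and safe regions but keeps $\hat\xbf_\Iat$ on the attacked region. Using the block structure of $\Abf$ displayed in \eqref{equ:A_struct} (in particular $\Abf_{\Mbd,\Xsf}=\bfzero$ and $\Abf_{\Mbo,\Xat}=\bfzero$), one gets an explicit expression for the induced bad-data vector $\tilde\bbf=\ybf-\Abf\tilde\xbf$: it vanishes on $\Msf\cup\Mbo$, equals $-\Abf_{\Mbi,\Xat}\Delta\xbf_\Iat$ on $\Mbi$, and coincides with $\hat\bbf_\Mat$ on $\Mat$, where $\Delta\xbf_\Iat\coloneqq\hat\xbf_\Iat-\xbf_{\natural\Iat}$. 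Optimality of $\hat\bbf$ over $\tilde\bbf$ then yields the key inequality
\begin{equation*}
\|\hat\bbf_\Msf\|_1+\|\hat\bbf_\Mbo\|_1+\|\hat\bbf_\Mbi\|_1\;\le\;\|\Abf_{\Mbi,\Xat}\Delta\xbf_\Iat\|_1.
\end{equation*}

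The second step is to invoke the oracle hypothesis with the specific perturbation $\bbf_{\star\Mbd}$ that is zero on $\Mbo$ and equal to $-\Abf_{\Mbi,\Xat}\Delta\xbf_\Iat$ on $\Mbi$; this is admissible because its support lies in $\Mbi$. Setting $\zbf_{\Mbd}=\Abf_{\Mbd,\Xbd}\xbf_{\natural\Ibd}+\bbf_{\star\Mbd}$, the objective value of the local program \eqref{equ:oracle_B_bd} at $\xbf_{\natural\Ibd}$ is $\|\bbf_{\star\Mbd}\|_1=\|\Abf_{\Mbi,\Xat}\Delta\xbf_\Iat\|_1$, while a block-wise computation shows that its value at $\hat\xbf_\Ibd$ equals exactly $\|\hat\bbf_\Mbo\|_1+\|\hat\bbf_\Mbi\|_1$ (using $\hat\bbf_\Mbo=-\Abf_{\Mbo,\Xbd}\Delta\xbf_\Ibd$ and the definition of $\hat\bbf_\Mbi$). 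If $\hat\xbf_\Ibd\ne\xbf_{\natural\Ibd}$, uniqueness of the local minimizer forces $\|\hat\bbf_\Mbo\|_1+\|\hat\bbf_\Mbi\|_1>\|\Abf_{\Mbi,\Xat}\Delta\xbf_\Iat\|_1$, contradicting the inequality derived in the previous paragraph. Hence $\hat\xbf_\Ibd=\xbf_{\natural\Ibd}$.

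Once boundary recovery is in hand, the final step is short. Plugging $\hat\xbf_\Ibd=\xbf_{\natural\Ibd}$ back gives $\hat\bbf_\Mbo=\bfzero$ and $\|\hat\bbf_\Mbi\|_1=\|\Abf_{\Mbi,\Xat}\Delta\xbf_\Iat\|_1$, so the competitor inequality forces $\|\hat\bbf_\Msf\|_1=0$. Therefore $\bfzero=\hat\bbf_\Msf=-\Abf_{\Msf,\Xsf}\Delta\xbf_\Isf-\Abf_{\Msf,\Xbd}\Delta\xbf_\Ibd=-\Abf_{\Msf,\Xsf}\Delta\xbf_\Isf$, and the full column-rank assumption on $\Abf_{\Msf\cup\Mbd,\Xsf\cup\Xbd}$ (which, combined with the block zero $\Abf_{\Mbd,\Xsf}=\bfzero$, entails full column rank of $\Abf_{\Msf,\Xsf}$) yields $\Delta\xbf_\Isf=\bfzero$ and hence $\hat\xbf_\Isf=\xbf_{\natural\Isf}$.

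The main obstacle I expect is the bookkeeping of the block decomposition: one needs to identify the right competitor $\tilde\xbf$ so that the induced residual is supported in $\Mbi\cup\Mat$, and then recognize that the residual of the \emph{global} problem restricted to $\Mbd$ takes the exact form required by the \emph{local} oracle hypothesis. Everything else is algebraic manipulation followed by a standard optimality-and-uniqueness contradiction, and the concluding rank argument uses only the structural zeros of $\Abf$ already visible in \eqref{equ:A_struct}.
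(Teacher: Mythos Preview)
Your proof is correct and follows the same primal decoupling idea as the paper's first approach: both exploit the block structure \eqref{equ:A_struct} to reduce the global $\ell_1$ problem to the local oracle \eqref{equ:oracle_B_bd} on the boundary, then use full column rank to finish on the safe region. The paper packages this as an additive split $f_1+f_2+f_3$ of the objective and argues that, for any fixed $\xbf_{\Iat}$, the unique minimizer of $f_1+f_2$ is the truth; you instead fix the actual minimizer $\hat{\xbf}_{\Iat}$, build the single competitor $\tilde{\xbf}=(\xbf_{\natural\Isf},\xbf_{\natural\Ibd},\hat{\xbf}_{\Iat})$, and derive a contradiction from the strict uniqueness in the oracle hypothesis. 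These are equivalent at the level of ideas; your version has the minor advantage that the strict inequality from uniqueness is invoked only once, at the specific $\bbf_{\star\Mbd}=-\Abf_{\Mbi,\Xat}\Delta\xbf_{\Iat}$, whereas the paper also offers a second proof via a dual certificate (KKT) that generalizes more directly to the SOCP and noisy settings.
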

To sketch the proof, since by assumption the unique optimal solution for the measurement-sensing matrix pair $(\ybf_\Msf,\Abf_{\Msf,\Xsf\cup\Xbd})$ given ${\xbf}_{{\natural\Ibd}}$ recovers the ground truth ${\xbf}_{{\natural\Isf}}$, we aim at showing that the unique optimal solution of $(\ybf_{\Mbd\cup\Mat},\Abf_{\Mbd\cup\Mat,\Xbd\cup\Xat})$ corresponding to the boundary state coincides with ${\xbf}_{{\natural\Ibd}}$, which completes the proof because this set of measurements is independent of the states $\xbf_{\Isf}$. This achieves a de facto coupling of the ``weakly coupled'' system due to the overlapping regions corresponding to measurements $\ybf_{\Mbd}$.
\begin{proof}
There are two ways to prove the statement. The first one relies on logical reasoning that is intuitive, while the second approach is based on KKT conditions that can be easily generalized to measurements with dense noise. We start with the first approach, which partitions the loss function in \eqref{equ:primal_l1} into the sum of three terms:
\begin{align*}
    f_1({\xbf}_{\Isf},{\xbf}_{\Ibd})&=\|\ybf_{\Msf}-\Abf_{\Msf,\Xsf}\xbf_{\Isf}-\Abf_{\Msf,\Xbd}\xbf_{\Ibd}\|_1;\\
    f_2({\xbf}_{\Ibd},{\xbf}_{\Iat})&=\|\ybf_{\Mbd}-\Abf_{\Mbd,\Xbd}\xbf_{\Ibd}-\Abf_{\Mbd,\Xat}\xbf_{\Iat}\|_1;\\
    f_3({\xbf}_{\Iat})&=\|\ybf_{\Mat}-\Abf_{\Mat,\Xat}\xbf_{\Iat}\|_1.
\end{align*}
Let $\zbf_\Mbd=\ybf_{\Mbd}-\Abf_{\Mbd,\Xat}\xbf_{\Iat}=\Abf_{\Mbd,\Xbd}\xbf_{\natural\Ibd}-\Abf_{\Mbd,\Xat}(\xbf_{\natural\Iat}-\xbf_\Iat)$, and by the structure of $\Abf_{\Mbd,\Xat}$ shown in \eqref{equ:A_struct}, we have $\supp\left(\Abf_{\Mbd,\Xat}(\xbf_{\Iat}-\xbf_{\natural\Iat})\right)\subseteq \Mat$.  Hence, we have that the unique optimal of $f_2({\xbf}_{\Ibd},{\xbf}_{\Iat})$ satisfies $\hat{\xbf}_{\Ibd}={\xbf}_{\natural\Ibd}$ for any given ${\xbf}_{\Iat}$. Since there are no bad data for $\ybf_{\Msf}$ and $\ybf_{\Mbd}$ and moreover $\Abf_{\Msf\cup\Mbd,\Xsf\cup\Xbd}$ has full column rank, the unique minimum of $f_1({\xbf}_{\Isf},{\xbf}_{\Ibd})$ is $(\hat{\xbf}_{\Isf},\hat{\xbf}_{\Ibd})=({\xbf}_{\natural\Isf},{\xbf}_{\natural\Ibd})$. Therefore, for any given ${\xbf}_{\Iat}$, the unique optimal of $f_1({\xbf}_{\Isf},{\xbf}_{\Ibd})+f_2({\xbf}_{\Ibd},{\xbf}_{\Iat})$ is $(\hat{\xbf}_{\Isf},\hat{\xbf}_{\Ibd})=({\xbf}_{\natural\Isf},{\xbf}_{\natural\Ibd})$. Since $f_3({\xbf}_{\Iat})$ does not depend on $({\xbf}_{\Isf},{\xbf}_{\Ibd})$, the unique optimal solution of \eqref{equ:primal_l1} recovers the true solution.

The second approach is as follows. We can write the dual program of \eqref{equ:primal_l1} as:
\begin{equation}
    \max_{\hbf\in\Rbb^{n_m}}\hbf^\top\ybf,\quad \st\quad \Abf^\top\hbf=\bfzero,\quad\|\hbf\|_\infty\leq 1.
    \tag{S$^{(1)}$: $\ell_1$-dual}
    \label{equ:dual_l1}
\end{equation}
To show that $\left(\hat{\xbf}=\begin{bmatrix}\xbf_{\natural\Isf}^\top&\xbf_{\natural\Ibd}^\top&\hat{\xbf}_{\Iat}\end{bmatrix}^\top,\hat{\bbf}=\begin{bmatrix}\bfzero^\top&\hat{\bbf}_\Mbd^\top&\hat{\bbf}_\Mat^\top\end{bmatrix}^\top\right)$ is the optimal solution of \eqref{equ:primal_l1}, we simply need to find a dual certificate $\hbf_\star=\begin{bmatrix}\hbf_\Msf^\top&\hbf_\Mbd^\top&\hbf_\Mat^\top\end{bmatrix}^\top$ that satisfies the KKT conditions:
\begin{align}
    &\text{(dual feasibility) } \quad\Abf^\top\hbf_\star=\bfzero,\label{equ:dual_feas_l1}\\
    &\text{(stationarity) } \;\;\quad\quad\hbf_\star\in\partial\|\hat{\bbf}\|_1.
\end{align}
Since by the reasoning above, $\begin{bmatrix}\xbf_{\natural\Isf}^\top&\xbf_{\natural\Ibd}^\top\end{bmatrix}^\top$ is the unique optimal of the objective $f_1(\xbf_\Isf,\xbf_\Ibd)+f_2(\xbf_\Ibd,\xbf_\Iat)$, it corresponds to a dual certificate $\begin{bmatrix}\hbf_\Msf^\top&\hbf_\Mbd^\top\end{bmatrix}^\top$ such that 
\begin{align}
    &\Abf_{\Msf,\Xsf\cup\Xbd}^\top\hbf_\Msf+\Abf_{\Mbd,\Xsf\cup\Xbd}^\top\hbf_\Mbd=\bfzero,\\
    &\|\hbf_\Msf\|_\infty\leq 1,\|\hbf_\Mbd\|_\infty\leq 1.
\end{align}
Similarly, by the optimality of $\hat{\xbf}_\Iat$ for $f_3(\xbf_\Iat)$, we can find a dual certificate such that:
\begin{align}
    \Abf_{\Mat,\Xat}^\top\hbf_\Mat=\bfzero,\quad\hbf_\Mat\in\partial\|\hat{\bbf}_\Mat\|_1.
\end{align}
Thus, by the structure of $\Abf$, the construction $\hbf_\star=\begin{bmatrix}\hbf_\Msf^\top&\hbf_\Mbd^\top&\hbf_\Mat^\top\end{bmatrix}^\top$ yields a dual certificate. 

\end{proof}


A key condition in Lemma \ref{lemma:bound_defense} is the recovery of the boundary variables in the presence of arbitrary bad data that occur in the attacked region. This condition needs to be checked for every possible attack scenario, which is not useful to understand the system vulnerability in the general case. Instead, we propose a line-based vulnerability index notion in the main text, which provides a sufficient condition in this context. The technical definition is as follows.

\begin{definition}[Local boundary variables and measurements]
For each line $\ell$ that connects nodes $i$ and $j$, let us distinguish the directions $i\rightarrow j$ and $j\rightarrow i$. For the direction $i\rightarrow j$, let $i$ denote the node under attack and $j$ be the node within the inner defense boundary. Accordingly, let $\Bboij$ denote the set of buses (other than $i$) that are directly connected to $j$ as the outer boundary,  $\Bbiij=\{j\}$ be the one-bus inner boundary, and $\Batij=\{i\}$ be the one-bus attack set. Let $\Lbdij$ represent the union of line $\ell$ and the set of lines that bridge $\Bbiij$ and $\Bboij$. Define the ``{boundary variables}'' $\Xbdij$ as the collection of voltage magnitudes $\{\xmg_k\}_{k\in\Bbiij\cup\Bboij}$ and  variables $\{\xre_\eta,\xim_\eta\}$ for the set of lines $\eta\in\Lcal$ that connect the inner boundary $j$ to nodes in the outer boundary $\Bboij$. Define the ``{boundary measurements}'' $\Mbdij=\Mbdijg\cup\Mbdijb$ as the collection of measurements that depend only on the boundary variables $\Xbdij$, denoted by $\Mbdijg$, and measurements that depend on both $\Xbdij$ and variables $\{\xre_\ell,\xim_\ell\}$ of the attacked line $\ell$, denoted by $\Mbdijb$. The above terms can be similarly defined for the direction $j\rightarrow i$ by replacing $i\rightarrow j$ to $j\rightarrow i$ in the notations. Thus, for each line, we will have two sets of boundary variables and measurements.
\label{def:bound_var_mea}
\end{definition}

With the above notations, we can formally describe the line vulnerability index.

\begin{definition}[Line vulnerability index]
\label{def:line_vul_metric}
For each line $\{i,j\}_\ell\in\Lcal$, define the line vulnerability metric $\aij$ along the direction $i\rightarrow j$ as the optimal objective value of the following minimax program:
\begin{subequations}
    \label{equ:lin_vul_for}
	\begin{align}
    \aij=& \underset{\bfxi\in\{-1,+1\}^{\nijb}
	}{\text{max~~~}}\quad\underset{\alpha\in\Rbb,\hbf\in\Rbb^{\nijg}}{\text{min~~~}}
	&&\hspace{-3.5cm} \alpha\\
	& \qquad\text{subject to~~~}
	&&\hspace{-4.5cm}\Abf_{\Mbdijg,\Xbdij}^\top\hbf+\Abf_{\Mbdijb,\Xbdij}^\top\bfxi=\bfzero\\
	&&&\hspace{-4.5cm} \|\hbf\|_\infty\leq\alpha,
	\end{align}
\end{subequations}
where $\nijg=|\Mbdijg|$ and $\nijb=|\Mbdijb|$ are the number of measurements in $\Mbdijg$ and $\Mbdijb$, respectively, and $\Xbdij$, $\Mbdijg$ and $\Mbdijb$ are the boundary variables and measurement indices introduced in Def. \ref{def:bound_var_mea}. Similarly, we can define the backward line vulnerability metric $\aji$ by replacing $i\rightarrow j$ to $j\rightarrow i$ in \eqref{equ:lin_vul_for}. We adopt the measurement normalization convention in Def. \ref{def:norm_measure}.
\end{definition}

Note that for the simple case where there are no lines between any two nodes in $\Ncal_{\Bbi}$, we can extend the above definition to treat each node in $\Ncal_{\Bbi}$ separately. Due to the localized nature, this condition is much weaker than the global mutual incoherence condition in Def. \ref{def:mutual_incoherence}. This is intuitive, because if the network is attacked and the data for a subset of the network are manipulated, then this can be modeled by a cut that removes a subgraph. Then, even if data analytics cannot reason about the lines inside the subgraph, we can still identify the boundary of the subgraph and correctly recover the state for the rest of the network. In fact, we can show the following relationship with the mutual incoherence metric.

\begin{proposition}[Mutual incoherence is more conservative than vulnerability index]
\label{prop:relate_mi}
For each line $\ell$ and the corresponding partitions of measurements $\Mbdijg$, $\Mbdijb$ and variables $\Xbdij$, let $$\rho(\Mbdijb),=\|\Abf_{\Mbdijg,\Xbdij}^{\top+}\Abf_{\Mbdijb,\Xbdij}^\top\|_{\infty}$$ be the mutual incoherence metric defined in Def. \ref{def:mutual_incoherence}. Then, it holds that $\rho(\Mbdijb)\geq\aij$.
\end{proposition}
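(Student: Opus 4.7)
The plan is to upper bound the minimax value $\aij$ by exhibiting, for each admissible sign vector $\bfxi$, an explicit feasible point of the inner minimization whose $\ell_\infty$--norm is at most $\rho(\Mbdijb)$. Because the inner value $\min_{\hbf}\|\hbf\|_\infty$ subject to the affine constraint is a convex function of $\bfxi$ (the infimum of an affine family over an affine feasible set), its maximum over the hypercube $\{\bfxi:\|\bfxi\|_\infty\leq 1\}$ coincides with the maximum over the vertices $\{-1,+1\}^{\nijb}$. It therefore suffices to establish the bound under the continuous relaxation $\|\bfxi\|_\infty\leq 1$, which subsumes the discrete set appearing in Def.~\ref{def:line_vul_metric}.

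Given any such $\bfxi$, the candidate to analyze is
\[
\hbf_0 \coloneqq -\Abf_{\Mbdijg,\Xbdij}^{\top+}\Abf_{\Mbdijb,\Xbdij}^\top\bfxi,
\]
namely the minimum--$\ell_2$--norm solution of the feasibility equation $\Abf_{\Mbdijg,\Xbdij}^\top\hbf=-\Abf_{\Mbdijb,\Xbdij}^\top\bfxi$. Verifying feasibility reduces to the right-inverse identity $\Abf_{\Mbdijg,\Xbdij}^\top\Abf_{\Mbdijg,\Xbdij}^{\top+}=\Ibf$, which holds exactly when $\Abf_{\Mbdijg,\Xbdij}$ has full column rank. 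This full column rank is already tacitly built into the hypothesis: otherwise the pseudoinverse formula $(\Abf^\top\Abf)^{-1}\Abf^\top$ used in Def.~\ref{def:mutual_incoherence} would not even be well defined, so $\rho(\Mbdijb)$ would be undefined.

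Once feasibility is in hand, the norm bound follows from submultiplicativity of the induced $\ell_\infty$ operator norm,
\[
\|\hbf_0\|_\infty \;\leq\; \|\Abf_{\Mbdijg,\Xbdij}^{\top+}\Abf_{\Mbdijb,\Xbdij}^\top\|_\infty \cdot \|\bfxi\|_\infty \;\leq\; \rho(\Mbdijb).
\]
Since $\hbf_0$ is feasible in the inner program of~\eqref{equ:lin_vul_for} with attained objective $\|\hbf_0\|_\infty\leq\rho(\Mbdijb)$, the inner optimal value is at most $\rho(\Mbdijb)$ uniformly over admissible $\bfxi$, and the outer maximization preserves the bound to yield $\aij\leq\rho(\Mbdijb)$. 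The argument has essentially no technical obstacle: the only conceptual point is recognizing that observability promotes the pseudoinverse to an exact right inverse, after which the proposition reduces to a single application of the operator-norm inequality.
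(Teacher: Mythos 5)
Your proposal is correct and follows essentially the same route as the paper's proof: you exhibit the pseudoinverse-based vector $-\Abf_{\Mbdijg,\Xbdij}^{\top+}\Abf_{\Mbdijb,\Xbdij}^\top\bfxi$ as a feasible point of the inner minimization and bound its $\ell_\infty$-norm by $\rho(\Mbdijb)$, which is exactly the paper's argument. Your additional remarks (the relaxation to the hypercube, which is not actually needed for an upper bound over the smaller discrete set, and the observation that full column rank of $\Abf_{\Mbdijg,\Xbdij}$ is implicitly required for the pseudoinverse to act as an exact right inverse of $\Abf_{\Mbdijg,\Xbdij}^\top$) are harmless refinements rather than a different proof.
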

\begin{proof}
Notice that the line vulnerability index can be written as
\begin{subequations}
	\begin{align}
    \aij=& \underset{\bfxi\in\{-1,+1\}^{\nijb}
	}{\text{max~~~}}\quad\underset{\alpha\in\Rbb,\hbf\in\Rbb^{\nijg}}{\text{min~~~}}
	&&\hspace{-3.5cm} \|\hbf\|_\infty\\
	& \qquad\text{subject to~~~}
	&&\hspace{-4.5cm}\Abf_{\Mbdijg,\Xbdij}^\top\hbf+\Abf_{\Mbdijb,\Xbdij}^\top\bfxi=\bfzero.
	\end{align}
\end{subequations}
Since for any $\bfxi$, the vector $\hat{\hbf}(\bfxi)=-\Abf_{\Mbdijg,\Xbdij}^{\top +}\Abf_{\Mbdijb,\Xbdij}^\top\bfxi$ is a feasible point for the inner optimization, and 
\begin{equation}
    \max_{\bfxi\in\{-1,+1\}^{\nijb}
	}\|\hat{\hbf}(\bfxi)\|_\infty=\rho(\Mbdijb),
\end{equation}
the proof is immediately concluded.
\end{proof}

A key step in establishing the validity of the boundary defense mechanism is to ensure that local defense is sufficient to guard against attacks when solving the problem globally.

\begin{lemma}[Local property implies global property]
\label{lemma:local_sufficient_cond}
Given $\Bat, \Bbi, \Bbo,$ and $\Bsf$ and the associated set partitioning (c.f., Def. \ref{def:bound_var_mea_all}), let $\Abfo=\begin{bmatrix}\Abf_{\Msf,\Xsf}&\Abf_{\Msf,\Xbd}\\\bfzero&\Abf_{\Mbo,\Xbd}\\\bfzero&\Abf_{\Mbi,\Xbd}\end{bmatrix}$, and let $\Lbi\coloneqq\left\{\{i,j\}\in\Lcal\;\middle|\;i\in\Bat,j\in\Bbi\right\}$ be the set of lines that bridge between $\Bat$ and $\Bbi$. If $\aij\leq 1-\gamma$ and $\gamma>0$ for all $\{i,j\}\in\Lbi$ such that $i\in\Bat$ and $j\in\Bbi$, then for any $\hat{\hbf}_{\Mbi}\in[-1,1]^{|\Mbi|}$, there exists an $\hat{\hbf}_{\Msf\cup\Mbo}$ with the properties $\|\hat{\hbf}_{\Msf\cup\Mbo}\|_\infty\leq 1-\gamma$ and
\begin{equation}
    \Abfot_{\Msf\cup\Mbo}\hat{\hbf}_{\Msf\cup\Mbo}+\Abfot_{\Mbi}\hat{\hbf}_{\Mbi}=\bfzero.
    \label{equ:all_exist_h}
\end{equation}
\end{lemma}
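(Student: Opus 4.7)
My plan is to convert the existence claim into an $\ell_1$--$\ell_1$ dual inequality via LP duality, and then glue the per-line hypotheses $\aij\leq 1-\gamma$ into that global inequality.

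First, I would set $\hat{\hbf}_{\Msf}=\bfzero$. By the block structure of $\Abfo$ in \eqref{equ:A_struct}, $\Abf_{\Mbo,\Xsf}=\Abf_{\Mbi,\Xsf}=\bfzero$, so the $\Xsf$-row block of $\Abfot\hat{\hbf}=\bfzero$ collapses to $\Abf_{\Msf,\Xsf}^\top\hat{\hbf}_{\Msf}=\bfzero$ and is trivially satisfied. The task reduces to finding $\hat{\hbf}_{\Mbo}$ with $\|\hat{\hbf}_{\Mbo}\|_\infty\leq 1-\gamma$ solving $\Abf_{\Mbo,\Xbd}^\top\hat{\hbf}_{\Mbo}=-\Abf_{\Mbi,\Xbd}^\top\hat{\hbf}_{\Mbi}$. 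Applying Farkas' lemma (the support function of the $\ell_\infty$-ball in direction $\bmu$ is the $\ell_1$-norm), such a solution exists iff $-\hat{\hbf}_{\Mbi}^\top\Abf_{\Mbi,\Xbd}\bmu\leq(1-\gamma)\|\Abf_{\Mbo,\Xbd}\bmu\|_1$ for every $\bmu\in\Rbb^{|\Xbd|}$. Since $\|\hat{\hbf}_{\Mbi}\|_\infty\leq 1$, it is enough to prove
\[
\|\Abf_{\Mbi,\Xbd}\bmu\|_1 \;\leq\; (1-\gamma)\|\Abf_{\Mbo,\Xbd}\bmu\|_1 \qquad \forall\,\bmu\in\Rbb^{|\Xbd|}.
\]

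I would next dualize the minimax program \eqref{equ:lin_vul_for} by the same LP-duality device: after swapping the outer max over $\bfxi$ with the dual of the inner min and invoking the $\ell_\infty$-ball support-function identity, the hypothesis $\aij\leq 1-\gamma$ is equivalent to the per-line inequality $\|\Abf_{\Mbdijb,\Xbdij}\bmu'\|_1\leq(1-\gamma)\|\Abf_{\Mbdijg,\Xbdij}\bmu'\|_1$ for all $\bmu'\in\Rbb^{|\Xbdij|}$. Using the standing assumption that each inner boundary node has a unique attacked neighbor, the index set $\Mbi$ decomposes as the disjoint union $\bigsqcup_{\{i,j\}\in\Lbi}\Mbdijb$: every nodal injection at a node in $\Bbi$ and every branch measurement on a line in $\Latbi$ attaches to a single $(i,j)$. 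Because the rows of $\Abf_{\Mbdijb,\Xbd}$ are supported only on the $\Xbdij$ columns, summing the per-line inequalities yields
\[
\|\Abf_{\Mbi,\Xbd}\bmu\|_1 \;=\; \sum_{\{i,j\}\in\Lbi}\|\Abf_{\Mbdijb,\Xbdij}\bmu_{\Xbdij}\|_1 \;\leq\; (1-\gamma)\sum_{\{i,j\}\in\Lbi}\|\Abf_{\Mbdijg,\Xbdij}\bmu_{\Xbdij}\|_1.
\]

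The hard part will be bounding the residual sum $\sum_{\{i,j\}\in\Lbi}\|\Abf_{\Mbdijg,\Xbdij}\bmu_{\Xbdij}\|_1$ by $\|\Abf_{\Mbo,\Xbd}\bmu\|_1$. Most measurements in $\Mbdijg$---the voltage magnitude at $j$, and the four branch flows on each line from $j$ into $\Bboij$---are exclusive to the local problem for $\{i,j\}$ and thus contribute with multiplicity one. The only potentially shared measurements are voltage magnitudes at outer boundary nodes adjacent to more than one $\Bbi$ node, which can sit in several $\Mbdijg$ simultaneously. I expect to handle this multiplicity either by canonically assigning each such shared voltage measurement to a single local problem so that $\{\Mbdijg\}$ becomes a strictly disjoint subfamily of $\Mbo$, or by relaxing Step~1 to allow $\hat{\hbf}_{\Msf}\neq\bfzero$ and exploiting the $\Abf_{\Msf,\Xbd}$ coupling (for instance through nodal injections at outer boundary nodes, which also touch $\xmg_k$) to absorb the excess. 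The assumption that $\Bbi$ has no internal edges rules out any further cross-coupling and is crucial for the accounting to close.
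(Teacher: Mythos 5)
Your reduction is clean up to the point you yourself flag as ``the hard part,'' and that is precisely where the proof is incomplete. The duality steps are fine: setting $\hat{\hbf}_{\Msf}=\bfzero$, invoking the support-function identity to reduce existence of $\hat{\hbf}_{\Mbo}$ to the inequality $\|\Abf_{\Mbi,\Xbd}\bmu\|_1\leq(1-\gamma)\|\Abf_{\Mbo,\Xbd}\bmu\|_1$, and dualizing \eqref{equ:lin_vul_for} to turn $\aij\leq 1-\gamma$ into the per-line $\ell_1$ inequality are all correct, and the decomposition $\Mbi=\bigsqcup_{\{i,j\}\in\Lbi}\Mbdijb$ is legitimate under the standing assumptions. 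But the final accounting inequality $\sum_{\{i,j\}\in\Lbi}\|\Abf_{\Mbdijg,\Xbdij}\bmu_{\Xbdij}\|_1\leq\|\Abf_{\Mbo,\Xbd}\bmu\|_1$ genuinely fails when an outer-boundary node $k$ is adjacent to several inner-boundary nodes: the voltage-magnitude row at $k$ is counted once on the right but up to $\mathrm{deg}(k)$ times on the left, and the multiplicity goes in the wrong direction. Neither of your proposed repairs closes this. Canonically assigning the shared measurement to a single local problem deletes a measurement from the other local problems, which changes their feasible sets in \eqref{equ:lin_vul_for} and can only increase their vulnerability indices, so the hypothesis $\aij\leq 1-\gamma$ no longer applies to the modified subproblems. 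Absorbing the excess through $\Abf_{\Msf,\Xbd}$ is not worked out and there is no reason the safe-region rows should cancel a surplus that lives entirely on boundary voltage-magnitude rows.

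The paper's proof takes a different route that exists specifically to handle this shared-node case: it is a primal induction that adds one line of $\Lbi$ at a time, carries along an explicit dual certificate $\hat{\hbf}_{\Mbo^{(k)}}$, and when the newly added line shares outer-boundary nodes with previously added lines it merges the overlapping certificate entries by summing them and dividing by the node degree (the Hadamard product with $1/\mathrm{deg}(\tNBbo)$). This merging is consistent with the stationarity equation only because of the measurement normalization convention of Def.~\ref{def:norm_measure}, under which voltage-magnitude rows of the global $\Abf$ are weighted by the node degree while the local vulnerability index is computed with unit normalization; that asymmetry is exactly the mechanism that absorbs the multiplicity your $\ell_1$ sum cannot. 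So the missing ingredient in your argument is not a technicality but the one idea the paper's proof is built around. If you want to salvage the dual route, you would need to redo the per-line dualization with degree-weighted rows, or prove a weighted version of the gluing inequality in which each shared voltage-magnitude term on the right carries the weight prescribed by Def.~\ref{def:norm_measure}; as written, the unweighted inequality you need is false in general.
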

\begin{proof}
First, we show that a sufficient condition for the existence of $\hat{\hbf}_{\Msf\cup\Mbo}=\begin{bmatrix}\hat{\hbf}_{\Msf}^\top&\hat{\hbf}_{\Mbo}^\top\end{bmatrix}^\top$ such that $\|\hat{\hbf}_{\Msf\cup\Mbo}\|_\infty\leq 1-\gamma$ and \eqref{equ:all_exist_h} is satisfied is that for any $\hat{\hbf}_{\Mbi}$, there exists an $\hat{\hbf}_{\Mbo}$ such that $\|\hat{\hbf}_{\Mbo}\|_\infty\leq 1-\gamma$ and
\begin{equation}
    \Abf_{\Mbo,\Xbd}^\top\hat{\hbf}_{\Mbo}+\Abf_{\Mbi,\Xbd}^\top\hat{\hbf}_{\Mbi}=\bfzero.
    \label{equ:small_exist_h}
\end{equation}
    This is immediate by simply choosing $\hat{\hbf}_{\Msf\cup\Mbo}=\begin{bmatrix}\bfzero^\top&\hat{\hbf}_{\Mbo}^\top\end{bmatrix}^\top$. In what follows, we prove \eqref{equ:small_exist_h} by induction. The induction rule is as follows: we start by arbitrarily choosing one line $\{i,j\}\in\Lbi$, where $i\in\Bat$ and $j\in\Bbi$, and initialize the measurement set $\Mbo^{(1)}\coloneqq\Mbdijg$, $\Mbi^{(1)}\coloneqq\Mbdijb$ and the variable set $\Xbd^{(1)}\coloneqq\Xbdij$. For each step $k$, we add a new line $\{f,t\}\in\Lbi$ and the associated measurements and variables to $\Mbo^{(k)}$, $\Mbi^{(k)}$ and $\Xbd^{(k)}$, respectively. After the inclusion of all the lines in $\Lbi$, we should obtain the set $\Mbo$, $\Mbi$ and $\Xbd$. In each step, we check whether there exists a vector $\hat{\hbf}_{\Mbo^{(k)}}$ such that $\|\hat{\hbf}_{\Mbo^{(k)}}\|_\infty\leq 1-\gamma$ and
\begin{equation}
    \Abf_{\Mbo^{(k)},\Xbd^{(k)}}^\top\hat{\hbf}_{\Mbo^{(k)}}+\Abf_{\Mbi^{(k)},\Xbd^{(k)}}^\top\hat{\hbf}_{\Mbi^{(k)}}=\bfzero.
    \label{equ:small_exist_h_iter}
\end{equation}
The base case for $k=1$ follows directly from the condition that $\aij\leq 1-\gamma$. For any $k\geq 1$, let $\{f,t\}\in\Lbi$ denote the line to be added, where $f\in\Mat$ and $t\in\Mbi$. There are two possible cases: \textbf{1)} the new line does not share any nodes with the lines that have been already added; or \textbf{2)} the new line shares the attack node $f$ with one (or more) of the lines already added (note that by definition, the new line cannot share the inner boundary node $t$ with one (or more) of the lines already added). For each case, there are also three events that may occur: \textbf{a)} one or more of the nodes in $\Bboij$ are connected to one or more of the nodes in the inner boundaries of lines that have already been added; and/or \textbf{b)} one or more of the nodes in the outer boundary of the lines that have already been added are connected to $t$; or \textbf{c)} none of the above (note that by definition, there are no lines within the inner boundary region). We need to consider all the combinations between the three cases and the three events to show that \eqref{equ:small_exist_h_iter} holds in all scenarios. Fortunately, all the combinations can be reduced to two typical scenarios, where the proofs can be directly applied. We consider these scenarios now. 

The first scenario applies to Cases 1c and 2c, where $\Mbo^{(k+1)}=\Mbo^{(k)}\cup\Mbdftg$, $\Mbi^{(k+1)}=\Mbi^{(k)}\cup\Mbdftb$, $\Xbd^{(k+1)}=\Xbd^{(k)}\cup\Xbdft$, $\Mbo^{(k)}\cap\Mbdftg=\emptyset$, $\Mbi^{(k)}\cap\Mbdftb=\emptyset$, and $\Xbd^{(k)}\cap\Xbdft=\emptyset$. Therefore, for any given $\hat{\hbf}_{\Mbi^{(k+1)}}=\begin{bmatrix}\hat{\hbf}_{\Mbi^{(k)}}^\top&\hat{\bfxi}^\top\end{bmatrix}^\top$ with $\|\hat{\bfxi}\|_\infty\leq 1$, we can always find $\hat{\hbf}_{\Mbo^{(k+1)}}=\begin{bmatrix}\hat{\hbf}_{\Mbo^{(k)}}^\top&\hat{\hbf}^\top\end{bmatrix}^\top$, where $\hat{\hbf}_{\Mbo^{(k)}}$ is given by \eqref{equ:small_exist_h_iter} and $\hat{\hbf}_{\Mbo^{(k)}}$ is given by \eqref{equ:lin_vul_for}, and $\|\hat{\hbf}_{\Mbo^{(k+1)}}\|_\infty\leq1-\gamma$ by definition.

The second scenario applies to Cases 1a, 1b, 2a and 2b. Let $\tNBbo$ be the set of nodes in the outer boundary shared by the new line $\Bboft$ and those of the lines that have been added. Then, we have $\Mbo^{(k+1)}=\Mbo^{(k)}\cup\Mbdftg$, $\Mbi^{(k+1)}=\Mbi^{(k)}\cup\Mbdftb$, $\Xbd^{(k+1)}=\Xbd^{(k)}\cup\Xbdft$, where $\Mbo^{(k)}\cap\Mbdftg$ is the set of voltage magnitude measurements of nodes in $\tNBbo$, $\Mbi^{(k)}\cap\Mbdftb=\emptyset$, and $\Xbd^{(k)}\cap\Xbdft$ is the set of voltage magnitude variables of nodes in $\tNBbo$. For any given $\hat{\hbf}_{\Mbi^{(k)}}$ and $\hat{\bfxi}^\top$, we can always find $\hat{\hbf}_{\Mbo^{(k)}}$ and $\hat{\hbf}^\top$, where $\hat{\hbf}_{\Mbo^{(k)}}$ is given by \eqref{equ:small_exist_h_iter} and $\hat{\hbf}_{\Mbo^{(k)}}$ is given by \eqref{equ:lin_vul_for}. Let $\hat{\hbf}_{\Mbo^{(k)}}$ be further divided into the parts corresponding to the voltage magnitude measurements (if available) of nodes in $\tNBbo$ (i.e. $\left[\hat{\hbf}_{\Mbo^{(k)}}\right]_\tNBbo$) and the rest (i.e. $\left[\hat{\hbf}_{\Mbo^{(k)}}\right]_\tNBboc$); similarly, let $\hat{\hbf}$ be further divided into $\left[\hat{\hbf}\right]_\tNBbo$ and the rest $\left[\hat{\hbf}\right]_\tNBboc$. Then, by setting $\hat{\hbf}_{\Mbo^{(k+1)}}=\begin{bmatrix}\left[\hat{\hbf}_{\Mbo^{(k)}}\right]_\tNBboc^\top&\frac{1}{\mathrm{deg}(\tNBbo)}\circ\left(\left[\hat{\hbf}_{\Mbo^{(k)}}\right]_\tNBbo+\left[\hat{\hbf}\right]_\tNBbo\right)^\top&\left[\hat{\hbf}\right]_\tNBboc^\top\end{bmatrix}^\top$, where $\mathrm{deg}(\tNBbo)$ is the connectivity degree for each node in $\tNBbo$, and $\circ$ indicates the Hadamard (element-wise) product, we can satisfy \eqref{equ:small_exist_h_iter} for any given $\hat{\hbf}_{\Mbi^{(k+1)}}=\begin{bmatrix}\hat{\hbf}_{\Mbi^{(k)}}^\top&\hat{\bfxi}^\top\end{bmatrix}^\top$ (note that the voltage magnitude measurement in the calculation of line vulnerability metric is normalized by 1, but it is weighted by the degree of each node in the actual estimation algorithm, c.f., Def. \ref{def:norm_measure}). Moreover, by construction, we have $\|\hat{\hbf}_{\Mbo^{(k+1)}}\|_\infty\leq 1-\gamma$ for all $k$. This completes the induction proof.
\end{proof}

Lemma \ref{lemma:local_sufficient_cond} implies that as long as all the line vulnerability indices are bounded away from 1, we have a desirable property in terms of defending against bad data on the boundary. This is formalized in the following theorem.

\begin{theorem}
Consider the measurements $\ybf=\Abf\xbf_\natural+\bbf_\natural$, where $\supp(\bbf_\natural)\subseteq\Mat$. Suppose that for the given  partitioning of the network as $\Bat, \Bbi, \Bbo,$ and $\Bsf$, the following conditions hold:
\begin{itemize}
    \item (Full column rank for the safe and boundary region) $\Abf_{\Msf\cup\Mbd,\Xsf\cup\Xbd}$ and $$\Qbf_{\Mbd,\Xbd}=\begin{bmatrix}\Abf_{\Mbd,\Xbd}&\Ibf_{\Mbi}^{(|\Mbd|)\top}\end{bmatrix}$$ have full column rank.
    \item (Localized mutual incoherence) for all lines $\{i,j\}\in\Latbi$ that bridge the attacked region and the inner boundary, where $i\in\Bat$, $j\in\Bbi$, we have $\aij\leq 1-\gamma$ for some $\gamma>0$.  
\end{itemize}
Then, the solution to \eqref{equ:primal_l1}, denoted as $(\hat{\xbf},\hat{\bbf})$, uniquely recovers the true state outside the attacked region (i.e., $\hat{\xbf}_\Isf=\xbf_{\natural\Isf}$ and $\hat{\xbf}_\Ibd=\xbf_{\natural\Ibd}$). Furthermore, the state estimation by \eqref{equ:s2-phase} recovers the true state for the unaffected region (i.e., $\hat{v}_k=v_k$ for $k\in\Bsf\cup\Bbd$).
\label{thm:l1_est_cyber}
\end{theorem}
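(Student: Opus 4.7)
The plan is to reduce the global claim to a local oracle problem via Lemma \ref{lemma:bound_defense}, and then verify its hypothesis through a primal-dual witness whose dual certificate is supplied by Lemma \ref{lemma:local_sufficient_cond}. Concretely, it suffices to show that for every bad-data vector $\bbf_{\star\Mbd}$ with $\supp(\bbf_{\star\Mbd})\subseteq\Mbi$, the program \eqref{equ:oracle_B_bd} has $\xbf_{\natural\Ibd}$ as its unique minimizer.

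First I would recast \eqref{equ:oracle_B_bd} as the linear program $\min_{\xbf_\Ibd,\rbf}\|\rbf\|_1$ subject to $\Abf_{\Mbd,\Xbd}\xbf_\Ibd+\rbf=\zbf_\Mbd$, whose LP dual is $\max_\hbf \hbf^\top\zbf_\Mbd$ subject to $\|\hbf\|_\infty\leq 1$ and $\Abf_{\Mbd,\Xbd}^\top\hbf=\bfzero$. At the candidate pair $(\xbf_{\natural\Ibd},\bbf_{\star\Mbd})$ the residual equals $\bbf_{\star\Mbd}$, which is supported in $\Mbi$, so $\hat{\rbf}_\Mbo=\bfzero$. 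I would then build a dual certificate by picking any $\hat{\hbf}_\Mbi\in\partial\|\bbf_{\star\Mbd}\|_1$ (in particular $\|\hat{\hbf}_\Mbi\|_\infty\leq 1$) and invoking Lemma \ref{lemma:local_sufficient_cond} to produce $\hat{\hbf}_\Mbo$ with $\|\hat{\hbf}_\Mbo\|_\infty\leq 1-\gamma$ and $\Abf_{\Mbo,\Xbd}^\top\hat{\hbf}_\Mbo+\Abf_{\Mbi,\Xbd}^\top\hat{\hbf}_\Mbi=\bfzero$, the safe component $\hat{\hbf}_\Msf$ being zero as in that lemma's construction. Checking primal feasibility, dual feasibility and stationarity then certifies that $(\xbf_{\natural\Ibd},\bbf_{\star\Mbd})$ is LP-optimal.

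Next I would upgrade optimality to uniqueness by exploiting strict dual feasibility on $\Mbo$. For any alternative optimum $(\tilde{\xbf}_\Ibd,\tilde{\rbf})$, complementary slackness paired with $|\hat{h}_{\Mbo,k}|\leq 1-\gamma<1$ forces $\tilde{\rbf}_\Mbo=\bfzero$, hence $\Abf_{\Mbo,\Xbd}(\tilde{\xbf}_\Ibd-\xbf_{\natural\Ibd})=\bfzero$. The assumed full column rank of $\Qbf_{\Mbd,\Xbd}$ propagates to full column rank of $\Abf_{\Mbo,\Xbd}$: any $\wbf$ in the null space of $\Abf_{\Mbo,\Xbd}$ yields $(\wbf^\top,-(\Abf_{\Mbi,\Xbd}\wbf)^\top)^\top$ in the null space of $\Qbf_{\Mbd,\Xbd}$, forcing $\wbf=\bfzero$. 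Thus $\tilde{\xbf}_\Ibd=\xbf_{\natural\Ibd}$, which discharges the hypothesis of Lemma \ref{lemma:bound_defense} and gives $\hat{\xbf}_\Isf=\xbf_{\natural\Isf}$ and $\hat{\xbf}_\Ibd=\xbf_{\natural\Ibd}$ for the solution of \eqref{equ:primal_l1}. For the second step, the identities $|\hat{v}_k|=\sqrt{\hat{x}^{\text{mg}}_k}=|v_k|$ and $\hat{\theta}_{ij}=\theta_{ij}$ hold on every bus and line of the subgraph induced by $\Bsf\cup\Bbd$, so the closed form \eqref{equ:closed-form} restricted to that subgraph (with a reference bus fixing the phase gauge) recovers the nodal phases and therefore $\hat{v}_k=v_k$ for all $k\in\Bsf\cup\Bbd$.

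The main obstacle is ensuring strictness of the dual certificate together with the rank-propagation step, since both are indispensable: without the strict bound $1-\gamma<1$, complementary slackness cannot eliminate $\tilde{\rbf}_\Mbo$, and without transferring full column rank from $\Qbf_{\Mbd,\Xbd}$ to $\Abf_{\Mbo,\Xbd}$, the cancellation on $\Mbo$ cannot pin down $\tilde{\xbf}_\Ibd$. Strictness is delivered precisely by the line-wise hypothesis $\aij\leq 1-\gamma$ combined with the inductive construction in Lemma \ref{lemma:local_sufficient_cond}, while the rank propagation follows directly from the block structure of $\Qbf_{\Mbd,\Xbd}$; thereafter the argument reduces to formal LP duality and the exactness of the phase-from-difference reconstruction.
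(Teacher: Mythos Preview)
Your proposal is correct and follows essentially the same route as the paper: reduce to Lemma~\ref{lemma:bound_defense}, build a dual certificate for the oracle program~\eqref{equ:oracle_B_bd} via Lemma~\ref{lemma:local_sufficient_cond}, and use strict dual feasibility on $\Mbo$ together with the full column rank assumption to establish uniqueness. The only cosmetic difference is in the uniqueness step: the paper argues directly with $\Qbf_{\Mbd,\Xbd}$ (showing any alternative optimum with support not contained in $\Mbi$ has strictly larger objective), whereas you first extract full column rank of $\Abf_{\Mbo,\Xbd}$ from that of $\Qbf_{\Mbd,\Xbd}$ and then conclude via $\tilde{\rbf}_{\Mbo}=\bfzero$; both packagings use the same two ingredients and are equivalent.
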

\begin{proof}
To prove the claim, we simply need to show that for an arbitrary $\bbf_{\star}$ with its support limited to the inner boundary $\supp(\bbf_{\star})\subseteq\Mbi$, the solution $\hat{\xbf}_{\Ibd}\in\Xbd$ to the program
\begin{equation}
    \min_{{\xbf}_{\Ibd}}\|\zbf_{\Mbd}-\Abf_{\Mbd,\Xbd}\xbf_{\Ibd}\|_1
    \label{equ:aux_small_program}
\end{equation}
is unique and satisfies $\hat{\xbf}_{\Ibd}={\xbf}_{\natural \Ibd}$, where $\zbf_{\Mbd}=\Abf_{\Mbd,\Xbd}\xbf_{\natural \Ibd}+\bbf_{\star}$. To show this, we obtain the
dual program:
\begin{equation}
    \max_{\hbf_\Mbd}\hbf_\Mbd^\top\zbf_\Mbd,\quad \st\quad \Abf_{\Mbd,\Xbd}^\top\hbf_\Mbd=\bfzero,\|\hbf_\Mbd\|_\infty\leq 1.
    \label{equ:dual_l1_small}
\end{equation}
Our goal is to find a dual certificate $\hbf_{\star\Mbd}$ that satisfies the KKT conditions:
\begin{align}
    &\text{(dual feasibility) } \quad\Abf_{\Mbd,\Xbd}^\top\hbf_{\star\Mbd}=\bfzero,\label{equ:dual_feas_l1_small}\\
    &\text{(stationarity) } \;\;\quad\quad\hbf_{\star\Mbd}\in\partial\|\bbf_{\star}\|_1.
\end{align}
By the limited support assumption, we need to find a vector $\hbf_\star$ such that $\hbf_{\star\Mbi}=\sign(\bbf_{\star\Mbi})$ and $\|\hbf_{\star\Mbo}\|_\infty\leq 1$. By the mutual incoherence condition and Lemma \ref{lemma:local_sufficient_cond}, we can always find $\hbf_{\star\Mbo}$ that satisfies \eqref{equ:dual_feas_l1_small} for any given $\hbf_{\star\Mbi}$ and $\|\hbf_{\star\Mbo}\|\leq 1-\gamma<1$. Thus, this certifies the optimality of $({\xbf}_{\natural\Ibd},\bbf_\star)$ for \eqref{equ:dual_l1_small}. 

To show that $({\xbf}_{\natural\Ibd},\bbf_\star)$ is the unique optimal solution, let $(\tilde{\xbf},\tilde{\bbf})$ be an arbitrary feasible point of \eqref{equ:aux_small_program} that is different from $({\xbf}_{\natural\Ibd},\bbf_\star)$. Due to the lower eigenvalue condition, the matrix $\Qbf_{\Mbd,\Xbd}\coloneqq\begin{bmatrix}\Abf_{\Mbd,\Xbd}&\Ibf_{\Mbi}^{(|\Mbd|)\top}\end{bmatrix}$ has full column rank. By letting $\tilde{\Jcal}=\supp(\tilde{\bbf})$, the set $\tilde{\Jcal}$ can not be equal to or be a subset of $\Mbi$, because otherwise, from $\Qbf_{\Mbd,\Xbd}\begin{bmatrix}\xbf_{\natural\Ibd}\\
\bbf_\star
\end{bmatrix}=\Qbf_{\Mbd,\Xbd}\begin{bmatrix}\tilde{\xbf}\\
\tilde{\bbf}
\end{bmatrix}=\zbf_\Mbd$, we must have $\begin{bmatrix}\xbf_{\natural\Ibd}\\
\bbf_\star
\end{bmatrix}=\begin{bmatrix}\tilde{\xbf}\\
\tilde{\bbf}
\end{bmatrix}$, which is contradictory  to the assumption. Let $\tilde{\Jcal}_c=\tilde{\Jcal}\setminus\Mbi$; then,
\begin{align}
    \|\bbf_\star\|_1&=\hbf_{\star\Mbd}^\top\zbf_\Mbd\label{equ:der_strong_duality_l1_small}\\
    &=\hbf_{\star\Mbd}^\top(\Abf_{\Mbd,\Xbd}\tilde{\xbf}+\Ibf_{\tilde{\Jcal}_c}^\top\tilde{\bbf}_{\tilde{\Jcal}_c}+\Ibf_{{\Mbi}}^\top\tilde{\bbf}_{{\Mbi}})\label{equ:der_primal_feasible_l1_small}\\
    &=\hbf_{\star\tilde{\Jcal}_c}^\top\tilde{\bbf}_{\tilde{\Jcal}_c}+\hbf_{\star{\Mbi}}^\top\tilde{\bbf}_{{\Mbi}}\label{equ:der_dual_feasible_l1_small}\\
    &\leq \|\hbf_{\star\tilde{\Jcal}_c}\|_\infty\|\tilde{\bbf}_{\tilde{\Jcal}_c}\|_1+\|\hbf_{\star{\Mbi}}\|_\infty\|\tilde{\bbf}_{{\Mbi}}\|_1\label{equ:der_holder_small}\\
    &<\|\tilde{\bbf}_{\tilde{\Jcal}_c}\|_1+\|\tilde{\bbf}_{{\Mbi}}\|_1\label{equ_der_strict_feas_small}\\
    &=\|\tilde{\bbf}\|_1,
\end{align}
where \eqref{equ:der_strong_duality_l1_small} is due to the strong duality between \eqref{equ:aux_small_program} and \eqref{equ:dual_l1_small}, \eqref{equ:der_primal_feasible_l1_small} is due to the primal feasibility of $(\tilde{\xbf},\tilde{\bbf})$, \eqref{equ:der_dual_feasible_l1_small} is due to the dual feasibility condition \eqref{equ:dual_feas_l1_small}, \eqref{equ:der_holder_small} is due to the H\"older inequality, and \eqref{equ_der_strict_feas_small} is due to the strict feasibility of $\hbf_\star$. Thus, we have shown the uniqueness of the optimal solution $(\xbf_{\natural\Ibd},\bbf_\star)$. Together with Lemma \ref{lemma:local_sufficient_cond}, we have proved the theorem.

\end{proof}

This result can be used to certify robustness under different attack scenarios. For example, if there is a topological error caused by line mis-specification, say $\ell=(i,j)$, we can treat the two ends of the line as the attacked nodes, i.e., $\NBat=\{i,j\}$, treat the adjacent nodes to them as inner boundary $\NBbi$, and treat the adjacent nodes to inner boundary as outer boundary $\NBbo$. As long as the line vulnerability index for the lines surrounding the attacked nodes are less than 1, one can identify this gross injection error and thus the topological mistake. We can extend the analysis to the case where the measurements have both sparse bad data and dense noise. In this case, we need to solve a program that combines quadratic loss with absolute value loss. The guarantees now depend on the distribution of the dense noise.
\begin{theorem}[Robust SE with \eqref{equ:primal_noise}]
Consider the measurements  $\ybf=\Abf\xbf_\natural+\wbf_\natural+\bbf_\natural$, where $\wbf_\natural$ has independent entries with zero mean and subgaussian parameter $\sigma$ and $\supp(\bbf_\natural)\subseteq\Mat$. Suppose that the rows of $\Abf$ are normalized (c.f., Def. \ref{def:norm_measure}), and the regularization parameter $\lambda$ is chosen such that
\begin{equation}
    \lambda> \frac{2}{n_m\gamma} \sqrt{2\sigma^2\log n_m}.
\end{equation}
In addition, suppose that for the given  partitioning of the network, i.e. $\Bat, \Bbi, \Bbo,$ and $\Bsf$, the following conditions hold:
\begin{itemize}
    \item (Full column rank for the safe and boundary region) $\Abf_{\Msf\cup\Mbd,\Xsf\cup\Xbd}$ and $$\Qbf_{\Mbd,\Xbd}=\begin{bmatrix}\Abf_{\Mbd,\Xbd}&\Ibf_{\Mbi}^{(|\Mbd|)\top}\end{bmatrix}$$ have full column rank.
    \item (Localized mutual incoherence) for all lines $\{i,j\}\in\Latbi$ that bridge the attacked region and the inner boundary, where $i\in\Bat$, $j\in\Bbi$, we have $\aij\leq 1-\gamma$ for some $\gamma>0$.  
\end{itemize}

Then, the following properties hold for the solution to \eqref{equ:primal_noise}, denoted as $(\hat{\xbf},\hat{\bbf})$:
\begin{enumerate}
    \item (No false inclusion) The solution $(\hat{\xbf},\hat{\bbf})$ has no false bad data inclusion (i.e., $\supp(\hat{\bbf})\subset \supp(\bbf_\natural)$) with probability greater than $1-\frac{c_0}{n_m}$, for some constant $c_0>0$.
    \item (Large bad data detection)  Let ${\Abfo}\coloneqq\begin{bmatrix}\Abf_{\Msf,\Xsf}&\Abf_{\Msf,\Xbd}\\\bfzero&\Abf_{\Mbo,\Xbd}\\
        \bfzero&\Abf_{\Mbi,\Xbd}\end{bmatrix}$ and $\Qbfo_{\Mbi}=\begin{bmatrix}\Abfo&\Ibfot_{\Mbi}\end{bmatrix}$, and $$g(\lambda)=n_m\lambda\left(\frac{1}{2\sqrt{{C_{\text{min}}}}}+\|\Ibf_b(\Qbfot_{\Mbi}\Qbfo_{\Mbi})^{-1}\Ibf_b^\top\|_\infty\right)$$ be a threshold value, and let $\tbbf_{\Mbi}=\Abf_{\Mbi,\Xat}({\xbf}_{\natural\Iat}-\hat{\xbf}_{\Iat})$ be the error at the boundary. Then, all bad data with magnitude greater than $g(\lambda)$ will be detected (i.e., if $|\tilde{b}_{i}|>g(\lambda)$, then $|\hat{b}_{i}|>0$) with probability greater than $1-\frac{c_2}{m}$.
    \item (Bounded error) The estimator error is bounded by \begin{align*}
    \|\xbf_{\natural\Xsf\cup\Xbd}-\hat{\xbf}_{\Xsf\cup\Xbd}\|_2\leq t\frac{\sqrt{|\Xsf|+|\Xbd|+|\Mbi|}}{C_{\text{min}}}+n_m\lambda\|\Ibf_x(\Qbfot_{\Mbi}\Qbfo_{\Mbi})^{-1}\Ibf_b^\top\|_{\infty,2}
\end{align*}
with probability greater than $1-\exp\left(-\frac{c_1t^2}{\sigma^4}\right)$.
\end{enumerate}
\label{thm:lasso_est_cyber}
\end{theorem}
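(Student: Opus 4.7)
The plan is to apply a primal-dual witness (PDW) construction tailored to the boundary partition. The key conceptual difficulty is that $\xbf_{\Xat}$ is fundamentally unidentifiable, since the columns of $\Abf_{\Mat,\Xat}$ can be absorbed by the bad-data vector $\bbf_{\Mat}$. To handle this, I treat the propagated term $\tbbf_{\Mbi}=\Abf_{\Mbi,\Xat}(\xbf_{\natural\Xat}-\hat{\xbf}_{\Xat})$ as ``effective bad data'' on the inner boundary, which reduces the analysis to a Lasso-type support-recovery problem with augmented design matrix $\Qbfo_{\Mbi}=[\Abfo,\Ibfot_{\Mbi}]$ and unknown vector $(\xbf_{\Xsf},\xbf_{\Xbd},\tbbf_{\Mbi},\bbf_{\Mat})$.

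First I would solve the oracle program in which $\bbf$ is forced to vanish outside $\Mat$, yielding a candidate $(\hat{\xbf}^{\mathrm{or}},\hat{\bbf}^{\mathrm{or}})$. Writing the KKT conditions from Section \ref{sec:stage1} at this candidate, the dual variable $\hat{\hbf}=\tfrac{1}{n_m\lambda}(\ybf-\Abf\hat{\xbf}^{\mathrm{or}}-\hat{\bbf}^{\mathrm{or}})$ is pinned down as follows: its value on the active support is $\sign(\hat{\bbf}^{\mathrm{or}})$; its value on $\Mbi$ is determined by stationarity with respect to $\xbf_{\Xat}$, which uses the block structure of $\Abf$ in \eqref{equ:A_struct}; and its value on $\Msf\cup\Mbo$ is then pinned down by $\Abf^\top\hat{\hbf}=\bfzero$ together with the full-column-rank hypothesis on $\Abf_{\Msf\cup\Mbd,\Xsf\cup\Xbd}$.

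Next I would verify strict dual feasibility on the complement of $\supp(\bbf_\natural)$ by decomposing $\hat{\hbf}=\hat{\hbf}^{\mathrm{det}}+\hat{\hbf}^{\mathrm{stoch}}$, where $\hat{\hbf}^{\mathrm{det}}$ collects the contribution of the signs of $\bbf_\natural$ mediated by the cross-Gram matrix between on- and off-support columns of $\Qbfo_{\Mbi}$, and $\hat{\hbf}^{\mathrm{stoch}}$ collects the contribution of the subgaussian noise $\wbf_\natural$. The localized mutual-incoherence bound $\aij\leq 1-\gamma$, combined with the inductive extension provided by Lemma \ref{lemma:local_sufficient_cond}, yields $\|\hat{\hbf}^{\mathrm{det}}\|_\infty\leq 1-\gamma$ uniformly over $\Msf\cup\Mbo\cup\Mbi$. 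The stochastic part is a bounded linear function of $\wbf_\natural$; using Def. \ref{def:norm_measure} together with the $\Cmin$ lower eigenvalue to control the relevant operator norm, a standard subgaussian tail bound with a union bound over at most $n_m$ coordinates gives $\|\hat{\hbf}^{\mathrm{stoch}}\|_\infty\leq\gamma/2$ with probability at least $1-c_0/n_m$, under the assumed lower bound $\lambda>\tfrac{2}{n_m\gamma}\sqrt{2\sigma^2\log n_m}$. Strict dual feasibility certifies $(\hat{\xbf}^{\mathrm{or}},\hat{\bbf}^{\mathrm{or}})$ as the unique minimizer of the full program, which establishes claim (i).

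For claims (ii) and (iii), I would exploit the normal equations of the oracle program, which yield the closed-form identity
\[
\begin{bmatrix}\hat{\xbf}_{\Xsf\cup\Xbd}-\xbf_{\natural\Xsf\cup\Xbd}\\ \tbbf_{\Mbi}\end{bmatrix}=(\Qbfot_{\Mbi}\Qbfo_{\Mbi})^{-1}\Qbfot_{\Mbi}\wbf_{\natural\Msf\cup\Mbd}-n_m\lambda\,(\Qbfot_{\Mbi}\Qbfo_{\Mbi})^{-1}\Ibf_b^\top\sign(\hat{\bbf}^{\mathrm{or}}).
\]
Subgaussian concentration on the noise term, together with the lower eigenvalue $\Cmin$, yields the $\ell_2$ error bound in (iii); reading off the $\Mbi$ rows of this identity yields the detection threshold $g(\lambda)$ in (ii), since a true entry with $|\tilde{b}_i|>g(\lambda)$ cannot be shrunk to zero by the $\ell_1$ penalty without contradicting the oracle stationarity. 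The main obstacle is the careful bookkeeping in the second step, coordinating the inductive construction of the dual variable on $\Msf\cup\Mbo$ in Lemma \ref{lemma:local_sufficient_cond} with the noise-driven perturbation of $\hat{\hbf}_{\Mbi}$ so that the margin $\gamma/2$ is preserved uniformly; the slack built into Lemma \ref{lemma:local_sufficient_cond} and the subgaussian union bound combine to handle this.
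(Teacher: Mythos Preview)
Your PDW strategy, the appeal to Lemma~\ref{lemma:local_sufficient_cond} for the dual certificate, and the sub-Gaussian concentration for Parts~2 and~3 are exactly the ingredients in the paper's proof. There is, however, one substantive difference in how the oracle is set up, and it creates a gap in your argument.

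You force the oracle bad-data vector to vanish outside $\Mat$, which means you must verify strict dual feasibility on all of $\Msf\cup\Mbo\cup\Mbi$. You then assert that Lemma~\ref{lemma:local_sufficient_cond} gives $\|\hat{\hbf}^{\mathrm{det}}\|_\infty\le 1-\gamma$ \emph{uniformly over $\Msf\cup\Mbo\cup\Mbi$}. That lemma does not say this: it takes an arbitrary $\hat{\hbf}_{\Mbi}\in[-1,1]^{|\Mbi|}$ as \emph{input} and produces an $\hat{\hbf}_{\Msf\cup\Mbo}$ with $\|\hat{\hbf}_{\Msf\cup\Mbo}\|_\infty\le 1-\gamma$ satisfying~\eqref{equ:all_exist_h}. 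It gives no control on $\hat{\hbf}_{\Mbi}$ itself, and with your oracle $\hat{\hbf}_{\Mbi}$ is pinned down by stationarity in $\xbf_{\Xat}$ and could in principle have entries of magnitude exceeding $1$. The paper avoids this by enlarging the oracle support to $\Mbi\cup\Mat$ (see the restricted program~\eqref{equ:pdw-all}); then $\hat{\hbf}_{\Mbi}\in\partial\|\hat{\bbf}_{\Mbi}\|_1$ is automatically in $[-1,1]^{|\Mbi|}$, and only $\Msf\cup\Mbo$ requires the strict-feasibility check, which is precisely what Lemma~\ref{lemma:local_sufficient_cond} supplies. This also explains why the paper's error identity~\eqref{equ:delta_err} carries $\tbbf_{\Mbi}-\hat{\bbf}_{\Mbi}$ on the left and $\hat{\hbf}_{\Mbi}$ (rather than $\sign(\hat\bbf^{\mathrm{or}})$) on the right; your displayed identity with $\tbbf_{\Mbi}$ alone would only follow if $\hat{\bbf}_{\Mbi}=0$, which you no longer get once the oracle is enlarged. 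With that single modification to the oracle, the rest of your outline (including the explicit deterministic/stochastic split, which the paper argues slightly less explicitly) goes through and matches the paper's proof.
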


Despite the difference in measurement assumptions (i.e., existence of dense noise $\wbf$) and estimation algorithms (i.e., \eqref{equ:primal_l1} or \eqref{equ:primal_noise}), it is remarkable that the boundary defense conditions in Theorems \ref{thm:l1_est_cyber} and \ref{thm:lasso_est_cyber} are coincident. In the case of negligible dense noise, a deterministic boundary defense is achieved. With the presence of dense noise, it is no longer possible to have deterministic guarantees; however, Theorem~\ref{thm:lasso_est_cyber} indicates that with a proper selection of the penalty coefficient $\lambda$, one can avoid false detection of bad data in the unaffected region (part 1), detect bad data with magnitudes greater than a threshold in the attacked region (part 2), and achieve estimation within bounded error margin for states within the unaffected region. Furthermore, both the bad data threshold and the error bound decrease with stronger mutual incoherence  condition and lower-eigenvalue condition. The proof of the theorem is provided in Section \ref{sec:proof_thm_qp}.

\subsection{Vulnerability index and boundary defense for second-order cone programming}

In this section, we extend the analysis of boundary defense to the case where we perform state estimation with the additional second-order cone constraints.

\begin{lemma}[Boundary defense stops error propagation with SOCP]
Suppose that there is no dense measurement noise (i.e., $\wbf=\bfzero$), and the bad data are confined within $\Mat$, i.e., $\supp(\bbf_{\natural})\subseteq \Mat$. Let $\Kbd$ and $\Kat$ be the subsets of SOC constraints $\Kcal$ restricted to variables $\xbf_\Ibd$ and $\xbf_\Iat$, respectively, and let
\begin{align*}
    \tKat(\hat{\xbf}_\Ibd)=&\Bigg\lbrace\xbf_\Iat\Big\vert\begin{bmatrix}x^{\text{mg}}_i&x^{\text{re}}_\ell+jx^{\text{im}}_\ell\\
x^{\text{re}}_\ell-jx^{\text{im}}_\ell&x^{\text{mg}}_j\end{bmatrix}\succeq 0,\\
&\qquad\qquad\forall\ell\coloneqq(i,j)\in\LBat\cup\Latbi,\text{ where }x^{\text{mg}}_i=\hat{x}^{\text{mg}}_i\quad\forall i\in\Bbi\Bigg\rbrace,
\end{align*}
be the confined feasible set for $\xbf_\Iat$, which fixes the boundary variables $\hat{\xbf}_\Ibd$ in the SOCP constraints. Assume that for an arbitrary $\bbf_{\star\Mbd}$ with its support limited to the inner boundary, i.e. $\supp(\bbf_{\star\Mbd})\subseteq\Mbi$, the solution $\hat{\xbf}_{\Ibd}\in\Xbd$ to the program
\begin{equation}
    \min_{{\xbf}_{\Ibd}\in\Kbd}\|\zbf_{\Mbd}-\Abf_{\Mbd,\Xbd}\xbf_{\Ibd}\|_1, \label{equ:oracle_B_bd_socp}
\end{equation}
is unique and satisfies $\hat{\xbf}_{\Ibd}={\xbf}_{\natural \Ibd}$, where $\zbf_{\Mbd}=\Abf_{\Mbd,\Xbd}\xbf_{\natural \Ibd}+\bbf_{\star\Mbd}$. Assume that the optimal solution $\hat{\xbf}_\Iat$ to
\begin{equation}
    \min_{{\xbf}_{\Iat}\in\Kat}\|\ybf_{\Mat}-\Abf_{\Mat,\Xat}\xbf_{\Iat}\|_1, \label{equ:oracle_B_at_socp}
\end{equation}
also satisfies that $\hat{\xbf}_\Iat\in\tKat({\xbf}_{\natural\Ibd})$.
Then, the solution $\hat{\xbf}$ to \eqref{equ:primal_socp} satisfies $\hat{\xbf}_{\Ibd}={\xbf}_{\natural \Ibd}$ and $\hat{\xbf}_{\Isf}={\xbf}_{\natural \Isf}$.
\label{lem:bound_defend_socp}
\end{lemma}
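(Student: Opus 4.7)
The plan is to adapt the first (decomposition) argument used in the proof of Lemma~\ref{lemma:bound_defense} and then add a feasibility check for the additional SOC constraints. Rewriting \eqref{equ:primal_socp} as $\min_{\xbf \in \Kcal}\|\ybf - \Abf\xbf\|_1$ and exploiting the block structure of $\Abf$ from \eqref{equ:A_struct}, I split the objective as
\[
L(\xbf) = f_1(\xbf_{\Isf},\xbf_{\Ibd}) + f_2(\xbf_{\Ibd},\xbf_{\Iat}) + f_3(\xbf_{\Iat}),
\]
with $f_1,f_2,f_3$ exactly as defined in the proof of Lemma~\ref{lemma:bound_defense}. In parallel I partition the cone constraints in $\Kcal$ by the location of the underlying line: constraints whose line lies entirely in $\Bsf\cup\Bbd$ touch only $(\xbf_\Isf,\xbf_\Ibd)$; constraints whose line lies entirely in $\Bat$ touch only $\xbf_\Iat$ and coincide with $\Kat$; and the \emph{mixed} constraints, for lines in $\Latbi$, couple the boundary magnitudes $x^{\text{mg}}_j$ ($j \in \Bbi$) inside $\xbf_\Ibd$ with $\xbf_\Iat$.

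The core of the argument is to show that for any fixed $\xbf_\Iat$, the joint minimum of $f_1+f_2$ over the feasible $(\xbf_\Isf,\xbf_\Ibd)$ is attained uniquely at $(\xbf_{\natural\Isf},\xbf_{\natural\Ibd})$. Setting $\bbf_{\star\Mbd} \coloneqq \Abf_{\Mbd,\Xat}(\xbf_{\natural\Iat}-\xbf_\Iat)$, the block structure of $\Abf$ (specifically $\Abf_{\Mbo,\Xat}=\bfzero$) forces $\supp(\bbf_{\star\Mbd}) \subseteq \Mbi$, and the first hypothesis of the lemma, applied to \eqref{equ:oracle_B_bd_socp}, pins the unique minimizer of $f_2(\xbf_\Ibd,\xbf_\Iat)$ over $\xbf_\Ibd \in \Kbd$ at $\xbf_{\natural\Ibd}$. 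Since $\ybf_\Msf$ is free of bad data---the support of $\bbf_\natural$ is contained in $\Mat$---the term $f_1(\xbf_\Isf,\xbf_{\natural\Ibd})$ vanishes at $\xbf_\Isf = \xbf_{\natural\Isf}$, uniquely under a full-column-rank observability condition on $\Abf_{\Msf,\Xsf}$ (inherited from Lemma~\ref{lemma:bound_defense}). Since $(\xbf_{\natural\Isf},\xbf_{\natural\Ibd})$ itself satisfies every cone constraint that touches it (it corresponds to a bona fide voltage configuration), this yields joint uniqueness. Because $f_3$ depends only on $\xbf_\Iat$, the outer minimization then selects the minimizer $\hat{\xbf}_\Iat$ of \eqref{equ:oracle_B_at_socp}.

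The final and most delicate step is global feasibility of the glued candidate $(\xbf_{\natural\Isf},\xbf_{\natural\Ibd},\hat{\xbf}_\Iat)$ for the full cone $\Kcal$. The constraints on $\Bsf\cup\Bbd$ hold at the ground truth; the constraints within $\Bat$ hold at $\hat{\xbf}_\Iat \in \Kat$; and the mixed constraints for $\Latbi$, which after pinning $x^{\text{mg}}_j = x^{\text{mg}}_{\natural j}$ for every $j \in \Bbi$ reduce to exactly the defining inequalities of $\tKat(\xbf_{\natural\Ibd})$, are satisfied precisely by the second hypothesis $\hat{\xbf}_\Iat \in \tKat(\xbf_{\natural\Ibd})$. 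Since $\hat{\xbf}_\Iat$ minimizes $f_3$ over the larger set $\Kat$, it is a fortiori a minimizer over $\Kat \cap \tKat(\xbf_{\natural\Ibd})$, and the three block-minima glue into a global minimum of $L$, proving $\hat{\xbf}_\Ibd = \xbf_{\natural\Ibd}$ and $\hat{\xbf}_\Isf = \xbf_{\natural\Isf}$.

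The main obstacle, as already suggested, is precisely the mixed cone constraints linking $\xbf_\Ibd$ and $\xbf_\Iat$: the block decomposition of the $\ell_1$ objective goes through just as in the non-SOC proof, but the SOC constraints do not decouple. The second hypothesis is what prevents the attacker's inner optimization from pushing $\hat{\xbf}_\Iat$ outside the slice compatible with the true boundary magnitudes, and without it the glued candidate could leave $\Kcal$ even though each block problem has been solved correctly. This compatibility assumption is the essential novelty relative to Lemma~\ref{lemma:bound_defense}, and any attempt to weaken it would need either a perturbative slack argument or a structural property of the SOC lifting that automatically preserves the boundary magnitudes under the inner attacker optimization.
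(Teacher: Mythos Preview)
Your approach differs from the paper's: you extend the primal decomposition argument (the ``first approach'' in the proof of Lemma~\ref{lemma:bound_defense}), whereas the paper uses the KKT/dual-certificate route (the ``second approach''), proposing the specific candidate $(\xbf_{\natural\Isf},\xbf_{\natural\Ibd},\hat{\xbf}_\Iat)$ with $\hat{\xbf}_\Iat$ taken from \eqref{equ:oracle_B_at_socp}, and gluing together certificates from the safe/boundary subproblem and from \eqref{equ:oracle_B_at_socp} while setting the cone multipliers $(\nu_\ell,\ubf_\ell)$ on $\Latbi$ to zero.

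There is, however, a genuine gap in your argument. The sentence ``Because $f_3$ depends only on $\xbf_\Iat$, the outer minimization then selects the minimizer $\hat{\xbf}_\Iat$ of \eqref{equ:oracle_B_at_socp}'' is not justified: after the inner minimization pins $(\xbf_\Isf,\xbf_\Ibd)=(\xbf_{\natural\Isf},\xbf_{\natural\Ibd})$, the outer objective in $\xbf_\Iat$ is
\[
f_2(\xbf_{\natural\Ibd},\xbf_\Iat)+f_3(\xbf_\Iat)=\|\Abf_{\Mbi,\Xat}(\xbf_{\natural\Iat}-\xbf_\Iat)\|_1+\|\ybf_\Mat-\Abf_{\Mat,\Xat}\xbf_\Iat\|_1,
\]
and the first term is generally nonconstant in $\xbf_\Iat$. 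The minimizer of this sum over $\Kat$ need not coincide with the solution to \eqref{equ:oracle_B_at_socp}. Consequently, your feasibility check in the last paragraph---which invokes the second hypothesis to place \emph{that particular} oracle solution inside $\tKat(\xbf_{\natural\Ibd})$---does not attach to the actual outer minimizer of your relaxed problem, so you have not exhibited a point in $\Kcal$ achieving the relaxed optimum, and the ``three block-minima glue into a global minimum'' step does not follow. This is precisely the coupling that the paper's dual-certificate construction is designed to absorb: it certifies optimality of the specific candidate directly, so it never needs to identify the outer minimizer of a relaxation.
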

\begin{proof}
To show that $\left(\hat{\xbf}=\begin{bmatrix}\xbf_{\natural\Isf}^\top&\xbf_{\natural\Ibd}^\top&\hat{\xbf}_{\Iat}^\top\end{bmatrix}^\top,\hat{\bbf}=\begin{bmatrix}\bfzero^\top&\hat{\bbf}_\Mbd^\top&\hat{\bbf}_\Mat^\top\end{bmatrix}^\top\right)$ is the optimal solution of \eqref{equ:primal_socp}, we simply need to find a dual certificate $\left(\hbf_\star=\begin{bmatrix}\hbf_\Msf^\top&\hbf_\Mbd^\top&\hbf_\Mat^\top\end{bmatrix}^\top,\{\nu_\ell,\ubf_\ell\}_{\ell\in\Lcal}\right)$ that satisfies the KKT conditions:
\begin{align}
    &\text{(stationarity) } &\hbf_\star\in\partial\|\hat{\bbf}\|_1,\\
    &\text{(dual feasibility) } &\Abf^\top\hbf_\star+\sum_{\ell\in\Lcal}\left(\nu_\ell\cbf_\ell+\Dbf_\ell^\top\ubf_\ell\right)=\bfzero;\qquad\nu_\ell\geq\|\ubf_\ell\|_2,\quad\forall\ell\in\Lcal,\label{equ:dual_feas_socp}\\
    &\text{(complementary slackness) } &\nu_\ell\cbf_\ell^\top\hat{\xbf}+\ubf_\ell^\top\Dbf_\ell\hat{\xbf}=\bfzero,\quad\forall \ell\in\Lcal,
\end{align}
For a given $\xbf_\Ibd=\xbf_{\natural\Ibd}$, let $\hat{\xbf}_\Isf$ be the optimal solution to
\begin{equation*}
    \min_{{\xbf}_{\Isf}\in\Ksf}\|\ybf_{\Msf}-\Abf_{\Msf,\Xsf}\xbf_{\Isf}-\Abf_{\Msf,\Xbd}\xbf_{\natural\Ibd}\|_1, 
\end{equation*}
where $\Ksf$ is set of all SOCP constraints that involve at least one variable in $\Xsf$. By the lower eigenvalue condition, $\hat{\xbf}_\Isf=\xbf_{\natural\Isf}$ is the unique optimal solution. Since for a given $\hat{\xbf}_\Iat\in\tKat(\xbf_{\natural\Ibd})$, $\hat{\xbf}_\Ibd=\xbf_{\natural\Ibd}$ is the unique optimal of \eqref{equ:oracle_B_bd_socp}, we can conclude that $\begin{bmatrix}\xbf_{\natural\Isf}^\top&\xbf_{\natural\Ibd}^\top\end{bmatrix}^\top$ is the unique optimal of 
\begin{equation*}
    \min_{{\xbf}_{\Isf}\in\Ksf,\xbf_\Ibd\in\Kbd}\|\ybf_{\Msf}-\Abf_{\Msf,\Xsf}\xbf_{\Isf}-\Abf_{\Msf,\Xbd}\xbf_{\Ibd}\|_1+\|\zbf_{\Mbd}-\Abf_{\Mbd,\Xbd}\xbf_{\Ibd}\|_1,
\end{equation*}
which corresponds to a dual certificate $\left(\begin{bmatrix}\hbf_\Msf^\top&\hbf_\Mbd^\top\end{bmatrix}^\top,\{\nu_\ell,\ubf_\ell\}_{\ell\in\LBsf\cup\LBbd}\right)$ such that 
\begin{subequations}
\label{equ:kkt_socp}
    \begin{align}
    &\Abf_{\Msf,\Xsf\cup\Xbd}^\top\hbf_\Msf+\Abf_{\Mbd,\Xsf\cup\Xbd}^\top\hbf_\Mbd+\sum_{\ell\in\LBsf\cup\LBbd}\left(\nu_\ell\cbf_\ell+\Dbf_\ell^\top\ubf_\ell\right)=\bfzero,\\
    &\nu_\ell\geq\|\ubf_\ell\|_2,\quad\forall\ell\in\LBsf\cup\LBbd,\\
    &\nu_\ell\cbf_\ell^\top\hat{\xbf}+\ubf_\ell^\top\Dbf_\ell\hat{\xbf}=\bfzero,\quad\forall \ell\in\LBsf\cup\LBbd,\\
    &\|\hbf_\Msf\|_\infty\leq 1,\|\hbf_\Mbd\|_\infty\leq 1.
\end{align}
\end{subequations}

Similarly, by the optimality of $\hat{\xbf}_\Iat$ for \eqref{equ:oracle_B_at_socp}, we can find a dual certificate such that:
\begin{align*}
    \Abf_{\Mat,\Xat}^\top\hbf_\Mat+\sum_{\ell\in\LBat}\left(\nu_\ell\cbf_\ell+\Dbf_\ell^\top\ubf_\ell\right)=\bfzero,\quad\hbf_\Mat\in\partial\|\hat{\bbf}_\Mat\|_1.
\end{align*}
Thus, by setting $\left(\left\lbrace\nu_\ell=0,\ubf_\ell=\bfzero\right\rbrace_{\ell\in\Latbi}\right)$, and note that $\Lcal=\LBsf\cup\LBbd\cup\Latbi\cup\LBat$, the construction $\left(\left\lbrace\nu_\ell,\ubf_\ell\right\rbrace_{\ell\in\Lcal}\right)$ and $\hbf_\star=\begin{bmatrix}\hbf_\Msf^\top&\hbf_\Mbd^\top&\hbf_\Mat^\top\end{bmatrix}^\top$ yield a dual certificate. 
\end{proof}

Now, we formally define the vulnerability index.
\begin{definition}[Line vulnerability for SOCP]
\label{def:line_vul_metric_socp}
For each line $\{i,j\}_\ell\in\Lcal$ and a given $\xbf\in\Kcal$ that satisfies primal feasibility, define the line vulnerability metric $\aijsocp$ along the direction $i\rightarrow j$ as the optimal value of the following minimax program:
\begin{subequations}
    \label{equ:lin_vul_for_socp}
	\begin{align}
    \aijsocp(\xbf)=& \underset{\bfxi\in\{-1,+1\}^{\nijb}
	}{\text{max~~~}}\quad\underset{\alpha\in\Rbb,\bomega\in\Rbb^{\nijl},\hbf\in\Rbb^{\nijg}}{\text{min~~~}}
	&&\hspace{-3.5cm} \alpha\\
	& \;\;\;\text{subject to~~~}
	&&\hspace{-4.5cm}\Abf_{\Mbdijg,\Xbdij}^\top\hbf+\Abf_{\Mbdijb,\Xbdij}^\top\bfxi+\sum_{\ell\in\Lbdij}\omega_\ell\Tbf_\ell\xbf=\bfzero\\
	&&&\hspace{-4.5cm} \omega_\ell\geq 0,\qquad\qquad\forall \ell\in\Lbdij\\
	&&&\hspace{-4.5cm} \|\hbf\|_\infty\leq\alpha,
	\end{align}
\end{subequations}
where $\nijg=|\Mbdijg|,\nijb=|\Mbdijb|,\nijl=|\Lbdij|$ are the number of measurements/lines  in $\Mbdijg$, $\Mbdijb$ and $\Lbdij$, respectively, and $\Xbdij$, $\Mbdijg$, $\Mbdijb$ and $\Lbdij$ are defined in Def. \ref{def:bound_var_mea}. Also, we define $\Tbf_\ell=\cbf_\ell\cbf_\ell^\top-\Dbf_\ell^\top\Dbf_\ell$, where $\cbf_\ell$ and $\Dbf_\ell$ are given in \eqref{equ:c_l_D_l}. Similarly, we define the backward line vulnerability metric $\aji$ by replacing $i\rightarrow j$ to $j\rightarrow i$ in \eqref{equ:lin_vul_for_socp}. We adopt the measurement normalization convention in Def. \ref{def:norm_measure}.
\end{definition}

\begin{lemma}
\label{lem:equi_vul_socp}
The line vulnerability metric $\aijsocp(\xbf)$ for a given $\xbf\in\Kcal$ that satisfies the primal feasibility coincides with the optimal objective value of the following minimax program:
\begin{subequations}
    \label{equ:lin_vul_for2_socp}
	\begin{align}
    \atijsocp(\xbf)=&\underset{\tilde{\bfxi}\in [-1,+1]^{\nijb}
	}{\text{max~~~}}\quad\underset{\tilde{\alpha}\in\Rbb,\bnu\in\Rbb^{\nijl},\tilde{\hbf}\in\Rbb^{\nijg}}{\text{min~~~}}
	&& \hspace{0.1cm}\tilde{\alpha}\\
	& \;\;\text{subject to~~~}
	&&\hspace{-4.5cm}\Abf_{\Mbdijg,\Xbdij}^\top\tilde{\hbf}+\Abf_{\Mbdijb,\Xbdij}^\top\tilde{\bfxi}+\sum_{\ell\in\Lbdij}\nu_\ell\cbf_\ell+\Dbf_\ell^\top\ubf_\ell=\bfzero\\
	&&&\hspace{-4.5cm}\nu_\ell\geq\|\ubf_\ell\|_2,\qquad\qquad\qquad\quad\forall\ell\in\Lbdij\label{equ:dual_cons_vul_socp}\\
	&&&\hspace{-4.5cm}\nu_\ell\cbf_\ell^\top\xbf+\ubf_\ell^\top\Dbf_\ell\xbf=\bfzero,\qquad\;\;\forall\ell\in\Lbdij\label{equ:comp_slack_vul_socp}\\
	&&&\hspace{-4.5cm} \|\tilde{\hbf}\|_\infty\leq\tilde{\alpha},
	\end{align}
\end{subequations}
with the same notations as in Def. \ref{def:line_vul_metric}, where $\cbf_\ell$ and $\Dbf_\ell$ are define in \eqref{equ:c_l_D_l}.
\end{lemma}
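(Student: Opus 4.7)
My plan is to establish $\aijsocp(\xbf) \le \atijsocp(\xbf)$ by a direct KKT-based substitution, and then $\aijsocp(\xbf) \ge \atijsocp(\xbf)$ by LP-dualizing the inner minimization of both formulations and comparing the resulting feasible sets of $\zbf$.

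For the direction $\aijsocp(\xbf) \le \atijsocp(\xbf)$, I first observe that the inner value of \eqref{equ:lin_vul_for2_socp} is the optimal value of a parametric convex program in which $\tilde{\bfxi}$ appears only affinely in the equality constraint; it is therefore convex in $\tilde{\bfxi}$. By Bauer's maximum principle, the outer maximum over $[-1,+1]^{\nijb}$ is attained at a vertex, so without loss of generality $\tilde{\bfxi}\in\{-1,+1\}^{\nijb}$. Now fix any feasible tuple $(\tilde{\bfxi},\tilde{\alpha},\tilde{\hbf},\{\nu_\ell,\ubf_\ell\}_{\ell\in\Lbdij})$ of \eqref{equ:lin_vul_for2_socp}. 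For each $\ell$, combining the primal feasibility $\cbf_\ell^\top\xbf\ge\|\Dbf_\ell\xbf\|_2$, dual feasibility \eqref{equ:dual_cons_vul_socp}, and complementary slackness \eqref{equ:comp_slack_vul_socp} via the Cauchy--Schwarz inequality forces either $(\nu_\ell,\ubf_\ell)=(0,0)$, or $\cbf_\ell^\top\xbf=\|\Dbf_\ell\xbf\|_2$ together with $(\nu_\ell,\ubf_\ell)=(\omega_\ell\cbf_\ell^\top\xbf,-\omega_\ell\Dbf_\ell\xbf)$ for some $\omega_\ell\ge 0$. In either case $\nu_\ell\cbf_\ell+\Dbf_\ell^\top\ubf_\ell=\omega_\ell(\cbf_\ell\cbf_\ell^\top-\Dbf_\ell^\top\Dbf_\ell)\xbf=\omega_\ell\Tbf_\ell\xbf$, and setting $(\bfxi,\alpha,\hbf,\{\omega_\ell\}):=(\tilde{\bfxi},\tilde{\alpha},\tilde{\hbf},\{\omega_\ell\})$ produces a feasible point of \eqref{equ:lin_vul_for_socp} with the same objective $\tilde{\alpha}$.

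For the direction $\aijsocp(\xbf)\ge\atijsocp(\xbf)$, I apply LP duality to the inner minimization of \eqref{equ:lin_vul_for_socp}: for fixed $\bfxi$, the inner LP dual reads $\max_{\zbf}\bfxi^\top\Abf_{\Mbdijb,\Xbdij}\zbf$ subject to $\|\Abf_{\Mbdijg,\Xbdij}\zbf\|_1\le 1$ and $\zbf^\top\Tbf_\ell\xbf\ge 0$ for all $\ell\in\Lbdij$. Swapping the two maximizations and using $\max_{\bfxi\in\{-1,+1\}^{\nijb}}\bfxi^\top\ybf=\|\ybf\|_1$ yields
\begin{equation*}
\aijsocp(\xbf)=\max_{\zbf\in\Zcal_1}\|\Abf_{\Mbdijb,\Xbdij}\zbf\|_1,\qquad \Zcal_1=\left\{\zbf:\|\Abf_{\Mbdijg,\Xbdij}\zbf\|_1\le 1,\ \zbf^\top\Tbf_\ell\xbf\ge 0\ \forall \ell\in\Lbdij\right\}.
\end{equation*}
Substituting the Cauchy--Schwarz reduction from the first direction into \eqref{equ:lin_vul_for2_socp} recasts its inner problem as an LP in which an $\omega_\ell\ge 0$ multiplier appears only for lines on the SOC boundary; dualizing and swapping analogously gives $\atijsocp(\xbf)=\max_{\zbf\in\Zcal_2}\|\Abf_{\Mbdijb,\Xbdij}\zbf\|_1$, where $\Zcal_2\supseteq\Zcal_1$ drops the orthant constraint $\zbf^\top\Tbf_\ell\xbf\ge 0$ at lines where $\xbf$ is strictly interior in the SOC. It then remains to show that the maximum over $\Zcal_2$ is attained at a point lying in $\Zcal_1$, which closes the argument.

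\textbf{Anticipated obstacle.} The delicate step is the last one: showing that the relaxed orthant constraints $\zbf^\top\Tbf_\ell\xbf\ge 0$ for strictly interior lines are inactive at some $\ell_1$-maximizer over $\Zcal_2$. I expect the proof to hinge on the specific algebraic structure of $\Tbf_\ell=\cbf_\ell\cbf_\ell^\top-\Dbf_\ell^\top\Dbf_\ell$ together with the primal feasibility of $\xbf$: the vector $\Tbf_\ell\xbf$ encodes the direction of primal slack, and a candidate maximizer $\zbf^\star$ violating an inequality should be modifiable along this slack direction without changing $\|\Abf_{\Mbdijb,\Xbdij}\zbf^\star\|_1$. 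Making this reflection argument rigorous, particularly when several orthant inequalities are simultaneously violated, is where I expect the proof to require the most care; a complementary-slackness analysis of the LP dual, or an extreme-point argument exploiting the polyhedral structure of $\Zcal_2$, appears the most promising route.
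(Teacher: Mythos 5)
Your first direction is sound and is essentially the paper's own argument, made self-contained: where the paper invokes a standard second-order-cone complementarity result (Lemma 15 of the cited Alizadeh--Goldfarb reference) to conclude from \eqref{equ:comp_slack_vul_socp} that $\nu_\ell=\omega_\ell\cbf_\ell^\top\xbf$ and $\ubf_\ell=-\omega_\ell\Dbf_\ell\xbf$ for some $\omega_\ell\geq 0$, and hence $\nu_\ell\cbf_\ell+\Dbf_\ell^\top\ubf_\ell=\omega_\ell\Tbf_\ell\xbf$, your Cauchy--Schwarz dichotomy derives exactly that parametrization from first principles; your vertex reduction of the outer maximization matches the paper's appeal to Lemma \ref{lem:equiv_vul}. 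This gives $\aijsocp(\xbf)\leq\atijsocp(\xbf)$.

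The gap is the converse inequality, which you have correctly located but not closed: your LP-duality detour reduces it to showing that the $\ell_1$-maximum over the relaxed polytope $\mathcal{Z}_2$ is attained in $\mathcal{Z}_1$, and the proposal stops at ``it remains to show.'' That attainment claim is the entire content of $\aijsocp(\xbf)\geq\atijsocp(\xbf)$, so as written the lemma is not proved. The paper's route for this direction is a one-line substitution that avoids the detour: given multipliers $\omega_\ell\geq 0$ feasible for \eqref{equ:lin_vul_for_socp}, set $(\nu_\ell,\ubf_\ell)=(\omega_\ell\cbf_\ell^\top\xbf,\,-\omega_\ell\Dbf_\ell\xbf)$; dual feasibility \eqref{equ:dual_cons_vul_socp} follows from $\cbf_\ell^\top\xbf\geq\|\Dbf_\ell\xbf\|_2$, and the stationarity term reproduces $\omega_\ell\Tbf_\ell\xbf$. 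The only thing left to check is complementarity \eqref{equ:comp_slack_vul_socp}, which under this substitution reads $\omega_\ell((\cbf_\ell^\top\xbf)^2-\|\Dbf_\ell\xbf\|_2^2)=0$ and is automatic whenever the cone is active at $\xbf$ on line $\ell$. For the $\xbf$ to which the lemma is applied --- a lifted voltage state, for which $x^{\text{mg}}_ix^{\text{mg}}_j=(x^{\text{re}}_\ell)^2+(x^{\text{im}}_\ell)^2$ holds with equality on every line --- this is always the case, so $\mathcal{Z}_1=\mathcal{Z}_2$ and your own dichotomy already supplies everything needed. If you insist on an arbitrary $\xbf\in\Kcal$ with strictly slack cones, then the missing step is genuinely the crux: you would have to show that taking $\omega_\ell>0$ at a strictly slack line never strictly decreases the inner minimum, and neither your sketch nor the reflection heuristic in your ``anticipated obstacle'' establishes this.
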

\begin{proof}
The equivalence between optimizing over $[-1,+1]^{\nijb}$ and $\{-1,+1\}^{\nijb}$ for the outer minimization can be reasoned as in \eqref{lem:equiv_vul} due to the convexity of the feasibility region given $\xbf\in\Kcal$ and $\tilde{\bfxi}$. Since $\xbf$ satisfies the primal feasibility, which can be expressed as in \eqref{equ:c_l_D_l}, a standard result (c.f., \cite[Lemma 15]{alizadeh2003second}) in analogy to linear programming indicates that \eqref{equ:comp_slack_vul_socp} is equivalent to:
\begin{equation*}
    \nu_\ell\Dbf_\ell\xbf+\cbf_\ell^\top\xbf\ubf_\ell=\bfzero,\qquad\;\;\forall\ell\in\Lbdij,
\end{equation*}
which indicates that $\nu_\ell=\omega_\ell\cbf_\ell^\top\xbf$ and $\ubf_\ell=-\omega_\ell \Dbf_\ell\xbf$ for $\omega_\ell\geq 0$ and $\ell\in\Lbdij$. It can be verified that this also satisfies the SOCP constraints \eqref{equ:dual_cons_vul_socp}. By the definition of $\Tbf_\ell=\cbf_\ell\cbf_\ell^\top-\Dbf_\ell^\top\Dbf_\ell$, the equivalence to \eqref{equ:lin_vul_for_socp} is established.

\end{proof}

\begin{lemma}[Local property implies global property for SOCP]
\label{lemma:local_sufficient_cond_socp}
Given $\Bat, \Bbi, \Bbo,$ and $\Bsf$ and the associated set partitioning (c.f., Def. \ref{def:bound_var_mea_all}), let  $\Abfo=\begin{bmatrix}\Abf_{\Msf,\Xsf}&\Abf_{\Msf,\Xbd}\\\bfzero&\Abf_{\Mbo,\Xbd}\\\bfzero&\Abf_{\Mbi,\Xbd}\end{bmatrix}$, and $\cbf_\ell^\circ$ and $\Dbf_\ell^\circ$ to be the   subvector and submatrix of $\cbf_\ell$ and $\Dbf_\ell$ indexed by $\Xsf\cup\Xbd$. If $\aijsocp\leq 1-\gamma$ and $\gamma>0$ for all $\{i,j\}\in\Lbi$ such that $i\in\Bat$ and $j\in\Bbi$, then for any $\hat{\hbf}_{\Mbi}\in[-1,1]^{|\Mbi|}$, there exist  $\hat{\hbf}_{\Msf\cup\Mbo}$ and $\{\hat{\nu}_\ell,\hat{\ubf}_\ell\}_{\ell\in\Latbi\cup\LBbd\cup\LBsf}$ with the properties that $\|\hat{\hbf}_{\Msf\cup\Mbo}\|_\infty\leq 1-\gamma$ and
\begin{equation}
    \Abfot_{\Msf\cup\Mbo}\hat{\hbf}_{\Msf\cup\Mbo}+\Abfot_{\Mbi}\hat{\hbf}_{\Mbi}+\sum_{\ell\in\Latbi\cup\LBbd\cup\LBsf}\hat{\nu}_\ell\cbf_\ell^\circ+\Dbf_\ell^{\circ\top}\hat{\ubf}_\ell=\bfzero.
    \label{equ:all_exist_h_socp}
\end{equation}
\end{lemma}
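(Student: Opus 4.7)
The plan is to adapt the inductive construction from the proof of Lemma \ref{lemma:local_sufficient_cond} to the SOCP setting, combining it with Lemma \ref{lem:equi_vul_socp}, which rewrites every local vulnerability bound as a local dual certificate carrying SOCP-compatible multipliers $(\nu_\ell, \ubf_\ell)$ that already satisfy $\nu_\ell \geq \|\ubf_\ell\|_2$ and the complementary-slackness condition against the primal-feasible $\xbf$.

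First I would perform the same reduction as in Lemma \ref{lemma:local_sufficient_cond}: setting $\hat{\hbf}_\Msf = \bfzero$ and $(\hat\nu_\ell, \hat\ubf_\ell) = (0, \bfzero)$ for every $\ell \in \LBsf$ eliminates the safe-region contribution from \eqref{equ:all_exist_h_socp}, so it suffices to construct $\hat{\hbf}_\Mbo$ together with $\{\hat\nu_\ell, \hat\ubf_\ell\}_{\ell \in \Latbi \cup \LBbd}$ satisfying
\[
\Abf_{\Mbo,\Xbd}^\top \hat{\hbf}_\Mbo + \Abf_{\Mbi,\Xbd}^\top \hat{\hbf}_\Mbi + \sum_{\ell \in \Latbi \cup \LBbd} \bigl(\hat\nu_\ell \cbf_\ell^\circ + \Dbf_\ell^{\circ\top} \hat\ubf_\ell\bigr) = \bfzero,
\]
together with $\|\hat{\hbf}_\Mbo\|_\infty \leq 1-\gamma$ and $\hat\nu_\ell \geq \|\hat\ubf_\ell\|_2$ for every relevant $\ell$. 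I would then induct on the bridge lines in $\Latbi$ in the same ordering used in the LP proof.

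For the base case (a single bridge line $\{i,j\}$), Lemma \ref{lem:equi_vul_socp} combined with the hypothesis $\aijsocp(\xbf) \leq 1-\gamma$ directly produces, for every admissible local subvector of $\hat{\hbf}_\Mbi$, a local triple $(\tilde{\hbf}, \{\nu_\eta, \ubf_\eta\}_{\eta \in \Lbdij})$ satisfying the local dual equation, the SOCP cone constraint, and $\|\tilde{\hbf}\|_\infty \leq 1-\gamma$. For the inductive step, I would reuse the case distinctions from the proof of Lemma \ref{lemma:local_sufficient_cond}, namely whether the new bridge line shares its attacked endpoint with previously added ones and whether the outer-boundary set $\Bboft$ overlaps with nodes already added. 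The assumption ``no two nodes in $\Bat$ are connected to the same node in $\Bbi$'' (stated after Def.~\ref{def:bound_var_mea_all}) guarantees that every $\ell \in \LBbd$ appears in exactly one local vulnerability index, so the conic multipliers from distinct local certificates act on disjoint SOCP constraints and dual feasibility $\hat\nu_\ell \geq \|\hat\ubf_\ell\|_2$ is preserved line-by-line by plain concatenation.

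The main obstacle will be gluing the $\hat{\hbf}_\Mbo$ block at outer-boundary nodes shared between two overlapping local certificates. At such a node $k \in \tNBbo$, the equation row associated with $x^{\text{mg}}_k$ receives simultaneous contributions from the voltage-magnitude entries of $\hat{\hbf}_\Mbo$ \emph{and} from the conic terms $\hat\nu_\ell \cbf_\ell^\circ$ for every $\ell \in \LBbd$ incident to $k$, which now belong to different local certificates. To handle this, I would first absorb the conic contributions into an effective right-hand side for that shared voltage-magnitude row, then apply the degree-normalized averaging trick from the LP proof to the $\hat{\hbf}_\Mbo$ block only, and verify that the averaged entries still satisfy $\|\hat{\hbf}_\Mbo\|_\infty \leq 1-\gamma$ because each shared entry is a convex combination of at most $\deg(k)$ pieces individually bounded by $1-\gamma$, rescaled to offset the $\sqrt{\deg(k)}$ weighting imposed by Def.~\ref{def:norm_measure}. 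Crucially, the SOCP multipliers are never averaged or rescaled, so their cone feasibility is inherited from the per-line certificates. Once the inductive invariant is re-established at each step, the assembled $(\hat{\hbf}_\Mbo, \{\hat\nu_\ell, \hat\ubf_\ell\}_{\ell \in \Latbi \cup \LBbd})$ yields \eqref{equ:all_exist_h_socp}, completing the proof.
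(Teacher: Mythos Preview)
Your proposal is correct and follows essentially the same approach as the paper's proof: the same reduction to the boundary-only statement by zeroing out $\hat{\hbf}_\Msf$ and the safe-region SOCP multipliers, the same induction on bridge lines in $\Latbi$, and the same case analysis (1c/2c versus 1a/1b/2a/2b) with the degree-normalized averaging at shared outer-boundary nodes. Your argument that the SOCP multipliers from distinct local certificates live on disjoint lines (hence need no averaging) is in fact more explicit than the paper, which at that point simply writes ``Similarly, we can perform the transformation for $\{\hat{\nu}_\ell,\hat{\ubf}_\ell\}_{\LBbdnum{k+1}}$'' without spelling out that the transformation reduces to concatenation.
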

\begin{proof}
The proof is similar to the one for Lemma \ref{lemma:local_sufficient_cond}. First, we show that a sufficient condition for Lemma \ref{lemma:local_sufficient_cond_socp} is that for any $\hat{\hbf}_{\Mbi}$, there exists $\hat{\hbf}_{\Mbo}$ and $\{\hat{\nu}_\ell,\hat{\ubf}_\ell\}_{\ell\in\Latbi\cup\LBbd}$ such that $\|\hat{\hbf}_{\Mbo}\|_\infty\leq 1-\gamma$ and
\begin{equation}
    \Abf_{\Mbo,\Xbd}^\top\hat{\hbf}_{\Mbo}+\Abf_{\Mbi,\Xbd}^\top\hat{\hbf}_{\Mbi}+\sum_{\ell\in\Latbi\cup\LBbd}\left[\hat{\nu}_\ell\cbf_\ell+\Dbf_\ell^\top\hat{\ubf}_\ell\right]_{\Xbd}=\bfzero.
    \label{equ:small_exist_h_socp}
\end{equation}
    This is immediate by simply choosing $\hat{\hbf}_{\Msf\cup\Mbo}=\begin{bmatrix}\bfzero^\top&\hat{\hbf}_{\Mbo}^\top\end{bmatrix}^\top$ and $\hat{\nu}_\ell=0$ and $\hat{\ubf}_\ell=\bfzero$ for $\ell\in\LBsf$. In what follows, we prove \eqref{equ:small_exist_h_socp} by induction. The induction rule is as follows: we start by arbitrarily choosing one line $\{i,j\}\in\Lbi$, where $i\in\Bat$ and $j\in\Bbi$, and initialize the line set $\LBbdnum{1}=\Lbdij$, the measurement set $\Mbo^{(1)}\coloneqq\Mbdijg$, $\Mbi^{(1)}\coloneqq\Mbdijb$ and the variable set $\Xbd^{(1)}\coloneqq\Xbdij$. For each step $k$, we add a new line $\{f,t\}\in\Lbi$ such that $\LBbdnum{k}=\LBbdnum{k-1}\cup\Lbdft$, and the associated measurements and variables to $\Mbo^{(k)}$, $\Mbi^{(k)}$ and $\Xbd^{(k)}$, respectively. After the inclusion of all the lines in $\Lbi$, we should obtain the set $\Mbo$, $\Mbi$ and $\Xbd$. In each step, we check whether there exist $\{\hat{\nu}_\ell,\hat{\ubf}_\ell\}_{\ell\in\LBbdnum{k}}$ and $\hat{\hbf}_{\Mbo^{(k)}}$ such that $\|\hat{\hbf}_{\Mbo^{(k)}}\|_\infty\leq 1-\gamma$ and
\begin{equation}
    \Abf_{\Mbo^{(k)},\Xbd^{(k)}}^\top\hat{\hbf}_{\Mbo^{(k)}}+\Abf_{\Mbi^{(k)},\Xbd^{(k)}}^\top\hat{\hbf}_{\Mbi^{(k)}}+\sum_{\ell\in\LBbdnum{k}}\left[\hat{\nu}_\ell\cbf_\ell+\Dbf_\ell^\top\hat{\ubf}_\ell\right]_{\Xbd^{(k)}}=\bfzero.
    \label{equ:small_exist_h_iter_socp}
\end{equation}
The base case for $k=1$ follows directly from the condition that $\aijsocp\leq 1-\gamma$. For any $k\geq 1$, let $\{f,t\}\in\Lbi$ denote the line to be added, where $f\in\Mat$ and $t\in\Mbi$. There are two possible cases: \textbf{1)} the new line does not share any nodes with lines that have been already added; or \textbf{2)} the new line shares the attack node $f$ with one (or more) of the lines already added (note that by definition, the new line cannot share the inner boundary node $t$ with one (or more) of the lines already added). For each case, there are also three events that might occur: \textbf{a)} one or more of the nodes in $\Bboft$ are connected to one or more of the nodes in the inner boundaries of lines that have already been added; and/or \textbf{b)} one or more of the nodes in outer boundary of the lines that have already been added are connected to $t$; or \textbf{c)} none of the above (note that by definition, there are no lines within the inner boundary region). We need to consider all the combinations between the three cases and the three events to show that \eqref{equ:small_exist_h_iter} holds in all scenarios. Fortunately, all the combinations can be reduced to two typical scenarios, where the proofs can be directly applied. We consider these scenarios now. 

The first scenario applies to Cases 1c and 2c, where $\Mbo^{(k+1)}=\Mbo^{(k)}\cup\Mbdftg$, $\Mbi^{(k+1)}=\Mbi^{(k)}\cup\Mbdftb$, $\Xbd^{(k+1)}=\Xbd^{(k)}\cup\Xbdft$, $\Mbo^{(k)}\cap\Mbdftg=\emptyset$, $\Mbi^{(k)}\cap\Mbdftb=\emptyset$, and $\Xbd^{(k)}\cap\Xbdft=\emptyset$. Therefore, for any given $\hat{\hbf}_{\Mbi^{(k+1)}}=\begin{bmatrix}\hat{\hbf}_{\Mbi^{(k)}}^\top&\hat{\bfxi}^\top\end{bmatrix}^\top$, we can always find $\hat{\hbf}_{\Mbo^{(k+1)}}=\begin{bmatrix}\hat{\hbf}_{\Mbo^{(k)}}^\top&\hat{\hbf}^\top\end{bmatrix}^\top$ and $\{\hat{\nu}_\ell,\hat{\ubf}_\ell\}_{\ell\in\LBbdnum{k+1}}=\{\hat{\nu}_\ell,\hat{\ubf}_\ell\}_{\LBbdnum{k}\cup\Lbdft}$, where $\hat{\hbf}_{\Mbo^{(k)}}$ and $\{\hat{\nu}_\ell,\hat{\ubf}_\ell\}_{\LBbdnum{k}}$ are given by \eqref{equ:small_exist_h_iter_socp}, $\hat{\hbf}$ and $\{\hat{\nu}_\ell,\hat{\ubf}_\ell\}_{\Lbdft}$ are given by \eqref{equ:lin_vul_for_socp}, and $\|\hat{\hbf}_{\Mbo^{(k+1)}}\|_\infty\leq1-\gamma$ by definition.

The second scenario applies to Cases 1a, 1b, 2a and 2b. Let $\tNBbo$ be the set of nodes in the outer boundary shared by the new line $\Bboft$ and those of the lines that have been added. Then, we have $\Mbo^{(k+1)}=\Mbo^{(k)}\cup\Mbdftg$, $\Mbi^{(k+1)}=\Mbi^{(k)}\cup\Mbdftb$, $\Xbd^{(k+1)}=\Xbd^{(k)}\cup\Xbdft$, where $\Mbo^{(k)}\cap\Mbdftg$ is the set of voltage magnitude measurements of nodes in $\tNBbo$, $\Mbi^{(k)}\cap\Mbdftb=\emptyset$, and $\Xbd^{(k)}\cap\Xbdft$ is the set of voltage magnitude variables of nodes in $\tNBbo$. For any given $\hat{\hbf}_{\Mbi^{(k)}}$ and $\hat{\bfxi}^\top$ with $\|\hat{\bfxi}\|_\infty\leq 1$, we can always find $\hat{\hbf}_{\Mbo^{(k)}}$, $\hat{\hbf}^\top$, $\{\hat{\nu}_\ell,\hat{\ubf}_\ell\}_{\LBbdnum{k}}$ and $\{\hat{\nu}_\ell,\hat{\ubf}_\ell\}_{\Lbdft}$, where $\hat{\hbf}_{\Mbo^{(k)}}$ and $\{\hat{\nu}_\ell,\hat{\ubf}_\ell\}_{\LBbdnum{k}}$ are given by \eqref{equ:small_exist_h_iter_socp}, and $\hat{\hbf}_{\Mbo^{(k)}}$ and $\{\hat{\nu}_\ell,\hat{\ubf}_\ell\}_{\Lbdft}$ are given by \eqref{equ:lin_vul_for_socp}. Let $\hat{\hbf}_{\Mbo^{(k)}}$ be further divided into the parts corresponding to the voltage magnitude measurements (if available) of nodes in $\tNBbo$, namely $\left[\hat{\hbf}_{\Mbo^{(k)}}\right]_\tNBbo$ and the rest, namely $\left[\hat{\hbf}_{\Mbo^{(k)}}\right]_\tNBboc$; similarly, $\hat{\hbf}$ be further divided into $\left[\hat{\hbf}\right]_\tNBbo$ and the rest, namely $\left[\hat{\hbf}\right]_\tNBboc$. Then, we set $\hat{\hbf}_{\Mbo^{(k+1)}}=\begin{bmatrix}\left[\hat{\hbf}_{\Mbo^{(k)}}\right]_\tNBboc^\top&\frac{1}{\mathrm{deg}(\tNBbo)}\circ\left(\left[\hat{\hbf}_{\Mbo^{(k)}}\right]_\tNBbo+\left[\hat{\hbf}\right]_\tNBbo\right)^\top&\left[\hat{\hbf}\right]_\tNBboc^\top\end{bmatrix}^\top$. Similarly, we can perform the transformation for $\{\hat{\nu}_\ell,\hat{\ubf}_\ell\}_{\LBbdnum{k+1}}$. Hence, we can satisfy \eqref{equ:small_exist_h_iter_socp} for any given $\hat{\hbf}_{\Mbi^{(k+1)}}=\begin{bmatrix}\hat{\hbf}_{\Mbi^{(k)}}^\top&\hat{\bfxi}^\top\end{bmatrix}^\top$ with $\|\hat{\bfxi}\|_\infty\leq 1$ (note that the voltage magnitude measurement in the calculation of line vulnerability metric is normalized by 1, but it is weighted by the degree of connections of each node the actual estimation algorithm, c.f., Def. \ref{def:norm_measure}). Moreover, by construction, we have $\|\hat{\hbf}_{\Mbo^{(k+1)}}\|_\infty\leq 1-\gamma$ for all $k$. This completes the induction proof.
\end{proof}
\begin{proposition}[SOC constraint can improve line vulnerability]
\label{prop:soc_better}
For any $\xbf\in\Kcal$, it holds that $$\aijsocp(\xbf)\leq\aij$$
\end{proposition}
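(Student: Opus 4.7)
The plan is to prove the inequality pointwise in the outer maximization variable $\bfxi$, by exhibiting, for each fixed $\bfxi \in \{-1,+1\}^{\nijb}$, a lift of every feasible point of the inner minimization in \eqref{equ:lin_vul_for} into a feasible point of the inner minimization in \eqref{equ:lin_vul_for_socp} that preserves the objective value $\alpha$. This immediately gives that the SOCP inner min is no larger than the LP inner min, for every $\bfxi$, and then taking the max over $\bfxi$ on both sides yields $\aijsocp(\xbf) \leq \aij$.

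Concretely, I would fix $\bfxi \in \{-1,+1\}^{\nijb}$ and let $(\alpha^\star, \hbf^\star)$ be an optimal pair of the inner minimization defining $\aij$, so that the linear equality $\Abf_{\Mbdijg,\Xbdij}^\top\hbf^\star+\Abf_{\Mbdijb,\Xbdij}^\top\bfxi=\bfzero$ and the cone bound $\|\hbf^\star\|_\infty \leq \alpha^\star$ both hold. The key observation is then that the triple $(\alpha^\star, \hbf^\star, \bomega = \bfzero)$ is feasible for the inner minimization in \eqref{equ:lin_vul_for_socp}: the equality constraint of \eqref{equ:lin_vul_for_socp} collapses to the equality constraint of \eqref{equ:lin_vul_for} once all slacks $\omega_\ell$ are set to zero; the nonnegativity $\omega_\ell \geq 0$ is trivially satisfied; and $\|\hbf^\star\|_\infty \leq \alpha^\star$ carries over verbatim. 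Hence the SOCP inner min for this $\bfxi$ is at most $\alpha^\star$, i.e., at most the LP inner min. Because the map $\bfxi \mapsto (\text{inner min})$ dominates pointwise, taking maxima preserves the inequality, and the claim follows.

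There is essentially no technical obstacle in this argument: it is just the observation that adding extra decision variables (the slacks $\omega_\ell$) into a minimization problem can only reduce the optimum, because the old feasible set lifts into the new one at zero slack. The assumption $\xbf \in \Kcal$ is not used for the inequality itself; its role is simply to ensure that $\aijsocp(\xbf)$ is well-posed and consistent with the complementary-slackness form exploited in Lemma \ref{lem:equi_vul_socp}. If one instead wanted to compare via the equivalent formulation \eqref{equ:lin_vul_for2_socp}, the same lifting argument applies with the substitution $\nu_\ell = \omega_\ell \cbf_\ell^\top \xbf$, $\ubf_\ell = -\omega_\ell \Dbf_\ell \xbf$, which is automatically dual-feasible for the SOC cone and satisfies complementary slackness against the primal-feasible $\xbf \in \Kcal$.
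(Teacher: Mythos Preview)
Your proposal is correct and takes essentially the same approach as the paper: both arguments fix $\bfxi$, take the LP inner optimum $\hat{\hbf}$, and lift it to a feasible point of the SOCP inner problem by setting $\bomega = \bfzero$, so the SOCP inner min is dominated pointwise and the inequality survives the outer max. Your write-up is slightly more explicit about the pointwise-then-max structure, but the idea is identical.
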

\begin{proof}
For any given $\bfxi$, let $\hat{\hbf}$ be the optimal solution of the inner minimizer in \eqref{equ:lin_vul_for} with $\|\hat{\hbf}\|_\infty\leq\aij$. Then, the tuple $(\aijsocp=\aij,\bomega=\bfzero,\hbf=\hat{\hbf})$ is a feasible solution for \eqref{equ:lin_vul_for_socp}, which proves that we always have $\aijsocp(\xbf)\leq\aij$.
\end{proof}

The above proposition implies a key advantage of incorporating SOCP constraints---to improve robustness. This has also been empirically validated in our study as shown in the main text.

\begin{theorem}
Consider the measurements $\ybf=\Abf\xbf_\natural+\bbf_\natural$, where $\supp(\bbf_\natural)\subseteq\Mat$, and also a  partitioning of the network as $\Bat, \Bbi, \Bbo,$ and $\Bsf$. Let $\Kbd$ and $\Kat$ be the subsets of SOCP constraints $\Kcal$ restricted to variables $\xbf_\Ibd$ and $\xbf_\Iat$, respectively, and let
\begin{align*}
    \tKat(\hat{\xbf}_\Ibd)=&\Bigg\lbrace\xbf_\Iat\Big\vert\begin{bmatrix}x^{\text{mg}}_i&x^{\text{re}}_\ell+jx^{\text{im}}_\ell\\
x^{\text{re}}_\ell-jx^{\text{im}}_\ell&x^{\text{mg}}_j\end{bmatrix}\succeq 0,\\
&\qquad\qquad\qquad\forall\ell\coloneqq(i,j)\in\LBat\cup\Latbi,\text{ where }x^{\text{mg}}_i=\hat{x}^{\text{mg}}_i\quad\forall i\in\Bbi\Bigg\rbrace,
\end{align*}
be the confined feasible set for $\xbf_\Iat$, which fixes the boundary variables $\hat{\xbf}_\Ibd$ in the SOCP constraints. Suppose that the following conditions hold:
\begin{itemize}
    \item (Full column rank for the safe and boundary region) $\Abf_{\Msf\cup\Mbd,\Xsf\cup\Xbd}$ and $$\Qbf_{\Mbd,\Xbd}=\begin{bmatrix}\Abf_{\Mbd,\Xbd}&\Ibf_{\Mbi}^{(|\Mbd|)\top}\end{bmatrix}$$ have full column rank.
    \item (Localized mutual incoherence) for all lines $\{i,j\}\in\Latbi$ that bridge the attacked region and the inner boundary, where $i\in\Bat$, $j\in\Bbi$, we have $\aijsocp\leq 1-\gamma$ for some $\gamma>0$.  
    \item (Nonbinding SOCP constraints in the boundary) the solution for the attacked states satisfies $\hat{\xbf}_\Iat\in\tKat({\xbf}_{\natural\Ibd})$.
\end{itemize}
Then, the solution to \eqref{equ:primal_socp}, denoted as $(\hat{\xbf},\hat{\bbf})$, uniquely recovers the true state outside the attacked region (i.e., $\hat{\xbf}_\Isf=\xbf_{\natural\Isf}$ and $\hat{\xbf}_\Ibd=\xbf_{\natural\Ibd}$). Furthermore, the state estimation by \eqref{equ:s2-phase} recovers the true state for the unaffected region (i.e., $\hat{v}_k=v_k$ for $k\in\Bsf\cup\Bbd$).
\label{thm:l1_est_cyber_socp}
\end{theorem}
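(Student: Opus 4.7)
The plan is to mirror the argument for Theorem~\ref{thm:l1_est_cyber}, but now using the SOCP versions of the boundary lemmas. By Lemma~\ref{lem:bound_defend_socp}, it suffices to show two things: (i) for every $\bbf_{\star\Mbd}$ with $\supp(\bbf_{\star\Mbd})\subseteq\Mbi$, the constrained subproblem
\begin{equation*}
    \min_{\xbf_\Ibd\in\Kbd}\|\zbf_\Mbd-\Abf_{\Mbd,\Xbd}\xbf_\Ibd\|_1,\qquad \zbf_\Mbd=\Abf_{\Mbd,\Xbd}\xbf_{\natural\Ibd}+\bbf_{\star\Mbd},
\end{equation*}
has $\hat{\xbf}_\Ibd=\xbf_{\natural\Ibd}$ as its unique optimum; and (ii) the attacked-region solution $\hat{\xbf}_\Iat$ of~\eqref{equ:oracle_B_at_socp} lies in $\tKat(\xbf_{\natural\Ibd})$. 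Assumption (iii) of the theorem (the nonbinding SOCP constraints on the boundary) is exactly what provides (ii). The state recovery in~\eqref{equ:s2-phase} then follows verbatim from the LP argument once $\hat{\xbf}_{\Ibd\cup\Isf}=\xbf_{\natural\Ibd\cup\Isf}$ is established.

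To prove (i), I would form the Lagrangian of the constrained subproblem and write its KKT conditions as in Section~\ref{sec:estimator}: stationarity in $\xbf_\Ibd$ and $\bbf$, primal feasibility $\xbf_\Ibd\in\Kbd$ and $\Abf_{\Mbd,\Xbd}\xbf_\Ibd+\bbf=\zbf_\Mbd$, dual feasibility $\nu_\ell\ge\|\ubf_\ell\|_2$ together with $\|\hbf_\Mbd\|_\infty\le 1$, and complementary slackness $\nu_\ell\cbf_\ell^\top\xbf_\Ibd+\ubf_\ell^\top\Dbf_\ell\xbf_\Ibd=0$ for each $\ell\in\LBbd$. I would then exhibit a dual certificate: set $\hbf_{\star\Mbi}=\sign(\bbf_{\star\Mbi})$ arbitrarily in $[-1,1]^{|\Mbi|}$, and invoke the SOCP local-to-global result (Lemma~\ref{lemma:local_sufficient_cond_socp}, specialized to the boundary block) to produce $\hbf_{\star\Mbo}$ together with multipliers $\{\hat{\nu}_\ell,\hat{\ubf}_\ell\}_{\ell\in\LBbd\cup\Latbi}$ satisfying the stationarity equation
\begin{equation*}
    \Abf_{\Mbo,\Xbd}^\top\hbf_{\star\Mbo}+\Abf_{\Mbi,\Xbd}^\top\hbf_{\star\Mbi}+\sum_{\ell}\bigl[\hat{\nu}_\ell\cbf_\ell+\Dbf_\ell^\top\hat{\ubf}_\ell\bigr]_{\Xbd}=\bfzero
\end{equation*}
with $\|\hbf_{\star\Mbo}\|_\infty\le 1-\gamma<1$. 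The equivalence in Lemma~\ref{lem:equi_vul_socp} is what lets the vulnerability index~$\aijsocp$ be parsed as the requisite complementary-slackness-compatible SOCP dual variables, so complementary slackness at $\xbf_{\natural\Ibd}$ comes for free.

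For uniqueness, I would replay the strong-duality chain from the proof of Theorem~\ref{thm:l1_est_cyber}: given any alternative feasible $(\tilde{\xbf}_\Ibd,\tilde{\bbf})$, split $\tilde{\Jcal}=\supp(\tilde{\bbf})$ into $\tilde{\Jcal}\cap\Mbi$ and $\tilde{\Jcal}_c=\tilde{\Jcal}\setminus\Mbi$. The full-column-rank assumption on $\Qbf_{\Mbd,\Xbd}$ precludes $\tilde{\Jcal}\subseteq\Mbi$ unless $(\tilde{\xbf}_\Ibd,\tilde{\bbf})=(\xbf_{\natural\Ibd},\bbf_{\star\Mbd})$, so $\tilde{\Jcal}_c$ is nonempty. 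Then, via strong duality, primal feasibility, the KKT stationarity just constructed, and H\"older's inequality, the strict bound $\|\hbf_{\star\tilde{\Jcal}_c}\|_\infty\le 1-\gamma<1$ yields $\|\bbf_{\star\Mbd}\|_1<\|\tilde{\bbf}\|_1$, contradicting optimality. Hence $(\xbf_{\natural\Ibd},\bbf_{\star\Mbd})$ is the unique minimizer, which is what~(i) requires.

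The main obstacle is precisely the dual-certificate construction in the SOCP setting: unlike the unconstrained $\ell_1$ case, I must simultaneously produce a feasible $\hbf_{\star\Mbo}$ and SOCP multipliers $\{\hat{\nu}_\ell,\hat{\ubf}_\ell\}$ that (a) lie in the second-order cones, (b) are compatible with complementary slackness at $\xbf_{\natural\Ibd}$, and (c) keep $\|\hbf_{\star\Mbo}\|_\infty$ strictly below $1$. This is exactly the content that Lemmas~\ref{lemma:local_sufficient_cond_socp} and~\ref{lem:equi_vul_socp} were designed to package; the nonbinding-constraint assumption (iii) ensures the decoupling with the attacked block so that the two oracle subproblems in Lemma~\ref{lem:bound_defend_socp} can be glued into a global certificate. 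Once these SOCP bookkeeping issues are handled, the rest of the argument is a direct transcription of the LP/QP case.
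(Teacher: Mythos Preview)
Your proposal is correct and follows essentially the same route as the paper's proof: reduce via Lemma~\ref{lem:bound_defend_socp} to the boundary subproblem, construct the SOCP dual certificate using Lemma~\ref{lemma:local_sufficient_cond_socp} (with $\hbf_{\star\Mbi}=\sign(\bbf_{\star\Mbi})$ and $\|\hbf_{\star\Mbo}\|_\infty\le 1-\gamma$), invoke the nonbinding assumption for feasibility, and conclude uniqueness by the strict-duality chain from Theorem~\ref{thm:l1_est_cyber}. If anything, your write-up is slightly more explicit than the paper's in flagging Lemma~\ref{lem:equi_vul_socp} as the source of complementary-slackness compatibility for the SOCP multipliers.
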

\begin{proof}
To prove the claim, we simply need to show that for an arbitrary $\bbf_{\star}$ with its support limited to the inner boundary $\supp(\bbf_{\star})\subseteq\Mbi$, the solution $\hat{\xbf}_{\Ibd}\in\Xbd$ to the program
\begin{equation}
    \min_{{\xbf}_{\Ibd}\in \Kbd,\bbf}\|\bbf\|_1,\quad\st\quad \Abf_{\Mbd,\Xbd}\xbf_{\Ibd}+\bbf=\zbf_{\Mbd}
    \label{equ:aux_small_program_socp}
\end{equation}
is unique and satisfies $\hat{\xbf}_{\Ibd}={\xbf}_{\natural \Ibd}$, where $\zbf_{\Mbd}=\Abf_{\Mbd,\Xbd}\xbf_{\natural \Ibd}+\bbf_{\star}$. To show this, we obtain the dual program:
\begin{subequations}
    \label{equ:dual_socp_l1_kbd}
	\begin{align}
    &\underset{\hbf_\Mbd,\{\nu_\ell,\bmu_\ell\}_{\ell\in\LBbd}}{\text{min~~~}}
	&& \hspace{-1.0cm}\hbf_\Mbd^\top\zbf_\Mbd\\
	& \;\;\quad\text{subject to~~~}
	&&\hspace{-1.0cm}\Abf_{\Mbd,\Xbd}^\top\hbf_\Mbd+\sum_{\ell\in\Latbi\cup\LBbd}(\nu_\ell\cbf_\ell+\Dbf_\ell^\top\bmu_\ell)=\bfzero\\
	&&&\hspace{-1.0cm}\|\hbf_\Mbd\|_\infty\leq 1\\
	&&&\hspace{-1.0cm}\nu_\ell\geq\|\bmu_\ell\|_2,\forall \ell\in\LBbd,
	\end{align}
\end{subequations}
Our goal is to find a dual certificate $\hbf_{\star\Mbd}$ and $\{\lambda_{\star\ell},\bmu_{\star\ell}\}_{\LBbd}$ that satisfies the KKT conditions:
\begin{align}
    &\text{(dual feasibility) } \;\quad\qquad\qquad\lambda_{\star\ell}\geq\|\bmu_{\star\ell}\|_2,\quad\forall\ell\in\LBbd\\
    &\text{(stationarity) } \;\;\;\quad\qquad\qquad\quad\Abf_{\Mbd,\Xbd}^\top\hbf_{\star\Mbd}+\sum_{\ell\in\Latbi\cup\LBbd}(\lambda_{\star\ell}\cbf_\ell+\Dbf_\ell^\top\bmu_{\star\ell})=\bfzero,\\
    &\qquad\qquad\qquad\qquad\qquad\qquad\;\hbf_{\star\Mbd}\in\partial\|\bbf_\star\|_1\\
    &\text{(complementary slackness) } \quad\lambda_{\star\ell}\cbf_\ell^\top\xbf_\star+\bmu_{\star\ell}^\top\Dbf_\ell\xbf_\star=\bfzero,\quad\forall\ell\in\LBbd.
\end{align}
where $\xbf_\star=\begin{bmatrix}\xbf_{\natural\Isf}^\top&\xbf_{\natural\Ibd}^\top&\hat{\xbf}_{\Iat}^\top\end{bmatrix}^\top$. By the limited support assumption, we need to find a vector $\hbf_\star$ such that $\hbf_{\star\Mbi}=\sign(\bbf_{\star\Mbi})$ and $\|\hbf_{\star\Mbo}\|_\infty\leq 1$. By the vulnerability index condition and Lemma \ref{lemma:local_sufficient_cond_socp}, we can always find $\hbf_{\star\Mbo}$ and $\{\lambda_{\star\ell},\bmu_{\star\ell}\}_{\LBbd}$ that satisfy the KKT conditions for a given $\hbf_{\star\Mbi}$, such that $\|\hbf_{\star\Mbo}\|\leq 1-\gamma<1$. Thus, this certifies the optimality of $({\xbf}_{\natural\Ibd},\bbf_\star)$ for \eqref{equ:dual_l1_small}. Clearly, under the nonbinding SOCP constraints assumption, $({\xbf}_{\natural\Ibd},\bbf_\star)$ is feasible. Following the uniqueness argument of Theorem \ref{thm:l1_est_cyber}, we conclude the proof.
\end{proof}

We can extend the analysis to the case where the measurements have both sparse bad data and dense noise. In this case, we need to solve a second-order cone program that combines quadratic loss with absolute value loss, in addition to the SOCP constraints. 
\begin{theorem}[Robust SE with \eqref{equ:primal_lasso_socp}]
Given the measurements  $\ybf=\Abf\xbf_\natural+\wbf_\natural+\bbf_\natural$, where $\wbf_\natural$ has independent entries with zero mean and subgaussian parameter $\sigma$ and $\supp(\bbf_\natural)\subseteq\Mat$, consider a  partitioning of the network as $\Bat, \Bbi, \Bbo,$ and $\Bsf$. Let $\Kbd$ and $\Kat$ be the subsets of SOCP constraints $\Kcal$ restricted to the variables $\xbf_\Ibd$ and $\xbf_\Iat$, respectively, and let
\begin{align*}
    \tKat(\hat{\xbf}_\Ibd)=&\Bigg\lbrace\xbf_\Iat\Big\vert\begin{bmatrix}x^{\text{mg}}_i&x^{\text{re}}_\ell+jx^{\text{im}}_\ell\\
x^{\text{re}}_\ell-jx^{\text{im}}_\ell&x^{\text{mg}}_j\end{bmatrix}\succeq 0,\\
&\qquad\qquad\qquad\qquad\qquad\forall\ell\coloneqq(i,j)\in\LBat\cup\Latbi,\text{ where }x^{\text{mg}}_i=\hat{x}^{\text{mg}}_i\quad\forall i\in\Bbi\Bigg\rbrace,
\end{align*}
be the confined feasible set for $\xbf_\Iat$, which fixes the boundary variables $\hat{\xbf}_\Ibd$ in the SOCP constraints. Suppose that the rows of $\Abf$ are normalized (c.f., Def. \ref{def:norm_measure}), and the regularization parameter $\lambda$ is chosen such that
\begin{equation}
    \lambda> \frac{2}{n_m\gamma} \sqrt{2\sigma^2\log n_m}.
\end{equation}
In addition, suppose the following conditions hold:
\begin{itemize}
    \item (Full column rank for the safe and boundary region) both $\Abf_{\Msf\cup\Mbd,\Xsf\cup\Xbd}$ and $$\Qbf_{\Mbd,\Xbd}=\begin{bmatrix}\Abf_{\Mbd,\Xbd}&\Ibf_{\Mbi}^{(|\Mbd|)\top}\end{bmatrix}$$ have full column rank.
    \item (Localized mutual incoherence) for all lines $\{i,j\}\in\Latbi$ that bridge the attacked region and the inner boundary, where $i\in\Bat$, $j\in\Bbi$, we have $\aijsocp\leq 1-\gamma$ for some $\gamma>0$.  
    \item (Nonbinding SOCP constraints in the boundary) the solution for the attacked states satisfies $\hat{\xbf}_\Iat\in\tKat({\xbf}_{\natural\Ibd})$.
\end{itemize}

Then, the following properties hold for the solution to \eqref{equ:primal_noise}, denoted as $(\hat{\xbf},\hat{\bbf})$:
\begin{enumerate}
    \item (No false inclusion) The solution $(\hat{\xbf},\hat{\bbf})$ has no false bad data inclusion (i.e., $\supp(\hat{\bbf})\subset \supp(\bbf_\natural)$) with probability greater than $1-\frac{c_0}{n_m}$, for some constant $c_0>0$.
    \item (Large bad data detection) Let ${\Abfo}\coloneqq\begin{bmatrix}\Abf_{\Msf,\Xsf}&\Abf_{\Msf,\Xbd}\\\bfzero&\Abf_{\Mbo,\Xbd}\\
        \bfzero&\Abf_{\Mbi,\Xbd}\end{bmatrix}$ and $\Qbfo_{\Mbi}=\begin{bmatrix}\Abfo&\Ibfot_{\Mbi}\end{bmatrix}$, and $$g(\lambda)=n_m\lambda\left(\frac{1}{2\sqrt{{C_{\text{min}}}}}+\|\Ibf_b(\Qbfot_{\Mbi}\Qbfo_{\Mbi})^{-1}\Qbfot_{\Mbi}\|_\infty\right)$$ be a threshold value, and let $\tbbf_{\Mbi}=\Abf_{\Mbi,\Xat}({\xbf}_{\natural\Iat}-\hat{\xbf}_{\Iat})$ be the error at the boundary. Then, all bad data with magnitude greater than $g(\lambda)$ will be detected (i.e., if $|\tilde{b}_{i}|>g(\lambda)$, then $|\hat{b}_{i}|>0$) with probability greater than $1-\frac{c_2}{m}$.
    \item (Bounded error) The estimator error is bounded by \begin{align*}
    \|\xbf_{\natural\Xsf\cup\Xbd}-\hat{\xbf}_{\Xsf\cup\Xbd}\|_2\leq t\frac{\sqrt{|\Xsf|+|\Xbd|+|\Mbi|}}{C_{\text{min}}}+n_m\lambda\|\Ibf_x(\Qbfot_{\Mbi}\Qbfo_{\Mbi})^{-1}\Qbfot_{\Mbi}\|_{\infty,2}
\end{align*}
with probability greater than $1-\exp\left(-\frac{c_1t^2}{\sigma^4}\right)$.
\end{enumerate}
\label{thm:lasso_est_cyber_socp}
\end{theorem}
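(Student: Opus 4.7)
The plan is to adapt the primal–dual witness (PDW) argument used for Theorem \ref{thm:lasso_est_cyber} to the SOCP-constrained program \eqref{equ:primal_lasso_socp}, leveraging the boundary-defense machinery established by Lemmas \ref{lem:bound_defend_socp} and \ref{lemma:local_sufficient_cond_socp}. The overarching idea is to construct a candidate primal solution $(\hat{\xbf},\hat{\bbf})$ that respects the partition (safe/boundary/attacked), then produce a matching set of dual multipliers $(\hbf_\star,\{\nu_{\star\ell},\ubf_{\star\ell}\}_{\ell\in\Lcal})$ that simultaneously satisfies the KKT conditions derived in the subsection on \eqref{equ:primal_lasso_socp}. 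The presence of SOC constraints only enters through extra multipliers $(\nu_\ell,\ubf_\ell)$, and the nonbinding-SOCP assumption $\hat{\xbf}_\Iat\in\tKat(\xbf_{\natural\Ibd})$ is precisely what lets us transplant the boundary dual built for the free program into the constrained one.

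First I would fix $\hat{\bbf}_{\Msf\cup\Mbo}=\bfzero$ and, on the attacked region, let $\hat{\xbf}_\Iat$ solve the restricted program so that the attacked residuals $\hat{\bbf}_\Mat$ can absorb the gross corruption. Second, I would form the ``oracle'' estimator on the unaffected region by minimizing \eqref{equ:primal_lasso_socp} under the extra constraint $\hat{\bbf}_{\Msf\cup\Mbo}=\bfzero$; the lower-eigenvalue condition on $\Qbf_{\Mbd,\Xbd}$ and $\Abf_{\Msf,\Xsf}$ yields a well-posed quadratic program whose KKT stationarity gives the closed-form
\begin{equation*}
\begin{bmatrix}\hat{\xbf}_{\Xsf\cup\Xbd}-\xbf_{\natural\,\Xsf\cup\Xbd}\\ \hat{\bbf}_{\Mbi}-\bbf_{\natural\Mbi}\end{bmatrix}
=\bigl(\Qbfot_{\Mbi}\Qbfo_{\Mbi}\bigr)^{-1}\Qbfot_{\Mbi}\wbf+\text{(SOCP terms)}-n_m\lambda(\cdots).
\end{equation*}
Third, I would invoke Lemma \ref{lemma:local_sufficient_cond_socp}: because $\aijsocp\leq 1-\gamma$ along every boundary line, any $\hat{\hbf}_\Mbi\in[-1,1]^{|\Mbi|}$ can be extended to a vector $\hat{\hbf}_{\Msf\cup\Mbo}$ with $\|\hat{\hbf}_{\Msf\cup\Mbo}\|_\infty\leq 1-\gamma$ together with SOCP multipliers $\{\nu_\ell,\ubf_\ell\}_{\ell\in\Latbi\cup\LBbd\cup\LBsf}$ satisfying the required linear identity \eqref{equ:all_exist_h_socp}. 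On the attacked region, optimality of $\hat{\xbf}_\Iat$ for \eqref{equ:oracle_B_at_socp} supplies dual variables $\hbf_{\star\Mat}\in\partial\|\hat{\bbf}_{\Mat}\|_1$ and $\{\nu_\ell,\ubf_\ell\}_{\ell\in\LBat}$; gluing these three pieces (exactly as in Lemma \ref{lem:bound_defend_socp}) produces a global dual certificate.

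The main obstacle is establishing strict dual feasibility in the presence of dense noise. Concretely, one must show $\|\hat{\hbf}_{\Msf\cup\Mbo}\|_\infty<1$ with high probability so that $\hat{\bbf}_{\Msf\cup\Mbo}=\bfzero$ is the unique optimum (no false inclusion). This is where the subgaussian tail bound on $\wbf_\natural$ enters: a union bound over the $n_m$ coordinates, combined with the normalization in Def.~\ref{def:norm_measure} and the $(1-\gamma)$ slack afforded by Lemma \ref{lemma:local_sufficient_cond_socp}, yields a noise-perturbed upper bound of the form $(1-\gamma)+\frac{2}{n_m\lambda}\sqrt{2\sigma^2\log n_m}<1$, which is exactly why the stated lower bound on $\lambda$ is required. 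The remaining parts follow routinely: uniqueness of the oracle solution (via the lower-eigenvalue condition) gives Part 1; inverting the stationarity identity on $\Mbi$ and thresholding yields the detection bound $g(\lambda)$ in Part~2; and applying $\|\cdot\|_{\infty,2}$ norm bounds to the same identity, together with a subgaussian quadratic-form concentration on $\Qbfot_{\Mbi}\wbf$, gives the $\ell_2$ estimation bound in Part~3.

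Finally, the SOCP multipliers inherited from the oracle step must be checked for consistency with complementary slackness at $\hat{\xbf}$; this is where the nonbinding assumption $\hat{\xbf}_\Iat\in\tKat(\xbf_{\natural\Ibd})$ is used, ensuring that the cones touching the boundary do so at interior points of the relevant faces so that the boundary-region multipliers constructed via Lemma \ref{lemma:local_sufficient_cond_socp} remain compatible with the attacked-region multipliers under the glued certificate. With this consistency in hand, the witness argument closes and the three claims follow by the same chain of inequalities as in the proof of Theorem \ref{thm:lasso_est_cyber}, with $\aij$ everywhere replaced by the tighter $\aijsocp$ guaranteed by Proposition \ref{prop:soc_better}.
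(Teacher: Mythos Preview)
Your plan is essentially the same as the paper's: a PDW construction with $\hat{\bbf}_{\Msf\cup\Mbo}=\bfzero$, an oracle subproblem on the boundary/safe region, Lemma~\ref{lemma:local_sufficient_cond_socp} to supply the SOCP multipliers and the bound $\|\hat{\hbf}_{\Msf\cup\Mbo}\|_\infty\le 1-\gamma$, and the nonbinding assumption $\hat{\xbf}_\Iat\in\tKat(\xbf_{\natural\Ibd})$ to glue the attacked-region certificate to the boundary one. Parts~2 and~3 are handled exactly as you describe, via subgaussian concentration on $(\Qbfot_{\Mbi}\Qbfo_{\Mbi})^{-1}\Qbfot_{\Mbi}\wbf^\circ$.

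The one place where your write-up diverges from the paper is Part~1. You frame strict feasibility as a probabilistic statement of the form $(1-\gamma)+\frac{2}{n_m\lambda}\sqrt{2\sigma^2\log n_m}<1$, and you tie the lower bound on $\lambda$ to this step. The paper does something slicker: it combines the stationarity condition in $\bbf$ (which defines $\hat{\hbf}=\frac{1}{n_m\lambda}(\ybf-\Abf\hat{\xbf}-\hat{\bbf})$) with the stationarity condition in $\xbf$ for the oracle subproblem, and after cancellation obtains the purely algebraic relation
\[
\lambda\Abfot_{\Msf\cup\Mbo}\hat{\hbf}_{\Msf\cup\Mbo}+\lambda\Abfot_{\Mbi}\hat{\hbf}_{\Mbi}+\sum_{\ell}\hat{\nu}_\ell\cbf_\ell^\circ+\Dbf_\ell^{\circ\top}\hat{\ubf}_\ell=\bfzero,
\]
with \emph{no explicit noise term} left over (the noise has been absorbed into the primal optimum). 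This is exactly the hypothesis of Lemma~\ref{lemma:local_sufficient_cond_socp}, so the paper declares strict feasibility ``satisfied deterministically.'' In the paper, the $\lambda$ lower bound is used only in Parts~2--3 (to make the tail bounds on $\|\Ibf_b\bDelta\|_\infty$ and $\|\Ibf_x\bDelta\|_2$ nontrivial), not in Part~1. Your version would still go through, but you are doing extra work by reintroducing a noise term that the stationarity identity has already eliminated; aligning with the paper's derivation makes Part~1 cleaner and clarifies where the $\lambda$ condition is actually consumed.
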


\subsection{Scalable methods to calculate the vulnerability index}

The minimax program \eqref{equ:lin_vul_for} consists of a linear programming in the inner minimization and a discrete optimization in the outer maximization. For small-scale systems, the number of feasible points in the outer maximization is not too large. This is the case when we consider the vulnerability on a line-by-line basis. But for large-scale problems when we consider a group of attacked lines, it is essential to develop more scalable numerical algorithms. We first show the following result.

\begin{lemma}
\label{lem:equiv_vul}
The line vulnerability index $\aij$ coincides with the optimal value of the following minimax program:
\begin{subequations}
    \label{equ:lin_vul_for2}
	\begin{align}
    \atij=&\underset{\tilde{\bfxi}\in [-1,+1]^{\nijb}
	}{\text{max~~~}}\quad\underset{\tilde{\alpha}\in\Rbb,\tilde{\hbf}\in\Rbb^{\nijg}}{\text{min~~~}}
	&&\hspace{-3.5cm} \tilde{\alpha}\\
	& \qquad\text{subject to~~~}
	&&\hspace{-4.5cm}\Abf_{\Mbdijg,\Xbdij}^\top\tilde{\hbf}+\Abf_{\Mbdijb,\Xbdij}^\top\tilde{\bfxi}=\bfzero\\
	&&&\hspace{-4.5cm} \|\tilde{\hbf}\|_\infty\leq\tilde{\alpha},
	\end{align}
\end{subequations}
with the same notations as in Def. \ref{def:line_vul_metric}. Note that the difference in \eqref{equ:lin_vul_for2} is that the minimizer is over the hypercube $[-1,+1]^{\nijb}$ rather than the simplex $\{-1,+1\}^{\nijb}$.
\end{lemma}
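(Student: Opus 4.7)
The plan is to identify both $\aij$ and $\atij$ with the maximum of a common convex function over two nested sets (the vertex set of the hypercube and the hypercube itself), and then invoke the elementary fact that a convex function attains its maximum over a polytope at one of its extreme points.

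First, I would recast the inner minimization in a $\bfxi$-only form. For any fixed $\bfxi$, the inner program
\[
\min\{\alpha : \Abf_{\Mbdijg,\Xbdij}^\top\hbf+\Abf_{\Mbdijb,\Xbdij}^\top\bfxi=\bfzero,\ \|\hbf\|_\infty\leq\alpha\}
\]
has optimal value exactly
\[
f(\bfxi)\;=\;\min_{\hbf}\bigl\{\,\|\hbf\|_\infty \;:\; \Abf_{\Mbdijg,\Xbdij}^\top\hbf=-\Abf_{\Mbdijb,\Xbdij}^\top\bfxi\,\bigr\},
\]
with the usual convention $f(\bfxi)=+\infty$ when the affine system is infeasible. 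Thus $\aij=\max_{\bfxi\in\{-1,+1\}^{\nijb}} f(\bfxi)$ and $\atij=\max_{\bfxi\in[-1,+1]^{\nijb}} f(\bfxi)$, and the inequality $\aij\leq\atij$ is immediate since $\{-1,+1\}^{\nijb}\subseteq[-1,+1]^{\nijb}$.

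Next I would establish that $f$ is convex on $\Rbb^{\nijb}$. The map $\bfxi\mapsto -\Abf_{\Mbdijb,\Xbdij}^\top\bfxi$ is linear, and for any linear map $L$, the function $\bfzeta\mapsto\min\{\|\hbf\|_\infty : \Abf_{\Mbdijg,\Xbdij}^\top\hbf=\bfzeta\}$ is a gauge (sublinear) function of $\bfzeta$ on its effective domain, hence convex on $\Rbb^{\nijb}$ (extended-valued, infinite outside the range of $\Abf_{\Mbdijg,\Xbdij}^\top$). Composing with the linear map in $\bfxi$ preserves convexity, so $f$ is convex.

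Finally I would invoke the standard fact that a convex function attains its maximum over a compact polytope at one of its extreme points. Since $\{-1,+1\}^{\nijb}$ is exactly the set of extreme points of $[-1,+1]^{\nijb}$, this yields
\[
\atij\;=\;\max_{\bfxi\in[-1,+1]^{\nijb}} f(\bfxi)\;=\;\max_{\bfxi\in\{-1,+1\}^{\nijb}} f(\bfxi)\;=\;\aij,
\]
which is the desired equality. I do not anticipate a serious obstacle; the only minor care needed is the possible infeasibility of the inner system, which is handled cleanly by the extended-real convention, and noting that the inner problem's optimum over $\alpha$ is indeed $\|\hbf^\star\|_\infty$ of the min-norm solution, so that the bookkeeping between the $\alpha$-variable formulation and the $f(\bfxi)$ formulation is exact.
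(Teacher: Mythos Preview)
Your proposal is correct and follows essentially the same approach as the paper: both rest on the observation that the inner minimum, viewed as a function of $\bfxi$, is convex, so its maximum over the hypercube is attained at a vertex. The only stylistic difference is that the paper carries this out constructively---writing $\tilde{\bfxi}=\sum_k\beta_k\bfxi_k$ as a convex combination of vertices and setting $\tilde{\hbf}=\sum_k\beta_k\hbf_k$ to exhibit a feasible $\tilde{\hbf}$ with $\|\tilde{\hbf}\|_\infty\leq\aij$---whereas you invoke the abstract extreme-point principle directly.
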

\begin{proof}
Since the feasible region of the outside maximizer in \eqref{equ:lin_vul_for2} is a superset of that in \eqref{equ:lin_vul_for}, we always have $\atij\geq\aij$. To show the other direction, we simply need to show that for any $\tilde{\bfxi}\in [-1,+1]^{\nijb}$, we can always find a feasible solution for the minimizer $\tilde{\hbf}$ such that $\|\tilde{\hbf}\|_\infty\leq\aij$. Since $\tilde{\bfxi}$ belongs to a hypercube, which is convex, there always exists a set of non-negative coefficients $\beta_k$ such that $\beta_k\geq 0$, $\sum_k\beta_k=1$ and $\tilde{\bfxi}=\sum_k\beta_k\bfxi_k$, where $\bfxi_k\in\{-1,+1\}^{\nijb}$. Since for each $\bfxi_k$, there exists $\hbf_k$ such that it is feasible in \eqref{equ:lin_vul_for} and $\|\hbf_k\|_\infty\leq\aij$, by choosing $\tilde{\hbf}=\sum_k\beta_k\hbf_k$, we have:
\begin{equation*}
    \|\tilde{\hbf}\|_\infty\leq\sum_k\beta_k\|\hbf_k\|_\infty\leq\sum_k\beta_k\aij=\aij,
\end{equation*}
which completes the proof.
\end{proof}

We can thereby reformulate the problem as a linear complimentarity problem as follows. The KKT conditions for the inner minimization of \eqref{equ:lin_vul_for2} are:
\begin{itemize}
    \item (Primal feasibility) $\Abf_{\Mbdijg,\Xbdij}^\top{\hbf}+\Abf_{\Mbdijb,\Xbdij}^\top{\bfxi}=\bfzero$, $\qbf_+=\alpha\onebf-\hbf$, $\qbf_-=\alpha\onebf+\hbf$, $\qbf_+\geq 0,\qbf_-\geq 0$;
    \item (Dual feasibility) $\bmu_+\geq 0,\bmu_-\geq 0$;
    \item (Stationarity) The Lagrangian function $$\Lcal(\alpha,\hbf,\bmu_+,\bmu_-,\blambda)=\alpha+\blambda^\top(\Abf_{\Mbdijg,\Xbdij}^\top{\hbf}+\Abf_{\Mbdijb,\Xbdij}^\top{\bfxi})+\bmu_+^\top(\hbf-\alpha\onebf)+\bmu_-^\top(-\hbf-\alpha\onebf)$$ and stationarity conditions:
    \begin{align*}
        \frac{\partial\Lcal}{\partial\alpha} &= 1-\bmu_+^\top\onebf-\bmu_-^\top\onebf=0\\
        \frac{\partial\Lcal}{\partial\hbf} &= \Abf_{\Mbdijg,\Xbdij}\blambda+\bmu_+-\bmu_-=\bfzero
    \end{align*}
    \item (complementary slackness) $\bmu_+\circ\qbf_+=\bfzero$, $\bmu_-\circ\qbf_-=\bfzero$
    \end{itemize}

Thus, we can write \eqref{equ:lin_vul_for2} as a linear complementarity problem:
\begin{subequations}
    \label{equ:lin_vul_for3}
	\begin{align}
    \atij=&\underset{{\bfxi}\in\Rbb^{\nijb},{\alpha}\in\Rbb,{\hbf}\in\Rbb^{\nijg}
	}{\text{max~~~}}
	&&\hspace{-3.5cm} {\alpha}\\
	& \text{subject to~~~}
	&& \hspace{-4.5cm}-\onebf\leq{\bfxi}\leq\onebf\\
	&&&\hspace{-4.5cm}\Abf_{\Mbdijg,\Xbdij}^\top{\hbf}+\Abf_{\Mbdijb,\Xbdij}^\top{\bfxi}=\bfzero\\
	&&&\hspace{-4.5cm}\qbf_+=\alpha\onebf-\hbf\\
	&&&\hspace{-4.5cm}\qbf_-=\alpha\onebf+\hbf\\
	&&&\hspace{-4.5cm} 1-\bmu_+^\top\onebf-\bmu_-^\top\onebf=0\\
	&&&\hspace{-4.5cm} \Abf_{\Mbdijg,\Xbdij}\blambda+\bmu_+-\bmu_-=\bfzero\\
	&&&\hspace{-4.5cm} \bmu_+\circ\qbf_+=\bmu_-\circ\qbf_-=\bfzero\\,
	&&&\hspace{-4.5cm} \qbf_+,\qbf_-,\bmu_+ ,\bmu_-\geq 0
	\end{align}
\end{subequations}

This problem can be solved readily using off-the-shelf solvers such as PATH Solver \cite{ferris2000complementarity} or YALMIP \cite{lofberg2004yalmip}. We can also use the big-M method to replace the complimentarity condition using a mixed-integer formulation, and solve the problem using standard packages such as Gurobi. In our experiments, we only focus on each line, so the improvement of computation is not significant. The advantage becomes more obvious when we scale the computation to multiple lines.

\subsection{Extension to tree decomposition}

So far, we have been focusing on evaluating line vulnerabilities. In this section, we introduce a powerful extension of the vulnerability index to tree decomposition of a graph. This allows us to study the effect of sparsity on network robustness. We  use $\Ncal(\Gcal)$ to represent the vertices of graph $\Gcal$, and $\Lcal(i;\Gcal)=\{j\in\Ncal\mid \{i,j\}\in\Lcal\}$ to represent the set of nodes in $\Gcal$ that are connected to node $i$. First, we introduce the standard definition of tree decomposition and treewidth.

\begin{definition}[Tree decomposition and treewidth]
A tree decomposition of a graph $\Gcal\coloneqq\{\Ncal,\Lcal\}$ is $(\Tcal,\Wcal)$, where $\Tcal$ is a tree and $\Wcal\coloneqq\{W_t\mid t\in\Ncal(\Tcal)\}$ is the set of ``bags'' $W_t$ which satisfies the following properties
\begin{enumerate}
    \item (Node coverage) $\cup_{t\in\Ncal(\Tcal)}\Wcal_t=\Ncal(\Gcal)$, i.e., the union of the vertices of $\Tcal$, referred to as ``bags,'' is the set of nodes of $\Gcal$;
    \item (Edge coverage) For any $(i,j)\in\Lcal$, there exists $t\in\Ncal(\Tcal)$ such that $i,j\in\Wcal_t$, i.e., each edge of $\Gcal$ is in at least one of the ``bags'' of $\Tcal$;
    \item (Running intersection property) The subtree of $\Tcal$ consisting of all ``bags'' containing $u\in\Ncal$ is connected.
\end{enumerate} 
Furthermore, the width of a tree decomposition is $\max(|\Wcal_t|-1:t\in\Ncal(\Tcal))$. The treewidth of $\Gcal$ is the minimum width of a tree decomposition of $\Gcal$.
\end{definition}

Clearly, a graph may have several different tree decompositions. The analysis below does not require any particular tree decompositions. However, the easiest tree decomposition is to lump all vertices into one bag, which does not reveal any robustness properties of the graph. In general, the smaller the width of the decomposition, the easier it is to certify robustness. 

\begin{definition}[Infected bags, link bags, safe bags]
For a given set of attacked nodes $\NBat$ and a tree decomposition $(\Tcal,\Wcal)$, any bag that contains attacked nodes is referred to as an infected bag $\Wif_t\in\Wif=\{\Wcal_t\mid \Wcal_t\cap\NBat\neq\emptyset\}$. Furthermore, the set of lines induced by the union of infected bags is denoted as $\Lif$. The bags that are immediately connected to an infected bag are called link bags $\Wlk_t\in\Wlk=\{\Wcal_t\mid \Wcal_t\cap\NBat=\emptyset,\exists\;\Wif_{t'},\Wcal_t\in\Lcal(\Wif_{t'};\Tcal)\}$, and the set of lines induced by the union of link bags is shown as $\Llk$. The rest of the bags are safe bags $\Wsf_t$, and the set of lines induced by the union of safe bags is represented by $\Lsf$. Nodes shared between a link bag $\Wlk_t$ and an infected bag $\Wif_t$ are called adhesion nodes $\Nad(\Wlk_t,\Wif_t)=\Wlk_t\cap\Wif_t\subseteq\Nad$, and the rest of the nodes in $\Wlk_t$ are outer link nodes $\Nol(\Wlk_t,\Wif_t)=\Wlk_t\setminus\Wif_t\subseteq\Nol$. We denote the edges that connect adhesion nodes in $\Wlk_t$ with infected nodes by $\Lad(\Wlk_t)\subseteq\Lad$. 
\end{definition}

\begin{definition}[Attacked, boundary and safe variables and measurements for tree decomposition]
The set of ``infected variables'' $\Xif$ includes all variables on lines induced by $\Wif$ and on nodes in $\Wif$ except for adhesion nodes $\Nad$. The set of ``link variables'' $\Xlk$ includes variables on nodes in $\Wlk$ and the induced lines. The set of ``safe variables'' $\Xsf$ includes all other variables. The set of ``infected measurements'' $\Mif$ includes measurements on lines induced by nodes in $\Wif$ and on nodes in $\Wif$ except for voltage magnitude measurements on $\Nad$. The set of ``adhesion measurements'' $\Mad$ includes nodal power injections on nodes in $\Nad$ and line measurements on $\Lad$, and the set of ``outer link measurements'' $\Mol$ includes voltage magnitude on nodes in $\Wlk$ and line measurements induced by nodes in $\Wlk$. Together, they form the ``boundary measurements'' $\Mbd\coloneqq\Mad\cup\Mol$. The rest of the measurements $\Msf$ are ``safe measurements.''
\label{def:bound_var_mea_all_tree}
\end{definition}


Next, we introduce some useful properties associated with the above definitions. If $\Tcal'$ is a subtree of $\Tcal$, we use $\Gcal_{\Tcal'}$ to denote the subgraph of $\Gcal$ induced by the nodes in all the bags associated with $\Tcal'$, namely $\cup_{t\in\Tcal'}\Wcal_t$.

\begin{lemma}
The following properties are satisfied:
\begin{enumerate}
    \item[(i)] There are no shared nodes between the safe bags and the infected bags.
    \item[(ii)] There are no shared nodes between the set of outer link nodes and the infected bags.
    \item[(iii)] Suppose that the infected bags form a subtree of $\Tcal$. Then, there are no shared outer link nodes between any link bags.
    \item[(iv)] Suppose that the infected bags form a subtree of $\Tcal$. Consider any link bag $\Wlk_t$ that is adjacent to only one infected bag $\Wif_{t'}$ connected by an edge $\Lcal(\Wlk_t,\Wif_{t'})$. If we delete the edge, the tree falls apart into two connected components, $\Tcal_1$ and $\Tcal_2$. Deleting the adhesion nodes $\Wlk_t\cap\Wif_{t'}$ from $\Ncal$ disconnects $\Gcal$ into the two subgraphs $\Gcal_{\Tcal_1} -(\Wlk_t\cap\Wif_{t'})$ and $\Gcal_{\Tcal_2} -(\Wlk_t\cap\Wif_{t'})$. Furthermore, all the infected nodes are contained in only one of the subgraph, and there is no edge across the two subgraphs.
\end{enumerate}
\label{lem:tree_property}
\end{lemma}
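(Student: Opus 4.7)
All four parts are structural consequences of the tree decomposition axioms, and the single tool I plan to use throughout is the running intersection property: if a node $u$ lies in two bags $W_s$ and $W_t$, then every bag on the unique tree-path from $W_s$ to $W_t$ must also contain $u$. I would prove each part by contradiction, picking a putative shared node and tracing what running intersection forces along the corresponding tree-path, then clashing this with the defining conditions of infected/link/safe bags (and, where relevant, with the assumption that the infected bags form a subtree).

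For part (i), suppose a node $u$ is shared between a safe bag $\Wsf_t$ and an infected bag $\Wif_{t'}$. Running intersection puts $u$ into every bag of the $\Tcal$-path between them. Moving along that path from the infected end, each successive bag is either infected or adjacent to an infected bag; so the first bag along the path that is not infected is, by definition, a link bag, and so on. Since safe bags are by definition neither infected nor adjacent to an infected bag, the endpoint $\Wsf_t$ cannot be reached without creating an adjacency that contradicts its safe status. Part (ii) is handled similarly and is where the subtree structure of the infected bags enters most visibly: if $u \in \Wlk_t \setminus \Wif_{t'}$ were simultaneously in some other infected bag $\Wif_{t''}$, the $\Tcal$-path from $\Wlk_t$ to $\Wif_{t''}$ must enter the infected subtree, and the only entry point from $\Wlk_t$ is through $\Wif_{t'}$, forcing $u \in \Wif_{t'}$, a contradiction.

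For part (iii), I would first record the small but key fact that, under the infected-subtree assumption, each link bag is adjacent to exactly one infected bag (a second adjacency would, together with the infected subtree, create a cycle in the tree $\Tcal$). Then, if two link bags $\Wlk_t, \Wlk_{t''}$ shared an outer-link node $u$, running intersection puts $u$ on the $\Tcal$-path between them, which necessarily traverses the infected subtree; this places $u$ in some infected bag and contradicts its being outer-link.

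The bulk of the work is part (iv), and I would split it into four short steps. (a)~Deleting the tree edge $\Lcal(\Wlk_t,\Wif_{t'})$ splits $\Tcal$ into two components $\Tcal_1, \Tcal_2$ by the tree property. (b)~Any node of $\Gcal$ appearing in bags on both sides must, by running intersection, lie in both $\Wlk_t$ and $\Wif_{t'}$, hence be an adhesion node; after deleting $\Wlk_t\cap\Wif_{t'}$, the remaining vertex sets of $\Gcal_{\Tcal_1}$ and $\Gcal_{\Tcal_2}$ are disjoint and exhaust $\Ncal\setminus(\Wlk_t\cap\Wif_{t'})$. (c)~Every edge of $\Gcal$ sits inside some bag, which lies entirely in one of $\Tcal_1$ or $\Tcal_2$, so no edge can straddle the cut after adhesion removal. (d)~Because the infected bags form a subtree and $\Wlk_t$ is not infected, the whole infected subtree is contained in the component of the cut that holds $\Wif_{t'}$ (say $\Tcal_1$); furthermore, no attacked node can belong to $\Wlk_t$, for otherwise $\Wlk_t$ would be infected rather than a link bag, so the adhesion-removal step loses no attacked node and all infected nodes remain in $\Gcal_{\Tcal_1}-(\Wlk_t\cap\Wif_{t'})$. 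The main obstacle I anticipate is the bookkeeping in (iv): each statement is individually plausible, but one has to verify that the separation claim, the edge-crossing claim, and the one-sided containment of attacked nodes all survive the simultaneous application of the three tree-decomposition axioms and the subtree hypothesis.
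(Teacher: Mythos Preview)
Your proposal is correct and follows the same approach as the paper: running intersection plus contradiction throughout, with the standard tree-decomposition separator argument for (iv). The only notable deviation is that you invoke the infected-subtree hypothesis already in part (ii), which the lemma as stated reserves for (iii)--(iv); the paper's one-line proof of (ii) implicitly leans on comparable structure (it speaks of ``the infected bag connected to $\Wlk_t$'' as if it were unique and on the relevant path), so your version is arguably just more explicit about what is actually being used.
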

\begin{proof}
(i): For any safe bag $\Wsf_t$ and affected bag $\Wif_t$, if there exists a node $i$ that is shared between them, it contradicts the definition of a safe bag.

(ii): For any link bag $\Wlk_t$ and affected bag $\Wif_t$, if there exists a node $i$ that is shared between $\Wif_t$ and the outer link nodes in $\Wlk_t$, then by the running intersection property, it must also appear in the infected bag connected to $\Wlk_t$. This is contradictory, because it makes $i$ an adhesion node.

(iii): For any two link bags $\Wlk_t$ and $\Wlk_{t'}$, suppose that they share an outer link node $i$. By the running intersection property, there must exist a path of bags between $\Wlk_t$ and $\Wlk_{t'}$. Since this path cannot go through the infected bags, it must be outside the infected region. Since the infected bags form a subtree, this would create a loop within $\Tcal$, which is impossible. Therefore, there cannot be any shared outer link nodes between any two link bags.

(iv) Assume that there is a node $i$ that belongs to both $\Gcal_{\Tcal_1}-(\Wlk_t\cap\Wif_{t'})$ and $\Gcal_{\Tcal_2}-(\Wlk_t\cap\Wif_{t'})$. Therefore, by the node coverage property, there must exist $\Wcal_x$ with $x\in\Tcal_1$ and $\Wcal_y$ with $y\in\Tcal_2$ such that $i\in\Wcal_x$ and $i\in\Wcal_y$. Since $\Wlk_t$ and $\Wif_{t'}$ lie on a $x-y$ path in $\Tcal$, by the running intersection property, $i\in\Wlk_t\cap\Wif_{t'}$. Hence, $i$ belongs to neither $\Gcal_{\Tcal_1}-(\Wlk_t\cap\Wif_{t'})$ nor $\Gcal_{\Tcal_2}-(\Wlk_t\cap\Wif_{t'})$.

Now, assume that there is an edge $(i,j)$ in $\Gcal$ such that $i\in \Gcal_{\Tcal_1}-(\Wlk_t\cap\Wif_{t'})$ and $j\in\Gcal_{\Tcal_2}-(\Wlk_t\cap\Wif_{t'})$. Then, by the edge coverage property, there must be a bag $\Wcal_x$ containing both $i$ and $j$. However, $x$ cannot be in both $\Tcal_1$ and $\Tcal_2$, otherwise, $i$ and $j$ will belong to $\Wlk_t\cap\Wif_{t'}$. Assume that $x\not\in\Tcal_2$. Since $j$ is in $\Gcal_{\Tcal_2}-(\Wlk_t\cap\Wif_{t'})$, it must be in a bag $y\in\Tcal_2$ different than $x$. Since $j$ belongs to both $\Wcal_x$ and $\Wcal_y$, it lies on a $x-y$ path in $\Tcal$. By the running intersection property, we have $j\in \Wlk_t\cap\Wif_{t'}$, which is a contradiction.
\end{proof}

If the infected bags form a subtree and we can find a link bag that is adjacent to only one infected bag, then by property~(iv) in Lemma \ref{lem:tree_property}, if we remove the adhesion nodes, we can separate the infected region with the rest of the safe region.

Now, we can define a generalized version of vulnerability index using tree decomposition.
\begin{definition}[Bag vulnerability index]
\label{def:bag_vul_metric}
For each adhesion link $\Lcal(\Wlk_t,\Wif_t)$, define the measurement and variable partitions according to Def. \ref{def:bound_var_mea_all_tree}. The bag vulnerability index $\alpha_{\Wif_t\rightarrow\Wlk_t}$ is given by the optimal value of the following minimax program:
\begin{subequations}
    \label{equ:lin_vul_for_bag}
	\begin{align}
    \alpha_{\Wif_t\rightarrow\Wlk_t}=& \underset{\bfxi\in\{-1,+1\}^{|\Mad|}
	}{\text{max~~~}}\quad\underset{\alpha\in\Rbb,\hbf\in\Rbb^{|\Mol|}}{\text{min~~~}}
	&&\hspace{-3.5cm} \alpha\\
	& \qquad\text{subject to~~~}
	&&\hspace{-4.5cm}\Abf_{\Mol,\Xlk}^\top\hbf+\Abf_{\Mad,\Xlk}^\top\bfxi=\bfzero\\
	&&&\hspace{-4.5cm} \|\hbf\|_\infty\leq\alpha,
	\end{align}
\end{subequations}
where $\Mol,\Mad$ and $\Xlk$ are the boundary measurement and variable indices introduced in Def. \ref{def:bound_var_mea_all_tree}.
\end{definition}

Similarly, we can extend the definition to incorporate SOCs.

\begin{definition}[Bag vulnerability index for SOCP]
\label{def:bag_vul_metric_socp}
For each adhesion link $\Lcal(\Wlk_t,\Wif_t)$, define the measurement and variable partitions according to Def. \ref{def:bound_var_mea_all_tree}. The bag vulnerability index $\alpha^{\mathrm{SOCP}}_{\Wif_t\rightarrow\Wlk_t}$ for a given $\xbf\in\Kcal$ that satisfies primal feasibility is given by the optimal value of the following minimax program:
\begin{subequations}
    \label{equ:lin_vul_for_bag_socp}
	\begin{align}
    \alpha^{\mathrm{SOCP}}_{\Wif_t\rightarrow\Wlk_t}=& \underset{\bfxi\in\{-1,+1\}^{|\Mad|}
	}{\text{max~~~}}\quad\underset{\alpha\in\Rbb,\hbf\in\Rbb^{|\Mol|}}{\text{min~~~}}
	&&\hspace{-3.5cm} \alpha\\
	& \qquad\text{subject to~~~}
	&&\hspace{-4.5cm}\Abf_{\Mol,\Xlk}^\top\hbf+\Abf_{\Mad,\Xlk}^\top\bfxi+\sum_{\ell\in{\Lcal(\Wlk_t)}}\omega_\ell\Tbf_\ell\xbf=\bfzero\\
	&&&\hspace{-4.5cm} \omega_\ell\geq 0,\qquad\qquad\forall \ell\in{\Lcal(\Wlk_t)}\\
	&&&\hspace{-4.5cm} \|\hbf\|_\infty\leq\alpha,
	\end{align}
\end{subequations}
where $\Mol,\Mad$ and $\Xlk$ are the boundary measurement and variable indices introduced in Def. \ref{def:bound_var_mea_all_tree}, $\Lcal(\Wlk_t)$ is the set of lines induced by nodes in $\Wlk_t$. Also, we define $\Tbf_\ell=\cbf_\ell\cbf_\ell^\top-\Dbf_\ell^\top\Dbf_\ell$, where $\cbf_\ell$ and $\Dbf_\ell$ are defined in \eqref{equ:c_l_D_l}. 
\end{definition}

With the above definition of  bag vulnerability index, we can show the following key results for SE robustness.

\begin{lemma}[Local property implies global property in tree decomposition]
\label{lemma:local_sufficient_cond_tree}
Consider a tree decomposition $\Tcal$ and the associated set partitioning (c.f., Def. \ref{def:bound_var_mea_all_tree}). Suppose that the infected bags form a subtree of $\Tcal$, and that there exists a link bag $\Wlk_t$ that is adjacent to only one infected bag $\Wif_t$. For simplicity of presentation, we also treat the rest of the bags in the subtree as infected. Let $\Abfo=\begin{bmatrix}\Abf_{\Msf,\Xsf}&\Abf_{\Msf,\Xlk}\\\bfzero&\Abf_{\Mol,\Xlk}\\\bfzero&\Abf_{\Mad,\Xlk}\end{bmatrix}$ be a submatrix of the sensing matrix. If $\alpha_{\Wif_t\rightarrow\Wlk_t}\leq 1-\gamma$ for some $\gamma>0$, then for any $\hat{\hbf}_{\Mad}\in[-1,1]^{|\Mad|}$, there exists an $\hat{\hbf}_{\Msf\cup\Mol}$ such that $\|\hat{\hbf}_{\Msf\cup\Mol}\|_\infty\leq 1-\gamma$ and
\begin{equation}
    \Abfot_{\Msf\cup\Mol}\hat{\hbf}_{\Msf\cup\Mol}+\Abfot_{\Mad}\hat{\hbf}_{\Mad}=\bfzero.
    \label{equ:all_exist_h_tree}
\end{equation}
\end{lemma}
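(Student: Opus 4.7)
The plan is to adapt the two-step template of Lemma \ref{lemma:local_sufficient_cond} to the tree decomposition setting. Because the statement concerns a single adhesion link bag $\Wlk_t$ rather than a collection of boundary lines, the induction over lines is unnecessary and the argument collapses to a direct application of the bag vulnerability condition \eqref{equ:lin_vul_for_bag}.

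First, I would exploit the block structure of $\Abfo$ given in the statement to reduce the constraint \eqref{equ:all_exist_h_tree} to a single equation on $\Xlk$. Choosing $\hat{\hbf}_\Msf=\bfzero$, the $\Xsf$-row of \eqref{equ:all_exist_h_tree} becomes $\Abf_{\Msf,\Xsf}^\top\hat{\hbf}_\Msf=\bfzero$, which is trivially satisfied, and the $\Xlk$-row reduces to
\begin{equation*}
\Abf_{\Mol,\Xlk}^\top\hat{\hbf}_\Mol+\Abf_{\Mad,\Xlk}^\top\hat{\hbf}_\Mad=\bfzero,
\end{equation*}
which is exactly the feasibility constraint appearing in the inner program of \eqref{equ:lin_vul_for_bag}. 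The block zeros in $\Abfo$ (absence of $\Mol,\Mad$ dependence on $\Xsf$) are justified by Lemma \ref{lem:tree_property}(iv): deleting the adhesion nodes disconnects the infected region from everything else, so boundary measurements in $\Mol\cup\Mad$ do not couple to safe variables in $\Xsf$.

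Second, I would invoke the hypothesis $\alpha_{\Wif_t\rightarrow\Wlk_t}\leq 1-\gamma$ to obtain, for every vertex $\bfxi\in\{-1,+1\}^{|\Mad|}$, a certificate $\hbf$ with $\|\hbf\|_\infty\leq 1-\gamma$ satisfying the reduced constraint. To upgrade to an arbitrary $\hat{\hbf}_\Mad\in[-1,+1]^{|\Mad|}$, I would reuse the convexity argument of Lemma \ref{lem:equiv_vul}: writing $\hat{\hbf}_\Mad=\sum_k\beta_k\bfxi_k$ as a convex combination of sign vertices and setting $\hat{\hbf}_\Mol=\sum_k\beta_k\hbf_k$ propagates linear feasibility through the combination and gives $\|\hat{\hbf}_\Mol\|_\infty\leq\sum_k\beta_k(1-\gamma)=1-\gamma$ via the triangle inequality. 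Then $\hat{\hbf}_{\Msf\cup\Mol}=\begin{bmatrix}\bfzero^\top&\hat{\hbf}_\Mol^\top\end{bmatrix}^\top$ is the required certificate, with $\|\hat{\hbf}_{\Msf\cup\Mol}\|_\infty=\|\hat{\hbf}_\Mol\|_\infty\leq 1-\gamma$.

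The main subtlety is verifying that the statement's absorption convention (``treat the rest of the bags in the subtree as infected'') really isolates one link bag, so that $\Mol$, $\Mad$, $\Xlk$ in Def. \ref{def:bound_var_mea_all_tree} refer solely to $\Wlk_t$ and the bag vulnerability index in Def. \ref{def:bag_vul_metric} applies directly. Once this is checked, the argument is essentially a one-shot application of the vulnerability condition, without the inductive gluing or degree-weighted averaging at shared outer boundary nodes required in Lemma \ref{lemma:local_sufficient_cond}, reflecting the cleaner topological picture that tree decomposition provides whenever a single-link-bag configuration can be identified.
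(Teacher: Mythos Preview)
Your proposal is correct and follows essentially the same approach as the paper: set $\hat{\hbf}_{\Msf}=\bfzero$, reduce \eqref{equ:all_exist_h_tree} to the single $\Xlk$-row equation, and invoke the bag vulnerability condition $\alpha_{\Wif_t\rightarrow\Wlk_t}\leq 1-\gamma$. Your explicit convex-combination argument (borrowed from Lemma \ref{lem:equiv_vul}) to pass from sign vertices to arbitrary $\hat{\hbf}_{\Mad}\in[-1,1]^{|\Mad|}$ is a detail the paper leaves implicit, and your observation that the inductive gluing of Lemma \ref{lemma:local_sufficient_cond} collapses here because only one link bag is involved is exactly right.
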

\begin{proof}
The proof is similar to Lemma \ref{lemma:local_sufficient_cond}. First, we show that a sufficient condition for the existence of $\hat{\hbf}_{\Msf\cup\Mol}=\begin{bmatrix}\hat{\hbf}_{\Msf}^\top&\hat{\hbf}_{\Mol}^\top\end{bmatrix}^\top$ such that $\|\hat{\hbf}_{\Msf\cup\Mol}\|_\infty\leq 1-\gamma$ and \eqref{equ:all_exist_h_tree} is satisfied is that for any $\hat{\hbf}_{\Mad}$, there exists a vector $\hat{\hbf}_{\Mol}$ such that $\|\hat{\hbf}_{\Mol}\|_\infty\leq 1-\gamma$ and
\begin{equation}
    \Abf_{\Mol,\Xlk}^\top\hat{\hbf}_{\Mol}+\Abf_{\Mad,\Xlk}^\top\hat{\hbf}_{\Mad}=\bfzero.
    \label{equ:small_exist_h_tree}
\end{equation}
    This is immediate by simply choosing $\hat{\hbf}_{\Msf\cup\Mol}=\begin{bmatrix}\bfzero^\top&\hat{\hbf}_{\Mol}^\top\end{bmatrix}^\top$. Since it is guaranteed that there exists a vector $\hat{\hbf}_{\Mol}$ to satisfy \eqref{equ:small_exist_h_tree} under the condition that $\alpha_{\Wif_t\rightarrow\Wlk_t}\leq 1-\gamma$, the claim is proved.
\end{proof}

\begin{lemma}[Local property implies global property for SOCP with tree decomposition]
\label{lemma:local_sufficient_cond_socp_tree}
Consider a tree decomposition $\Tcal$ and the associated set partitioning (c.f., Def. \ref{def:bound_var_mea_all_tree}). Suppose that the infected bags form a subtree of $\Tcal$, and that there exists a link bag $\Wlk_t$ that is adjacent to only one infected bag $\Wif_t$. For simplicity of presentation, we also treat the rest of the bags in the subtree as infected. Let $\Abfo=\begin{bmatrix}\Abf_{\Msf,\Xsf}&\Abf_{\Msf,\Xlk}\\\bfzero&\Abf_{\Mol,\Xlk}\\\bfzero&\Abf_{\Mad,\Xlk}\end{bmatrix}$ be a submatrix of the sensing matrix, and $\cbf_\ell^\circ$ and $\Dbf_\ell^\circ$ be the   subvector and submatrix of $\cbf_\ell$ and $\Dbf_\ell$ indexed by $\Xsf\cup\Xlk$. If $\alpha^{\mathrm{SOCP}}_{\Wif_t\rightarrow\Wlk_t}\leq 1-\gamma$ for some $\gamma>0$, then for any $\hat{\hbf}_{\Mad}\in[-1,1]^{|\Mad|}$, there exist  $\hat{\hbf}_{\Msf\cup\Mol}$ and $\{\hat{\nu}_\ell,\hat{\ubf}_\ell\}_{\ell\in\Lad(\Wlk_t)\cup\Lcal(\Wlk_t)\cup\LBsf}$ such that $\|\hat{\hbf}_{\Msf\cup\Mol}\|_\infty\leq 1-\gamma$ and
\begin{equation}
    \Abfot_{\Msf\cup\Mol}\hat{\hbf}_{\Msf\cup\Mol}+\Abfot_{\Mad}\hat{\hbf}_{\Mad}+\sum_{\ell\in\Lad(\Wlk_t)\cup\Lcal(\Wlk_t)\cup\LBsf}\hat{\nu}_\ell\cbf_\ell^\circ+\Dbf_\ell^{\circ\top}\hat{\ubf}_\ell=\bfzero.
    \label{equ:all_exist_h_socp_tree}
\end{equation}
\end{lemma}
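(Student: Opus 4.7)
The plan is to adapt the template of Lemma \ref{lemma:local_sufficient_cond_tree}, augmenting the construction with SOCP dual multipliers exactly as Lemma \ref{lemma:local_sufficient_cond_socp} augments Lemma \ref{lemma:local_sufficient_cond}. A key simplification over the line-by-line case is that property (iv) of Lemma \ref{lem:tree_property} guarantees the adhesion $\Wlk_t\cap\Wif_t$ is a vertex cut separating the link/safe subgraph from the infected subgraph. As a result, no induction over bridging lines is needed: a single application of the SOCP bag vulnerability index will suffice.

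First I would reduce \eqref{equ:all_exist_h_socp_tree} to its $\Xlk$-indexed block. Setting $\hat{\hbf}_{\Msf}=\bfzero$ and $\hat{\nu}_\ell=0,\hat{\ubf}_\ell=\bfzero$ for every $\ell\in\LBsf\cup\Lad(\Wlk_t)$, the $\Xsf$ rows of the equation vanish: by the block structure of $\Abfo$ the matrices $\Abf_{\Mol,\Xsf}$ and $\Abf_{\Mad,\Xsf}$ are zero, and the zeroed multipliers contribute nothing. What remains is the $\Xlk$-block equation
\begin{equation*}
\Abf_{\Mol,\Xlk}^\top\hat{\hbf}_{\Mol}+\Abf_{\Mad,\Xlk}^\top\hat{\hbf}_{\Mad}+\sum_{\ell\in\Lcal(\Wlk_t)}\bigl[\hat{\nu}_\ell\cbf_\ell+\Dbf_\ell^\top\hat{\ubf}_\ell\bigr]_{\Xlk}=\bfzero,
\end{equation*}
together with $\|\hat{\hbf}_{\Mol}\|_\infty\leq 1-\gamma$, SOCP dual feasibility $\hat{\nu}_\ell\geq\|\hat{\ubf}_\ell\|_2$, and complementary slackness for $(\hat{\nu}_\ell,\hat{\ubf}_\ell)_{\ell\in\Lcal(\Wlk_t)}$. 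I would then show this reduced system is solvable. Repeating the convexity argument of Lemma \ref{lem:equiv_vul}, the outer maximization in Def.~\ref{def:bag_vul_metric_socp} extends from $\{-1,+1\}^{|\Mad|}$ to $[-1,+1]^{|\Mad|}$ without increasing the optimum, since the inner feasible set of $(\hbf,\{\omega_\ell\})$ is convex and $\omega_\ell\geq 0$ is preserved under non-negative combinations. Setting $\bfxi=\hat{\hbf}_{\Mad}$ and invoking $\alpha^{\mathrm{SOCP}}_{\Wif_t\rightarrow\Wlk_t}\leq 1-\gamma$ yields $\hat{\hbf}_{\Mol}$ with the required infinity-norm bound and non-negative $\{\hat{\omega}_\ell\}$ satisfying the $\omega$-form equation. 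Converting multipliers via Lemma \ref{lem:equi_vul_socp} using $\hat{\nu}_\ell=\hat{\omega}_\ell\,\cbf_\ell^\top\xbf$ and $\hat{\ubf}_\ell=-\hat{\omega}_\ell\,\Dbf_\ell\xbf$ gives $\hat{\nu}_\ell\cbf_\ell+\Dbf_\ell^\top\hat{\ubf}_\ell=\hat{\omega}_\ell\Tbf_\ell\xbf$; primal feasibility $\xbf\in\Kcal$ ensures $\hat{\nu}_\ell\geq\|\hat{\ubf}_\ell\|_2$ and complementary slackness is automatic.

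The main technical delicacy is justifying the reduction step, namely that no SOCP constraint outside $\LBsf\cup\Lcal(\Wlk_t)\cup\Lad(\Wlk_t)$ couples $\Xsf$ with $\Xlk$ after restriction to $\cbf_\ell^\circ,\Dbf_\ell^{\circ\top}$. This is exactly what property (iv) of Lemma \ref{lem:tree_property} buys us: removing the adhesion nodes disconnects $\Gcal$ into the infected component and the link/safe component, so every line of $\Gcal$ either sits entirely inside the infected subgraph (hence irrelevant to the $\Xsf\cup\Xlk$ block), entirely inside $\LBsf$ (variables only in $\Xsf$, zero contribution because of the zeroed multipliers), entirely inside $\Lcal(\Wlk_t)$, or is an adhesion bridge in $\Lad(\Wlk_t)$. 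Hence zeroing the $\Lad(\Wlk_t)$ multipliers leaves no uncanceled terms on the $\Xlk$ block and the construction above satisfies all required conditions, completing the proof.
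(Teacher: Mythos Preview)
Your proposal is correct and follows essentially the same approach as the paper: reduce \eqref{equ:all_exist_h_socp_tree} to its $\Xlk$-block by zeroing $\hat{\hbf}_{\Msf}$ and the safe-region SOCP multipliers, then invoke the bag vulnerability bound $\alpha^{\mathrm{SOCP}}_{\Wif_t\rightarrow\Wlk_t}\leq 1-\gamma$ to furnish $\hat{\hbf}_{\Mol}$ and the link-bag multipliers. You are in fact more explicit than the paper in three places---the hypercube extension of the outer maximization, the $(\omega_\ell)\to(\nu_\ell,\ubf_\ell)$ conversion via Lemma~\ref{lem:equi_vul_socp}, and the appeal to Lemma~\ref{lem:tree_property}(iv) to rule out cross-region SOCP coupling---all of which the paper leaves implicit in its one-paragraph sketch.
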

\begin{proof}
The proof is similar to the one for Lemma \ref{lemma:local_sufficient_cond_socp}. First, we show that a sufficient condition for Lemma \ref{lemma:local_sufficient_cond_socp_tree} is that for any $\hat{\hbf}_{\Mad}$, there exists a vector $\hat{\hbf}_{\Mol}$ and $\{\hat{\nu}_\ell,\hat{\ubf}_\ell\}_{\ell\in\Lcal(\Wlk_t)}$ such that $\|\hat{\hbf}_{\Mol}\|_\infty\leq 1-\gamma$ and
\begin{equation}
    \Abf_{\Mol,\Xlk}^\top\hat{\hbf}_{\Mol}+\Abf_{\Mad,\Xlk}^\top\hat{\hbf}_{\Mad}+\sum_{\ell\in\Lad(\Wlk_t)\cup\Lcal(\Wlk_t)}\left[\hat{\nu}_\ell\cbf_\ell+\Dbf_\ell^\top\hat{\ubf}_\ell\right]_{\Xlk}=\bfzero.
    \label{equ:small_exist_h_socp_tree}
\end{equation}
    This is immediate by simply choosing $\hat{\hbf}_{\Msf\cup\Mol}=\begin{bmatrix}\bfzero^\top&\hat{\hbf}_{\Mol}^\top\end{bmatrix}^\top$ and $\hat{\nu}_\ell=0$ and $\hat{\ubf}_\ell=\bfzero$ for $\ell\in\LBsf$. Since it is guaranteed that there exist $\hat{\hbf}_{\Mol}$ and $\{\hat{\nu}_\ell,\hat{\ubf}_\ell\}_{\ell\in\Lcal(\Wlk_t)}$ to satisfy \eqref{equ:small_exist_h_socp_tree} under the condition that $\alpha^{\mathrm{SOCP}}_{\Wif_t\rightarrow\Wlk_t}\leq 1-\gamma$, the claim is proved.
\end{proof}

\begin{theorem}[Robust SE with \eqref{equ:primal_noise} for tree decomposition]
Consider a tree decomposition $\Tcal$ and the associated set partitioning (c.f., Def. \ref{def:bound_var_mea_all_tree}). Suppose that the infected bags form a subtree of $\Tcal$, and that there exists a link bag $\Wlk_t$ that is adjacent to only one infected bag $\Wif_t$. Given the measurements  $\ybf=\Abf\xbf_\natural+\wbf_\natural+\bbf_\natural$, where $\wbf_\natural$ has independent entries with zero mean and subgaussian parameter $\sigma$ and $\supp(\bbf_\natural)\subseteq\Mif$, suppose that the rows of $\Abf$ are normalized (c.f., Def. \ref{def:norm_measure}) and the regularization parameter $\lambda$ is chosen such that
\begin{equation}
    \lambda> \frac{2}{n_m\gamma} \sqrt{2\sigma^2\log n_m}.
\end{equation}
In addition, assume that the following conditions hold:
\begin{itemize}
    \item (Full column rank for the safe and boundary region) $\Abf_{\Msf\cup\Mlk,\Xsf\cup\Xlk}$ and $$\Qbf_{\Mlk,\Xlk}=\begin{bmatrix}\Abf_{\Mlk,\Xlk}&\Ibf_{\Mad}^{(|\Mlk|)\top}\end{bmatrix}$$ have full column rank.
    \item (Localized mutual incoherence for bags) for the link bag $\Wlk_t$ that is adjacent to only one infected bag $\Wif_t$, we have $\alpha_{\Wif_t\rightarrow\Wlk_t}\leq 1-\gamma$ for some $\gamma>0$.
\end{itemize}

Then, the following properties hold for the solution to \eqref{equ:primal_noise}, denoted as $(\hat{\xbf},\hat{\bbf})$:
\begin{enumerate}
    \item (No false inclusion) The solution $(\hat{\xbf},\hat{\bbf})$ has no false bad data inclusion (i.e., $\supp(\hat{\bbf})\subset \supp(\bbf_\natural)$) with probability greater than $1-\frac{c_0}{n_m}$, for some constant $c_0>0$.
    \item (Large bad data detection)  Let ${\Abfo}\coloneqq\begin{bmatrix}\Abf_{\Msf,\Xsf}&\Abf_{\Msf,\Xlk}\\\bfzero&\Abf_{\Mol,\Xlk}\\
        \bfzero&\Abf_{\Mad,\Xlk}\end{bmatrix}$ and $\Qbfo_{\Mad}=\begin{bmatrix}\Abfo&\Ibfot_{\Mad}\end{bmatrix}$, and $$g(\lambda)=n_m\lambda\left(\frac{1}{2\sqrt{{C_{\text{min}}}}}+\|\Ibf_b(\Qbfot_{\Mad}\Qbfo_{\Mad})^{-1}\Ibf_b^\top\|_\infty\right)$$ be a threshold value, and let $\tbbf_{\Mad}=\Abf_{\Mad,\Xif}({\xbf}_{\natural\Iif}-\hat{\xbf}_{\Iif})$ be the error at the boundary. Then, all bad data with magnitude greater than $g(\lambda)$ will be detected (i.e., if $|\tilde{b}_{i}|>g(\lambda)$, then $|\hat{b}_{i}|>0$) with probability greater than $1-\frac{c_2}{m}$.
    \item (Bounded error) The estimator error is bounded by \begin{align*}
    \|\xbf_{\natural\Xsf\cup\Xlk}-\hat{\xbf}_{\Xsf\cup\Xlk}\|_2\leq t\frac{\sqrt{|\Xsf|+|\Xlk|+|\Mad|}}{C_{\text{min}}}+n_m\lambda\|\Ibf_x(\Qbfot_{\Mad}\Qbfo_{\Mad})^{-1}\Ibf_b^\top\|_{\infty,2}
\end{align*}
with probability greater than $1-\exp\left(-\frac{c_1t^2}{\sigma^4}\right)$.
\end{enumerate}
\label{thm:lasso_est_cyber_tree}
\end{theorem}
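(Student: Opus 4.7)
The plan is to follow the same primal-dual witness (PDW) strategy used for Theorem \ref{thm:lasso_est_cyber}, replacing the attacked/boundary/safe partition with its tree decomposition analog. The first task is to confirm that $\Abf$ admits a block structure identical to \eqref{equ:A_struct}: by Lemma \ref{lem:tree_property}(iv), removing the adhesion nodes $\Nad$ disconnects $\Gcal$ so that every infected node lies in one component and no edge crosses to the other. Consequently measurements in $\Mif$ involve only $\Xif$, those in $\Mad$ involve $\Xlk \cup \Xif$, those in $\Mol$ involve only $\Xlk$, and those in $\Msf$ involve $\Xsf \cup \Xlk$, yielding
\[
\Abf=\begin{bmatrix}
\Abf_{\Msf,\Xsf}&\Abf_{\Msf,\Xlk}&\bfzero\\
\bfzero&\Abf_{\Mol,\Xlk}&\bfzero\\
\bfzero&\Abf_{\Mad,\Xlk}&\Abf_{\Mad,\Xif}\\
\bfzero&\bfzero&\Abf_{\Mif,\Xif}
\end{bmatrix}.
\]

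Given this structure, I would mirror the proof of Theorem \ref{thm:lasso_est_cyber} step by step. Solve a restricted version of \eqref{equ:primal_noise} in which the support of $\bbf$ is constrained to $\Mif$, and build a dual certificate $\hbf^\star$ from its optimal solution. The only non-routine step is verifying strict dual feasibility $\|\hbf^\star_{\Msf \cup \Mol}\|_\infty < 1$, and this is exactly where Lemma \ref{lemma:local_sufficient_cond_tree} enters: the bag vulnerability condition $\alpha_{\Wif_t \to \Wlk_t} \le 1 - \gamma$ guarantees that any $\hbf_{\Mad} \in [-1,1]^{|\Mad|}$ can be extended to $\hbf_{\Msf \cup \Mol}$ with $\|\cdot\|_\infty \le 1-\gamma$ satisfying the stationarity equation. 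The contribution of the subgaussian noise to $\hbf^\star$ is then controlled by the standard union bound $\Pr[\|\Abf^\top \wbf_\natural/n_m\|_\infty \ge t] \le 2 n_m \exp(-n_m t^2/(2\sigma^2))$ under the row-normalization of Def. \ref{def:norm_measure}; the stated choice $\lambda > (2/(n_m\gamma))\sqrt{2\sigma^2\log n_m}$ makes this perturbation at most $\lambda \gamma/2$ with probability at least $1 - c_0/n_m$, yielding $\|\hbf^\star_{\Msf\cup\Mol}\|_\infty \le (1-\gamma) + \gamma/2 < 1$. This establishes claim (i) along with the fact that the restricted PDW solution coincides with the true optimizer.

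For claim (ii), inverting the KKT stationarity on the $\Mif$ block expresses $\hat{\bbf}_{\Mif}$ as the sum of a propagated-error term bounded by $n_m\lambda/(2\sqrt{C_{\text{min}}})$ via the lower-eigenvalue condition, and a dual-shift term $n_m\lambda\|\Ibf_b(\Qbfot_{\Mad}\Qbfo_{\Mad})^{-1}\Ibf_b^\top\|_\infty$; the threshold $g(\lambda)$ is precisely the magnitude below which $\ell_1$ shrinkage could zero out an entry, so any $|\tilde b_i|>g(\lambda)$ survives. Claim (iii) follows from restricting the KKT stationarity to $\Xsf\cup\Xlk$, using $C_{\text{min}}$ to invert the relevant Gram block, and applying a subgaussian tail bound to $\|\Abfot \wbf_\natural\|_2$ to obtain the $t\sqrt{|\Xsf|+|\Xlk|+|\Mad|}/C_{\text{min}}$ contribution with probability at least $1-\exp(-c_1 t^2/\sigma^4)$.

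The main obstacle I expect is the structural first step. The delicate point is that a nodal power injection at an adhesion node decomposes into a sum over all incident branch flows, and one must verify that every such incident branch either lies in the lines induced by $\Wif$ (contributing only to $\Xif$) or in the lines induced by $\Wlk_t$ (contributing only to $\Xlk$), so that no injection measurement in $\Mad$ accidentally couples $\Xsf$ with $\Xif$. Lemma \ref{lem:tree_property}(iv) is precisely what rules this out by cutting $\Gcal$ at the adhesion nodes. Once this block decomposition is in place, the remainder of the proof is a direct transcription of the proof of Theorem \ref{thm:lasso_est_cyber} under the substitutions $\Mat\mapsto\Mif$, $\Mbi\mapsto\Mad$, $\Mbo\mapsto\Mol$, $\Xat\mapsto\Xif$, $\Xbd\mapsto\Xlk$, with Lemma \ref{lemma:local_sufficient_cond} replaced by Lemma \ref{lemma:local_sufficient_cond_tree}.
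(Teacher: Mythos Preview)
Your proposal is correct and follows essentially the same route as the paper, which explicitly states that the proof of Theorem \ref{thm:lasso_est_cyber_tree} is ``similar to [that] of Theorem \ref{thm:lasso_est_cyber} \ldots and [is] omitted for brevity.'' Your explicit justification of the block structure via Lemma \ref{lem:tree_property}(iv) and the substitutions $\Mat\mapsto\Mif$, $\Mbi\mapsto\Mad$, $\Mbo\mapsto\Mol$, $\Xat\mapsto\Xif$, $\Xbd\mapsto\Xlk$, together with the replacement of Lemma \ref{lemma:local_sufficient_cond} by Lemma \ref{lemma:local_sufficient_cond_tree}, is exactly the transcription the paper has in mind.
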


\begin{theorem}[SE robustness with \eqref{equ:primal_lasso_socp} for tree decomposition]
Given a tree decomposition $\Tcal$ and the associated set partitioning (c.f., Def. \ref{def:bound_var_mea_all_tree}), suppose that the infected bags form a subtree of $\Tcal$ and that there exists a link bag $\Wlk_t$ that is adjacent to only one infected bag $\Wif_t$. Consider the measurements  $\ybf=\Abf\xbf_\natural+\wbf_\natural+\bbf_\natural$, where $\wbf_\natural$ has independent entries with zero mean and subgaussian parameter $\sigma$ and $\supp(\bbf_\natural)\subseteq\Mif$.  Let $\Klk$ and $\Kif$ be the subsets of SOCP constraints $\Kcal$ restricted to the variables $\xbf_\Ilk$ and $\xbf_\Iif$, respectively, and let
\begin{align*}
    \tKif(\hat{\xbf}_\Ilk)=&\Bigg\lbrace\xbf_\Iif\Big\vert\begin{bmatrix}x^{\text{mg}}_i&x^{\text{re}}_\ell+jx^{\text{im}}_\ell\\
x^{\text{re}}_\ell-jx^{\text{im}}_\ell&x^{\text{mg}}_j\end{bmatrix}\succeq 0,\\
&\qquad\qquad\qquad\quad\forall\ell=(i,j)\in\Lif\cup\Lad(\Wlk_t),\text{ where }x^{\text{mg}}_i=\hat{x}^{\text{mg}}_i\quad\forall i\in\Nad(\Wlk_t,\Wif_t)\Bigg\rbrace,
\end{align*}
be the confined feasible set for $\xbf_\Iif$, which fixes the boundary variables $\hat{\xbf}_\Ilk$ in the SOCP constraints. Suppose that rows of $\Abf$ are normalized (c.f., Def. \ref{def:norm_measure}), and the regularization parameter $\lambda$ is chosen such that
\begin{equation}
    \lambda> \frac{2}{n_m\gamma} \sqrt{2\sigma^2\log n_m}.
\end{equation}
In addition, suppose that the following conditions hold:
\begin{itemize}
    \item (Full column rank for the safe and boundary region) $\Abf_{\Msf\cup\Mlk,\Xsf\cup\Xlk}$ and $$\Qbf_{\Mlk,\Xlk}=\begin{bmatrix}\Abf_{\Mlk,\Xlk}&\Ibf_{\Mad}^{(|\Mlk|)\top}\end{bmatrix}$$ have full column rank.
    \item (Localized mutual incoherence for bags) for the link bag $\Wlk_t$ that is adjacent to only one infected bag $\Wif_t$, we have $\alpha^{\mathrm{SOCP}}_{\Wif_t\rightarrow\Wlk_t}\leq 1-\gamma$ for some $\gamma>0$.
    \item (Nonbinding SOCP constraints in the boundary) the solution for the attacked states satisfies $\hat{\xbf}_\Iif\in\tKif({\xbf}_{\natural\Ilk})$.
\end{itemize}
Then, the following properties hold for the solution to \eqref{equ:primal_noise}, denoted as $(\hat{\xbf},\hat{\bbf})$:
\begin{enumerate}
    \item (No false inclusion) The solution $(\hat{\xbf},\hat{\bbf})$ has no false bad data inclusion (i.e., $\supp(\hat{\bbf})\subset \supp(\bbf_\natural)$) with probability greater than $1-\frac{c_0}{n_m}$, for some constant $c_0>0$.
    \item (Large bad data detection)  Let ${\Abfo}\coloneqq\begin{bmatrix}\Abf_{\Msf,\Xsf}&\Abf_{\Msf,\Xlk}\\\bfzero&\Abf_{\Mol,\Xlk}\\
        \bfzero&\Abf_{\Mad,\Xlk}\end{bmatrix}$ and $\Qbfo_{\Mad}=\begin{bmatrix}\Abfo&\Ibfot_{\Mad}\end{bmatrix}$, and $$g(\lambda)=n_m\lambda\left(\frac{1}{2\sqrt{{C_{\text{min}}}}}+\|\Ibf_b(\Qbfot_{\Mad}\Qbfo_{\Mad})^{-1}\Qbfot_{\Mad}\|_\infty\right)$$ be a threshold value, and let $\tbbf_{\Mad}=\Abf_{\Mad,\Xif}({\xbf}_{\natural\Iif}-\hat{\xbf}_{\Iif})$ be the error at the boundary. Then, all bad data with magnitude greater than $g(\lambda)$ will be detected (i.e., if $|\tilde{b}_{i}|>g(\lambda)$, then $|\hat{b}_{i}|>0$) with probability greater than $1-\frac{c_2}{m}$.
    \item (Bounded error) The estimator error is bounded by \begin{align*}
    \|\xbf_{\natural\Xsf\cup\Xlk}-\hat{\xbf}_{\Xsf\cup\Xlk}\|_2\leq t\frac{\sqrt{|\Xsf|+|\Xlk|+|\Mad|}}{C_{\text{min}}}+n_m\lambda\|\Ibf_x(\Qbfot_{\Mad}\Qbfo_{\Mad})^{-1}\Qbfot_{\Mad}\|_{\infty,2}
\end{align*}
with probability greater than $1-\exp\left(-\frac{c_1t^2}{\sigma^4}\right)$.
\end{enumerate}
\label{thm:lasso_est_cyber_socp_tree}
\end{theorem}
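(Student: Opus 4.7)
The plan is to mirror the primal-dual witness (PDW) argument that proves Theorem \ref{thm:lasso_est_cyber_socp}, with two systematic substitutions: replace the boundary set partition $(\Msf,\Mbo,\Mbi,\Mat,\Xsf,\Xbd,\Xat)$ by the tree-induced partition $(\Msf,\Mol,\Mad,\Mif,\Xsf,\Xlk,\Xif)$ from Def.~\ref{def:bound_var_mea_all_tree}, and replace every appeal to Lemma \ref{lemma:local_sufficient_cond_socp} by its tree-decomposition analogue Lemma \ref{lemma:local_sufficient_cond_socp_tree}. The legitimacy of this substitution rests on Lemma \ref{lem:tree_property}(iv): because the infected bags form a subtree and the link bag $\Wlk_t$ is adjacent to exactly one infected bag, deleting the adhesion nodes $\Nad(\Wlk_t,\Wif_t)$ disconnects $\Gcal$ into two pieces with all infected nodes on one side and no edge crossing over. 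After reordering, this produces the same block-zero pattern for $\Abf$ as in \eqref{equ:A_struct}, so the objective of \eqref{equ:primal_lasso_socp} decouples along $(\Msf\cup\Mol,\Mad,\Mif)$ and couples across the boundary only through $\Mad$.

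First I would construct a candidate primal by the oracle program: fix the bad-data support to lie in $\Mif$, solve the SOCP-constrained convex program restricted to $(\xbf_{\Isf},\xbf_{\Ilk},\bbf_{\Mif})$ on the safe-plus-link side, and solve the attacker-side SOCP-constrained $\ell_1$ program on $(\xbf_{\Iif},\bbf)$. By the full-column-rank assumption on $\Abf_{\Msf\cup\Mlk,\Xsf\cup\Xlk}$ the restricted safe-side problem has a unique minimizer $(\hat{\xbf}_{\Isf},\hat{\xbf}_{\Ilk})$, and the nonbinding-SOCP hypothesis $\hat{\xbf}_{\Iif}\in\tKif(\xbf_{\natural\Ilk})$ guarantees that the SOC constraints tying the adhesion variables to the infected variables are strictly feasible at this candidate, so the corresponding SOC dual multipliers may be set to zero on the attacker side. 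Verifying that this candidate solves the unrestricted program \eqref{equ:primal_lasso_socp} then reduces to exhibiting a global dual certificate $(\hbf_\star,\{\nu_\ell,\ubf_\ell\}_{\ell\in\Lcal})$ that satisfies stationarity, SOC dual feasibility, and complementary slackness simultaneously; stationarity on $\Iif$ comes from the attacker's own oracle, and stationarity on $\Isf\cup\Ilk$ is precisely equation \eqref{equ:all_exist_h_socp_tree} supplied by Lemma \ref{lemma:local_sufficient_cond_socp_tree}, which additionally delivers the strict dual slack $\|\hat{\hbf}_{\Msf\cup\Mol}\|_\infty\leq 1-\gamma$ that underlies bad-data localization.

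The three probabilistic claims then follow by the same arguments used in Theorem \ref{thm:lasso_est_cyber_socp}. For claim (1), the calibrated threshold $\lambda>\tfrac{2}{n_m\gamma}\sqrt{2\sigma^2\log n_m}$ together with a union bound on the subgaussian scores $\Abf_{\Msf\cup\Mol,\cdot}^\top\wbf/n_m$ shows that the $1-\gamma$ margin absorbs all noise-induced perturbations to the subgradient condition and no clean measurement incurs a dual violation with probability at least $1-c_0/n_m$. For claim (2), expressing $\hat{\bbf}_{\Mad}$ through the KKT normal equations involving $(\Qbfot_{\Mad}\Qbfo_{\Mad})^{-1}$ yields a deterministic discrepancy bounded through $C_{\min}$ and a stochastic perturbation bounded through subgaussian concentration, so any true bad datum of magnitude exceeding $g(\lambda)$ must be retained in $\supp(\hat{\bbf})$. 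Claim (3) is an $\ell_2$ perturbation bound obtained from the same decomposition, with the stochastic term controlled by a Hanson--Wright-type tail inequality for quadratic forms in subgaussian variables. The main obstacle will be the bookkeeping needed to glue the SOC multipliers across the cross-bag lines in $\Lad(\Wlk_t)\cup\Lcal(\Wlk_t)$ while preserving both SOC dual feasibility $\nu_\ell\geq\|\ubf_\ell\|_2$ and complementary slackness $\nu_\ell\cbf_\ell^\top\hat{\xbf}+\ubf_\ell^\top\Dbf_\ell\hat{\xbf}=0$; this is precisely where the nonbinding-SOCP hypothesis and the single-adjacent-bag condition contribute---the latter guarantees, via Lemma \ref{lem:tree_property}(iv), that no SOC constraint straddles the attacker and the safe region, so the gluing cannot force an inconsistent sign pattern on $(\nu_\ell,\ubf_\ell)$.
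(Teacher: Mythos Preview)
Your proposal is correct and follows essentially the same approach as the paper: the paper explicitly omits the proof of Theorem~\ref{thm:lasso_est_cyber_socp_tree}, stating that it is similar to that of Theorem~\ref{thm:lasso_est_cyber_socp}, and your outline carries out precisely that adaptation---replacing the boundary partition by the tree-induced partition of Def.~\ref{def:bound_var_mea_all_tree}, invoking Lemma~\ref{lemma:local_sufficient_cond_socp_tree} in place of Lemma~\ref{lemma:local_sufficient_cond_socp}, and justifying the block structure of $\Abf$ via Lemma~\ref{lem:tree_property}(iv). Your explicit identification of where the nonbinding-SOCP and single-adjacent-bag hypotheses enter the gluing of SOC multipliers is a helpful elaboration beyond what the paper records.
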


The proofs of Theorems \ref{thm:lasso_est_cyber_tree} and \ref{thm:lasso_est_cyber_socp_tree} are similar to those of Theorems \ref{thm:lasso_est_cyber} and \ref{thm:lasso_est_cyber_socp} in Section \ref{sec:proofs} and are omitted for brevity. As shown in our analysis, tree decomposition provides an efficient way to define the boundary between infected and safe nodes. Tree decomposition has been employed in semidefinite programming (SDP) to efficiently deal with network with chordal sparsity \cite{vandenberghe2015chordal}. The smaller the treewidth, the faster it is to solve SDP \cite{zohrizadehbconic}. Our analysis shows that with smaller treewidth, it is generally easier to certify robustness for SE. This is mainly due to the fact that the adhesion set is bounded by the treewidth, which limits the number of nodes that an infected bag can influence.

\section{Experimental details}
\label{sec:add_exp}
\textbf{Noisy measurements:} For each simulation, we randomly generate dense noise $\wbf$ and sparse bad data $\bbf$, and add them to the clean data according to \eqref{equ:mv}. The dense noise for each measurement is zero-mean Gaussian variable, with standard deviation of 1e-5 (per unit) for voltage magnitude measurements and 0.005 (per unit) for all the other measurements. The difference in standard deviation is due to the fact that voltage magnitude sensors have higher standards of accuracy compared to power meters. For the sparse bad data, its support is randomly selected among the line measurements. We randomly select a set of lines, whose branch flow measurements are all compromised accordingly. The values for the sparse noise can be arbitrarily large, and we assume these parameters are uniformly chosen from the set $[-4.25,-3.75]\cup[3.75,4.25]$ (per unit).

\textbf{Performance metrics:} We use the root-mean-square error (RMSE) as the metric for estimation accuracy, which is  defined as $\sqrt{\tfrac{1}{{n_b}}\sum_{i\in\Ncal}|v_i-\hat{v}_i|^2}$, where $v_i$ and $\hat{v}_i$ are the true and estimated complex voltage at bus $i\in\Ncal$. To evaluate the bad data detection accuracy, we use the F1 score, which is defined as $\tfrac{2*\text{precision}\times\text{recall}}{\text{precision}+\text{recall}}$, where {\it{precision}} is given by $\tfrac{\text{\#True positives } |\Jcal\cap\hat{\Jcal}|}{\text{\#Conditional positives } |{\hat{\Jcal}}|}$, and {\it{recall}} is given by $\tfrac{\text{\#True positives } |\Jcal\cap\hat{\Jcal}|}{\text{\#Conditional positives } |{\Jcal}|}$, and $\Jcal$ and $\hat{\Jcal}$ denote the true and estimated supports of bad data (\# indicates the number of elements). The F1 score is the harmonic average of the precision and recall, which reaches its best value at 1 (perfect precision and recall) and worst at 0. 

\textbf{Experimental setup:} We evaluate the proposed method (step-1 estimators include \eqref{equ:primal_l1}, \eqref{equ:primal_noise}, \eqref{equ:primal_socp} or  \eqref{equ:primal_lasso_socp}) combined with step-2 recovery method \eqref{equ:s2-phase} or \eqref{equ:s2-phase-lasso}), and compare it with the current practice of nonlinear least square (NLS) method based on Newton's algorithm. We use SeDuMi \cite{sturm1999using} as the optimization solver and the MATPOWER implementation of NLS. Throughout the experiment, we choose $\lambda$ in \eqref{equ:primal_noise} to be $3\times10^{-4}/{n_m}$, $\lambda_2$ in \eqref{equ:primal_lasso_socp} to be 0.1, and a bad data detection threshold of 0.01 for stage-1 estimators. After the removal of bad data (i.e., cleaning step), we perform the estimation with the remaining data. All the experiments are performed on a standard laptop with 3.3GHz Intel Core i7 and 16GB memory.

\textbf{Convergence issue of Newton's method:} We performed a simple experiment, where there is no noise in the measurements, and we use both Newton's method and our proposed method to estimate the state for the IEEE 300-bus system \cite{zimmerman2010matpower}. Since Newton's method depends on the initial point, we randomly generate an initial point, where we add a complex vector on top of the ground truth. The magnitude of each entry is uniformly chosen from $[1-\tau,1+\tau]$, and angle (in degrees) uniformly chosen from $[-100\times\tau,100\times\tau]$. We increase $\tau$ to enlarge the initialization distance. As shown in Figure \ref{fig:newton-comp-init}, as we increase $\tau$, Newton's method becomes less and less reliable. This can be due to several factors, for example, if the initial point is far from the ground truth, the algorithm can become stuck at a local optimal. On the contrary, our proposed method based on \eqref{equ:primal_noise} does not depend on the initial point and can recover the ground truth for all the experiments.

\begin{figure}[h!]
  \centering
  \begin{center}
    \includegraphics[width=.58\columnwidth,trim=0mm 0mm 0mm 0mm,clip]{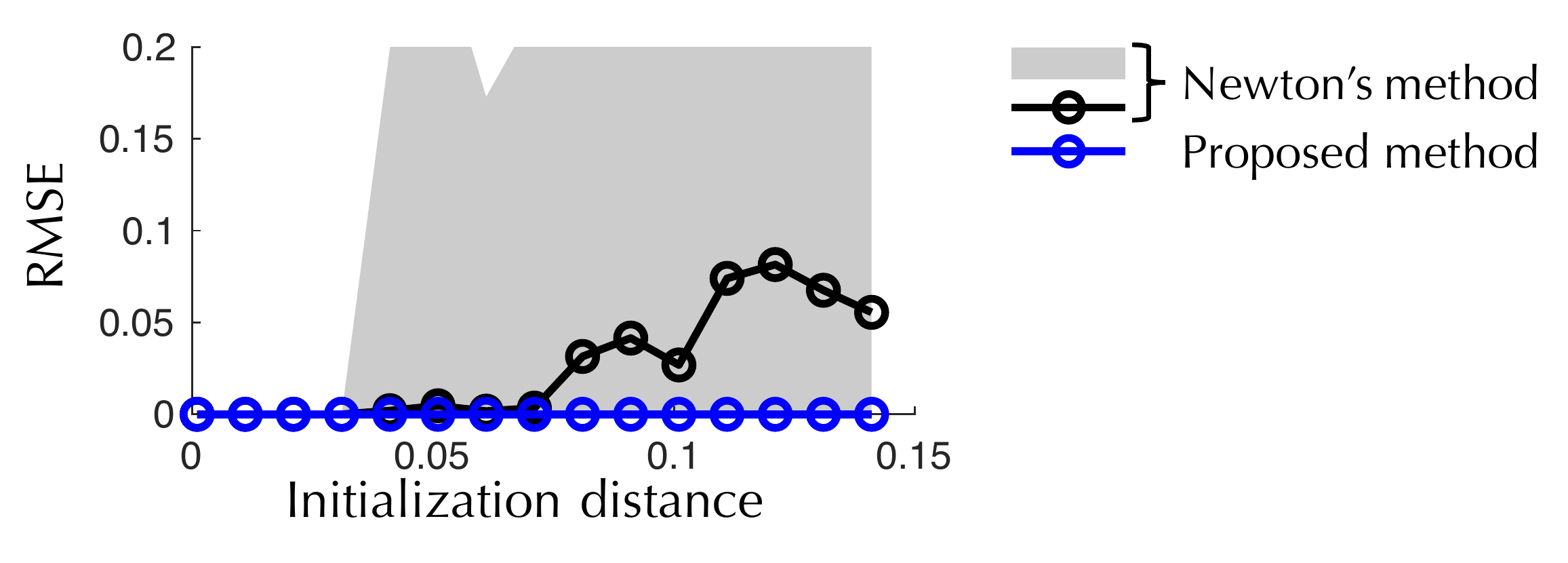}
  \end{center}
  \caption{Plots of RMSE against initilization distance $\tau$ for Newton's method. The RMSE is averaged over 20 simulations. For the Newton's method, we show both the mean performance (circled line) and the min/max range (black shades). }
  \label{fig:newton-comp-init}
\end{figure}

\textbf{Simulations on measurement redundancy:} In the main paper, we demonstrated the performance of \eqref{equ:primal_lasso_socp} for different sensor measurement profiles. We have tested three different methods to add additional sensors: the first method (Method 1) starts from a spanning tree of the network and incrementally adds a set of lines to the tree. In this method, each bus is equipped with only voltage magnitude measurements, and each line has 3 out of 4 branch flow measurements. The second method (Method 2) starts with the full network, where each node has voltage magnitude measurements and each line has one real and one reactive power measurements, and it grows the set of sensors by randomly adding branch measurements. The third method (Method 3) differs from Method 2 only in that it grows the set of sensors by randomly adding branch measurements as well as nodal power injections. In Figure \ref{fig:meas_profile_comp}, we compared the performance of \eqref{equ:primal_noise} with \eqref{equ:primal_lasso_socp} in terms of both estimation accuracy and bad data detection rates. It can be seen that \eqref{equ:primal_lasso_socp} consistently outperforms \eqref{equ:primal_noise} at different redundancy rates. We can also observe that Method 1 is more efficient in terms of improvement of performance with additional sensors.

\begin{figure}[h!]
  \centering
  \begin{center}
    \includegraphics[width=.75\columnwidth,trim=0mm 0mm 0mm 0mm,clip]{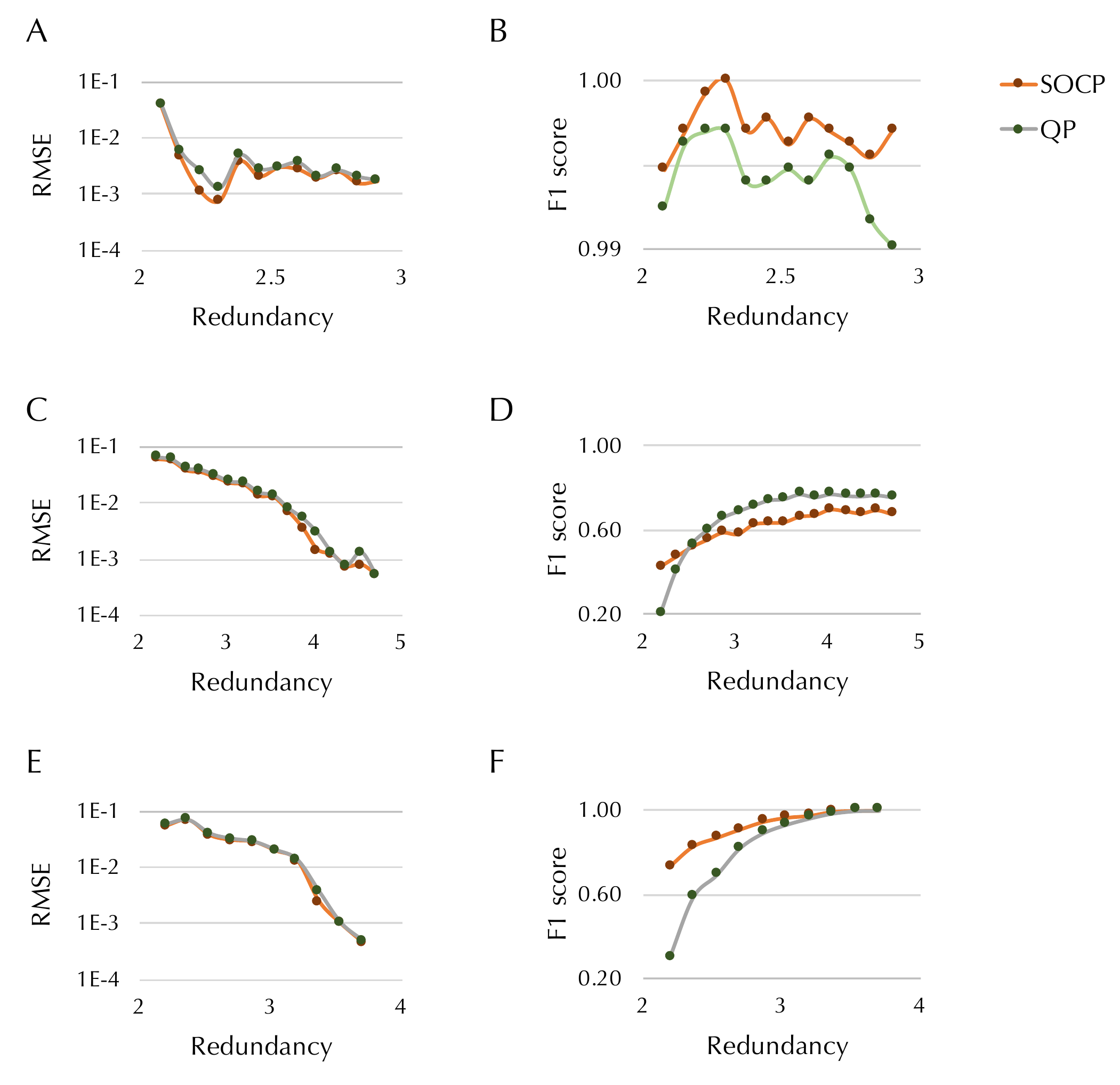}
  \end{center}
  \caption{Performance of proposed algorithms with different rates of measurement redundancy. Plots for different methods to add measurements: \textbf{(A, B)} Method 1, \textbf{(C, D)} Method 2, \textbf{(E, F)} Method 3. Results for \eqref{equ:primal_lasso_socp} (red) and \eqref{equ:primal_noise} (green) are shown, which are averaged over 20 independent simulations. }
  \label{fig:meas_profile_comp}
\end{figure}

\textbf{Visualization of vulnerability maps for different measurement profiles:} In Figure 9 from the main text, we show statistics regarding vulnerability index and critical index for different measurement profiles. Figures \ref{fig:inco_comp} and \ref{fig:ci_comp} show the geographical distributions of VI and CI, respectively. It can be seen that \eqref{equ:primal_lasso_socp} is consistently more robust in terms of VI and CI than \eqref{equ:primal_noise}. This is also theoretically proven in Proposition \ref{prop:soc_better}. We also see that the more vulnerable lines exist, the higher the bus critical index tends to be. By comparing Figure \textbf{(B, G)} with Figure \textbf{(C, H)}, we see that the inclusion of nodal power injections is likely to cause vulnerable lines. By including more branch flow measurements, as shown in Figure \textbf{(A, C, D)} and Figure \textbf{(F, H, I)}, or more voltage magnitude measurements, as shown in Figure \textbf{(B, C, E)} or Figure \textbf{(G, H, J)}, it is more likely to robustify the network.

\begin{figure}[h!]
  \centering
  \begin{center}
    \includegraphics[width=.45\columnwidth,trim=0mm 0mm 0mm 0mm,clip]{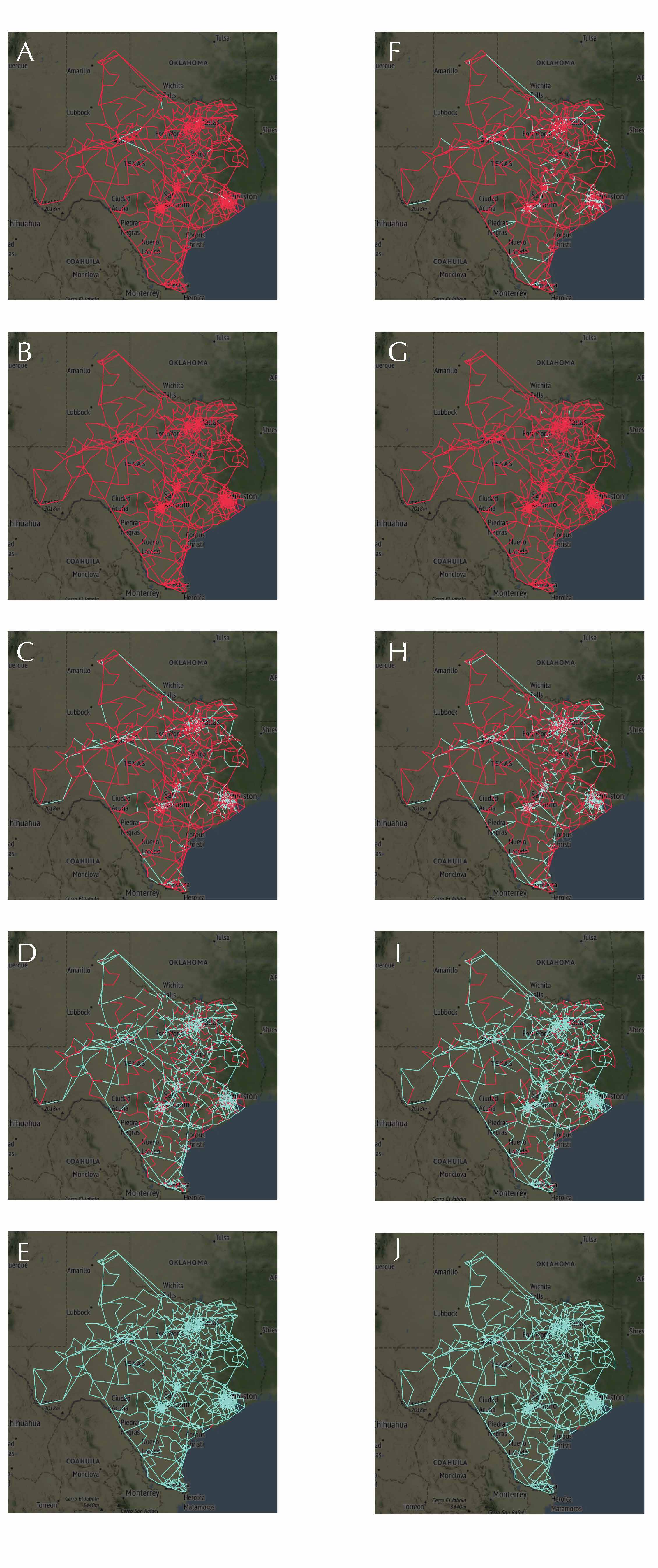}
  \end{center}
  \caption{Vulnerability maps for different measurement profiles and optimization techniques. \textbf{(A--E)} and \textbf{(F--J)} are series of maps without and with the SOCs, respectively. \textbf{(A, F)}, \textbf{(C, H)} and \textbf{(D, I)} correspond to PV or PQ nodal measurements together with 2, 3, and 4 branch power flows, respectively. \textbf{(B, G)} and \textbf{(E, J)} correspond to only PQ or only voltage magnitude nodal measurements with 3 branch power flows, respectively. }
  \label{fig:inco_comp}
\end{figure}

\begin{figure}[h!]
  \centering
  \begin{center}
    \includegraphics[width=.63\columnwidth,trim=-60mm 0mm 5mm 0mm,clip]{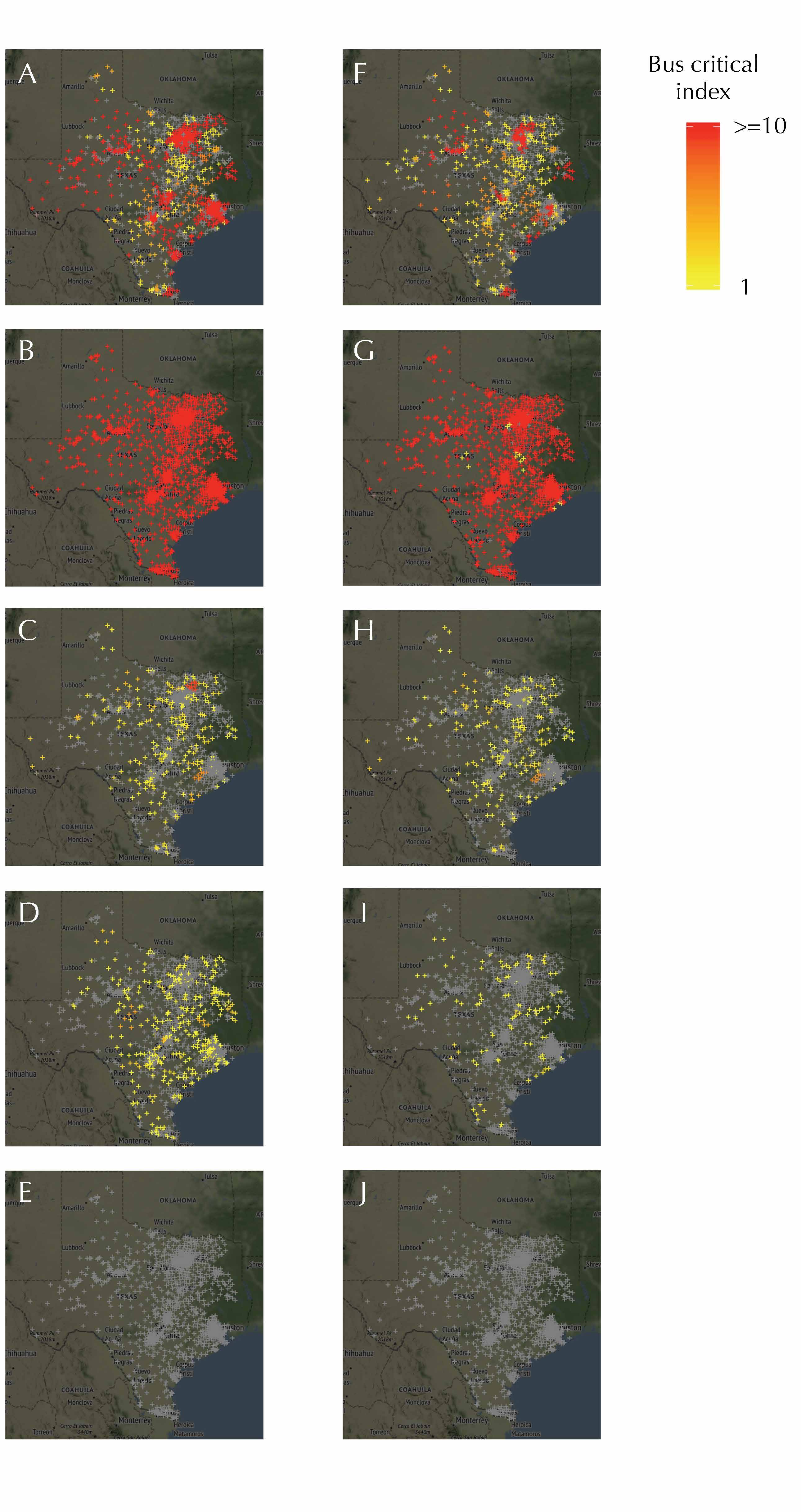}
  \end{center}
  \caption{Bus critical index maps for different measurement profiles and optimization techniques. \textbf{(A--E)} and \textbf{(F--J)} are series of maps without and with the SOCs, respectively. \textbf{(A, F)}, \textbf{(C, H)} and \textbf{(D, I)} correspond to PV or PQ nodal measurements together with 2, 3, and 4 branch power flows, respectively. \textbf{(B, G)} and \textbf{(E, J)} correspond to only PQ or only voltage magnitude nodal measurements with 3 branch power flows, respectively. Color indicates low (yellowish) to high (reddish) critical index. If the critical index is 0, which occurs when attacking the bus does not affect any of its neighbors, the grey color is shown.}
  \label{fig:ci_comp}
\end{figure}

\section{Proofs}
\label{sec:proofs}

\subsection{Proof of Theorem \ref{thm:lasso_est_cyber}}
\label{sec:proof_thm_qp}

For an arbitrary set of attacked measurements $\Mat$, their boundary $\Mbd\coloneqq\Mbi\cup\Mbo$ and unaffected measurements $\Msf$, as well as the associated variables $\xbf_{\Iat}$, $\xbf_\Ibd$ and $\xbf_\Isf$, respectively, we design the primal-dual witness (PDW) process as follows:

(1) Set $\hat{\bbf}_{\Msf}=\bfzero$ and $\hat{\bbf}_{\Mbo}=\bfzero$;

(2) Determine $\hat{\xbf}=\begin{bmatrix}\hat{\xbf}_\Isf^\top&\hat{\xbf}_{\Ibd}^\top&\hat{\xbf}_{\Iat}^\top\end{bmatrix}^\top$ and $\hat{\bbf}=\begin{bmatrix}\bfzero^\top&\bfzero^\top&\hat{\bbf}_{\Mbi}^\top&\hat{\bbf}_{\Mat}^\top\end{bmatrix}^\top$ by solving the following program:
    \begin{equation}
        \min_{{\bbf}\in\Rbb^{n_m},\xbf\in\Rbb^{n_x}}\frac{1}{2n_m}\left\|\begin{bmatrix}\ybf_{\Msf}\\\ybf_{\Mbo}\\\ybf_{\Mbi}\\\ybf_{\Mat}\end{bmatrix}\!\!-\!\!\begin{bmatrix}\Abf_{\Msf,\Xsf}&\Abf_{\Msf,\Xbd}&\bfzero\\\bfzero&\Abf_{\Mbo,\Xbd}&\bfzero\\
        \bfzero&\Abf_{\Mbi,\Xbd}&\Abf_{\Mbi,\Xat}\\\bfzero&\bfzero&\Abf_{\Mat,\Xat}\end{bmatrix}\begin{bmatrix}\xbf_{\Isf}\\\xbf_{\Ibd}\\\xbf_{\Iat}\end{bmatrix}\!\!-\!\!\begin{bmatrix}\bfzero\\\bfzero\\\bbf_{\Mbi}\\\bbf_{\Mat}\end{bmatrix}\right\|_2^2\!\!+\lambda\left\|\begin{bmatrix}\bbf_{\Mbi}\\\bbf_{\Mat}\end{bmatrix}\right\|_1,
        \label{equ:pdw-all}
    \end{equation}
    and $\hat{\hbf}_{\Mbi}\in\partial\|\hat{\bbf}_{\Mbi}\|_1$ and $\hat{\hbf}_{\Mat}\in\partial\|\hat{\bbf}_{\Mat}\|_1$ satisfying the optimality conditions
    \begin{subequations}
    \label{equ:pdw_opt_h}
    \begin{align}
        -\frac{1}{n_m}(\ybf_{\Mat}-\Abf_{\Mat,\Xat}\hat{\xbf}_{\Iat}-\hat{\bbf}_{\Mat})+\lambda\hat{\hbf}_{\Mat}&=\bfzero,\\
        -\frac{1}{n_m}\left(\ybf_{\Mbi}-\Abf_{\Mbi,\Xbd}\hat{\xbf}_{\Ibd}-\Abf_{\Mbi,\Xat}\hat{\xbf}_{\Iat}-\hat{\bbf}_{\Mbi}\right)+\lambda\hat{\hbf}_{\Mbi}&=\bfzero.
    \end{align}
    \end{subequations}

(3) Solve $(\hat{\hbf}_{\Msf},\hat{\hbf}_{\Mbo})$ via the zero-subgradient equation:
    \begin{equation}
        -\frac{1}{n_m}\left(\ybf-\Abf\hat{\xbf}-\hat{\bbf}\right)+\lambda\hat{\hbf}=\bfzero,
        \label{equ:pdw_opt_h_all}
    \end{equation}
    where $\hat{\xbf}=\begin{bmatrix}\hat{\xbf}_{\Bsf}^\top&\hat{\xbf}_{\Bbd}^\top&\hat{\xbf}_{\Bat}^\top\end{bmatrix}^\top$ and $\hat{\bbf}=\begin{bmatrix}\bfzero^\top&\bfzero^\top&\hat{\bbf}_{\Mbi}^\top&\hat{\bbf}_{\Mat}^\top\end{bmatrix}^\top$ are solutions obtained in \eqref{equ:pdw-all}, and $\hat{\hbf}=\begin{bmatrix}\hat{\hbf}_{\Msf}^\top&\hat{\hbf}_{\Mbo}^\top&\hat{\hbf}_{\Mbi}^\top&\hat{\hbf}_{\Mat}^\top\end{bmatrix}^\top$ where $(\hat{\hbf}_{\Mbi},\hat{\hbf}_{\Mat})$ are given in \eqref{equ:pdw_opt_h}. Check whether strict feasibility conditions $\|\hat{\hbf}_{\Msf}\|_\infty<1$ and $\|\hat{\hbf}_{\Mbo}\|_\infty<1$ hold.

\begin{lemma}
If the PDW procedure succeeds, then $(\hat{\xbf},\hat{\bbf})$ that is optimal for \eqref{equ:pdw-all} is also optimal for \eqref{equ:primal_noise}. Furthermore, for any optimal solution $(\tilde{\xbf},\tilde{\bbf})$, if $\hat{\xbf}_\Iat=\tilde{\xbf}_\Iat$, we must have $\hat{\xbf}_\Isf=\tilde{\xbf}_\Isf$ and $\hat{\xbf}_\Ibd=\tilde{\xbf}_\Ibd$ (i.e., uniqueness property in the weak sense).
\label{lem:unique_pdw_sol}
\end{lemma}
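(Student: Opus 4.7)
The plan is to verify optimality by a standard KKT check on the constructed tuple, and then derive the weak uniqueness statement from \emph{strict} dual feasibility combined with the prescribed column-rank assumption.

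For the first assertion, I would show that $(\hat{\xbf},\hat{\bbf})$ together with $\hat{\hbf}$ built in Steps 1--3 of the PDW procedure satisfies the KKT conditions of the convex program \eqref{equ:primal_noise}. The stationarity in the $\xbf$--block is automatic from \eqref{equ:pdw-all}, because $(\hat{\xbf},\hat{\bbf}_{\Mbi},\hat{\bbf}_{\Mat})$ is the minimizer of the restricted problem and the hard-zero constraints on $(\bbf_{\Msf},\bbf_{\Mbo})$ are not active in the $\xbf$--gradient. Stationarity in the $\bbf$--block follows by concatenating \eqref{equ:pdw_opt_h} on $\Mbi\cup\Mat$ with the definition \eqref{equ:pdw_opt_h_all} of $\hat{\hbf}_{\Msf\cup\Mbo}$. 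What needs to be checked is that $\hat{\hbf}\in\partial\|\hat{\bbf}\|_1$: on the coordinates $\Mbi\cup\Mat$ this is enforced through \eqref{equ:pdw_opt_h}, and on $\Msf\cup\Mbo$ the values $\hat{\bbf}_{\Msf}=\bfzero$ and $\hat{\bbf}_{\Mbo}=\bfzero$ require $|\hat{h}_i|\le 1$, which is precisely the success condition of PDW. Because \eqref{equ:primal_noise} is convex, satisfying KKT implies optimality.

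For the uniqueness claim, I would use the fact that the quadratic loss $g(\ubf)=\tfrac{1}{2n_m}\|\ybf-\ubf\|_2^2$ is strictly convex in $\ubf$, whereas $F(\xbf,\bbf)=g(\Abf\xbf+\bbf)+\lambda\|\bbf\|_1$ is only convex. Let $(\tilde{\xbf},\tilde{\bbf})$ be any other optimum. Since the optimum set of a convex program is convex and $g$ is strictly convex in its argument, every midpoint is optimal and $g$ must be constant along midpoints, forcing
\begin{equation*}
\Abf\hat{\xbf}+\hat{\bbf}=\Abf\tilde{\xbf}+\tilde{\bbf}.
\end{equation*}
Substituting this identity into the stationarity equation \eqref{equ:pdw_opt_h_all} for $(\tilde{\xbf},\tilde{\bbf})$ gives the same $\hat{\hbf}$ as a valid subgradient; hence $\hat{\hbf}\in\partial\|\tilde{\bbf}\|_1$ as well. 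On the coordinates in $\Msf\cup\Mbo$ we have $|\hat{h}_i|<1$ by PDW success, and the defining inclusion of the $\ell_1$ subdifferential then forces $\tilde{\bbf}_{\Msf}=\bfzero$ and $\tilde{\bbf}_{\Mbo}=\bfzero$.

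Combining the fitted-value identity with $\tilde{\bbf}_{\Msf}=\hat{\bbf}_{\Msf}=\bfzero$ and $\tilde{\bbf}_{\Mbo}=\hat{\bbf}_{\Mbo}=\bfzero$, and using the block structure of $\Abf$ in \eqref{equ:A_struct}, yields
\begin{align*}
\Abf_{\Msf,\Xsf}(\hat{\xbf}_{\Isf}-\tilde{\xbf}_{\Isf})+\Abf_{\Msf,\Xbd}(\hat{\xbf}_{\Ibd}-\tilde{\xbf}_{\Ibd})&=\bfzero,\\
\Abf_{\Mbo,\Xbd}(\hat{\xbf}_{\Ibd}-\tilde{\xbf}_{\Ibd})&=\bfzero.
\end{align*}
Here $\Abf_{\Mbo,\Xbd}$ has full column rank (this is exactly the first nontrivial consequence of $\Qbf_{\Mbd,\Xbd}$ having full column rank: if $\Abf_{\Mbo,\Xbd}\db=\bfzero$ then $\Abf_{\Mbd,\Xbd}\db$ is supported on $\Mbi$, which means $\begin{bmatrix}\db\\-\Abf_{\Mbi,\Xbd}\db\end{bmatrix}$ lies in the kernel of $\Qbf_{\Mbd,\Xbd}$, forcing $\db=\bfzero$). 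This gives $\hat{\xbf}_{\Ibd}=\tilde{\xbf}_{\Ibd}$, and then the first equation reduces to $\Abf_{\Msf,\Xsf}(\hat{\xbf}_{\Isf}-\tilde{\xbf}_{\Isf})=\bfzero$. Full column rank of $\Abf_{\Msf,\Xsf}$ follows from full column rank of $\Abf_{\Msf\cup\Mbd,\Xsf\cup\Xbd}$ together with the zero block $\Abf_{\Mbd,\Xsf}=\bfzero$, which yields $\hat{\xbf}_{\Isf}=\tilde{\xbf}_{\Isf}$. The assumption $\hat{\xbf}_{\Iat}=\tilde{\xbf}_{\Iat}$ is actually not used until the $\Mbi$ and $\Mat$ blocks---it is kept because those blocks can absorb mismatch into $\bbf_{\Mbi}$ and $\bbf_{\Mat}$, which is why uniqueness is only in the ``weak'' sense. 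The main subtlety I anticipate is the careful deduction of full column rank of $\Abf_{\Mbo,\Xbd}$ from the full column rank of $\Qbf_{\Mbd,\Xbd}$; everything else is routine KKT/PDW bookkeeping.
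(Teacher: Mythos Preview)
Your proposal is correct. The optimality half matches the paper's argument essentially verbatim: verify that the constructed $(\hat{\xbf},\hat{\bbf},\hat{\hbf})$ satisfies the KKT conditions of \eqref{equ:primal_noise}, with the strict feasibility $\|\hat{\hbf}_{\Msf\cup\Mbo}\|_\infty<1$ coming from the PDW success condition.

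For the uniqueness half you take a genuinely different route. The paper argues as follows: starting from equality of optimal values and using $\nabla_\xbf F(\hat{\xbf},\hat{\bbf})=0$, convexity gives the chain $\|\tilde{\bbf}\|_1\le\langle\hat{\hbf},\tilde{\bbf}\rangle\le\|\hat{\hbf}\|_\infty\|\tilde{\bbf}\|_1$, forcing $\tilde{\bbf}_{\Msf\cup\Mbo}=\bfzero$; then it \emph{fixes} $\xbf_{\Iat}=\hat{\xbf}_{\Iat}$ and invokes the lower-eigenvalue condition to obtain strict convexity of the restricted problem in $(\xbf_{\Isf},\xbf_{\Ibd},\bbf_{\Mbi})$, which yields uniqueness. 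You instead exploit strict convexity of the quadratic in the fitted value $\Abf\xbf+\bbf$ to conclude directly that $\Abf\hat{\xbf}+\hat{\bbf}=\Abf\tilde{\xbf}+\tilde{\bbf}$, deduce $\tilde{\bbf}_{\Msf\cup\Mbo}=\bfzero$ from the resulting common subgradient, and then read off $\hat{\xbf}_{\Ibd}=\tilde{\xbf}_{\Ibd}$ and $\hat{\xbf}_{\Isf}=\tilde{\xbf}_{\Isf}$ from the $\Mbo$ and $\Msf$ rows using the block structure \eqref{equ:A_struct} and the column-rank assumptions. Your deduction that $\Abf_{\Mbo,\Xbd}$ has full column rank from the full column rank of $\Qbf_{\Mbd,\Xbd}$ is correct and is the only nontrivial step. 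A pleasant by-product of your argument is that the conclusions $\hat{\xbf}_{\Isf}=\tilde{\xbf}_{\Isf}$ and $\hat{\xbf}_{\Ibd}=\tilde{\xbf}_{\Ibd}$ do not actually require the hypothesis $\hat{\xbf}_{\Iat}=\tilde{\xbf}_{\Iat}$; the paper's route, which reduces to strict convexity only after fixing $\xbf_{\Iat}$, does need it.
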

\begin{proof}
The KKT conditions of \eqref{equ:primal_noise} for a given primal-dual pair $(\hat{\xbf},\hat{\bbf})$ and $\hat{\hbf}$ are given by:
\begin{subequations}
\label{equ:s1-lasso-kkt}
\begin{align}
    \Abf^\top(\ybf-\Abf\hat{\xbf}-\hat{\bbf})&=\bfzero,\label{equ:station_lasso_x}\\
    -\frac{1}{n_m}\left(\ybf-\Abf\hat{\xbf}-\hat{\bbf}\right)+\lambda\hat{\hbf}&=\bfzero,\label{equ:station_lasso_b}\\
    \|\hat{\hbf}\|_\infty&\leq 1
\end{align}
\end{subequations}
If PDW succeeds, then the optimality conditions \eqref{equ:s1-lasso-kkt} are satisfied, which certify the optimality of  $(\hat{\xbf},\hat{\bbf})$. The subgradient $\hat{\hbf}$ satisfies $\|\hat{\hbf}_{\Msf}\|_\infty<1$,  $\|\hat{\hbf}_{\Mbo}\|_\infty<1$ and $\left\langle \hat{\hbf},\hat{\bbf}\right\rangle=\|\hat{\bbf}\|_1$. Now, let $(\tilde{\xbf},\tilde{\bbf})$ be any other optimal, and let $F(\xbf,\bbf)=\frac{1}{2n_m}\|\ybf-\Abf\xbf-\bbf\|_2^2$, then we have
\begin{equation*}
    F(\hat{\xbf},\hat{\bbf})+\lambda\left\langle \hat{\hbf},\hat{\bbf}\right\rangle=F(\tilde{\xbf},\tilde{\bbf})+\lambda\|\tilde{\bbf}\|_1,
\end{equation*}
and hence,
\begin{equation*}
    F(\hat{\xbf},\hat{\bbf})+\lambda\left\langle \hat{\hbf},\hat{\bbf}-\tilde{\bbf}\right\rangle=F(\tilde{\xbf},\tilde{\bbf})+\lambda\left(\|\tilde{\bbf}\|_1-\left\langle \hat{\hbf},\tilde{\bbf}\right\rangle\right).
\end{equation*}
By the optimality conditions in \eqref{equ:s1-lasso-kkt}, we have 
$\lambda\hat{\hbf}=-\nabla_b F(\hat{\xbf},\hat{\bbf})=\frac{1}{n_m}(\ybf-\Abf\hat{\xbf}-\hat{\bbf})$, which implies that
\begin{align*}
    F(\hat{\xbf},\hat{\bbf})-\left\langle \nabla_b F(\hat{\xbf},\hat{\bbf}),\hat{\bbf}-\tilde{\bbf}\right\rangle-F(\tilde{\xbf},\tilde{\bbf})=\lambda\left(\|\tilde{\bbf}\|_1-\left\langle \hat{\hbf},\tilde{\bbf}\right\rangle\right)\leq 0
\end{align*}
due to convexity. Therefore, $\|\tilde{\bbf}\|_1\leq\left\langle \hat{\hbf},\tilde{\bbf}\right\rangle$. Since by Holder's inequality, we also have $\left\langle \hat{\hbf},\tilde{\bbf}\right\rangle\leq\|\hat{\hbf}\|_\infty\|\tilde{\bbf}\|_1$, and $\|\hat{\hbf}\|_\infty\leq 1$, it holds that $\|\tilde{\bbf}\|_1=\left\langle \hat{\hbf},\tilde{\bbf}\right\rangle$. Since by the success of PDW, $\|\hat{\hbf}_{\Msf}\|_\infty<1$ and  $\|\hat{\hbf}_{\Mbo}\|_\infty<1$, we have $\tilde{\bbf}_j=0$ for all $j\in \Msf\cup\Mbo$. To show the weak uniqueness, let $(\tilde{\xbf},\tilde{\bbf})$ be another optimal solution, and assume that $\hat{\xbf}_{\Iat}=\tilde{\xbf}_{\Iat}$. Then, by fixing ${\xbf}_\Iat$ in the optimization \eqref{equ:pdw-all} as $\hat{\xbf}_{\Iat}$ and by the lower eigenvalue condition, the the function is strictly convex in $\xbf_{\Isf}$, $\xbf_{\Ibd}$ and $\bbf_{\Mbi}$.
\end{proof}

\subsection*{Proof of Theorem \ref{thm:lasso_est_cyber}}
\begin{proof}
\textbf{Part 1)}: By the construction of PDW, we have $\hat{\bbf}_{\Msf}=\bbf_{\natural \Msf}=\bfzero$ and $\hat{\bbf}_{\Mbo}=\bbf_{\natural \Mbo}=\bfzero$. In the following, we allow the optimal solution $\hat{\xbf}_\Iat$ and $\hat{\bbf}_\Mat$ of \eqref{equ:pdw-all}  to take any value. Thus, for any given $\hat{\xbf}_\Iat$ and $\hat{\bbf}_\Mat$, we can fix $\xbf_\Iat$ and ${\bbf}_\Mat$ in \eqref{equ:pdw-all} and solve the following smaller program:
\begin{equation}
        \min_{{\bbf}_{\Mbi},\xbf_{\Isf},\xbf_{\Ibd}}\frac{1}{2n_m}\Bigg\|\underbrace{\begin{bmatrix}\ybf_{\Msf}\\\ybf_{\Mbo}\\\zbf_{\Mbi}\end{bmatrix}}_{\zbfo}-\underbrace{\begin{bmatrix}\Abf_{\Msf,\Xsf}&\Abf_{\Msf,\Xbd}\\\bfzero&\Abf_{\Mbo,\Xbd}\\
        \bfzero&\Abf_{\Mbi,\Xbd}\end{bmatrix}}_{\Abfo}\underbrace{\begin{bmatrix}\xbf_{\Isf}\\\xbf_{\Ibd}\end{bmatrix}}_{\xbfo}-\begin{bmatrix}\bfzero\\\bfzero\\\bbf_{\Mbi}\end{bmatrix}\Bigg\|_2^2+\lambda\left\|\bbf_{\Mbi}\right\|_1,
        \label{equ:pdw-all-small}
    \end{equation}
where $\zbf_{\Mbi}=\ybf_{\Mbi}-\Abf_{\Mbi,\Xat}\hat{\xbf}_\Iat=\Abf_{\Mbi,\Xbd}\xbf_{\natural\Ibd}+\tbbf_{\Mbi}$ and $\tbbf_{\Mbi}=\Abf_{\Mbi,\Xat}(\xbf_{\natural\Iat}-\hat{\xbf}_{\Iat})$. Let $\Ibfo$ be an identity matrix of size $n_m-|\Mat|$, and $\xbfo$ and $\wbfo$ be the subvectors of $\xbf$ and $\wbf$ indexed by $\Msf\cup\Mbo\cup\Mbi$, respectively. Thus, we have $\zbfo=\Abfo\xbfo_\natural+\wbfo_\natural+\Ibfot_{\Mbi}\tbbf_{\Mbi}$. The solution $(\xbf_\Isf,\xbf_\Ibd,\bbf_{\Mbi})$ of \eqref{equ:pdw-all-small} is unique and coincides with that of \eqref{equ:pdw-all} due to the lower eigenvalue condition. Thus, the zero-subgradient condition \eqref{equ:pdw_opt_h} is satisfied, which together with \eqref{equ:pdw_opt_h_all} can be written as:
\begin{equation}
    -\frac{1}{n_m}\left(\begin{bmatrix}\Abf_{\Msf,\Xsf}&\Abf_{\Msf,\Xbd}\\\bfzero&\Abf_{\Mbo,\Xbd}\\
        \bfzero&\Abf_{\Mbi,\Xbd}\end{bmatrix}\begin{bmatrix}\xbf_{\natural\Isf}-\hat{\xbf}_{\Isf}\\\xbf_{\natural\Ibd}-\hat{\xbf}_{\Ibd}\end{bmatrix}+\begin{bmatrix}
    \bfzero\\\bfzero\\\tbbf_{\Mbi}-\hat{\bbf}_{\Mbi}\end{bmatrix}\right)-\frac{1}{n_m}\begin{bmatrix}
    \wbf_{\natural\Msf}\\
    \wbf_{\natural\Mbo}\\
    \wbf_{\natural\Mbi}
    \end{bmatrix}+\lambda\begin{bmatrix}
    \hat{\hbf}_\Msf\\\hat{\hbf}_{\Mbo}\\\hat{\hbf}_{\Mbi}
    \end{bmatrix}=\bfzero.
\end{equation}
We can partition the above relation into equations indexed by $\Mbi$, which can be rearranged as:
\begin{equation}
    \hat{\hbf}_{\Mbi}=\frac{1}{n_m\lambda}\begin{bmatrix}
    \Ibfo_{\Mbi}\Abfo& \Ibfo_{\Mbi}\Ibfot_{\Mbi}
    \end{bmatrix}\begin{bmatrix}
    \xbfo_\natural-{\hxbfo}\\ \tbbf_{\Mbi}-\hat{\bbf}_{\Mbi}
    \end{bmatrix}
    +\frac{1}{n_m\lambda}\Ibfo_{\Mbi}\wbfo_\natural,
    \label{equ:cond_b_Mbi}
\end{equation}
as well as those indexed by $\Msf\cup\Mbo$, which can be solved for $\hat{\hbf}_{\Msf\cup\Mbo}=\begin{bmatrix}\hat{\hbf}_{\Msf}^\top&\hat{\hbf}_{\Mbo}^\top\end{bmatrix}^\top$:
\begin{equation}
    \hat{\hbf}_{\Msf\cup\Mbo}=\frac{1}{n_m\lambda}\Ibfo_{\Msf\cup\Mbo}\left(\Abfo(\xbfo_\natural-{\hxbfo})+\wbfo_\natural\right).
    \label{equ:h_j_c_equ}
\end{equation}
Since $\hxbfo$ is the optimal solution of \eqref{equ:pdw-all-small}, it satisfies the optimality condition:
\begin{equation}
    \Abfot\left(\Abfo(\xbfo_\natural-\hxbfo)+\wbfo_\natural+\Ibfot_{\Mbi}(\tbbf_{\Mbi}-\hat{\bbf}_{\Mbi})\right)=\bfzero
    \label{equ:opt_x_small_cond}
\end{equation}
Combining \eqref{equ:cond_b_Mbi}, \eqref{equ:h_j_c_equ} and \eqref{equ:opt_x_small_cond} and after some elementary operations, we have
\begin{equation}
    \Abfot_{\Msf\cup\Mbo}\hat{\hbf}_{\Msf\cup\Mbo}+\Abfot_{\Mbi}\hat{\hbf}_{\Mbi}=\bfzero.
\end{equation}
By Lemma \ref{lemma:local_sufficient_cond}, for any $\hat{\hbf}_{\Mbi}\in\partial\|\hat{\bbf}_{\Mbi}\|_1$, there always exists $\hat{\hbf}_{\Msf\cup\Mbo}$ such that $\|\hat{\hbf}_{\Msf\cup\Mbo}\|_\infty<1$. Thus, the strict feasibility condition is satisfied deterministically.

\textbf{Part 2)}:  By the lower eigenvalue condition  the and definition of $\Qbfo_{\Mbi}=\begin{bmatrix}\Abfo&\Ibfot_{\Mbi}\end{bmatrix}$, we can solve \eqref{equ:cond_b_Mbi} and \eqref{equ:opt_x_small_cond}:
\begin{align}
    \bDelta&\coloneqq\begin{bmatrix}
    \xbfo_\natural-{\hxbfo}\\ \tbbf_{\Mbi}-\hat{\bbf}_{\Mbi}
    \end{bmatrix}={-(\Qbfot_{\Mbi}\Qbfo_{\Mbi})^{-1}\Qbfot_{\Mbi}\wbfo_\natural+n_m\lambda(\Qbfot_{\Mbi}\Qbfo_{\Mbi})^{-1}\begin{bmatrix}\bfzero\\
    \hat{\hbf}_{\Mbi}\end{bmatrix}}\label{equ:delta_err}
\end{align}
Let $\Ibf_x$ and $\Ibf_b$ denote the matrices that consist of the first $|\Xsf|+|\Xbd|$ rows and the last $|\Mbi|$ rows of the identity matrix of size $|\Xsf|+|\Xbd|+|\Mbi|$, respectively. Then, we can bound the estimation error $\bDelta$ in \eqref{equ:delta_err}. First, we bound the infinity norm of $\tbbf_{\Mbi}-\hat{\bbf}_{\Mbi}=\Ibf_b\bDelta$. By triangle inequality,
\begin{align}
    \|\Ibf_b\bDelta\|_\infty\leq \|\Ibf_b(\Qbfot_{\Mbi}\Qbfo_{\Mbi})^{-1}\Qbfot_{\Mbi}\wbfo_\natural\|_\infty+n_m\lambda\|\Ibf_b(\Qbfot_{\Mbi}\Qbfo_{\Mbi})^{-1}\Ibf_b^\top\|_\infty.
\end{align}
Since the second term is deterministic, we will now bound the first term. By the normalized measurement condition \eqref{def:norm_measure} (we assume all measurement vectors are normalized by 1 without loss of generality) and the lower eigenvalue condition, each entry of $(\Qbfot_{\Mbi}\Qbfo_{\Mbi})^{-1}\Qbfot_{\Mbi}\wbfo_\natural$ is zero-mean sub-Gaussian with parameter at most
\begin{equation}
    {\sigma^2}\|(\Qbfot_{\Mbi}\Qbfo_{\Mbi})^{-1}\|_2\leq\frac{\sigma^2}{C_{\text{min}}}.
\end{equation}
Thus, by the union bound, we have
\begin{equation}
    \pr\left(\|\Ibf_b(\Qbfot_{\Mbi}\Qbfo_{\Mbi})^{-1}\Qbfot_{\Mbi}\wbfo_\natural\|_\infty>t\right)\leq 2\exp\left(-\frac{C_{\text{min}}t^2}{2\sigma^2}+\log |\Mbi|\right).
\end{equation}
Then, set $t=\frac{n_m\lambda}{2\sqrt{{C_{\text{min}}}}}$, and note that by our choice of $\lambda$, we have $\frac{C_{\text{min}}t^2}{2\sigma^2}>\log |\Mbi|$. Thus, we conclude that
\begin{equation}
    \|\tbbf_{\Mbi}-\hat{\bbf}_{\Mbi}\|_\infty\leq n_m\lambda\left(\frac{1}{2\sqrt{{C_{\text{min}}}}}+\|\Ibf_b(\Qbfot_{\Mbi}\Qbfo_{\Mbi})^{-1}\Ibf_b^\top\|_\infty\right)
\end{equation}
with probability greater than $1-2\exp(-c_2n_m^2\lambda^2)$. This indicates that all bad data entries greater than
\begin{equation}
    g(\lambda)=n_m\lambda\left(\frac{1}{2\sqrt{{C_{\text{min}}}}}+\|\Ibf_b(\Qbfot_{\Mbi}\Qbfo_{\Mbi})^{-1}\Ibf_b^\top\|_\infty\right)
\end{equation}
will be detected by $\hat{\bbf}_{\Mbi}$.

\textbf{Part 3)}: Now, we bound the $\ell_2$ norm of the signal error $\xbfo_\natural-{\hxbfo}=\Ibf_x\bDelta$,
\begin{align}
    \|\Ibf_x\bDelta\|_2\leq \|\Ibf_x(\Qbfot_{\Mbi}\Qbfo_{\Mbi})^{-1}\Qbfot_{\Mbi}\wbfo_\natural\|_2+n_m\lambda\|\Ibf_x(\Qbfot_{\Mbi}\Qbfo_{\Mbi})^{-1}\Ibf_b^\top\|_{\infty,2}.
\end{align}
For the first term, by the application of standard sub-gaussian concentration, 
\begin{equation*}
    \pr\left(\|\Ibf_x(\Qbfot_{\Mbi}\Qbfo_{\Mbi})^{-1}\Qbfot_{\Mbi}\wbfo_\natural\|_2>\|\Ibf_x(\Qbfot_{\Mbi}\Qbfo_{\Mbi})^{-1}\Qbfot_{\Mbi}\|_F+t\|\Ibf_x(\Qbfot_{\Mbi}\Qbfo_{\Mbi})^{-1}\Qbfot_{\Mbi}\|_2\right)
\end{equation*}
is upper bounded by $\exp\left(-\frac{c_1t^2}{\sigma^4}\right)$. Since $$\|\Ibf_x(\Qbfot_{\Mbi}\Qbfo_{\Mbi})^{-1}\Qbfot_{\Mbi}\|_F\leq \|\Ibf_x\|_2\|(\Qbfot_{\Mbi}\Qbfo_{\Mbi})^{-1}\|_2\|\Qbfot_{\Mbi}\|_F\leq \frac{\sqrt{|\Xsf|+|\Xbd|+|\Mbi|}}{C_{\text{min}}}$$ due to the lower eigenvalue condition and the normalized measurement condition, and similarly it holds that $$\|\Ibf_x(\Qbfot_{\Mbi}\Qbfo_{\Mbi})^{-1}\Qbfot_{\Mbi}\|_2\leq \|\Ibf_x\|_2\|(\Qbfot_{\Mbi}\Qbfo_{\Mbi})^{-1}\|_2\|\Qbfot_{\Mbi}\|_F\leq\frac{\sqrt{|\Xsf|+|\Xbd|+|\Mbi|}}{C_{\text{min}}}.$$
Moreover,
\begin{equation*}
    \pr\left(\|\Ibf_x(\Qbfot_{\Mbi}\Qbfo_{\Mbi})^{-1}\Qbfot_{\Mbi}\wbfo_\natural\|_2>t\frac{\sqrt{|\Xsf|+|\Xbd|+|\Mbi|}}{C_{\text{min}}}\right)\leq \exp\left(-\frac{c_1t^2}{\sigma^4}\right).
\end{equation*}
Together, we conclude that 
\begin{align}
    \|\xbf_\natural-\hat{\xbf}\|_2\leq t\frac{\sqrt{|\Xsf|+|\Xbd|+|\Mbi|}}{C_{\text{min}}}+n_m\lambda\|\Ibf_x(\Qbfot_{\Mbi}\Qbfo_{\Mbi})^{-1}\Ibf_b^\top\|_{\infty,2}
\end{align}
with probability greater than $1-\exp\left(-\frac{c_1t^2}{\sigma^4}\right)$.

\end{proof}

\begin{lemma}
Suppose that $\Qbfot_{\Mbi}\Qbfo_{\Mbi}$ is invertible, where $\Qbfo_{\Mbi}=\begin{bmatrix}\Abfo&\Ibfot_{\Mbi}\end{bmatrix}$. Then, it holds that
\begin{equation}
    \Ibf_{\Msf\cup\Mbo}\Abfo \Ibf_x(\Qbfot_{\Mbi}\Qbfo_{\Mbi})^{-1}\Ibf_b^\top=-\Abf_{\Msf\cup\Mbo}^{\top+}\Abf_{\Mbi}^\top.
\end{equation}
\label{lem:mutual_inco_lasso_trans}
\end{lemma}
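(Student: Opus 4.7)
The plan is to compute the indicated block of $(\Qbfot_{\Mbi}\Qbfo_{\Mbi})^{-1}$ explicitly via the Schur complement formula, exploiting the fact that the ``$b$-block'' of $\Qbfo_{\Mbi}$ is simply $\Ibfot_{\Mbi}$.

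First, I would expand the Gram matrix using the block partition $\Qbfo_{\Mbi}=[\Abfo,\Ibfot_{\Mbi}]$:
\begin{equation*}
\Qbfot_{\Mbi}\Qbfo_{\Mbi}=\begin{bmatrix}\Abfot\Abfo & \Abfot\Ibfot_{\Mbi}\\ \Ibfo_{\Mbi}\Abfo & \Ibfo_{\Mbi}\Ibfot_{\Mbi}\end{bmatrix}=\begin{bmatrix}\Abfot\Abfo & \Abfo_{\Mbi}^{\top}\\ \Abfo_{\Mbi} & \Ibf\end{bmatrix},
\end{equation*}
where $\Abfo_{\Mbi}=\Ibfo_{\Mbi}\Abfo$. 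The crucial observation is that the $(2,2)$ block is the identity, so the Schur complement about this block is $S=\Abfot\Abfo-\Abfo_{\Mbi}^{\top}\Abfo_{\Mbi}$. Splitting the row-indices as $\Msf\cup\Mbo\cup\Mbi$ gives $\Abfot\Abfo=\Abfo_{\Msf\cup\Mbo}^{\top}\Abfo_{\Msf\cup\Mbo}+\Abfo_{\Mbi}^{\top}\Abfo_{\Mbi}$, and hence $S=\Abfo_{\Msf\cup\Mbo}^{\top}\Abfo_{\Msf\cup\Mbo}$.

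Next, I would invoke invertibility. Since $\Qbfot_{\Mbi}\Qbfo_{\Mbi}$ is assumed invertible and its $(2,2)$ block is the identity, the standard Schur complement identity forces $S$ to be invertible; equivalently, $\Abfo_{\Msf\cup\Mbo}$ has full column rank. Applying the block inverse formula, the top-right block of $(\Qbfot_{\Mbi}\Qbfo_{\Mbi})^{-1}$ is
\begin{equation*}
\Ibf_x(\Qbfot_{\Mbi}\Qbfo_{\Mbi})^{-1}\Ibf_b^{\top}=-S^{-1}\Abfo_{\Mbi}^{\top}=-\bigl(\Abfo_{\Msf\cup\Mbo}^{\top}\Abfo_{\Msf\cup\Mbo}\bigr)^{-1}\Abfo_{\Mbi}^{\top}.
\end{equation*}

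Finally, left-multiplying by $\Ibf_{\Msf\cup\Mbo}\Abfo=\Abfo_{\Msf\cup\Mbo}$ yields
\begin{equation*}
-\Abfo_{\Msf\cup\Mbo}\bigl(\Abfo_{\Msf\cup\Mbo}^{\top}\Abfo_{\Msf\cup\Mbo}\bigr)^{-1}\Abfo_{\Mbi}^{\top},
\end{equation*}
and the standard identity $M^{\top+}=M(M^{\top}M)^{-1}$ for any matrix $M$ of full column rank (with $M=\Abfo_{\Msf\cup\Mbo}$) reconstructs the desired $-\Abf_{\Msf\cup\Mbo}^{\top+}\Abf_{\Mbi}^{\top}$. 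There is no substantive obstacle here beyond careful bookkeeping of the block-partition of $\Abfo$ and matching the notational convention that $\Abf_{\Msf\cup\Mbo}$ and $\Abf_{\Mbi}$ in the statement refer to the corresponding submatrices $\Abfo_{\Msf\cup\Mbo}$ and $\Abfo_{\Mbi}$; the only nontrivial ingredient is deducing full column rank of $\Abfo_{\Msf\cup\Mbo}$ from invertibility of the overall Gram matrix, which is immediate from the Schur complement argument.
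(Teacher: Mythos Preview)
Your proof is correct and, in fact, a bit cleaner than the paper's. Both arguments compute the $(1,2)$ block $\Ibf_x(\Qbfot_{\Mbi}\Qbfo_{\Mbi})^{-1}\Ibf_b^\top$ via a block inverse formula and arrive at $-(\Abfo_{\Msf\cup\Mbo}^\top\Abfo_{\Msf\cup\Mbo})^{-1}\Abfo_{\Mbi}^\top$, but they pivot on different blocks. You take the Schur complement about the $(2,2)$ block, which is the identity, so the complement collapses immediately to $\Abfo_{\Msf\cup\Mbo}^\top\Abfo_{\Msf\cup\Mbo}$ and no further simplification is needed. The paper instead pivots on the $(1,1)$ block $\Abfot\Abfo$, obtaining first $-(\Abfot\Abfo)^{-1}\Abf_{\Mbi}^\top(\Ibf-\Abf_{\Mbi}(\Abfot\Abfo)^{-1}\Abf_{\Mbi}^\top)^{-1}$ and then invoking the Sherman--Morrison--Woodbury identity plus a short algebraic rearrangement to reduce it to the same expression. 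Your route avoids that extra machinery and also makes the invertibility of $\Abfo_{\Msf\cup\Mbo}^\top\Abfo_{\Msf\cup\Mbo}$ a transparent consequence of the hypothesis, whereas the paper implicitly assumes it (and invertibility of $\Abfot\Abfo$) from the surrounding context. The final left-multiplication step and the pseudoinverse identification are identical in spirit, though the paper's proof leaves that last step implicit.
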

\begin{proof}
By the definition of $\Qbfo_{\Mbi}$ and block matrix inversion formula, we have
\begin{align*}
    &\Ibf_x(\Qbfot_{\Mbi}\Qbfo_{\Mbi})^{-1}\Ibf_b^\top\\
    &=-(\Abfot\Abfo)^{-1}\Abf_{\Mbi}^\top(\Ibf-\Abf_{\Mbi}(\Abfot\Abfo)^{-1}\Abf_{\Mbi}^\top)^{-1}\\
    &=-(\Abfot\Abfo)^{-1}\Abf_{\Mbi}^\top(\Ibf+\Abf_{\Mbi}(\Abf_{\Msf\cup\Mbo}^\top\Abf_{\Msf\cup\Mbo})^{-1}\Abf_{\Mbi}^\top)\\
    &=-(\Abfot\Abfo)^{-1}(\Ibf+\Abf_{\Mbi}^\top\Abf_{\Mbi}(\Abf_{\Msf\cup\Mbo}^\top\Abf_{\Msf\cup\Mbo})^{-1})\Abf_{\Mbi}^\top\\
    &=-(\Abf_{\Msf\cup\Mbo}^\top\Abf_{\Msf\cup\Mbo})^{-1}\Abf_{\Mbi}^\top,
\end{align*}
where the first equation follows from the Sherman-Morrison-Woodbury formula and the rest are elementary operations.
\end{proof}

\begin{lemma}
Suppose that $\Qbfot_{\Mbi}\Qbfo_{\Mbi}$ is invertible. Then, it holds that
\begin{equation}
    \Ibf_b(\Qbfot_{\Mbi}\Qbfo_{\Mbi})^{-1}\Ibf_b^\top=\Ibf+\Abf_{\Mbi}(\Abf_{\Msf\cup\Mbo}^\top\Abf_{\Msf\cup\Mbo})^{-1}\Abf_{\Mbi}^\top
\end{equation}
\end{lemma}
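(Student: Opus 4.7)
The plan is to reduce the identity to a routine block matrix inversion. Writing out the product explicitly, we have
\begin{equation*}
    \Qbfot_{\Mbi}\Qbfo_{\Mbi}=\begin{bmatrix}\Abfot\Abfo & \Abfot\Ibfot_{\Mbi}\\ \Ibfo_{\Mbi}\Abfo & \Ibfo_{\Mbi}\Ibfot_{\Mbi}\end{bmatrix}=\begin{bmatrix}\Abfot\Abfo & \Abf_{\Mbi}^\top\\ \Abf_{\Mbi} & \Ibf\end{bmatrix},
\end{equation*}
where I use the fact that $\Ibfo_{\Mbi}\Abfo$ simply extracts the $\Mbi$ rows of $\Abfo$, which is $\Abf_{\Mbi}$, and that $\Ibfo_{\Mbi}\Ibfot_{\Mbi}=\Ibf_{|\Mbi|}$. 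The matrix $\Ibf_b$ picks out the last $|\Mbi|$ rows/columns, so $\Ibf_b(\Qbfot_{\Mbi}\Qbfo_{\Mbi})^{-1}\Ibf_b^\top$ is precisely the bottom-right block of the inverse.

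Next, I would apply the standard block inversion formula (equivalently, the Woodbury identity), which says that the bottom-right block of the inverse is $(D-CA^{-1}B)^{-1}$, and then use Woodbury to expand that. Concretely, taking $A=\Abfot\Abfo$, $B=\Abf_{\Mbi}^\top$, $C=\Abf_{\Mbi}$, $D=\Ibf$, the Schur complement is $\Ibf-\Abf_{\Mbi}(\Abfot\Abfo)^{-1}\Abf_{\Mbi}^\top$, and Woodbury gives
\begin{equation*}
    \bigl(\Ibf-\Abf_{\Mbi}(\Abfot\Abfo)^{-1}\Abf_{\Mbi}^\top\bigr)^{-1}=\Ibf+\Abf_{\Mbi}\bigl(\Abfot\Abfo-\Abf_{\Mbi}^\top\Abf_{\Mbi}\bigr)^{-1}\Abf_{\Mbi}^\top.
\end{equation*}

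The key algebraic observation that closes the argument is that $\Abfot\Abfo-\Abf_{\Mbi}^\top\Abf_{\Mbi}=\Abf_{\Msf\cup\Mbo}^\top\Abf_{\Msf\cup\Mbo}$. This is immediate because $\Abfot\Abfo=\sum_{i\in\Msf\cup\Mbo\cup\Mbi}\abf_i^{\circ}\abf_i^{\circ\top}$ decomposes as a sum of rank-one contributions from the three disjoint row groups of $\Abfo$, and subtracting the $\Mbi$ contribution leaves exactly the outer product over $\Msf\cup\Mbo$. Substituting this into the Woodbury expression yields the claimed formula. The invertibility of $\Abf_{\Msf\cup\Mbo}^\top\Abf_{\Msf\cup\Mbo}$ needed to justify the manipulation follows from the hypothesis that $\Qbfot_{\Mbi}\Qbfo_{\Mbi}$ is invertible (via the nonsingularity of its Schur complement).

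There is no real obstacle here—the proof is a two-line block matrix/Woodbury computation, completely analogous to the previous lemma in the excerpt. The only detail that requires attention is being consistent with the notation $\Abf_{\Mbi}$, which here denotes the $\Mbi$-rows of $\Abfo$ (equivalently $[\bfzero,\ \Abf_{\Mbi,\Xbd}]$) rather than those of the full sensing matrix $\Abf$; with that bookkeeping clarified once, the identity falls out directly.
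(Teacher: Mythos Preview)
Your proof is correct and follows essentially the same route as the paper's: identify $\Ibf_b(\Qbfot_{\Mbi}\Qbfo_{\Mbi})^{-1}\Ibf_b^\top$ as the inverse of the Schur complement $\Ibf-\Abf_{\Mbi}(\Abfot\Abfo)^{-1}\Abf_{\Mbi}^\top$ via block inversion, then apply the Sherman--Morrison--Woodbury identity and use $\Abfot\Abfo-\Abf_{\Mbi}^\top\Abf_{\Mbi}=\Abf_{\Msf\cup\Mbo}^\top\Abf_{\Msf\cup\Mbo}$. The paper's proof is terser but the underlying computation is identical, and your remark on the meaning of $\Abf_{\Mbi}$ here is a useful clarification.
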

\begin{proof}
By the definition of $\Qbfo_{\Mbi}$ and block matrix inversion formula, we have
\begin{align*}
    \Ibf_b(\Qbfot_{\Mbi}\Qbfo_{\Mbi})^{-1}\Ibf_b^\top&=(\Ibf-\Abf_{\Mbi}(\Abfot\Abfo)^{-1}\Abf_{\Mbi}^\top)^{-1}\\
    &=\Ibf+\Abf_{\Mbi}(\Abf_{\Msf\cup\Mbo}^\top\Abf_{\Msf\cup\Mbo})^{-1}\Abf_{\Mbi}^\top,
\end{align*}
where the second equation follows from the Sherman-Morrison-Woodbury formula. 
\end{proof}

\subsection{Proof of Theorem \ref{thm:lasso_est_cyber_socp}}
\label{sec:proof_thm_socp}

For an arbitrary set of attacked measurements $\Mat$, their boundary $\Mbd\coloneqq\Mbi\cup\Mbo$ and unaffected measurements $\Msf$, as well as the associated variables $\xbf_{\Iat}$, $\xbf_\Ibd$ and $\xbf_\Isf$, respectively, we design the primal-dual witness process as follows:

1) Set $\hat{\bbf}_{\Msf}=\bfzero$ and $\hat{\bbf}_{\Mbo}=\bfzero$;

2) Determine $\hat{\xbf}=\begin{bmatrix}\hat{\xbf}_\Isf^\top&\hat{\xbf}_{\Ibd}^\top&\hat{\xbf}_{\Iat}^\top\end{bmatrix}^\top$ and $\hat{\bbf}=\begin{bmatrix}\bfzero^\top&\bfzero^\top&\hat{\bbf}_{\Mbi}^\top&\hat{\bbf}_{\Mat}^\top\end{bmatrix}^\top$ by solving the following program:
    \begin{subequations}
    \label{equ:pdw-all-socp}
	\begin{align}
    &\underset{{\bbf}\in\Rbb^{n_m},\xbf\in\Rbb^{n_x}}{\text{min~~~}}
	&& \frac{1}{2n_m}\left\|\begin{bmatrix}\ybf_{\Msf}\\\ybf_{\Mbo}\\\ybf_{\Mbi}\\\ybf_{\Mat}\end{bmatrix}\!\!-\!\!\begin{bmatrix}\Abf_{\Msf,\Xsf}&\Abf_{\Msf,\Xbd}&\bfzero\\\bfzero&\Abf_{\Mbo,\Xbd}&\bfzero\\
        \bfzero&\Abf_{\Mbi,\Xbd}&\Abf_{\Mbi,\Xat}\\\bfzero&\bfzero&\Abf_{\Mat,\Xat}\end{bmatrix}\begin{bmatrix}\xbf_{\Isf}\\\xbf_{\Ibd}\\\xbf_{\Iat}\end{bmatrix}\!\!-\!\!\begin{bmatrix}\bfzero\\\bfzero\\\bbf_{\Mbi}\\\bbf_{\Mat}\end{bmatrix}\right\|_2^2\!\!\!+\!\lambda\left\|\begin{bmatrix}\bbf_{\Mbi}\\\bbf_{\Mat}\end{bmatrix}\right\|_1,\\
	& \;\;\text{subject to~~~}
	&&\hspace{-0.5cm}\cbf_\ell^\top\xbf\geq\left\|\Dbf_\ell\xbf\right\|_2,\qquad\qquad\forall \ell\in\Lcal,
	\end{align}
\end{subequations}
    and $\hat{\hbf}_{\Mbi}\in\partial\|\hat{\bbf}_{\Mbi}\|_1$ and $\hat{\hbf}_{\Mat}\in\partial\|\hat{\bbf}_{\Mat}\|_1$ satisfying the optimality conditions
    \begin{subequations}
    \label{equ:pdw_opt_h_socp}
    \begin{align}
        -\frac{1}{n_m}(\ybf_{\Mat}-\Abf_{\Mat,\Xat}\hat{\xbf}_{\Iat}-\hat{\bbf}_{\Mat})+\lambda\hat{\hbf}_{\Mat}&=\bfzero,\\
        -\frac{1}{n_m}\left(\ybf_{\Mbi}-\Abf_{\Mbi,\Xbd}\hat{\xbf}_{\Ibd}-\Abf_{\Mbi,\Xat}\hat{\xbf}_{\Iat}-\hat{\bbf}_{\Mbi}\right)+\lambda\hat{\hbf}_{\Mbi}&=\bfzero.
    \end{align}
    \end{subequations}
3) Solve $(\hat{\hbf}_{\Msf},\hat{\hbf}_{\Mbo})$ via the zero-subgradient equation:
    \begin{equation}
        -\frac{1}{n_m}\left(\ybf-\Abf\hat{\xbf}-\hat{\bbf}\right)+\lambda\hat{\hbf}=\bfzero,
        \label{equ:pdw_opt_h_all_socp}
    \end{equation}
    where $\hat{\xbf}=\begin{bmatrix}\hat{\xbf}_{\Bsf}^\top&\hat{\xbf}_{\Bbd}^\top&\hat{\xbf}_{\Bat}^\top\end{bmatrix}^\top$ and $\hat{\bbf}=\begin{bmatrix}\bfzero^\top&\bfzero^\top&\hat{\bbf}_{\Mbi}^\top&\hat{\bbf}_{\Mat}^\top\end{bmatrix}^\top$ are solutions obtained in \eqref{equ:pdw-all}, and $\hat{\hbf}=\begin{bmatrix}\hat{\hbf}_{\Msf}^\top&\hat{\hbf}_{\Mbo}^\top&\hat{\hbf}_{\Mbi}^\top&\hat{\hbf}_{\Mat}^\top\end{bmatrix}^\top$ where $(\hat{\hbf}_{\Mbi},\hat{\hbf}_{\Mat})$ are given in \eqref{equ:pdw_opt_h}. Check whether strict feasibility conditions $\|\hat{\hbf}_{\Msf}\|_\infty<1$ and $\|\hat{\hbf}_{\Mbo}\|_\infty<1$ hold.

\begin{lemma}
If the PDW procedure succeeds, then $(\hat{\xbf},\hat{\bbf})$ that is optimal for \eqref{equ:pdw-all-socp} is also optimal for \eqref{equ:primal_lasso_socp}. Furthermore, for any optimal solution $(\tilde{\xbf},\tilde{\bbf})$, if $\hat{\xbf}_\Iat=\tilde{\xbf}_\Iat$, it holds that $\hat{\xbf}_\Isf=\tilde{\xbf}_\Isf$ and $\hat{\xbf}_\Ibd=\tilde{\xbf}_\Ibd$ (i.e., uniqueness property in the weak sense).
\label{lem:unique_pdw_sol_socp}
\end{lemma}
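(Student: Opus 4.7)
The plan is to parallel the proof of Lemma \ref{lem:unique_pdw_sol}, now carrying along the SOC dual multipliers $(\nu_\ell,\bmu_\ell)_{\ell\in\Lcal}$ attached to the cone constraints $\xbf\in\Kcal$. The KKT system for \eqref{equ:primal_lasso_socp} consists of primal feasibility $\cbf_\ell^\top\hat\xbf\geq\|\Dbf_\ell\hat\xbf\|_2$, dual feasibility $\hat\nu_\ell\geq\|\hat\bmu_\ell\|_2$, complementary slackness $\hat\nu_\ell\cbf_\ell^\top\hat\xbf+\hat\bmu_\ell^\top\Dbf_\ell\hat\xbf=0$, the $\xbf$-stationarity
\begin{equation*}
\tfrac{1}{n_m}\Abf^\top(\Abf\hat\xbf+\hat\bbf-\ybf)+\sum_{\ell\in\Lcal}\bigl(\hat\nu_\ell\cbf_\ell+\Dbf_\ell^\top\hat\bmu_\ell\bigr)=\bfzero,
\end{equation*}
and the $\bbf$-stationarity $\tfrac{1}{n_m}(\ybf-\Abf\hat\xbf-\hat\bbf)=\lambda\hat\hbf$ with $\hat\hbf\in\partial\|\hat\bbf\|_1$. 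Since \eqref{equ:primal_lasso_socp} is convex, these conditions are both necessary and sufficient for optimality.

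Next I would extract a primal-dual tuple from Step 2 of the PDW. Problem \eqref{equ:pdw-all-socp} carries the \emph{same} SOC constraints as \eqref{equ:primal_lasso_socp} while hard-wiring $\bbf_{\Msf\cup\Mbo}=\bfzero$, so its own KKT system produces $(\hat\xbf,\hat\bbf_\Mbi,\hat\bbf_\Mat,\hat\hbf_\Mbi,\hat\hbf_\Mat,\{\hat\nu_\ell,\hat\bmu_\ell\}_{\ell\in\Lcal})$ satisfying SOC primal/dual feasibility, complementary slackness, the $\xbf$-stationarity (identical to that of \eqref{equ:primal_lasso_socp} because the SOC Lagrangian depends only on $\xbf$ and the objective gradient in $\xbf$ is insensitive to the value of $\bbf$ on $\Msf\cup\Mbo$ once they are set to zero), and the $\bbf$-stationarity on $\Mbi\cup\Mat$ via \eqref{equ:pdw_opt_h_socp}. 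Step 3 extends $\hat\hbf$ to $\Msf\cup\Mbo$ through \eqref{equ:pdw_opt_h_all_socp}, so the $\bbf$-stationarity now holds on all of $[n_m]$. With $\hat\bbf_{\Msf\cup\Mbo}=\bfzero$, the inclusion $\hat\hbf\in\partial\|\hat\bbf\|_1$ reduces on those coordinates to $|\hat\hbf_j|\leq 1$, which is exactly the strict-feasibility check PDW passes. Hence the full KKT of \eqref{equ:primal_lasso_socp} holds and $(\hat\xbf,\hat\bbf)$ is globally optimal.

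For the weak uniqueness, I would mimic the convexity argument of Lemma \ref{lem:unique_pdw_sol}. Let $(\tilde\xbf,\tilde\bbf)$ be any other optimum and set $F(\xbf,\bbf)=\tfrac{1}{2n_m}\|\ybf-\Abf\xbf-\bbf\|_2^2$. Equality of objective values together with convexity of $F$ and the stationarity identity $\lambda\hat\hbf=-\nabla_\bbf F(\hat\xbf,\hat\bbf)$ forces $\|\tilde\bbf\|_1=\langle\hat\hbf,\tilde\bbf\rangle$; because $\|\hat\hbf\|_\infty\leq 1$ everywhere and $\|\hat\hbf_{\Msf\cup\Mbo}\|_\infty<1$ strictly, H\"older gives $\tilde\bbf_{\Msf\cup\Mbo}=\bfzero$. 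Now consider the restricted problem obtained from \eqref{equ:primal_lasso_socp} by freezing $\xbf_\Iat=\hat\xbf_\Iat$; both $(\hat\xbf,\hat\bbf)$ and $(\tilde\xbf,\tilde\bbf)$ remain feasible and optimal for it. By the block structure of $\Abf$ exhibited in \eqref{equ:pdw-all-socp}, the resulting objective in $(\xbf_\Isf,\xbf_\Ibd,\bbf_\Mbi)$ is strictly convex thanks to the lower-eigenvalue condition on $\Qbf_{\Mbd,\Xbd}$ and on $\Abf_{\Msf,\Xsf}$, and the restriction of $\Kcal$ remains a convex set; hence the minimizer is unique and agrees with $(\hat\xbf_\Isf,\hat\xbf_\Ibd,\hat\bbf_\Mbi)$.

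The main obstacle is reconciling the SOC multipliers from Step 2 with those required by the full KKT system of \eqref{equ:primal_lasso_socp}. Because the cone constraints and their Lagrangian contribution depend only on $\xbf$, and because the $\xbf$-stationarity equation is literally the same in \eqref{equ:pdw-all-socp} and \eqref{equ:primal_lasso_socp}, the Step 2 multipliers port over unchanged; the only genuinely new construction is $\hat\hbf_{\Msf\cup\Mbo}$ in Step 3, whose $\ell_\infty$-bound is precisely the PDW success criterion. A minor subtlety is that I have implicitly used existence of SOC duals for the restricted problem, which follows from Slater's condition under any interior state $\xbf\in\mathrm{int}(\Kcal)$, matching the standing assumptions.
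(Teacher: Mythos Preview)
Your proposal is correct and follows essentially the same approach as the paper: you verify the KKT conditions of \eqref{equ:primal_lasso_socp} using the PDW construction, then run the convexity/H\"older argument to force $\tilde\bbf_{\Msf\cup\Mbo}=\bfzero$, and finally invoke strict convexity (via the lower-eigenvalue condition) after freezing $\xbf_\Iat$ to obtain weak uniqueness. You are somewhat more explicit than the paper about why the SOC multipliers from \eqref{equ:pdw-all-socp} port directly to \eqref{equ:primal_lasso_socp} and about Slater's condition, but these are elaborations of the same underlying argument rather than a different route.
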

\begin{proof}
The KKT conditions of \eqref{equ:primal_lasso_socp} for a given primal-dual pair $(\hat{\xbf},\hat{\bbf})$ and $\hat{\hbf}$ are given by:
\begin{subequations}
\label{equ:s1-lasso-kkt-socp}
\begin{align}
    \frac{1}{n_m}\Abf^\top(\ybf-\Abf\hat{\xbf}-\hat{\bbf})+\lambda\sum_{\ell\in\Lcal}(\nu_\ell\cbf_\ell+\Dbf_\ell^\top\bmu_\ell)&=\bfzero,\label{equ:station_lasso_x_socp}\\
    -\frac{1}{n_m}\left(\ybf-\Abf\hat{\xbf}-\hat{\bbf}\right)+\lambda\hat{\hbf}&=\bfzero,\label{equ:station_lasso_b_socp}\\
    \hat{\hbf}\in\partial\|\hat{\bbf}\|_1,\quad\|\hat{\hbf}\|_\infty&\leq 1
\end{align}
\end{subequations}
If PDW succeeds, then the optimality conditions \eqref{equ:s1-lasso-kkt-socp} are satisfied, which certify the optimality of  $(\hat{\xbf},\hat{\bbf})$. The subgradient $\hat{\hbf}$ satisfies $\|\hat{\hbf}_{\Msf}\|_\infty<1$,  $\|\hat{\hbf}_{\Mbo}\|_\infty<1$ and $\left\langle \hat{\hbf},\hat{\bbf}\right\rangle=\|\hat{\bbf}\|_1$. Now, let $(\tilde{\xbf},\tilde{\bbf})$ be any other optimal, and let $F(\xbf,\bbf)=\frac{1}{2n_m}\|\ybf-\Abf\xbf-\bbf\|_2^2$; then,
\begin{equation*}
    F(\hat{\xbf},\hat{\bbf})+\lambda\left\langle \hat{\hbf},\hat{\bbf}\right\rangle=F(\tilde{\xbf},\tilde{\bbf})+\lambda\|\tilde{\bbf}\|_1,
\end{equation*}
and hence,
\begin{equation*}
    F(\hat{\xbf},\hat{\bbf})+\lambda\left\langle \hat{\hbf},\hat{\bbf}-\tilde{\bbf}\right\rangle=F(\tilde{\xbf},\tilde{\bbf})+\lambda\left(\|\tilde{\bbf}\|_1-\left\langle \hat{\hbf},\tilde{\bbf}\right\rangle\right).
\end{equation*}
By the optimality conditions in \eqref{equ:s1-lasso-kkt-socp}, we have 
$\lambda\hat{\hbf}=-\nabla_b F(\hat{\xbf},\hat{\bbf})=\frac{1}{n_m}(\ybf-\Abf\hat{\xbf}-\hat{\bbf})$, which implies that
\begin{align*}
    F(\hat{\xbf},\hat{\bbf})-\left\langle \nabla_b F(\hat{\xbf},\hat{\bbf}),\hat{\bbf}-\tilde{\bbf}\right\rangle-F(\tilde{\xbf},\tilde{\bbf})=\lambda\left(\|\tilde{\bbf}\|_1-\left\langle \hat{\hbf},\tilde{\bbf}\right\rangle\right)\leq 0
\end{align*}
due to convexity. We thus have $\|\tilde{\bbf}\|_1\leq\left\langle \hat{\hbf},\tilde{\bbf}\right\rangle$. Since by Holder's inequality, we also have $\left\langle \hat{\hbf},\tilde{\bbf}\right\rangle\leq\|\hat{\hbf}\|_\infty\|\tilde{\bbf}\|_1$, and $\|\hat{\hbf}\|_\infty\leq 1$, it holds that $\|\tilde{\bbf}\|_1=\left\langle \hat{\hbf},\tilde{\bbf}\right\rangle$. Since by the success of PDW, $\|\hat{\hbf}_{\Msf}\|_\infty<1$,  $\|\hat{\hbf}_{\Mbo}\|_\infty<1$, we have $\tilde{\bbf}_j=0$ for $j\in \Msf\cup\Mbo$. To show the weak uniqueness, let $(\tilde{\xbf},\tilde{\bbf})$ be another optimal solution, and assume that $\hat{\xbf}_{\Iat}=\tilde{\xbf}_{\Iat}$. Then, by fixing ${\xbf}_\Iat$ in the optimization \eqref{equ:pdw-all-socp} at $\hat{\xbf}_{\Iat}$ and by the lower eigenvalue condition, the the function is strictly convex in $\xbf_{\Isf}$, $\xbf_{\Ibd}$ and $\bbf_{\Mbi}$.
\end{proof}

\subsection*{Proof of Theorem \ref{thm:lasso_est_cyber_socp}}
\begin{proof}
\textbf{Part 1)}: By the construction of PDW, we have $\hat{\bbf}_{\Msf}=\bbf_{\natural \Msf}=\bfzero$ and $\hat{\bbf}_{\Mbo}=\bbf_{\natural \Mbo}=\bfzero$. In the following, we allow the optimal solution $\hat{\xbf}_\Iat$ and $\hat{\bbf}_\Mat$ of \eqref{equ:pdw-all-socp}  to take any value as long as the nonbinding SOC constraints assumption is satisfied. Thus, for any given $\hat{\xbf}_\Iat$ and $\hat{\bbf}_\Mat$, we can fix $\xbf_\Iat$ and ${\bbf}_\Mat$ in \eqref{equ:pdw-all-socp} and solve the following smaller program:

\begin{subequations}
	\begin{align}
    &\underset{{\bbf}_{\Mbi},\xbf_{\Isf},\xbf_{\Ibd}}{\text{min~~~}}
	&& \hspace{-0.5cm}\frac{1}{2n_m}\Bigg\|\underbrace{\begin{bmatrix}\ybf_{\Msf}\\\ybf_{\Mbo}\\\zbf_{\Mbi}\end{bmatrix}}_{\zbfo}-\underbrace{\begin{bmatrix}\Abf_{\Msf,\Xsf}&\Abf_{\Msf,\Xbd}\\\bfzero&\Abf_{\Mbo,\Xbd}\\
        \bfzero&\Abf_{\Mbi,\Xbd}\end{bmatrix}}_{\Abfo}\underbrace{\begin{bmatrix}\xbf_{\Isf}\\\xbf_{\Ibd}\end{bmatrix}}_{\xbfo}-\begin{bmatrix}\bfzero\\\bfzero\\\bbf_{\Mbi}\end{bmatrix}\Bigg\|_2^2+\lambda\left\|\bbf_{\Mbi}\right\|_1,\\
	& \;\;\text{subject to~~~}
	&&\hspace{-0.5cm}\cbf_\ell^\top\xbf\geq\left\|\Dbf_\ell\xbf\right\|_2,\qquad\qquad\forall \ell\in\Lcal\setminus\LBat,
	\end{align}
	\label{equ:pdw-all-small-socp}
\end{subequations}
where $\zbf_{\Mbi}=\ybf_{\Mbi}-\Abf_{\Mbi,\Xat}\hat{\xbf}_\Iat=\Abf_{\Mbi,\Xbd}\xbf_{\natural\Ibd}+\tbbf_{\Mbi}$ and $\tbbf_{\Mbi}=\Abf_{\Mbi,\Xat}(\xbf_{\natural\Iat}-\hat{\xbf}_{\Iat})$. Let $\Ibfo$ be an identity matrix of size $n_m-|\Mat|$, and $\xbfo$, $\cbf_\ell^\circ$ and $\Dbf_\ell^\circ$ be the subvector and submatrix of $\xbf$, $\cbf_\ell$ and $\Dbf_\ell$ indexed by $\Xsf$ and $\Xbd$, respectively, and $\wbfo$ be the subvector of $\wbf$ indexed by $\Msf\cup\Mbo\cup\Mbi$. Thus, we have $\zbfo=\Abfo\xbfo_\natural+\wbfo_\natural+\Ibfot_{\Mbi}\tbbf_{\Mbi}$. The solution $(\xbf_\Isf,\xbf_\Ibd,\bbf_{\Mbi})$ of \eqref{equ:pdw-all-small-socp} is unique and coincides with that of \eqref{equ:pdw-all-socp} due to the lower eigenvalue condition. Thus, the zero-subgradient condition \eqref{equ:pdw_opt_h_socp} is satisfied, which together with \eqref{equ:pdw_opt_h_all_socp} can be written as:
\begin{equation}
    -\frac{1}{n_m}\left(\begin{bmatrix}\Abf_{\Msf,\Xsf}&\Abf_{\Msf,\Xbd}\\\bfzero&\Abf_{\Mbo,\Xbd}\\
        \bfzero&\Abf_{\Mbi,\Xbd}\end{bmatrix}\begin{bmatrix}\xbf_{\natural\Isf}-\hat{\xbf}_{\Isf}\\\xbf_{\natural\Ibd}-\hat{\xbf}_{\Ibd}\end{bmatrix}+\begin{bmatrix}
    \bfzero\\\bfzero\\\tbbf_{\Mbi}-\hat{\bbf}_{\Mbi}\end{bmatrix}\right)-\frac{1}{n_m}\begin{bmatrix}
    \wbf_{\natural\Msf}\\
    \wbf_{\natural\Mbo}\\
    \wbf_{\natural\Mbi}
    \end{bmatrix}+\lambda\begin{bmatrix}
    \hat{\hbf}_\Msf\\\hat{\hbf}_{\Mbo}\\\hat{\hbf}_{\Mbi}
    \end{bmatrix}=\bfzero.
\end{equation}
We can partition the above relation into equations indexed by $\Mbi$, which can be rearranged as:
\begin{equation}
    \hat{\hbf}_{\Mbi}=\frac{1}{n_m\lambda}\begin{bmatrix}
    \Ibfo_{\Mbi}\Abfo& \Ibfo_{\Mbi}\Ibfot_{\Mbi}
    \end{bmatrix}\begin{bmatrix}
    \xbfo_\natural-{\hxbfo}\\ \tbbf_{\Mbi}-\hat{\bbf}_{\Mbi}
    \end{bmatrix}
    +\frac{1}{n_m\lambda}\Ibfo_{\Mbi}\wbfo_\natural,
    \label{equ:cond_b_Mbi_socp}
\end{equation}
as well as those indexed by $\Msf\cup\Mbo$, which can be solved for $\hat{\hbf}_{\Msf\cup\Mbo}=\begin{bmatrix}\hat{\hbf}_{\Msf}^\top&\hat{\hbf}_{\Mbo}^\top\end{bmatrix}^\top$:
\begin{equation}
    \hat{\hbf}_{\Msf\cup\Mbo}=\frac{1}{n_m\lambda}\Ibfo_{\Msf\cup\Mbo}\left(\Abfo(\xbfo_\natural-{\hxbfo})+\wbfo_\natural\right).
    \label{equ:h_j_c_equ_socp}
\end{equation}
Since $\hxbfo$ is the optimal solution of \eqref{equ:pdw-all-small-socp}, it satisfies the optimality condition:
\begin{equation}
    \frac{1}{n_m}\Abfot\left(\Abfo(\xbfo_\natural-\hxbfo)+\wbfo_\natural+\Ibfot_{\Mbi}(\tbbf_{\Mbi}-\hat{\bbf}_{\Mbi})\right)+\sum_{\ell\in\Latbi\cup\LBbd\cup\LBsf}\hat{\nu}_\ell\cbf_\ell^\circ+\Dbf_\ell^{\circ\top}\hat{\ubf}_\ell=\bfzero
    \label{equ:opt_x_small_cond_socp}
\end{equation}
Combining \eqref{equ:cond_b_Mbi_socp}, \eqref{equ:h_j_c_equ_socp} and \eqref{equ:opt_x_small_cond_socp} and after some elementary operations, it yields that
\begin{equation}
    \lambda\Abfot_{\Msf\cup\Mbo}\hat{\hbf}_{\Msf\cup\Mbo}+\lambda\Abfot_{\Mbi}\hat{\hbf}_{\Mbi}+\sum_{\ell\in\Latbi\cup\LBbd\cup\LBsf}\hat{\nu}_\ell\cbf_\ell^\circ+\Dbf_\ell^{\circ\top}\hat{\ubf}_\ell=\bfzero.
    \label{equ:opt_x_small_cond_socp2}
\end{equation}
By Lemma \ref{lemma:local_sufficient_cond_socp}, for any $\hat{\hbf}_{\Mbi}\in\partial\|\hat{\bbf}_{\Mbi}\|_1$, there always exist $\hat{\hbf}_{\Msf\cup\Mbo}$  and $\{\hat{\nu}_\ell,\hat{\ubf}_\ell\}_{\Latbi\cup\LBbd\cup\LBsf}$ such that $\|\hat{\hbf}_{\Msf\cup\Mbo}\|_\infty<1$. Thus, the strict feasibility condition is satisfied deterministically.

\textbf{Part 2)}: Thus, by the lower eigenvalue condition and definition of $\Qbfo_{\Mbi}=\begin{bmatrix}\Abfo&\Ibfot_{\Mbi}\end{bmatrix}$ and $\hat{\hbf}=\begin{bmatrix}\hat{\hbf}_{\Msf\cup\Mbo}^\top&\hat{\hbf}_{\Mbi}^\top\end{bmatrix}^\top$, we can solve \eqref{equ:cond_b_Mbi_socp}, \eqref{equ:opt_x_small_cond_socp} and \eqref{equ:opt_x_small_cond_socp2}:
\begin{align}
    \bDelta&\coloneqq\begin{bmatrix}
    \xbfo_\natural-{\hxbfo}\\ \tbbf_{\Mbi}-\hat{\bbf}_{\Mbi}
    \end{bmatrix}\nonumber\\
    &={
    -(\Qbfot_{\Mbi}\Qbfo_{\Mbi})^{-1}\Qbfot_{\Mbi}\wbfo_\natural+n_m\lambda(\Qbfot_{\Mbi}\Qbfo_{\Mbi})^{-1}\begin{bmatrix}\Abfot_{\Msf\cup\Mbo}\hat{\hbf}_{\Msf\cup\Mbo}+\Abfot_{\Mbi}\hat{\hbf}_{\Mbi}\\
    \hat{\hbf}_{\Mbi}\end{bmatrix}}\nonumber\\
    &=
    -(\Qbfot_{\Mbi}\Qbfo_{\Mbi})^{-1}\Qbfot_{\Mbi}\wbfo_\natural+n_m\lambda(\Qbfot_{\Mbi}\Qbfo_{\Mbi})^{-1}\Qbfot_{\Mbi}\hat{\hbf},
    \label{equ:delta_err_socp}
\end{align}
Let $\Ibf_x$ and $\Ibf_b$ denote the matrices that consist of the first $|\Xsf|+|\Xbd|$ rows and the last $|\Mbi|$ rows of the identity matrix of size $|\Xsf|+|\Xbd|+|\Mbi|$, respectively. Then, we can bound the estimation error $\bDelta$ in \eqref{equ:delta_err}. First, we bound the infinity norm of $\tbbf_{\Mbi}-\hat{\bbf}_{\Mbi}=\Ibf_b\bDelta$. By triangle inequality,
\begin{align}
    \|\Ibf_b\bDelta\|_\infty\leq \|\Ibf_b(\Qbfot_{\Mbi}\Qbfo_{\Mbi})^{-1}\Qbfot_{\Mbi}\wbfo_\natural\|_\infty+n_m\lambda\|\Ibf_b(\Qbfot_{\Mbi}\Qbfo_{\Mbi})^{-1}\Qbfot_{\Mbi}\|_\infty.
\end{align}
Since the second term is deterministic, we will bound the first term similar to Theorem \ref{thm:l1_est_cyber}. This concludes the proof

\textbf{Part 3)}: Now, we bound the $\ell_2$ norm of the signal error $\xbfo_\natural-{\hxbfo}=\Ibf_x\bDelta$,
\begin{align}
    \|\Ibf_x\bDelta\|_2\leq \|\Ibf_x(\Qbfot_{\Mbi}\Qbfo_{\Mbi})^{-1}\Qbfot_{\Mbi}\wbfo_\natural\|_2+n_m\lambda\|\Ibf_x(\Qbfot_{\Mbi}\Qbfo_{\Mbi})^{-1}\Qbfot_{\Mbi}\|_{\infty,2}.
\end{align}
For the first term, we can apply standard sub-gaussian concentration. The second term is deterministic. Combining them together yields the results.

\end{proof}


\end{document}